\colorlet{darkblue}{RoyalBlue}
\colorlet{darkgreen}{green!45!black}
\definecolor{darkred}{rgb}{0.55, 0.0, 0.0}
\definecolor{darkviolet}{rgb}{0.58, 0.0, 0.83}
\definecolor{lightblue}{rgb}{0.68, 0.85, 0.9}
\newcommand{\xMapsto}[2][]{\ext@arrow 0599{\Mapstofill@}{#1}{#2}}
\def\Mapstofill@{\arrowfill@{\Mapstochar\Relbar}\Relbar\Rightarrow}
\definecolor{propcolor}{HTML}{3F7D31}
\definecolor{mypurple}{HTML}{5B069D}
\definecolor{mistyrose}{rgb}{1.0, 0.89, 0.88}
\definecolor{pastelgray}{rgb}{0.81, 0.81, 0.77}
\newtheoremstyle{theoremthin}
{\smallskipamount}
{\smallskipamount}
{\itshape}
{0pt}
{\scshape}
{.}
{ }
{}
\theoremstyle{theoremthin}
\newtheorem{theorem}{Theorem}
\newtheorem{lemma}[theorem]{Lemma}
\newtheorem{proposition}[theorem]{Proposition}
\newtheorem{corollary}[theorem]{Corollary}
\newtheorem{definition}{Definition}
\newtheorem{property}{Property}
\theoremstyle{remark}
\renewenvironment{proof}[1][\proofname]{\par
  \pushQED{\qed}%
  \normalfont
  \topsep0pt \partopsep0pt 
  \trivlist
\item[\hskip\labelsep
  \itshape
  #1\@addpunct{.}]\ignorespaces
}{%
  \popQED\endtrivlist\@endpefalse
  \addvspace{2pt plus 2pt} 
}
\newcommand{\?}{{\ensuremath{\operatorname{\boldsymbol{?}}}}} 
\newcommand{\Unk}[1]{\?_{#1}} 
\newcommand{\axiom}{\texttt{ax}}
\newcommand{\pcons}{\ensuremath{\mathcal{C}}\xspace}
\newcommand{\pconst}[1]{\ensuremath{\mathcal{C}_{/{#1}}}\xspace}
\newcommand{\pgrad}{\ensuremath{\mathcal{G}}\xspace}
\newcommand{\pnorm}{\ensuremath{\mathcal{N}}\xspace}
\newcommand{\psafe}{\ensuremath{\mathcal{S}}\xspace}
\newcommand{\sortOfPiName}{s_\Pi}
\newcommand{\castOfPiName}{c_\Pi}
\newcommand{\sortOfPi}[2]{\sortOfPiName(#1,#2)}
\newcommand{\castOfPi}[1]{\castOfPiName (#1)}
\newcommand{\stalk}[0]{\operatorname{germ}}
\newcommand{\stalkCIC}[2]{\stalk_{#1}\,#2}
\newcommand{\stalkCode}[2]{\code{\stalk}_{#1}\,#2}
\newcommand{\Coq}{\ensuremath{\mathrm{Coq}}\xspace}
\newcommand{\CIC}{\ensuremath{\mathsf{CIC}}\xspace}
\newcommand{\CICp}[1]{\ensuremath{\mathsf{CIC}\!+\!{#1}}\xspace}
\newcommand{\CICs}{\ensuremath{\mathsf{CIC}^{\uparrow}}\xspace}
\newcommand{\ExTT}{{\ensuremath{\mathsf{ExTT}}}\xspace}
\newcommand{\STLC}{\ensuremath{\mathsf{STLC}}\xspace}
\newcommand{\GTLC}{\ensuremath{\mathsf{GTLC}}\xspace}
\newcommand{\GCIC}{\ensuremath{\mathsf{GCIC}}\xspace}
\newcommand{\GCICP}{\ensuremath{\GCIC^{\pgrad}}\xspace}
\newcommand{\GCICT}{\ensuremath{\GCIC^{\pnorm}}\xspace}
\newcommand{\GCICs}{\ensuremath{\GCIC^{\uparrow}}\xspace}
\newcommand{\CCIC}{\ensuremath{\mathsf{CastCIC}}\xspace}
\newcommand{\CCICP}{\ensuremath{\CCIC^{\pgrad}}\xspace}
\newcommand{\CCICT}{\ensuremath{\CCIC^{\pnorm}}\xspace}
\newcommand{\CCICs}{\ensuremath{\CCIC^{\uparrow}}\xspace}
\newcommand{\CICIR}{\ensuremath{\mathsf{CIC^{IR}}}\xspace}
\newcommand{\CICIRQ}{\ensuremath{\mathsf{CIC^{IR}_{QIT}}}\xspace}
\newcommand{\bareModel}[0]{discrete model}
\newcommand{\Agda}[0]{Agda\xspace}
\newcommand{\Idris}[0]{Idris\xspace}
\newcommand{\IR}[0]{\mathrm{IR}}
\newcommand{\UIP}[0]{UIP}
\newcommand{\ie}{\emph{i.e.,}\xspace}
\newcommand{\eg}{\emph{e.g.,}\xspace}
\newcommand{\cf}{\emph{cf.}\xspace}
\newcommand{\aka}{a.k.a.\xspace}
\newcommand{\resp}{resp.\xspace}
\renewcommand{\terms}{\operatorname{Term}}
\newcommand{\types}{\operatorname{Type}}
\renewcommand{\H}[0]{\operatorname{Head}}
\renewcommand{\C}[0]{\H}
\renewcommand{\P}{\operatorname{\Pi}}
\renewcommand{\l}{\operatorname{\lambda}}
\newcommand{\conv}{\equiv}
\newcommand{\cons}{\sim}
\newcommand{\acons}{\cons_{\alpha}}
\newcommand{\redCCIC}[0]{\leadsto}
\newcommand{\transRel}[1]{\operatorname{{#1}^\ast}}
\newcommand{\rtred}{\transRel{\redCCIC}}
\newcommand{\redIR}[0]{\leadsto_{\IR}}
\newcommand{\redIRn}{\leadsto^{+}_{\IR}}
\newcommand\optRedName{\rho}
\newcommand\optRed[1]{\optRedName(#1)}
\def\paraRed{\Rrightarrow}
\renewcommand{\comp}{\operatorname{\rightsquigarrow}}
\newcommand{\can}[1]{\operatorname{canonical}#1}
\newcommand{\neu}[1]{\operatorname{neutral}#1}
\newcommand{\pre}{\sqsubseteq}
\newcommand{\equiprecise}{\sqsupseteq\sqsubseteq}
\newcommand{\apre}{\pre^{\mathrm{G}}_{\alpha}}
\newcommand{\capre}{\pre_{\alpha}}
\newcommand{\cdpre}{\pre_{\redCCIC}}
\newcommand{\caequipre}{\equiprecise_{\alpha}}
\newcommand{\obsApprox}{\preccurlyeq^{obs}}
\newcommand{\obsRef}{\sqsubseteq^{obs}}
\newcommand{\GG}{\mathbb{\Gamma}}
\newcommand{\DD}{\mathbb{\Delta}}
\newcommand{\fs}[1]{#1_{1}}
\newcommand{\sn}[1]{#1_{2}}
\newcommand{\letin}[3]{\text{let}\,#1=#2\,\text{in}\,#3}
\newcommand{\nat}{\mathbb{N}}
\newcommand{\bool}{\mathbb{B}}
\newcommand{\btrue}{\mathtt{true}}
\newcommand{\bfalse}{\mathtt{false}}
\newcommand{\unit}{\mathtt{unit}}
\newcommand{\ind}{\mathtt{ind}}
\newcommand{\ulev}[1]{\scalebox{0.7}{@\{#1\}}}
\newcommand{\pars}{\operatorname{\mathbf{Params}}}
\newcommand{\args}{\operatorname{\mathbf{Args}}}
\newcommand{\natzero}[0]{0}
\newcommand{\natsuc}[0]{\texttt{suc}}
\newcommand{\unitK}{\ensuremath{\mathtt{()}}}
\newcommand{\cas}{\operatorname{\mathtt{cast}}}
\newcommand{\ascdom}[3]{\langle {#3} {\ \Leftarrow \ } {#2} \rangle\,{#1}}
\newcommand{\ascdombg}[4]{\colorbox{#4}{$\left\langle {#3} {\ \Leftarrow \ } {#2} \right\rangle$}\,{#1}}
\newcommand{\cast}[3]{\ascdom{#3}{#1}{#2}}
\newcommand{\castbg}[4]{\ascdombg{#4}{#2}{#3}{#1}}
\newcommand{\ascName}{\mathrel{::}}
\newcommand{\asc}[2]{#1\ascName#2}
\newcommand{\ocast}[3]{#3::#1 \Rightarrow #2}
\newcommand{\err}{\operatorname{\mathtt{err}}}
\newcommand{\cicty}{\vdash_{\CIC}}
\newcommand{\caty}{\vdash_{\cas}}
\newcommand{\irty}[0]{\vdash_{\IR}}
\newcommand{\catyJ}[3]{\tcol{#1}\vdash_{\cas} \tcol{#2} : \tcol{#3}}
\newcommand{\inferty}{\operatorname{\triangleright}}
\newcommand{\pcheckty}[1]{\operatorname{\vcenter{\hbox{${\scriptscriptstyle\blacktriangleright}$}}_{#1}}}
\newcommand{\checkty}{\operatorname{\triangleleft}}
\newcommand{\sourcecolor}[1]{\mathcolor{ForestGreen}{#1}}
\newcommand{\scol}[1]{\sourcecolor{#1}}
\newcommand{\targetcolor}[1]{\mathcolor{RoyalBlue}{#1}}
\newcommand{\tcol}[1]{\targetcolor{#1}}
\newcommand{\staticcolor}[1]{\mathcolor{black}{#1}}
\newcommand{\ccol}[1]{\staticcolor{#1}}
\newcommand{\sinferelab}[6]{\targetcolor{#4} \vdash \sourcecolor{#2} \comp \targetcolor{#5} \inferty \targetcolor{#6}}
\newcommand{\spcheckelab}[7]{\targetcolor{#5} \vdash \sourcecolor{#3} \comp \targetcolor{#6} \pcheckty{#1} \targetcolor{#7}}
\newcommand{\scheckelab}[6]{\targetcolor{#4} \vdash \sourcecolor{#2} \checkty \targetcolor{#6} \comp \targetcolor{#5}}
\newcommand{\inferelab}[6]{\sinferelab{#1}{#2}{#3}{#4}{#5}{#6}}
\newcommand{\pcheckelab}[7]{\spcheckelab{#1}{#2}{#3}{#4}{#5}{#6}{#7}}
\newcommand{\checkelab}[6]{\scheckelab{#1}{#2}{#3}{#4}{#5}{#6}}
\newcommand{\id}[1]{\mathtt{id}_{#1}}
\newcommand{\orr}[1]{\mathbf{#1}}
\newcommand{\match}[4]{\operatorname{\ind}_{#1}(#2,#3,#4)}
\newcommand{\bB}[0]{\mathbb{B}}
\newcommand{\pidom}[0]{\text{d}}
\newcommand{\picod}[0]{\text{c}}
\newcommand{\eras}{\varepsilon}
\newcommand{\poset}{\ensuremath{\square^\leq}}
\renewcommand{\U}[0]{\square \!\!\!\! \square}
\newcommand{\El}[0]{\mathrm{El}}
\newcommand{\ElRel}[0]{\El{}_\varepsilon}
\newcommand{\code}[1]{\widehat{#1}}
\newcommand{\natU}[0]{\code{\nat}}
\newcommand{\uU}[0]{\code{\U}}
\newcommand{\unkU}[1][]{\code{\?}_{#1}}
\newcommand{\errU}[1][]{\code{\err_{#1}}}
\newcommand{\PiU}[0]{\code{\Pi}}
\newcommand{\hd}[1]{\mathrm{head}\,#1}
\newcommand{\dash}[0]{\text{-}}
\colorlet{darkred}{red!80!gray}
\newcommand{\bareTy}[1]{\llbracket{}{\tcol{#1}}\rrbracket{}}
\newcommand{\bareTm}[1]{[{\tcol{#1}}]}
\newcommand{\monTyAux}[2]{\{\!\mid\!{\tcol{#1}}\!\mid\!\}_{#2}}
\newcommand{\monTmAux}[2]{\{{\tcol{#1}}\}_{#2}}
\newcommand{\monTy}[1]{\monTyAux{#1}{}}
\newcommand{\monTm}[1]{\monTmAux{#1}{}}
\newcommand{\monTyRel}[1]{\monTyAux{#1}{\varepsilon}}
\newcommand{\monTmRel}[1]{\monTmAux{#1}{\varepsilon}}
\newcommand{\binlogrelTy}[1]{\Lbag{}{#1}\Rbag{}}
\newcommand{\binlogrelTm}[1]{\lbag{}{#1}\!\rbag{}}
\newcommand{\disc}[0]{\text{dis}}
\newcommand{\mon}[0]{\text{mon}}
\newcommand{\binlrel}[0]{\text{rel}}
\newcommand{\unkInj}[2]{[{#1};{#2}]}
\def\mathcolor#1#{\@mathcolor{#1}}
\def\@mathcolor#1#2#3{%
	\protect\leavevmode
	\begingroup\color#1{#2}#3%
\endgroup
}
\definecolor{grey}{rgb}{0.5, 0.5, 0.5}
\def\slashedarrowfill@#1#2#3#4#5{%
  $\m@th\thickmuskip0mu\medmuskip\thickmuskip\thinmuskip\thickmuskip
  \relax#5#1\mkern-7mu%
  \cleaders\hbox{$#5\mkern-2mu#2\mkern-2mu$}\hfill
  \mathclap{#3}\mathclap{#2}%
  \cleaders\hbox{$#5\mkern-2mu#2\mkern-2mu$}\hfill
  \mkern-7mu#4$%
}
\def\rightslashedarrowfill@{%
  \slashedarrowfill@\relbar\relbar\mapstochar\rightarrow}
\newcommand\xslashedrightarrow[2][]{%
  \ext@arrow 0055{\rightslashedarrowfill@}{#1}{#2}}
\newcommand{\printf}[0]{\coqe{printf}}
\newcommand{\boxedrule}[1]{\flushleft \fbox{#1} \vspace{-1em}}
\newcommand{\eppair}[0]{ep-pair\xspace}
\newcommand{\eppairs}[0]{\eppair{}s\xspace}
\newcommand{\Pmon}[0]{\P^{\text{mon}}}
\newcommand{\pairSigma}[2]{({#1};{#2})}
\newcommand{\unkMon}[1][]{\ensuremath{\liftErr{\?}_{#1}}}
\newcommand{\liftErr}[1]{\ddot{#1}}
\newcommand{\liftNat}[0]{\liftErr{\nat}}
\newcommand{\liftBool}[0]{\liftErr{\bool}}
\newcommand{\liftSigma}[0]{{\liftErr{\Sigma}}}
\newcommand{\lifttop}[1]{\top_{#1}}
\newcommand{\liftbot}[1]{\bot_{#1}}
\newcommand{\liftZero}[0]{\liftErr{\mathbb{0}}}
\newcommand{\precision}[2]{\tensor[_{#1}]{\sqsubseteq}{_{#2}}}
\newcommand{\suc}[0]{\texttt{suc}}
\newcommand{\distr}[0]{\vartriangleleft}
\newcommand{\distrcont}[0]{\vartriangleleft^\omega}
\newcommand{\upcast}[0]{\uparrow}
\newcommand{\downcast}[0]{\downarrow}
\renewcommand{\id}[0]{\mathrm{id}}
\newcommand{\pred}[1]{\mathrm{pred}\,#1}
\newcommand{\Ue}[2]{#1 \sqsubseteq #2}
\newcommand{\tocont}{\to^\omega}
\newcommand{\Picont}{\Pi^\omega}
\newif\ifappendix\appendixtrue
\newif\ifleftovers\leftoversfalse
\newcommand{\mparagraph}[1]{\emph{#1}.}
\DeclareDocumentCommand \inferrule { s O{} m m o}{%
	\IfBooleanTF{#1}%
	{%
		\mpr@inferstar[#2]{#3}{#4}%
	}{%
		\mpr@inferrule[#2]{#3}{#4}%
	}%
	\IfValueT{#5}%
	{%
		\textsc{#5}%
		\my@name@inferrule{#5}%
	}%
}
\NewDocumentCommand \my@name@inferrule { m }{%
	\def\@currentlabelname{\textsc{#1}}%
}
\DeclareDocumentCommand \redrule {m m o}{%
	\IfValueTF{#3}
	{\textsc{#3}: \; #1 \redCCIC #2 \my@name@inferrule{#3}}
	{#1 \redCCIC #2}
}
\DeclareDocumentCommand \Longredrule {m m o}{%
	\IfValueTF{#3}
	{\textsc{#3 :} \hfill \vspace{-0.5em} \\ \redrule{#1}{#2}}
	{#1 \redCCIC #2}
}
\newcommand*{\ilabel}[1]{%
  \phantomsection
  \label{#1}
}
\newcommand{\myflushright}{\text{~} \and \hfill}
\newcommand{\Gvect}[2]{\mathtt{vec}~#1~#2}
\newcommand{\Gnil}[1]{\mathtt{nil}~#1}
\newcommand{\Gcons}[4]{\mathtt{cons}~#1~#2~#3~#4}
\newcommand{\GvectRec}[4]{\mathtt{vec\_rect}~#1~#2~#3~#4}
\newcommand{\GnilU}[1]{\mathtt{nil}_\?~#1}
\newcommand{\GnilULevel}[2]{\mathtt{nil}_\?\ulev{#2} ~#1}
\newcommand{\GconsUName}{\mathtt{cons}_\?}
\newcommand{\GconsU}[4]{\GconsUName~#1~#2~#3~#4}
\newcommand{\GconsULevel}[5]{\GconsUName \ulev{#5}~#1~#2~#3~#4}
\newcommand\divrelCtx[1]{#1 \Vdash}
\newcommand\divrelTy[3]{#1 \Vdash \tcol{#2} \sim #3}
\newcommand\divrelTm[4]{#1 \Vdash \tcol{#2} \sim #3 : #4}
\newcommand\divrelNe[4]{#1 \Vdash \tcol{#2} \sim_{\mathrm{ne}} #3 : #4}
\newcommand\divbind[3]{\tcol{#1} \sim #2 : #3}
\newcommand\whred{\twoheadrightarrow_{\mathrm{wh}}}
\newcommand\natDivrel{\nat_\varepsilon}
\newcommand\boolDivrel{\bool_\varepsilon}
\newcommand\unkDivrel{\?_\varepsilon}
\newcommand\errDivrel{\err_\varepsilon}
\newcommand\univDivrel{\square_\varepsilon}
\newcommand\piDivrel{\P_\varepsilon}
\newcommand\neDivrel{\texttt{ne}_\varepsilon}
\definecolor{shadecolor}{gray}{0.93}
\newcommand{\cmark}{\ding{51}}
\newcommand{\xmark}{\ding{55}}
\newcommand{\parsub}[1]{[#1]}
\newcommand{\subs}[2]{[#1 / #2]}
\definecolor{defgreen}{rgb}{0,0.6,0}
\crefname{section}{\textsection\!}{\textsection\!}
\crefname{proposition}{Prop.}{Props.}
\keywords{Gradual typing, proof assistants, dependent types}
\begin{document}

\renewcommand{\footnotemark}{\mbox{}}

\title{Gradualizing the Calculus of Inductive Constructions
}
\author{Meven Lennon-Bertrand}
\affiliation{%
  \institution{Gallinette Project-Team, Inria}
  \city{Nantes}
  \country{France}
}
\author{Kenji Maillard}
\affiliation{%
  \institution{Gallinette Project-Team, Inria}
  \city{Nantes}
  \country{France}
}
\author{Nicolas Tabareau}
\affiliation{%
  \institution{Gallinette Project-Team, Inria}
  \city{Nantes}
  \country{France}
}
\author{\'Eric Tanter}
\affiliation{%
  \institution{PLEIAD Lab, Computer Science Department (DCC), University of Chile}
  \city{Santiago}
  \country{Chile}
}
\titlenote{This work is partially funded by ANID FONDECYT Regular
  Project 1190058, and Inria {\'E}quipe Associ{\'e}e GECO.
  \begin{tcolorbox}[width=\linewidth, sharp corners=all, colback=white!95!black]
  \textbf{To appear in ACM Transactions in Programming Languages and Systems, 2022, 
doi:10.1145/3495528}
\end{tcolorbox}}

\begin{abstract}
We investigate gradual variations on the Calculus of Inductive Construction (CIC) for swifter prototyping with imprecise types and terms. We observe, with a no-go theorem, a crucial tradeoff between graduality and the key properties of normalization and closure of universes under dependent product that CIC enjoys. Beyond this Fire Triangle of Graduality, we explore the gradualization of CIC with three different compromises, each relaxing one edge of the Fire Triangle. We develop a parametrized presentation of Gradual CIC (GCIC) that encompasses all three variations, and develop their metatheory. We first present a bidirectional elaboration of GCIC to a dependently-typed cast calculus, CastCIC, which elucidates the interrelation between typing, conversion, and the gradual guarantees. We use a syntactic model of CastCIC to inform the design of a safe, confluent reduction, and establish, when applicable, normalization. We study the static and dynamic gradual guarantees as well as the stronger notion of graduality with embedding-projection pairs formulated by New and Ahmed, using appropriate semantic model constructions. This work informs and paves the way towards the development of malleable proof assistants and dependently-typed programming languages.
\end{abstract}

\maketitle

\section{Introduction}
\label{sec:intro}

Gradual typing arose as an approach to selectively and soundly relax static type
checking by endowing programmers with imprecise
types~\cite{siekTaha:sfp2006,siekAl:snapl2015}. Optimistically well-typed
programs are safeguarded by runtime checks that detect violations of
statically-expressed assumptions. A gradual version of the simply-typed lambda
calculus (\STLC) enjoys such expressiveness that it can embed the untyped lambda
calculus. This means that gradually-typed languages tend to accommodate at least
two kinds of effects, non-termination and runtime errors.
The smoothness of the static-to-dynamic checking spectrum afforded by gradual languages
is usually captured by (static and dynamic) gradual
guarantees
which stipulate that typing and reduction are monotone with respect
to precision~\cite{siekAl:snapl2015}.

Originally formulated in terms of simple types, the extension of gradual typing
to a wide variety of typing disciplines has been an extremely active topic of
research, both in theory and in practice. As part of this quest towards more
sophisticated type disciplines, gradual typing was bound to meet with full-blown
dependent types. This encounter saw various premises in a variety of approaches
to integrate (some form of) dynamic checking with (some form of) dependent
types~\cite{dagandAl:jfp2018,knowlesFlanagan:toplas2010,lehmannTanter:popl2017,ouAl:tcs2004,tanterTabareau:dls2015,wadlerFindler:esop2009}.
Naturally, the highly-expressive setting of dependent types, in which terms and
types are not distinct and computation happens as part of typing, raises a lot
of subtle challenges for gradualization. In the most elaborate effort to date,
\citet{eremondiAl:icfp2019} present a gradual dependently-typed programming
language, GDTL, which can be seen as an effort to gradualize a two-phase
programming language such as \Idris~\cite{brady:jfp2013}. A key idea of GDTL is to adopt an
approximate form of computation at compile-time, called {\em approximate
normalization}, which ensures termination and totality of typing, while adopting
a standard gradual reduction semantics with errors and non-termination at runtime.
The metatheory of GDTL however still needs to be extended to account for inductive types.

This paper addresses the open challenge of gradualizing a full-blown dependent type theory, namely the Calculus of Inductive Constructions (hereafter, \CIC)~\cite{coquandHuet:ic1988,paulinMohring:appa2015}, identifying and addressing the corresponding metatheoretic challenges. In doing so, we build upon several threads of prior work in the type theory and gradual typing literature: syntactic models of type theories to justify extensions of \CIC~\cite{BoulierPT17}, in particular the exceptional type theory of~\citet{pedrotTabareau:esop2018}, an effective re-characterization of the dynamic gradual guarantee as {\em graduality} with embedding-projection pairs~\cite{newAhmed:icfp2018}, as well as the work on GDTL~\cite{eremondiAl:icfp2019}.

\paragraph{\textbf{Motivation.}} We believe that studying the gradualization of a full-blown dependent type theory like \CIC is in and of itself an important scientific endeavor, which is very likely to inform the gradual typing research community in its drive towards supporting ever more challenging typing disciplines. In this light, the aim of this paper is not to put forth a unique design or solution, but to explore the space of possibilities. Nor is this paper about a concrete implementation of gradual \CIC and an evaluation of its applicability; these are challenging perspectives of their own, which first require the theoretical landscape to be unveiled.

This being said, as~\citet{eremondiAl:icfp2019}, we can highlight a number of practical motivating scenarios for gradualizing \CIC, anticipating what could be achieved in a hypothetical gradual version of \Coq, for instance.

\begin{example}[Smoother development with indexed types]
\label{ex:indices}

\CIC, which underpins languages and proof assistants such as \Coq, Agda and \Idris, among others, is a very powerful system to program in, but at the same time extremely demanding. Mixing programs and their specifications is attractive but challenging.

Consider the classical example of length-indexed lists, of type \coqe{vec A n} as defined in \Coq:\footnote{We use the notation $\square_i$ for the predicative universe of types \texttt{Type$_i$}, and omit the universe level $i$ when not required.}

\begin{coq}
Inductive vec (A : Type) : nat -> Type :=
| nil  : vec A 0
| cons : A -> forall n : nat, vec A n -> vec A (S n).
\end{coq}

Indexing the inductive type by its length allows us to define a {\em total} \coqe{head} function, which can only be applied to non-empty lists:
\begin{center}
\coqe{head : forall A n, vec A (S n) -> A}
\end{center}

Developing functions over such structures can be tricky. For instance, what type should the \coqe{filter} function be given?

\begin{center}
\coqe{filter : forall A n (f : A -> bool), vec A n -> vec A \hole}%
\end{center}
The size of the resulting list depends on how many elements in the list actually match the given predicate \coqe{f}!
Dealing with this level of intricate specification can (and does) scare programmers away from mixing programs and specifications. The truth is that many libraries, such as MathComp~\cite{mahboubiTassi:mathcomp}, give up on mixing programs and specifications even for simple structures such as these, which are instead dealt with as ML-like lists with extrinsically-established properties. This
tells a lot about the current intricacies of dependently-typed programming.

Instead of avoiding the obstacle altogether, gradual dependent types provide a uniform and flexible mechanism to a tailored adoption of dependencies. For instance, one could give \coqe{filter} the following gradual type, which makes use of the {\em unknown term} $\?$ in an index position:

\begin{center}
\coqe{filter : forall A n (f : A -> bool), vec A n -> vec A ?}%
\end{center}

This imprecise type means that uses of \coqe{filter} will be optimistically accepted by the typechecker, although subject to associated checks during reduction. For instance:

\begin{center}
\coqe{head nat ? (filter nat 4 even [ 0 ; 1 ; 2 ; 3 ])}
\end{center}
typechecks, and is successfully convertible to $0$, while:

\begin{center}
\coqe{head nat ? (filter nat 2 even [ 1 ; 3 ])}
\end{center}
typechecks but fails upon reduction, when discovering that the assumption that the argument to head is non-empty is
in fact incorrect.
\end{example}

\begin{example}[Defining general recursive functions]
\label{ex:rec}

Another challenge of working in \CIC is to convince the type checker that recursive definitions are well founded. This can either require tight syntactic restrictions, or sophisticated arguments involving accessibility predicates. At any given stage of a development, one might not be in a position to follow any of these. In such cases, a workaround is to adopt the ``fuel pattern'', \ie~parametrizing a function with a clearly syntactically decreasing argument in order to please the typechecker, and to use an arbitrary initial fuel value. In practice, one sometimes requires a simpler way to unplug termination checking, and for that purpose, many proof assistants support external commands or parameters to deactivate termination checking.\footnote{such as \texttt{Unset Guard Checking} in \Coq, or \texttt{\{-\# TERMINATING \#-\}} in \Agda.}

Because the use of the unknown type allows the definition of fix-point combinators~\cite{siekTaha:sfp2006,eremondiAl:icfp2019}, one can use this added expressiveness to bypass termination checking locally.
This just means that the external facilities provided by specific proof assistant implementations now become internalized in the language.
\end{example}

\begin{example}[Large elimination, gradually]
\label{ex:elim}
One of the argued benefit of dynamically-typed languages, which is accommodated by gradual typing, is the ability to define functions that can return values of different types depending on their inputs, such as:

\begin{coq}
                  def foo(n)(m) { if (n > m) then m + 1 else m > 0 }
\end{coq}

In a gradually-typed language, one can give this function the type $\?$, or even $\nat -> \nat -> \?$ in order to enforce proper argument types, and remain flexible in the treatment of the returned value.
Of course, one knows very well that in a dependently-typed language, with large elimination, we can simply give \texttt{foo} the dependent type:

\begin{center}
\coqe{foo : forall (n m : nat), if (n > m) then nat else bool}
\end{center}

Lifting the term-level comparison \coqe{n > m} to the type level is extremely expressive, but hard to work with as well, both for the implementer of the function and its clients.

In a gradual dependently-typed setting, one can explore the whole spectrum of type-level precision for such a function, starting from the least precise to the most precise, for instance:

\begin{coq}
    foo : ?
    foo : nat -> nat -> ?
    foo : nat -> nat -> if ? then nat else ?
    foo : forall (n m : nat), if (n > m) then nat else ?
    foo : forall (n m : nat), if (n > m) then nat else bool
\end{coq}

At each stage from top to bottom, there is less flexibility (but more guarantees!) for both the implementer of \coqe{foo} and its clients. The gradual guarantee ensures that if the function is actually faithful to the most precise type
then giving it any of the less precise types above does not introduce any new failure~\cite{siekAl:snapl2015}.
\end{example}

\begin{example}[Gradually refining specifications]
\label{ex:specif}

Let us come back to the \coqe{filter} function from \cref{ex:indices}. Its fully-precise type requires appealing to a type-level function that counts the number of elements in the list that satisfy the predicate (notice the dependency to the input vector \coqe{v}):

\begin{center}
\coqe{filter : forall A n (f : A -> bool) (v : vec A n), vec A (count_if A n f v)}%
\end{center}

Anticipating the need for this function, a gradual specification could adopt the above signature for
\coqe{filter} but leave \coqe{count_if} unspecified:

\begin{coq}
Definition count_if A n (f : A -> bool) (v: vec A n) : nat := ?.
\end{coq}

This situation does not affect the behavior of the program compared to leaving the return type index unknown. More interestingly, one could immediately define the base case, which trivially specifies that there are no matching elements in an empty vector:

\begin{center}
\begin{coq}
Definition count_if A n (f : A -> bool) (v : vec A n) : nat :=
  match v with
  | nil _ _ => 0
  | cons _ _ _ => ?
  end.
\end{coq}
\end{center}

This slight increment in precision provides a little more static checking, for instance:
\begin{center}
\coqe{head nat ? (filter nat 4 even [])}
\end{center}
does not typecheck, instead of failing during reduction.

Again, the gradual guarantee ensures that such incremental refinements in precision towards the proper fully-precise version do not introduce spurious errors.
Note that this is in stark contrast with the use of axioms (which will be discussed in more depth in \cref{sec:tradeoffs}). Indeed, replacing correct code with an axiom can simply break typing! For instance, with the following definitions:

\begin{coq}
Axiom to_be_done : nat.
Definition count_if A n (f : A -> bool) (v: vec A n) : nat := to_be_done.
\end{coq}
\noindent the definition of \coqe{filter} does not typecheck anymore, as the axiom at the type-level is not convertible to any given value.
\end{example}

\paragraph{Note: Gradual programs or proofs?}

When adapting the ideas of gradual typing to a dependent type theory, one might
expect to deal with programs rather than proofs.
This observation is however misleading: from the point of view of the Curry-Howard correspondence, proofs and programs are intrinsically related, so that gradualizing the latter begs for a gradualization of the former. The examples above illustrate mixed programs and specifications, which naturally also appeal to proofs: dealing with indexed types typically requires exhibiting equality proofs to rewrite terms.
Moreover, there are settings in which one must consider computationally-relevant proofs, such as constructive algebra and analysis, homotopy type theory, etc. In such settings, using axioms to bypass unwanted proofs breaks reduction, and because typing requires reduction, the use of axioms can simply prevent typing, as illustrated in \cref{ex:specif}.

\paragraph{\textbf{Contribution.}}
This article reports on the following contributions:

\begin{itemize}[leftmargin=*]
\item We analyze, from a type theoretic point of view, the fundamental tradeoffs
  involved in gradualizing a dependent type theory such as \CIC
  (\cref{sec:tradeoffs}), and establish a no-go theorem,
  the Fire Triangle of Graduality, which does apply to \CIC. In essence, this result tells us that a gradual type theory\footnote{Note that we sometimes use ``dependent type theory'' in order to differentiate from the Gradual Type Theory of
  \citet{newAl:popl2019}, which is simply typed. But by default, in
  this article, the expression "type theory" is used to refer to a
  type theory with full dependent types, such as \CIC.}  cannot satisfy at the same time normalization, graduality, and conservativity with respect to \CIC. We explain each property and carefully analyze what it means in the type theoretic setting.

\item We present an approach to gradualizing \CIC (\cref{sec:gcic-overview}),
  parametrized by two knobs for controlling universe constraints on the
  dependent function space, resulting in
  three meaningful variants of Gradual \CIC (\GCIC), that reflect distinct resolutions of the
  Fire Triangle of Graduality. Each variant sacrifices one key property.
\item We give a bidirectional and mutually-recursive elaboration of
  \GCIC to a dependently-typed cast calculus \CCIC
  (\cref{sec:gcic-to-ccic}). This elaboration is based on a
  bidirectional presentation of \CIC, which has been recently studied
  in details by \citet{LennonBertrand2021}, and of which we give a comprehensive
  summary in \cref{sec:bidirectional-cic}.
Like \GCIC, \CCIC is parametrized, and encompasses three variants. We develop the metatheory of \GCIC, \CCIC and elaboration. In particular, we prove type safety for all variants, as well as the gradual guarantees and normalization, each for two of the three variants.
  \item To further develop the metatheory of \CCIC, we appeal to various
  models (\cref{sec:realizing-ccic}). First, to prove strong normalization of two \CCIC variants, we provide a syntactic  model of \CCIC with a translation to \CIC extended with induction-recursion~\cite{Martin-Lof1996,DybjerS03,GhaniMF15}. Second, to prove the stronger notion of graduality with embedding-projection pairs~\cite{newAhmed:icfp2018} for a normalizing variant, we provide a model of \CCIC that captures the notion of monotonicity with
    respect to precision. Finally, we present an extension of Scott's model
    based on $\omega$-complete partial orders~\cite{scott76} to prove
    graduality for the variant with divergence.
\item We describe how to handle indexed inductive types in \GCIC, either directly or via different encodings, under some constraints on indices (\cref{sec:giit}).
\end{itemize}

We then elucidate the current limitations of this work regarding three important features of \CIC---impredicativity, $\eta$-equality and propositional equality (\cref{sec:future-cic}).
We finally discuss related work (\cref{sec:related}) and conclude (\cref{sec:conclusion}). Some detailed proofs are omitted from the main text and can be found in appendix.

\section{Fundamental Tradeoffs in Gradual Dependent Type Theory}
\label{sec:tradeoffs}
Before exposing a specific approach to gradualizing \CIC, we present a general analysis of the main properties at stake and tensions that arise when gradualizing a dependent type theory.

We start by recalling two cornerstones of type theory, namely progress and normalization, and allude to the need to reconsider them carefully in a gradual setting (\cref{sec:norm-canon-endang}). We explain why the obvious approach based on axioms is unsatisfying (\cref{sec:axiom}), as well as why simply using a type theory with exceptions~\cite{pedrotTabareau:esop2018} is not enough either (\cref{sec:extt}). We then turn to the gradual approach, recalling its essential properties in the simply-typed setting (\cref{sec:grad-simple}), and revisiting them in the context of a dependent type theory (\cref{sec:graduality}). This finally leads us to establish a fundamental impossibility in the gradualization of \CIC, which means that at least one of the desired properties has to be sacrificed (\cref{sec:fire-triangle}).

\subsection{Safety and Normalization, Endangered}
\label{sec:norm-canon-endang}

As a well-behaved typed programming language, \CIC enjoys
(type) {\bf Safety} (\psafe), meaning that well-typed closed terms cannot get
stuck, \ie the normal forms of closed terms of a given type are
exactly the canonical forms of that type.
In \CIC, a closed canonical form is a term whose typing derivation ends
with an introduction rule, \ie a $\lambda$-abstraction for a function
type, and a constructor for an inductive type.
For instance, any closed term of type \coqe{bool} is convertible (and
reduces) to either \coqe{true} or \coqe{false}.
Note that an open term can reduce to an open canonical form called a {\em neutral term},
such as \coqe{not x}.

As a logically consistent type theory, \CIC enjoys (strong) {\bf Normalization}
(\pnorm), meaning that any term is convertible to its (unique) normal form.
 \pnorm together with \psafe imply {\em canonicity}: any closed term of a given
type \emph{must} reduce to a canonical form of that type.
When applied to the empty type \coqe{False}, canonicity ensures {\em logical consistency}:
because there is no canonical form for \coqe{False}, there is no
closed proof of \coqe{False}.
Note that \pnorm also has an important consequence in \CIC. Indeed, in
this system, conversion---which coarsely means syntactic equality
up-to reduction---is used in the type-checking algorithm.
\pnorm ensures that one can devise a sound and complete decision
procedure (\aka a reduction strategy) in order to decide conversion, and hence, typing.

In the gradual setting, the two cornerstones \psafe and \pnorm must be considered with care.
First, any closed term can be ascribed the unknown type \coqe{?}
first and then any other type: for instance, \coqe{0::?::bool} is a
well-typed closed term of type \coqe{bool}.%
\footnote{\label{ftn:ascription}We write $a :: A$ for a type ascription, which is syntactic sugar
  for $(\lambda x:A.x)\;a$~\cite{siekTaha:sfp2006}; in
  other systems, it can be taken as a primitive notion~\cite{garciaAl:popl2016}.}
However, such a term
cannot possibly reduce to either \coqe{true} or \coqe{false}, so some
concessions must be made with respect to safety---at least, the notion
of canonical forms must be extended.

Second, \pnorm is endangered. The quintessential example of non-termination in the untyped lambda calculus is the
term \mbox{$\Omega := \delta~\delta$} where \mbox{$\delta := (\lambda~
  x.~x~x)$}. In the simply-typed lambda calculus (hereafter \STLC), as in \CIC,
{\em self-applications} like $\delta~\delta$ and $x~x$ are ill-typed. However,
when introducing gradual types, one usually expects to accommodate such idioms,
and therefore in a standard gradually-typed calculus such as
\GTLC~\cite{siekTaha:sfp2006}, a variant of $\Omega$ that uses $(\lambda~ x :
\?.~x~x)$ for $\delta$ is well-typed and diverges, that is, admits no normal form.
The reason is that the argument type of $\delta$, the unknown type $\?$, is {\em consistent} with the type of $\delta$ itself, $\? \to \?$, and at runtime, nothing prevents reduction from going on forever.
Therefore, if one aims at ensuring \pnorm in a gradual setting, some care must be taken to restrict expressiveness.

\subsection{The Axiomatic Approach}
\label{sec:axiom}
Let us first address the elephant in the room: why would one want to gradualize \CIC instead of simply postulating
an axiom for any term (be it a program or a proof) that one does not feel like providing (yet)?

Indeed, we can augment \CIC with a general-purpose wildcard axiom $\axiom$:
\begin{center}
\coqe{Axiom _st : forall A, A.}
\end{center}

The resulting theory, called \CICp{\axiom}, has an obvious practical benefit: we can use
\coqe{(_st A)}, hereafter noted \coqe{_st_A}, as a wildcard whenever we are
asked to exhibit an inhabitant of some type \coqe{A} and we do not (yet) want to.
This is exactly what admitted definitions are in \Coq, for instance, and they do play an
important practical role at some stages of any \Coq development.

However, we cannot use the axiom \coqe{_st_A} in any meaningful way as a value {\em at the
  type level}.
For instance, going back to \cref{ex:indices}, one might be tempted to give to the \coqe{filter} function on
vectors the type
\coqe{forall A n (f : A -> bool), vec A n -> vec A _st_nat}%
, in order to avoid the complications related to specifying the
size of the vector produced by \coqe{filter}.
The problem is that the term:
\begin{center}
\coqe{head nat _st_nat (filter nat 4 even [ 0 ; 1 ; 2 ; 3 ])}
\end{center}
does not typecheck because the type of the filtering expression, \coqe{vec A _st_nat},
is not convertible to \coqe{vec A (S _st_nat)},
as required by the domain type of \coqe{head nat _st_nat}.

So the axiomatic approach is not useful for making dependently-typed programming
any more pleasing.
That is, using axioms goes in total opposition to the gradual typing
criteria~\cite{siekAl:snapl2015} when it comes to the smoothness of
the static-to-dynamic checking spectrum: given a well-typed term,
making it ``less precise'' by using axioms for some subterms actually
results in programs that do not typecheck or reduce anymore.

Because \CICp{\axiom} amounts to working in \CIC
with an initial context extended with $\axiom$, this theory
satisfies normalization (\pnorm) as much as \CIC, so conversion remains decidable. However,
\CICp{\axiom} lacks a satisfying notion of safety because 
there is an {\em infinite} number of open canonical normal forms (more
adequately called {\em stuck terms}) that inhabit any type \coqe{A}.
For instance, in \coqe{bool}, we not only have the normal forms \coqe{true},
\coqe{false}, and \coqe{_st_bool}, but an infinite number of terms stuck on
eliminations of $\axiom$, such as \coqe{match _st_A with ...}
or $\axiom_{\nat \rightarrow \bool}\;1$.

\subsection{The Exceptional Approach}
\label{sec:extt}

\citet{pedrotTabareau:esop2018} present the exceptional type theory \ExTT,
 demonstrating that it is possible to extend a
type theory with a wildcard term while enjoying a satisfying notion of safety,
which coincides with that of programming languages with exceptions.

\ExTT is essentially \CICp{\err}, that is, it
extends \CIC with an indexed error term \coqe{raise_A} that can inhabit any type \coqe{A}. But instead
of being treated as a computational black box like \coqe{_st_A}, \coqe{raise_A} is endowed with computational content
emulating exceptions in programming languages, which propagate instead of being stuck.
For instance, in \ExTT we have the following conversion:
\begin{center}
\coqe{match raise_bool return nat with | true -> O | false -> 1 end}
\quad$\equiv$\quad \coqe{raise_nat}
\end{center}

Notably, such exceptions are {\em call-by-name exceptions}, so one can only
discriminate exceptions on positive types (\ie~inductive types), not on negative
types (\ie~function types). In particular, in \ExTT, \coqe{raise_A_->_B } and
\coqe{fun _ : A => raise_B} are convertible, and the latter is considered to be
in normal form. So \coqe{raise_A} is a normal form of \coqe{A} only if \coqe{A} is a positive type.

\ExTT has a number of interesting properties: it is
normalizing (\pnorm) and safe
(\psafe), taking \coqe{raise_A} into account as usual in programming languages where exceptions are possible outcomes of computation: the normal forms of
closed terms of a positive type (\eg~\coqe{bool}) are either the
constructors of that type (\eg~\coqe{true} and \coqe{false}) or
\coqe{raise} at that type (\eg~\coqe{err_bool}).
As a consequence, \ExTT  does not satisfy full canonicity, but it
does satisfy a weaker form of it. In particular, \ExTT enjoys
(weak) logical consistency: any closed proof of \coqe{False} is convertible to
\coqe{raise_False}, which is discriminable at \coqe{False}.
It has been shown that we can still reason soundly in an
exceptional type theory, either using a parametricity
requirement~\cite{pedrotTabareau:esop2018}, or more flexibly, using
different universe hierarchies~\cite{pedrotAl:icfp2019}.

It is also important to highlight that this weak form of logical
consistency is the {\em most} one can expect in
a theory with effects. Indeed, \citet{pedrotTabareau:popl2020} have
shown that it is not possible to define a type theory with full
dependent elimination that has observable effects (from which
exceptions are a particular case) and at the same time validates
traditional canonicity.
Settling for less, as explained in \cref{sec:axiom} for the axiomatic
approach, leads to an infinite number of stuck terms, even in the
case of booleans, which is in opposition to the type safety criterion of gradual languages,
which only accounts for runtime type errors.

Unfortunately, while \ExTT solves the safety issue of the axiomatic approach, it still suffers from the same limitation as the axiomatic approach regarding type-level computation. Indeed, even though we can use \coqe{raise_A} to inhabit any type, we cannot use it in any meaningful way as a value at the type level. The term:
\begin{center}
\coqe{head nat err_nat (filter nat 4 even [ 0 ; 1 ; 2 ; 3 ])}
\end{center}
does not typecheck, because \coqe{vec A raise_nat} is still not convertible to \coqe{vec A (S raise_nat)}. The reason is that \coqe{raise_nat} behaves like an extra constructor to \coqe{nat}, so
\coqe{S raise_nat} is itself a normal form, and normal forms with different head constructors (\coqe{S} and \coqe{raise_nat}) are not convertible.

\subsection{The Gradual Approach: Simple Types}
\label{sec:grad-simple}

Before going on with our exploration of the fundamental challenges in gradual dependent type theory, we review some key concepts and expected properties in the context of simple types~\cite{siekAl:snapl2015,newAhmed:icfp2018,garciaAl:popl2016}.

\paragraph{Static semantics} Gradually-typed languages introduce
 the unknown type, written \?, which is used to indicate the lack of static typing information~\cite{siekTaha:sfp2006}.
 One can understand such an unknown type in terms of an {\em abstraction} of the
 set of possible types that it stands for~\cite{garciaAl:popl2016}. This
 interpretation provides a naive but natural understanding of the meaning of partially-specified types, for instance $\bool \to \?$ denotes the set of all function types with $\bool$ as domain. Given imprecise types, a gradual type system relaxes all type predicates and functions in order to optimistically account for occurrences of \?.  In a simple type system, the predicate on types is equality, whose relaxed counterpart is called {\em consistency}.\footnote{Not to be confused
with logical consistency!} For instance, given a function \coqe{f} of type $\bool \to \?$, the expression \coqe{(f true) + 1} is well-typed because \coqe{f} could {\em plausibly} return a number, given that its codomain is \?, which is consistent with \coqe{nat}.

Note that there are other ways to consider imprecise types, for instance by restricting the unknown type to denote base types (in which case $\?$ would not be consistent with any function type), or to only allow imprecision in certain parts of the syntax of types, such as effects~\cite{banadosAl:jfp2016}, security labels~\cite{fennellThiemann:csf2013,toroAl:toplas2018}, annotations~\cite{thiemannFennell:esop2014}, or only at the top-level~\cite{biermanAl:ecoop2010}. Here, we do not consider these specialized approaches, which have benefits and challenges of their own, and stick to the mainstream setting of gradual typing in which the unknown type is consistent with any type and can occur anywhere in the syntax of types.

\paragraph{Dynamic semantics}
Having optimistically relaxed typing based on consistency, a gradual language must detect inconsistencies at runtime if it is to satisfy safety (\psafe), which therefore has to be formulated in a way that encompasses runtime errors. For instance, if the function \coqe{f} above returns \coqe{false}, then an error must be raised to avoid reducing to \coqe{false + 1}---a closed stuck term, denoting a violation of safety. The traditional approach to do so is to avoid giving a direct reduction semantics to gradual programs, and instead, to elaborate them to an intermediate language with runtime casts, in which casts between inconsistent types raise errors~\cite{siekTaha:sfp2006}. Alternatively---and equivalently from a semantics point of view---one can define the reduction of gradual programs directly on gradual typing derivations augmented with evidence about consistency judgments, and report errors when transitivity of such judgments is unjustified~\cite{garciaAl:popl2016}. There are many ways to realize each of these approaches, which vary in terms of efficiency and eagerness of checking~\cite{hermanAl:hosc10,tobinFelleisen:popl2008,siekAl:popl10,siekAl:esop2009,toroTanter:scp2020,banados:arxiv2020}.

\paragraph{Conservativity} A first important property of a gradual language is that it is a {\em conservative extension} of a related static typing discipline: the gradual and static systems should coincide on static terms. This property is hereafter called {\bf
  Conservativity}~(\pcons), and parametrized with the considered static system. For instance, we write that \GTLC satisfies \pconst{\STLC}. Technically,
  \citet{siekTaha:sfp2006} prove that typing and reduction of \GTLC and \STLC coincide on their common set of terms (\ie~terms that are fully precise). An important aspect of \pcons is that the type formation rules and typing rules themselves are also preserved, modulo the presence of \? as a new type and the adequate lifting of predicates and functions~\cite{garciaAl:popl2016}. While this aspect is often left implicit, it ensures that the gradual type system does not behave in ad hoc ways on imprecise terms.

Note that, despite its many issues, \CICp{\axiom} (\cref{sec:axiom}) satisfies \pconst{\CIC}: all pure (\ie axiom-free) \CIC terms behave as they would in \CIC. More precisely,
two \CIC terms are convertible in \CICp{\axiom} iff they are convertible in \CIC.
Importantly, this does not mean that \CICp{\axiom} is a conservative extension of
  \CIC \emph{as a logic}---which it clearly is not!

\paragraph{Gradual guarantees}
The early accounts of gradual typing emphasized consistency as the central idea. However,
\citet{siekAl:snapl2015} observed that this characterization left too many possibilities for the impact of type information on program behavior, compared to what was originally intended~\cite{siekTaha:sfp2006}. Consequently, \citet{siekAl:snapl2015} brought forth {\em type precision} (denoted \coqe{precise}) as the key notion, from which consistency can be derived: two types \coqe{A} and \coqe{B} are consistent if and only if there exists \coqe{T} such that \coqe{T precise A} and \coqe{T precise B}. The unknown type \? is the most imprecise type of all, \ie~\coqe{T precise ?} for any \coqe{T}.
Precision is a preorder that can be used to capture the intended {\em
  monotonicity} of the static-to-dynamic spectrum afforded by gradual
typing.
The static and dynamic {\em gradual guarantees} specify that typing
and reduction should be {\em monotone with respect to precision}:
losing precision should not introduce new static or dynamic errors.
These properties require precision to be extended from types to terms. \citet{siekAl:snapl2015} present a natural extension that is purely syntactic: a term is more precise than another if they are syntactically equal except for their type annotations, which can be more precise in the former.

The {\em static gradual guarantee} (SGG) ensures that imprecision does not break typeability:
\begin{definition}[SGG]
If \coqe{t precise u} and \emph{\coqe{t : T}}, then \emph{\coqe{u :
    U}} for some \coqe{U} such that \coqe{T precise U}.
\end{definition}
The SGG captures the intuition that ``sprinkling \? over a term'' maintains its typeability. As such, the notion of precision \coqe{precise} used to formulate the SGG is inherently syntactic, over as-yet-untyped terms: typeability is the {\em consequence} of the SGG theorem.

The {\em dynamic gradual guarantee} (DGG) is the key result that
bridges the syntactic notion of precision to reduction: if \coqe{t precise u}
and \coqe{t} reduces to some value \coqe{v}, then
\coqe{u} reduces to some value \coqe{v'} such that \coqe{v precise v'}; and if \coqe{t} diverges, then so does \coqe{u}.
This property entails that \coqe{t precise u} means that \coqe{t} may error more than \coqe{u}, but otherwise they should behave the same.
Instead of the original formulation of the DGG by
\citet{siekAl:snapl2015}, \citet{newAhmed:icfp2018} appeal to the
semantic notion of {\em observational error-approximation} to capture
the relation between two terms that are contextually equivalent except
that the left-hand side term may fail more:\footnote{Observational
  error-approximation does not mention the case where $\mathcal{C}[t]$
  reduces to $\mathtt{true}$ or $\mathtt{false}$ but the quantification
  over all contexts ensures that, in that case, $\mathcal{C}[u]$ must
  reduce to the same value.}

\begin{definition}[Observational error-approximation]
\label{def:obsapprox}
  A term $\Gamma \vdash t : A$ observationally error-approximates
  a term $\Gamma \vdash u : A$, noted $ t
  \obsApprox u$, if for all boolean-valued observation contexts
  $\mathcal{C} : (\Gamma \vdash A) \Rightarrow (\vdash \bool{})$ closing over all
  free variables, either
  \begin{itemize}
  \item $\mathcal{C}[t]$ and $\mathcal{C}[u]$ both diverge. 
  \item Otherwise if $\mathcal{C}[u] \rtred \err_{\bool}$, then $\mathcal{C}[t] \rtred \err_{\bool}$.
  \end{itemize}
\end{definition}

Using this semantic notion, the DGG simply states that term 
precision implies observational error-approximation:

\begin{definition}[DGG]
If \coqe{t precise u} then \coqe{t obsApprox u}.
\end{definition}

While often implicit, it is important to highlight that the DGG is relative to both the notion of precision \coqe{precise} and the notion of observations \coqe{obsApprox}. Indeed, it is possible to study alternative notions of precisions beyond the natural definition stated by \citet{siekAl:snapl2015}. For instance, following the Abstracting Gradual Typing methodology~\cite{garciaAl:popl2016}, precision follows from the definition of gradual types as a concretization to sets of static types. This opens the door to justifying alternative precisions, \eg~by considering that the unknown type only stands for specific static types, such as base types. Additionally, variants of precision have been studied in more challenging typing disciplines where the natural definition seems incompatible with the DGG, see \eg~\cite{igarashiAl:icfp2017}. As we will soon see below, it can also be necessary in certain situations to consider another notion of observations.

\paragraph{Graduality} As we have seen, the DGG is relative to a notion of precision, but what should this relation be? To go beyond a syntactic axiomatic definition of precision, \citet{newAhmed:icfp2018} 
characterize the good dynamic behavior of precision: the runtime checking mechanism used to define a gradual language, such as casting, should only perform typechecking, and not otherwise affect behavior.
Specifically, they mandate that precision gives rise
to {\em embedding-projection pairs} (ep-pairs):
the cast induced by two types related by precision forms an adjunction,
which induces a retraction.
In particular, going to a less precise type and back is the identity:  
for any term \coqe{a} of type \coqe{A}, and given \coqe{A precise B}, then 
\coqe{a::B::A} should be observationally equivalent to \coqe{a} (recall from \cref{ftn:ascription} 
that \coqe{::} is a type ascription).
For instance, \coqe{1::?::nat} should be equivalent to \coqe{1}. 
Dually, when gaining precision, there is the potential for errors:
given a term \coqe{b} of type \coqe{B}, \coqe{b::A::B} may fail. 
By considering error as the least precise term, this can be stated as 
\coqe{b::A::B precise b}.
For instance, with the imprecise successor function \coqe{f := fun n:? => (S n)::?} of type \coqe{?->?}, we have \coqe{f::nat->bool::?->? precise f}, because the ascribed function will fail when applied.

Technically, the adjunction part states that if we have
\coqe{A precise B}, a term \coqe{a} of type \coqe{A},
and a term \coqe{b} of type \coqe{B}, then
\coqe{a precise b::A <=> a::B precise b}.
The retraction part further states that \coqe{t} is not only more precise
than \coqe{t::B::A} (which is given by the unit of the adjunction) but
is \emph{equi-precise} to it, noted
\coqe{t equiprecise t::B::A}.
Because the DGG dictates that precision implies observational error-approximation,
equi-precision implies observational equivalence,
and so losing and recovering precision
must produce a term that is observationally
equivalent to the original one.

A couple of additional observations need to be made here, as they will play a major role in the development of this article:
\begin{itemize}
\item These two approaches to characterizing gradual typing highlight the need to distinguish syntactic from semantic notions of precision. Indeed, with the usual {\em syntactic} precision from \citet{siekAl:snapl2015}, one cannot derive the ep-pair property, in particular the equi-precision stated above.
This is why \citet{newAhmed:icfp2018} introduce a {\em semantic}
precision, defined on well-typed terms. This semantic precision serves
as a proxy between the syntactic precision and the desired
observational error-approximation.

\item A type-based semantic precision cannot be used for the SGG. Indeed, this theorem (not addressed by \citet{newAhmed:icfp2018}) requires a syntactic notion of precision that {\em predates} typing: well-typedness of the less precise term is the {\em consequence} of the theorem. 
Therefore a full study of a gradual language that covers SGG, DGG, and embedding-projection pairs needs to consider both syntactic and semantic notions of precision.

\item The embedding-projection property does not {\em per se} imply the DGG: one could pick precision to be the universal relation, which trivially induces ep-pairs, but does not imply observational error-approximation. It appears that, in the simply-typed setting considered in prior work, the DGG implies the embedding-projection property. In fact, \citet{newAhmed:icfp2018} essentially advocate ep-pairs as an elegant and compositional proof technique to establish the DGG. But as we uncover later in this article, it turns out that in certain settings---and in particular dependent types---the embedding-projection property imposes {\em more} desirable constraints on the behavior of casts than the DGG alone. 
\end{itemize}
In this paper, we use the term {\bf Graduality} (\pgrad) for the DGG established with respect to a notion of precision that also induces embedding-projection pairs.

\subsection{The Gradual Approach: Dependent Types}
\label{sec:graduality}

Extending the gradual approach to a setting with full dependent types requires reconsidering several aspects.

\paragraph{Newcomers: the unknown term and the error type}
In the simply-typed setting, there is a clear stratification: $\?$ is at the type level, \coqe{raise} is at the term level. Likewise, {\em type precision}, with \? as greatest element, is separate from {\em term precision}, with \coqe{raise} as least element.
In the absence of a type/term syntactic distinction as in \CIC, this stratification is untenable:
\begin{itemize}
\item Because types permeate terms, \? is no longer only the unknown type, but it also acts as the ``unknown term''. In particular, this makes it possible to consider unknown indices for types, as in \cref{ex:indices}. More precisely, there is a family of unknown terms \coqe{?_A}, indexed by their type \coqe{A}. The traditional unknown type is just \coqe{?_Type}, the unknown of the universe \coqe{Type}.
\item Dually, because terms permeate types, we also have the ``error type'', \coqe{raise_Type}.
We have to deal with errors in types.
\item Precision must be unified as a single preorder, with $\?$ at the top and \coqe{raise} at the bottom. The most imprecise term of all is $\mathtt{\?_{\?_\square}}$ ($\?$ for short)---more exactly, there is one such term per type universe. At the bottom, \coqe{raise_A} is the most precise term of type \coqe{A}.
\end{itemize}

\paragraph{Revisiting safety}

The notion of closed canonical forms used to characterize legitimate normal forms
via safety (\psafe) needs to be extended not only with errors as in the
simply-typed setting, but also with unknown terms.
Indeed, as there is an unknown term \coqe{?_A} inhabiting any type
\coqe{A}, we have one new canonical form for each type \coqe{A}. In particular,
\coqe{?_bool} cannot possibly reduce to either \coqe{true} or \coqe{false}
or \coqe{raise_bool}, because doing so would collapse the precision order.
Therefore, \coqe{?_A} should propagate computationally,
like \coqe{raise_A} (\cref{sec:extt}).

The difference between errors and unknown terms is rather on their
static interpretation.
In essence, the unknown term \coqe{?_A} is a dual form of exceptions: it
propagates, but is optimistically comparable, \ie consistent
with, any other term of type \coqe{A}. Conversely, \coqe{raise_A}
should not be consistent with any term of type \coqe{A}.
Going back to the issues we identified with the axiomatic (\cref{sec:axiom}) and exceptional
(\cref{sec:extt}) approaches when dealing with type-level computation, the term:
\begin{center}
\coqe{head nat ?_nat (filter nat 4 even [ 0 ; 1 ; 2 ; 3 ])}
\end{center}
now typechecks: \coqe{vec A ?_nat} can be deemed consistent with
\coqe{vec A (S ?_nat)}, because \coqe{S ?_nat} is consistent with \coqe{?_nat}.
This newly-brought flexibility is the key to support the different scenarios from the introduction.
So let us now turn to the question of how to integrate consistency in
a dependently-typed setting.

\paragraph{Relaxing conversion}
In the simply-typed setting, consistency is a relaxing of syntactic type equality to account for imprecision. In a dependent type theory, there is a more powerful notion than syntactic equality to compare types, namely {\em conversion} (\cref{sec:norm-canon-endang}): if \coqe{t:T} and \coqe{T==U}, then \coqe{t:U}.
For instance, a term of type \coqe{T} can be used as a function as soon as \coqe{T}
is {\em convertible} to the type \coqe{forall (a:A),B} for some types \coqe{A} and \coqe{B}.
The proper notion to relax in the gradual dependently-typed setting is therefore conversion, not syntactic equality.

\citet{garciaAl:popl2016} give a general framework for gradual typing that explains how to relax any static type predicate to account for imprecision: for a binary type predicate \coqe{P}, its consistent lifting \coqe{Q(A,B)} holds iff there exist static types \coqe{A'} and \coqe{B'} in the denotation (\emph{concretization} in abstract interpretation parlance) of \coqe{A} and \coqe{B}, respectively, such that \coqe{P(A',B')}. As observed by \citet{castagnaAl:popl2019}, when applied to equality, this defines consistency as a unification problem. Therefore, the consistent lifting of conversion ought to be that two terms \coqe{t} and \coqe{u} are consistently convertible iff they denote some static terms \coqe{t'} and \coqe{u'} such that \coqe{t' == u'}. This property is essentially higher-order unification, which is undecidable.

It is therefore necessary to adopt some approximation of consistent conversion (hereafter called consistency for short) in order to be able to implement a gradual dependent type theory.
And there lies a great challenge: because of the absence of stratification between typing and reduction, the static gradual guarantee (SGG) already demands monotonicity for conversion, a demand very close to that of the DGG.\footnote{In a dependently-typed programming language with separate typing and execution phases, this demand of the SGG is called the {\em normalization gradual guarantee} by \citet{eremondiAl:icfp2019}.}

\paragraph{Dealing with neutrals}
Prior work on gradual typing usually only considers reduction on closed terms in order to establish results about the dynamics, such as the DGG.
But in dependent type theory,
conversion must operate on {\em open} terms, yielding {\em neutral} terms such as \coqe{1::X::nat} where \coqe{X} is a type variable, or \coqe{x+1} where \coqe{x} is of type \coqe{nat} or \coqe{?_Type}.
Such neutral terms cannot reduce further, and can occur in both terms and types. Depending on the upcoming substitutions, neutrals can fail or not. For instance, in \coqe{1::X::nat}, if \coqe{?_Type} is substituted for \coqe{X}, the term reduces to \coqe{1}, but fails if \coqe{bool} is substituted instead.

Importantly, less precise variants of neutrals can reduce {\em more}. For instance, both 
\coqe{1::?_Type::nat} and \coqe{?_nat+1} are less precise than the neutrals above, but do evaluate further (typically, to \coqe{1} and to \coqe{?_nat}, respectively). This interaction between neutrals, reduction, and precision spices up the goal of establishing DGG and \pgrad. In particular, this re-enforces the need to consider semantic precision, because a syntactic precision is likely not to be stable by reduction: \coqe{1::X::nat precise 1::?::nat} is obvious syntactically, but \coqe{1::X::nat precise 1} is not.

\paragraph{DGG vs Graduality} In a dependently-typed setting, it is possible to satisfy the DGG while not satisfying the embedding-projection pairs requirement of \pgrad.
To see why, consider a system in which any term of type \coqe{A} that is not fully-precise immediately reduces to \coqe{?_A}. This system would satisfy \pcons, \psafe, \pnorm, and \ldots the DGG. Recall that the DGG only requires reduction to be monotone with respect to precision, so using the most imprecise term \coqe{?_A} as a universal redux is surely valid. This collapse of the DGG is impossible in the simply-typed setting because there is no unknown term: it is only possible when \coqe{?_A} exists {\em as a term}. It is therefore possible to satisfy the DGG while being useless when {\em computing} with imprecise terms.
Conversely, the degenerate system breaks the embedding-projection requirement of graduality stated by \citet{newAhmed:icfp2018}.
For instance, \coqe{1::?_Type::nat} would be convertible to \coqe{?_nat}, which is
{\em not} observationally equivalent to \coqe{1}.
Therefore, the embedding-projection requirement of graduality goes beyond the DGG in a way that is critical in a dependent type theory, where it captures both the smoothness of the static-to-dynamic checking spectrum, and the proper computational content of valid uses
of imprecision.

\paragraph{Observational refinement}

Let us come back to the notion of observational
error-approximation used in the simply-typed setting to state the
DGG.
\citet{newAhmed:icfp2018} justify this notion because in
``gradual typing we are not particularly interested in
when one program diverges more than another, but rather when it
produces more type errors.''
This point of view is adequate in the simply-typed setting because
the addition of casts may only produce more type errors; in particular, 
adding casts can never lead to divergence when the original term does not diverge
itself.
Therefore, in that setting, the definition of error-approximation
includes equi-divergence.
The situation in the dependent setting is however more
complicated, if the theory admits divergence. 
There exist non-gradual dependently-typed programming languages that admit divergence (\eg~Dependent Haskell~\cite{eisenberg2016dependent}, \Idris~\cite{brady:jfp2013}); we will also present one such theory in this article. 

In a gradual dependent type theory that admits divergence,
a {\em diverging term} is more precise than the
{\em unknown term} $\?$. Because the unknown term in itself does not diverge, this
breaks the left-to-right implication of
equi-divergence. Note that this argument does not rely on any specific definition of precision, 
just on the fact that the unknown term is the most imprecise term (at its type).
Additionally, an error at a {\em diverging type} $X$ may be ascribed to $\?_{[]}$
then back to $X$. Evaluating this
roundtrip requires evaluating $X$ itself, which makes the less
precise term diverge. This breaks the right-to-left implication of
equi-divergence.

To summarize,
the way to understand these counterexamples is that 
in a dependent and non-terminating setting, 
the motto of graduality ought to be adjusted: more precise programs produce
more type errors {\em or diverge more}. This leads to the following definition
of \emph{observational refinement}.

\begin{definition}[Observational refinement]
\label{def:obsref}
  A term $\Gamma \vdash t : A$ observationally refines a term $\Gamma \vdash u : A$, noted $ t
  \obsRef u$  if for all boolean-valued observation context
  $\mathcal{C} : (\Gamma \vdash A) \Rightarrow (\vdash \bool{})$ closing over all
  free variables, if $\mathcal{C}[u] \rtred \err_{\bool}$ or diverges, then either $\mathcal{C}[t] \rtred \err_{\bool}$ or $\mathcal{C}[t]$ diverges.
\end{definition}

In this definition, errors and divergence are collapsed. Thus, in a gradual dependent theory that admits divergence, equi-refinement does not imply observational equivalence, because one term might diverge while the other reduces to an error. Of course, if the gradual dependent theory is strongly normalizing, then both notions $\obsApprox$ (\cref{def:obsapprox}) and  $\obsRef$ (\cref{def:obsref})  coincide.

\subsection{The Fire Triangle of Graduality}
\label{sec:fire-triangle}

To sum up, we have seen four important properties that can be expected from a gradual type theory:
safety (\psafe), conservativity with respect to a theory $X$ (\pconst{X}), graduality (\pgrad),
and normalization (\pnorm). Any type theory ought to satisfy at least \psafe. Unfortunately, we now show that mixing the three other properties \pcons, \pgrad and \pnorm is impossible for \STLC, as well as for \CIC.

\paragraph{Preliminary: regular reduction.}
To derive this general impossibility result, by relying only on the properties and without committing to a specific language or theory, we need to assume that the reduction system used to decide conversion is regular, in that it only looks at the weak head normal form of subterms for reduction rules, and does not magically shortcut reduction, for instance based on the specific syntax of inner terms. As an example, $\beta$-reduction is not allowed to look into the body of the lambda term to decide how to proceed.

This property is satisfied in all actual systems we know of, but formally stating it in full generality, in particular without devoting
to a particular syntax, is beyond the scope of this paper. Fortunately, in the following, we need only rely on a much weaker hypothesis,
which is a slight strengthening of the retraction hypothesis of \pgrad.
Recall that retraction says that when \coqe{A precise B}, any term \coqe{t} of type \coqe{A} is equi-precise to \coqe{t::B::A}.
We additionally require that for any context \coqe{C}, if \coqe{C[t]}
reduces at least $k$ steps, then \coqe{C[t::B::A]} also reduces at
least $k$ steps.
Intuitively, this means that the reduction of \coqe{C[t::B::A]}, while free to decide when to get rid of the embedding-to-\coqe{B}-projection-to-\coqe{A}, cannot use it to avoid reducing \coqe{t}. This property is true in all gradual languages, where type information at runtime is used only as a monitor.

\paragraph{Gradualizing \STLC.}
Let us first consider the case of \STLC.
We show that $\Omega$ is {\em necessarily} a well-typed diverging term in any
  gradualization of \STLC that satisfies the other properties.

\begin{theorem}[Fire Triangle of Graduality for \STLC]\label{thm:triangle-STLC}

Suppose a gradual type theory that satisfies properties
\pconst{\STLC} and \pgrad.
Then \pnorm cannot hold.
\end{theorem}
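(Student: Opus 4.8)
The goal is to construct, inside any gradualization of \STLC{} satisfying \pconst{\STLC} and \pgrad, a well-typed term that diverges, contradicting \pnorm. The natural candidate is the gradual analogue of $\Omega = \delta\,\delta$ with $\delta = (\lambda x : \?.\ x\,x)$, as recalled in \cref{sec:norm-canon-endang}. The subtlety is that I cannot simply \emph{assume} this term typechecks and reduces as in \GTLC---I must \emph{derive} that it does, using only \pcons, \pgrad, and the slightly strengthened retraction hypothesis stated just before the theorem.

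\textbf{Step 1: build a self-application via casts.} By \pconst{\STLC}, the gradual theory contains the unknown type $\?$ and its type formation and typing rules behave as expected; in particular casts (type ascriptions, recall \cref{ftn:ascription}) are available between $\?$ and $\? \to \?$, since these are related by precision ($\? \to \? \pre \?$, as $\?$ is the least precise type). So for a variable $x : \?$ I can form $\asc{x}{(\? \to \?)}$, apply it to $x$, and obtain $(\asc{x}{(\? \to \?)})\,x : \?$. Abstracting, $\delta := \l x:\?.\,(\asc{x}{(\? \to \?)})\,x$ has type $\? \to \?$, hence also $\asc{\delta}{\?}$ has type $\?$. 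Then $\Omega := (\asc{\delta}{(\? \to \?)})\,(\asc{\delta}{\?})$ is a well-typed closed term of type $\?$. The point is that \emph{every} step here is forced by \pcons: the static fragment of the theory must admit these casts and applications, with these types, on pain of not being a conservative extension with the standard lifting of typing rules.

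\textbf{Step 2: show $\Omega$ reduces indefinitely.} Here is where the strengthened retraction hypothesis does the work. Unfolding one $\beta$-step, $\Omega$ reduces to a term of the shape $C[\,\asc{\asc{\delta}{\?}}{(\? \to \?)}\,]$ where $C[-]$ is the context $(-)\,(\asc{\delta}{\?})$ --- i.e.\ a roundtrip $\asc{\asc{\delta}{\?}}{(\?\to\?)}$ wrapping $\delta$ (using that $\delta : \?\to\?$ was cast up to $\?$ and the result is now cast back down to $\?\to\?$, which is exactly an embedding-then-projection along $\?\to\? \pre \?$). By retraction, $\asc{\asc{\delta}{\?}}{(\? \to \?)}$ is equi-precise to $\delta$; and by the strengthened retraction hypothesis, if $C[\delta]$ reduces at least $k$ steps, so does $C[\asc{\asc{\delta}{\?}}{(\? \to \?)}]$. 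But $C[\delta] = \delta\,(\asc{\delta}{\?})$, which $\beta$-reduces in one step back to (a term containing another copy of) $\Omega$. Iterating, I get that $\Omega$ reduces at least $k$ steps for every $k$, hence has no normal form.

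\textbf{The main obstacle.} The delicate point is Step 2: pinning down \emph{precisely} which subterm plays the role of the retraction roundtrip $\asc{\asc{t}{B}}{A}$, and checking that the residual of $\beta$-reduction genuinely exhibits this pattern so that the strengthened retraction hypothesis applies cleanly. The cast bookkeeping must be arranged so that after each unfolding a fresh embedding-projection pair appears around a copy of $\delta$, letting the induction on $k$ go through; one must be careful that $\pgrad$'s retraction is about the \emph{up-then-down} roundtrip ($\asc{(\asc{t}{B})}{A}$ when $A \pre B$), which is indeed the orientation produced here since $\? \to \? \pre \?$. A secondary, mostly bureaucratic point is making sure \pcons{} really does hand us the typing of $\Omega$ with all the needed casts---this is where the remark that \pcons{} preserves \emph{the typing rules themselves}, modulo lifting, is essential. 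Everything else is routine.
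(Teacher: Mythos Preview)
Your approach is essentially the paper's: build $\Omega$ from $\delta = \lambda x{:}\?.\,(\asc{x}{\?{\to}\?})\,x$, observe that one $\beta$-step sends $C[\delta]$ to $C[\asc{\asc{\delta}{\?}}{\?{\to}\?}]$ with $C[\cdot]=[\cdot]\,(\asc{\delta}{\?})$, and use the strengthened retraction hypothesis to close an induction on reduction length. Two small cleanups: the outer ascription in your $\Omega := (\asc{\delta}{\?{\to}\?})\,(\asc{\delta}{\?})$ is redundant since $\delta$ already has type $\?{\to}\?$ (the paper just takes $\Omega := \delta\,(\asc{\delta}{\?}) = C[\delta]$), and the induction is cleanest stated as ``if $\Omega = C[\delta]$ reduces $\geq k$ steps then, since it reduces in one step to $C[\asc{\asc{\delta}{\?}}{\?{\to}\?}]$ which by strengthened retraction also reduces $\geq k$ steps, $\Omega$ reduces $\geq k{+}1$ steps''---your phrasing that $C[\delta]$ ``$\beta$-reduces back to $\Omega$'' inverts the direction and obscures where the extra step comes from.
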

\begin{proof}
  We pose $\Omega := \delta~(\asc{\delta}{\?})$ with
  $\delta := \lambda~ x : \?.~(\asc{x}{\? \to
    \?})~x$ and show that it must necessarily be a well-typed diverging
  term.
  Because the unknown type \? is consistent with any type (\cref{sec:grad-simple}) and $\? \to \?$ is a valid type (by \pconst{\STLC}), the self-applications in $\Omega$ are well-typed, $\delta$ has type $\? \to \?$, and $\Omega$ has type $\?$.
  Now, we remark that $\Omega = C[\delta]$ with $C[\cdot] = [\cdot]~(\asc{\delta}{\?})$.

  We show by induction on $k$ that $\Omega$ reduces at least $k$
  steps, the initial case being trivial.
  Suppose that $\Omega$ reduces at least $k$ steps.
  By maximality of $\?$ with respect to precision, we have that
  $\? \to \? \pre \?$, so we can apply the strengthening of \pgrad
  applied to $\delta$, which tells us that
  $C[\asc{\asc{\delta}{\?}}{\?\to\?}]$ reduces at least $k$ steps
  because $C[\delta]$ reduces at least $k$ steps.
  But by $\beta$-reduction, we have that $\Omega$ reduces in one step to
  $C[\asc{\asc{\delta}{\?}}{\?\to\?}]$. So $\Omega$ reduces at least
  $k+1$ steps.

  This means that $\Omega$ diverges, which is a violation of \pnorm.
\end{proof}

This result could be extended to all terms of the untyped lambda calculus, not only $\Omega$, in order to obtain the embedding theorem of \GTLC~\cite{siekAl:snapl2015}. Therefore, the embedding theorem is not an independent property, but rather a consequence of \pcons and \pgrad---that is why we have not included it as such in our overview of the gradual approach (\cref{sec:grad-simple}).

\paragraph{Gradualizing \CIC.}
We can now prove the same impossibility theorem for \CIC, by reducing
it to the case of \STLC. Therefore this theorem can be proven
for type theories others than \CIC, as soon as they faithfully embed \STLC.

\begin{theorem}[Fire Triangle of Graduality for \CIC]\label{thm:triangle}

  A gradual dependent type theory cannot simultaneously satisfy properties
  \pconst{\CIC}, \pgrad and \pnorm.
\end{theorem}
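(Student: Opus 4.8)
The plan is to reduce \cref{thm:triangle} to the simply-typed case, \cref{thm:triangle-STLC}, exploiting the fact that \CIC faithfully embeds \STLC. First I would fix a standard embedding of \STLC into \CIC: pick one closed inhabited type of \CIC to interpret the base type (say $\nat$), interpret every arrow type $A \to B$ as the corresponding non-dependent $\Pi$-type, and translate abstractions and applications to their \CIC counterparts. This embedding preserves typing and, crucially, $\beta$-reduction step by step. By \pconst{\CIC}, the gradual dependent type theory $\mathcal{T}$ under consideration agrees with \CIC on these pure terms; and since \pcons also preserves the type- and term-formation rules and the typing rules themselves, modulo the addition of \? and the lifting of predicates and functions (as recalled in \cref{sec:grad-simple}), the restriction of $\mathcal{T}$ to the image of this embedding, enriched with \?, is a gradualization of \STLC satisfying \pconst{\STLC}.

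Next I would check that \pgrad for $\mathcal{T}$ restricts to \pgrad for that \STLC fragment: precision on $\mathcal{T}$ restricts to a precision relation with \? as top element, the casts between types related by precision still form embedding-projection pairs, and in particular the strengthened retraction hypothesis of \pgrad (if $C[t]$ reduces at least $k$ steps then so does $C[\asc{\asc{t}{B}}{A}]$, for $A \pre B$) holds verbatim. Then I would replay the proof of \cref{thm:triangle-STLC} inside $\mathcal{T}$: the term $\Omega := \delta~(\asc{\delta}{\?})$ with $\delta := \l~x : \?.~(\asc{x}{\?\to\?})~x$ is well-typed of type \? in $\mathcal{T}$, because \? is a type, $\?\to\?$ is a type (by \pconst{\CIC}, as \CIC has the $\Pi$ former), and \? is consistent with $\?\to\?$ (\? being consistent with every type), so the self-applications typecheck exactly as in \GTLC. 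The same induction on $k$ — using the single $\beta$-step from $\Omega$ to $C[\asc{\asc{\delta}{\?}}{\?\to\?}]$ together with $\?\to\? \pre \?$ and the strengthened retraction — shows $\Omega$ reduces at least $k$ steps for all $k$, hence diverges, contradicting \pnorm.

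The main obstacle I anticipate is not exhibiting the diverging term but making the reduction to \STLC airtight: one has to argue that \pconst{\CIC} genuinely transports not only convertibility of pure terms but also the shape of the formation and typing rules — in particular that the $\Pi$-former, the abstraction/application rules, and the consistency-based relaxation used to typecheck self-application in $\mathcal{T}$ coincide with those of a bona fide gradualization of \STLC — and that the notions of precision and of ``reduces at least $k$ steps'' restrict compatibly along the embedding. Given the discussion of \pcons in \cref{sec:grad-simple}, which explicitly records that formation and typing rules are preserved modulo \? and the lifting of predicates, this is a matter of careful bookkeeping rather than a genuinely new idea; once in place, \cref{thm:triangle-STLC} applies off the shelf. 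This argument also makes clear that the result holds for any type theory that faithfully embeds \STLC, not just \CIC.
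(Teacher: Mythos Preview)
Your proposal is correct and follows essentially the same approach as the paper: reduce to \cref{thm:triangle-STLC} by observing that \CIC embeds \STLC (so \pconst{\CIC} implies \pconst{\STLC}), that \pgrad specializes to the simply-typed fragment by taking $\?$ to be $\?_{[]_0}$, and then invoke the \STLC result to exhibit a diverging $\Omega$. The paper's proof is terser and does not spell out the bookkeeping you flag as the main obstacle, but the structure is identical, including the closing remark that the argument works for any theory faithfully embedding \STLC.
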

\begin{proof}
  We show that a gradual dependent type theory satisfying \pconst{\CIC} and \pgrad
  must contain a diverging term, thus contravening \pnorm.
  The typing rules of \CIC contain the typing rules of \STLC,
  using only one universe $[]_0$,
  where the function type is interpreted using the dependent
  product and the notions of reduction coincide, so \CIC embeds
  \STLC; a well-known result on PTS~\cite{barendregt:jfp1991}.
  This means that \pconst{\CIC} implies \pconst{\STLC}.
  Additionally, \pgrad can be specialized to the simply-typed fragment of the theory,
  by setting the unknown type $\?$ to be $\?_{[]_0}$. Therefore, we can apply
  \cref{thm:triangle-STLC} and we get a well-typed term that diverges, finishing the proof.
\end{proof}

\paragraph{The Fire Triangle in practice}
In non-dependent settings, all gradual languages where \? is universal admit non-termination and therefore compromise \pnorm. \citet{garciaTanter:wgt2020} discuss the possibility to gradualize \STLC without admitting non-termination, for instance by considering that \? is not universal and denotes only base types (in such a system, \coqe{? -> ? nprecise ?}, so the argument with $\Omega$ is invalid).
Without sacrificing the universal unknown type, one could design a variant of \GTLC that uses some mechanism to detect divergence, such as termination contracts~\cite{nguyenAl:pldi2019}. This would yield a language that certainly satisfies \pnorm, but it would break \pgrad. Indeed, because the contract system is necessarily over-approximating in order to be sound (and actually imply \pnorm), there are effectively-terminating programs with imprecise variants that yield termination contract errors.

To date, the only related work that considers the gradualization of full dependent types with \? as both a term and a type, is the work on GDTL \cite{eremondiAl:icfp2019}. GDTL is a programming language
with a clear separation between the typing and execution phases, like \Idris~\cite{brady:jfp2013}.
GDTL adopts a different strategy in each phase: for typing, it uses Approximate Normalization (AN), which always produces \coqe{?_A} as a result of going through imprecision and back. This means that conversion
is both total and decidable (satisfies \pnorm), but it breaks \pgrad for the same reason as the degenerate system we discussed in \cref{sec:graduality} (notice that the example uses a gain of precision from the unknown type to \coqe{nat}, so the example behaves just the same with AN).
In such a phased setting, the lack of computational content of AN is not critical, because it only means that typing becomes overly optimistic. To execute programs, GDTL relies on standard \GTLC-like reduction semantics, which is computationally precise, but does not satisfy \pnorm.

\section{\GCIC: Overall Approach, Main Challenges and Results}
\label{sec:gcic-overview}

Given the Fire Triangle of Graduality (\cref{thm:triangle}), we know that gradualizing \CIC implies making some compromise.
Instead of focusing on one possible compromise, this work develops
three novel solutions, each compromising one specific property (\pnorm, \pgrad, or \pconst{\CIC}), and does so in a common parametrized framework, \GCIC.

This section gives an informal, non-technical overview of our approach to gradualizing \CIC, highlighting the main challenges and results. As such, it serves as a gentle roadmap to the following sections, which are rather dense and technical.

\subsection{\GCIC: 3-in-1}
\label{sec:gcic:-3-1}

To explore the spectrum of possibilities enabled by the Fire Triangle of Graduality, we develop a general approach to gradualizing \CIC, and use it to define three theories, corresponding to different resolutions of the triangular tension between normalization (\pnorm), graduality (\pgrad) and conservativity with respect to \CIC (\pconst{\CIC}).

The crux of our approach is to recognize that, while there is not much to vary within \STLC itself to address the tension of the Fire Triangle of Graduality, there are several variants of \CIC
that can be considered by changing the hierarchy of universes and its
impact on typing---after all, \CIC is but a particular Pure Type System
(PTS)~\cite{barendregt:jfp1991}.

In particular, we consider a parametrized version of a gradual \CIC, called \GCIC, with two parameters (\cref{fig:ccic-typing}):
\begin{itemize}
\item The first
parameter characterizes how the universe level of a $\Pi$ type is determined
in typing rules: either as taking the {\em maximum} of the levels of the involved 
types, as in standard \CIC, or as the {\em successor} of that maximum. The latter option yields a variant
of \CIC that we call \CICs (read ``\CIC-shift''). \CICs is a subset of \CIC, with a stricter constraint on universe levels. In particular \CICs loses the closure of universes under dependent product that CIC enjoys. As a consequence, some well-typed \CIC terms are not well-typed in \CICs.\footnote{A minimal example of a well-typed \CIC term that is ill typed in \CICs is \texttt{narrow :} $\nat -> \square$, where \texttt{narrow n} is the type of functions that accept \texttt{n} arguments. Such dependent arities violate the universe constraint of \CICs.}
\item The second parameter is the dynamic counterpart of the first parameter: its role is to enforce that universe levels are coherent through type casts during the reduction of casts. 
Note that we only allow this reduction parameter to be loose (\ie~using maximum) if the typing parameter is also loose. Indeed, letting the typing parameter be strict (\ie~using successor) while the reduction parameter is loose breaks subject reduction, and hence \psafe.
\end{itemize}

Based on these parameters, this work develops the following three variants of \GCIC,
whose properties are summarized in \cref{tab:gcic} with pointers to the respective theorems---because \GCIC is one common parametrized framework, we are able to establish most properties for all variants at once:
\begin{enumerate}
  \item {\bf \GCICP: a theory that satisfies both \pconst{\CIC} and \pgrad, but sacrifices \pnorm.}
  This theory is a rather direct application of the
principles discussed in~\cref{sec:tradeoffs} by extending \CIC
with errors and unknown terms, and changing conversion with
consistency. This results in a theory that is not normalizing.

  \item {\bf \GCICs: a theory that satisfies both \pnorm and \pgrad, and supports \pcons with respect to \CICs.} This theory uses the universe hierarchy at the \emph{typing level} to detect
the potential non-termination induced by the use of consistency
instead of conversion. This theory simultaneously satisfies \pgrad, \pnorm and
\pconst{\CICs}.

  \item {\bf \GCICT: a theory that satisfies both \pconst{\CIC} and \pnorm, but does not fully validate \pgrad.}
  This theory uses the
universe hierarchy at the \emph{computational level} to detect
potential divergence. Such runtime check failures invalidate the DGG for some terms,
and hence \pgrad, as well as the SGG.
\end{enumerate}

\begin{table}[t]
\begin{tabular}{|l|l|l|l|l|l|l|}
\hline
 & \psafe & \pnorm  & \pconst{X} & \pgrad & SGG & DGG \\
\hline
\GCICP \rule{0pt}{3ex}
  & \cmark \footnotesize{(Th.~\labelcref{thm:ccic-psafe})}
  & \xmark
  & \CIC \footnotesize{(Th.~\labelcref{thm:conservativity})}
  & \cmark \footnotesize{(Th. ~\labelcref{thm:GCICP-graduality})}
  & \cmark \footnotesize{(Th.~\labelcref{thm:static-graduality})}
  & \cmark \footnotesize{(Th.~\labelcref{thm:dgg})}\\
\GCICs
  & \cmark \footnotesize{(idem)}
  & \cmark \footnotesize{(Th.~\labelcref{thm:ccic-pnorm} \& \labelcref{thm:discrete-model})}
  & \CICs  \footnotesize{(idem)}
  & \cmark  \footnotesize{(Th.~\labelcref{thm:graduality-gcics})}
  & \cmark \footnotesize{(idem)}
  & \cmark  \footnotesize{(Th.~\labelcref{thm:dgg})}\\
\GCICT
  & \cmark \footnotesize{(idem)}
  & \cmark \footnotesize{(idem)}
  & $\mathsf{CIC}$\phantom{$^{\uparrow}$}    \footnotesize{(idem)}
  & \xmark  
  & \xmark
  & \xmark\\
\hline
\end{tabular}\\
\begin{minipage}{\textwidth}
\footnotesize{
\begin{center}
\psafe: safety, \pnorm: normalization, \pconst{X}: conservativity wrt theory $X$,
\pgrad: graduality (DGG + ep-pairs),
\newline SGG: static gradual guarantee, DGG: dynamic gradual guarantee
\end{center}
}
\end{minipage}
\caption{\GCIC variants and their properties}
\label{tab:gcic}
\end{table}

\paragraph{Practical implications of \GCIC variants.} Regarding the examples from \cref{sec:intro}, all three variants of \GCIC support the exploration of the type-level precision spectrum for the functions described in \cref{ex:indices,ex:elim,ex:specif}.  In particular, we can define \coqe{filter} by giving it the imprecise type
\coqe{forall A n (f : A -> bool), vec A n -> vec A ?_nat} in order to bypass the difficulty of precisely characterizing the size of the output vector. Any invalid optimistic assumption is detected during reduction and reported as an error.

Unsurprisingly, the semantic differences between the three \GCIC variants crisply manifest in the treatment of potential non-termination (\cref{ex:rec}), more specifically, {\em self application}.
Let us come back to the term $\Omega$
used in the proof of~\cref{thm:triangle}.
In all three variants, this term is well-typed. In \GCICP, it
reduces forever, as it would in the untyped lambda calculus. In that
sense, \GCICP can embed the untyped lambda calculus just as
GTLC~\cite{siekAl:snapl2015}. In \GCICT, this term fails at runtime
because of the strict universe check in the reduction of casts, which
breaks graduality because $\coqe{?_Typei -> ? _Typei precise ?_Typei}$
tells us that the upcast-downcast coming from an \eppair should not fail.
A description of the reductions in \GCICP and in \GCICT is given in
full details in \cref{sec:back-to-omega}.
In \GCICs, \coqe{Omega} fails in the same way as in \GCICT, but this
does not break graduality because of the shifted universe level on
$\Pi$ types.
A consequence of this stricter typing rule is that in \GCICs, \coqe{?_Typei -> ?_Typei precise ?_Typej} for any $j > i$, but \coqe{?_Typei -> ?_Typei nprecise ?_Typei}.
Therefore, the casts performed in $\Omega$ do not come from an \eppair
anymore and can legitimately fail.

Another scenario where the differences in semantics manifest is functions with {\em dependent arities}.
For instance, the well-known C function \printf{} can be embedded in a well-typed fashion in
\CIC: it takes as first argument a format string and computes from it both the type and {\em number}
of later arguments.
This function brings out the limitation of \GCICs: since the format string can
specify an arbitrary number of arguments, we need as many $\to$, and
\printf{} cannot typecheck in a theory where universes are not closed under
function spaces.
In \GCICT{}, \printf{} typechecks but the same problem will appear dynamically
when casting \printf{} to $\?$ and back to its original type: the result will be
a function that works only on format strings specifying no more arguments than
the universe level at which it has been typechecked.
Note that this constitutes an example of violation of graduality for
\GCICT{}, even of the dynamic gradual guarantee.
Finally, in \GCICP{} the function can be gradualized as much as one wants, without surprises.

\paragraph{Which variant to pick?} As explained in the introduction, the aim of this paper is to shed light on the design space of gradual dependent type theories, not to advocate for one specific design.
We believe the appropriate choice depends on the specific goals of the language designer, or perhaps more pertinently, on the specific goals of a given project, at a specific point in time.

The key characteristics of each variant are:
\begin{itemize}
\item \GCICP favors flexibility over decidability of type-checking. While this might appear heretical in the context of proof assistants, this choice has been embraced by practical languages such as Dependent Haskell~\cite{eisenberg2016dependent}, a dependently-typed Haskell where both divergence and runtime errors can happen at the type level. The pragmatic argument is simplicity: by letting programmers be responsible, there is no need for termination checking techniques and other restrictions.

\item \GCICs is theoretically pleasing as it enjoys both normalization and graduality. In practice, though, the fact that it is not conservative wrt full \CIC means that one would not be able to simply import existing libraries as soon as they fall outside of the \CICs subset.
In \GCICs, the introduction of \? should be done with an appropriate understanding of universe levels. This might not be a problem for advanced programmers, but would surely be harder to grasp for beginners.

\item \GCICT is normalizing and able to import existing libraries without restrictions, at the expense of some surprises on the graduality front. Programmers would have to be willing to accept that they cannot just sprinkle \? as they see fit without further consideration, as any dangerous usage of imprecision will be flagged during conversion.
\end{itemize}

In the same way that systems like \Coq, \Agda or \Idris support
different ways to customize their semantics (such as allowing
\texttt{Type-in-Type}, switching off termination checking, using the \texttt{partial}/\texttt{total}
compiler flags)---and of course, many programming languages implementations supporting some sort of customization, GHC being a salient representative---one can imagine a flexible realization of \GCIC that give users the control over the two parameters we identify in this work, and therefore have access to all three \GCIC variants.
Considering the inherent tension captured by the Fire Triangle of Graduality, such a pragmatic approach might be the most judicious choice, making it possible to gather experience and empirical evidence about the pros and cons of each in a variety of concrete scenarios.

\subsection{Typing, Cast Insertion, and Conversion}

As explained in \cref{sec:grad-simple},
in a gradual language, whenever we reclaim precision, we might be wrong and need to fail in order to preserve safety (\psafe).
In a simply-typed setting, the standard approach is to define typing on the
gradual source language, and then to translate terms via a type-directed cast insertion
to a target cast calculus, \ie~a language with explicit runtime type
checks, needed for a well-behaved reduction~\cite{siekTaha:sfp2006} . For instance, in a call-by-value language, the upcast (loss of precision) $\ascdom{10}{\nat}{\?}$ is considered a (tagged) value, and the downcast (gain of precision) $\ascdom{v}{\?}{\nat}$ reduces successfully if $v$ is such a tagged natural number, or to an error otherwise.

We follow a similar approach for \GCIC, which is
elaborated in a type-directed manner to a second calculus,
named \CCIC (\cref{sec:cast-calculus}).
The interplay between typing and cast insertion is however more subtle in the
context of a dependent type theory. Because typing needs computation, and
reduction is only meaningful in the target language, \CCIC is used
 {\em as part of the typed elaboration} in order to compare types (\cref{sec:elaboration}).
This means that \GCIC has no typing on its own, independent of its
elaboration to the cast calculus.%
\footnote{This is similar to what happens in practice in proof assistants such as \Coq \cite[Core language]{Coq:manual}, where terms input by the user in the Gallina language are first elaborated in order to add implicit arguments, coercions, etc. The computation steps required by conversion are
performed on the elaborated terms, never on the raw input syntax.}

In order to satisfy conservativity with respect to \CIC (\pconst{\CIC}), ascriptions in \GCIC are required to satisfy consistency: for instance, \coqe{true::?::nat} is well-typed by consistency (twice), but \coqe{true::nat} is ill typed. Such ascriptions in \CCIC are realized by casts. For instance
$\asc{\asc{0}{\?}}{\bB}$
in \GCIC elaborates (modulo sugar and reduction) to $\ascdom{\ascdom{0}{\nat}{\?_{[]}}}{\?_{[]}}{\bool}$ in \CCIC.
A major difference between ascriptions in \GCIC and casts in \CCIC is
that casts are not required to satisfy consistency: a cast between any
two types is well-typed, although of course it might produce an
error.

Finally, standard presentations of \CIC use a standalone conversion rule, as usual in declarative presentations of type systems. To gradualize \CIC, we have to move to a more algorithmic presentation in order to forbid transitivity, otherwise all terms would be well-typed by way of a transitive step through \coqe{?}. But \pconst{\CIC} demands that only terms with explicitly-ascribed imprecision enjoy its flexibility.
This observation is standard in the gradual typing literature~\cite{siekTaha:sfp2006,siekTaha:ecoop2007,garciaAl:popl2016}. As in prior work on gradual dependent types~\cite{eremondiAl:icfp2019}, we adopt a bidirectional presentation of typing for \CIC (\cref{sec:bidirectional-cic}), which allows us to avoid accidental transitivity and directly derive a deterministic typing algorithm for \GCIC.

\subsection{Realizing a Dependent Cast Calculus: \CCIC}

To inform the design and justify the reduction rules provided for
\CCIC, we build a
syntactic model of \CCIC by translation to \CIC augmented with
induction-recursion~\cite{Martin-Lof1996,DybjerS03,GhaniMF15} (\cref{sec:bare-model}). From a type
theory point of view, what makes \CCIC peculiar is first of all the
possibility of having {\em errors} (both ``pessimistic'' as \coqe{raise} and
``optimistic'' as \coqe{?}), and the necessity to do {\em intensional type
  analysis} in order to resolve casts. For the former, we build upon the work of
\citet{pedrotTabareau:esop2018} on the exceptional type theory \ExTT.
For the latter, we reuse the technique of \citet{BoulierPT17} to account for
\coqe{typerec}, an elimination principle for the universe \coqe{Type}, which
requires induction-recursion to be implemented.%

We call the syntactic model of \CCIC the {\em discrete model}, in contrast with
a semantic model motivated in the next subsection. The discrete model of \CCIC
captures the intuition that the unknown type is
inhabited by ``hiding'' the underlying type of the injected term. In other words,
\coqe{?_Typei} behaves as a
dependent sum \coqe{Sigma A:Typei. A}.
 Projecting out of the unknown type is realized
through type analysis (\coqe{typerec}), and may fail (with an error in the \ExTT sense).
Note that here, we provide a particular interpretation of the unknown
term in the universe, which is legitimized by an observation
made by \citet{pedrotTabareau:esop2018}: \ExTT does not constrain in
any way the definition of exceptions in the universe.
The syntactic model of \CCIC allows us to establish that the reduction semantics enjoys strong normalization (\pnorm), for the two variants \CCICT and \CCICs. Together with
safety (\psafe), this gives us weak logical consistency for \CCICT and \CCICs.

\subsection{Precisions and Properties}
\label{sec:precision-graduality}

As explained earlier (\cref{sec:graduality}), we need two different notions of
precision to deal with SGG and \pgrad.
At the source level (\GCIC), we introduce a notion of {\em syntactic precision} that captures the
intuition of a more imprecise term as ``the same term with subterms and/or annotated types replaced by \?'', and is defined without any assumption of typing.
In \CCIC, we define a notion of {\em structural precision}, which is mostly syntactic except that, in order to account for cast insertion during elaboration, it tolerates precision-preserving casts (for instance, $\ascdom{t}{A}{A}$ is related to $t$ by structural precision).
Armed with these two notions of precision, we prove
{\em elaboration graduality} (\cref{thm:static-graduality}), which is
the equivalent of SGG in our setting: if a term $t$ of \GCIC elaborates to a
term $t'$ of \CCIC, then a term $u$ less syntactically precise than $t$ in \GCIC elaborates to
a term $u'$ less structurally precise than $t'$ in \CCIC.

Because DGG is about the behavior of terms, it is technically stated and
proven for \CCIC. We show in \cref{sec:gcic-theorems2} that DGG can be proven
for \CCIC (in its variants \CCICP and \CCICs) on the structural
precision.
However, as explained in \cref{sec:grad-simple}, we cannot expect to prove \pgrad for these \CCIC variants with respect to structural precision directly.
In order to overcome this problem, we build an alternative model of
\CCIC 
called the {\em monotone model} (\cref{sec:poset-model-dtt,sec:realizing-unknown-type,sec:monotone-universe,sec:monotone-model}).
This model endows types with the structure of an ordered set, or poset. In
the monotone model, we can reason about the semantic notion of {\em propositional
  precision} and prove that it gives rise to embedding-projection
  pairs~\cite{newAhmed:icfp2018}, thereby establishing
\pgrad for \CCICs (\cref{thm:graduality-gcics}).
The monotone model only works for a normalizing gradual type theory,
thus we then establish \pgrad for \CCICP using a variant of the
monotone model based on Scott's
model~\cite{scott76} of the untyped $\lambda$-calculus using $\omega$-complete
partial orders (\cref{sec:grad-non-term}).

\section{Preliminaries: Bidirectional \CIC}
\label{sec:bidirectional-cic}

We develop \GCIC on top of a bidirectional version of \CIC, whose
presentation was folklore among type
theory specialists~\cite{McBride2019}, and that has recently been
studied in details by \citet{LennonBertrand2021}.
As explained before, this bidirectional
presentation is mainly useful to avoid multiple uses of a standalone conversion rule
during typing, which becomes crucial to preserve \pconst{\CIC} in a gradual setting
where conversion is replaced by consistency, which is not transitive.
We give here a comprehensive summary of the bidirectional version of
\CIC that will help the reader follow the presentation of \GCIC in \cref{sec:gcic}.

\paragraph{Syntax}
Our syntax for \CIC terms, featuring a predicative universe hierarchy $[]_i$, is the following (in Backus-Naur form):
\begin{align}
  \label{fig:syntax-cic}
\terms_{\CIC} \ni t ::= x \mid \square{}_i \mid t~t \mid \l x : t . t \mid \P x : t . t \mid I\ulev{i}(\orr{t}) \mid c\ulev{i}(\orr{t},\orr{t}) \mid \match{I}{t}{z.t}{f.\orr{y}.\orr{t}} \tag{Syntax of \CIC}
\end{align}
We reserve letters $x,y,z$ to denote variables. Other lower-case and upper-case Roman letters are used to represent terms, with the latter
used to emphasize that the considered terms should be thought of as types (although the difference does not occur at a syntactic level in
this presentation). Finally Greek capital letters are for contexts
(lists of declarations of the form $x : T$).
We also use bold letters $\mathbf{X}$ to denote sequences of objects
$X_1,\ldots,X_n$
and $t\subs{\mathbf{a}}{\mathbf{y}}$ for the simultaneous substitution of $\orr{a}$ for
$\orr{y}$.
We present generic inductive types $I$ with constructors $c$, although we restrict to well-formed (and in particular, strictly positive) ones
to preserve normalization, following~\cite{DBLP:conf/icalp/Gimenez98}.
 At this point we consider only inductive types without indices; we consider indexed inductive types in \cref{sec:giit}.
Inductive types are formally annotated with a universe level $\ulev{i}$, controlling the level of its parameters:
for instance $\operatorname{List}\ulev{i}(A)$ expects $A$ to be a type in $[]_i$. This level is omitted when inessential.
\label{params}
An inductive type at level $i$ with parameters $\orr{a}$ is noted $I\ulev{i}(\orr{a})$, and we use $\pars(I,i)$ to denote the types of those parameters.
The well-formedness condition on inductives in particular enforces that the $k$-th parameter
$\pars_k(I,i)$ only contains $k-1$ variables, corresponding to the previous $k-1$ parameters.
Thus if $\orr{a}$ is a list of terms of the same length as $\pars(I,i)$ we denote as
$\pars(I,i)\parsub{\orr{a}}$ the list where in parameter type $\pars_k(I,i)$, the $k-1$ first
elements of $\orr{a}$ have been substituted for the $k-1$ free variables.
Similarly $c_k^I\ulev{i}(\orr{a},\orr{b})$ denotes the $k$-th constructor of the
inductive $I$, taking parameters $\orr{a}$ and arguments $\orr{b}$. Again, the type
of parameters is denoted $\pars(I,i)$, and the type of the arguments $\args(I,i,c_k)$.
Similarly as for parameters, we also use $\args(I,i,c_k)\parsub{\orr{a},\orr{b}}$ for
the list where in the $m$-th argument type
$\orr{a}$ have been substituted for parameter variables, and
the first $m-1$ elements of $\orr{b}$ for argument variables.

The inductive eliminator $\match{I}{s}{z.P}{f.\orr{y.t}}$ corresponds to a
fixpoint immediately followed by a match. In \Coq, one would write it
\begin{coq}
  fix $f$ $s$ := match $s$ as $z$ return $P$ with | $c_1~\orr{y}$ => $t_1$ ... | $c_n~\orr{y}$ => $t_n$ end
\end{coq}
In particular, the return predicate $P$ has access to an extra bound variable $z$ for the scrutinee, and similarly
the branches $t_k$ are given access to variables $f$ and $\orr{y}$,
corresponding respectively to the recursive function and the
arguments of the corresponding constructor.
Describing the exact guard condition to ensure termination is
outside the scope of this presentation, again see~\cite{DBLP:conf/icalp/Gimenez98}.
We implicitly assume in the rest of this paper that every fixpoint is guarded.

\paragraph{Bidirectional Typing}
\begin{figure}
	\boxedrule{$\vdash \Gamma$}
	\begin{mathpar}
		\inferrule{ }{\vdash \cdot}[Empty] \ilabel{infrule:cic-axiom} \and
		\inferrule{\vdash \Gamma \\ \Gamma \vdash T
                  \pcheckty{\square{}} \square{}_{i}}{\vdash \Gamma, x
                  : T}[Concat] \ilabel{infrule:cic-concat} \\
	\end{mathpar}
	\boxedrule{$\Gamma \vdash t \inferty T$} \vspace{1em}
	\begin{mathpar}
		\inferrule{ }{\Gamma \vdash \square{}_{i} \inferty \square{}_{i + 1}}[Univ] \ilabel{infrule:cic-univ} \and
		\inferrule{(x : T) \in \Gamma}{\Gamma \vdash x
                  \inferty T}[Var] \ilabel{infrule:cic-var} \and
		\inferrule{\Gamma \vdash A \pcheckty{\square{}} \square{}_{j} \\ \Gamma, x :
      A \vdash B \pcheckty{\square{}} \square{}_{i}}{\Gamma \vdash  \P x : A . B
      \inferty \square{}_{\max(i,j)}}[Prod] \ilabel{infrule:cic-prod} \\
		\inferrule{\Gamma \vdash A \pcheckty{\square{}} \square{}_{i} \\ \Gamma, x : A \vdash t \inferty B}{\Gamma \vdash \l x : A . t \inferty  \P x : A . B}[Abs] \ilabel{infrule:cic-abs} \and
		\inferrule{\Gamma \vdash t \pcheckty{\P}  \P x : A . B \\ \Gamma \vdash u \checkty A}{\Gamma \vdash t~u \inferty B\subs{x}{u}}[App] \ilabel{infrule:cic-app}
    \\

		\inferrule
    {\Gamma \vdash a_k \checkty \pars_k(I,i)\parsub{\orr{a}}}
    {\Gamma \vdash I\ulev{i}(\orr{a}) \inferty \square{}_i}[Ind] \ilabel{infrule:cic-ind} \and
		\inferrule
    {\Gamma \vdash a_k \checkty \pars_k(I,i)\parsub{\orr{a}} \\
			\Gamma \vdash b_m \checkty \args_m(I,i,c)\parsub{\orr{a}, \orr{b}}}
    {\Gamma \vdash c^I\ulev{i}(\orr{a},\orr{b})
      \inferty I\ulev{i}(\orr{a})}[Cons] \ilabel{infrule:cic-cons}
    \\
		\inferrule{
			\Gamma \vdash s \pcheckty{I} I\ulev{i}(\orr{a}) \\
			\Gamma, z : I(\orr{a}) \vdash P \pcheckty{\square{}} \square{}_j \\\\
			\Gamma, f : (\P z : I\ulev{i}(\orr{a}). P), \orr{y} : \args(I,i,c_k)\parsub{\orr{a},\orr{y}} \vdash t_k \checkty P\subs{c^I_k\ulev{i}(\orr{a},\orr{y})}{z}
		}{\Gamma \vdash \match{I}{s}{z.P}{f.\orr{y.t}} \inferty P\subs{s}{z}}[Fix]
    \ilabel{infrule:cic-fix}
  \end{mathpar}

              \vspace{1em}
	\boxedrule{$\Gamma \vdash t \checkty T$}
	\begin{mathpar}
		\inferrule{\Gamma \vdash t \inferty T' \\ T' \conv
                  T}{\Gamma \vdash t \checkty T}[Check] \ilabel{infrule:cic-check}
	\end{mathpar}
	\boxedrule{$\Gamma \vdash t \pcheckty{\bullet} T$} \vspace{1em}
	\begin{mathpar}
		\inferrule{\Gamma \vdash t \inferty T \\ T \rtred \P x : A. B}{\Gamma \vdash t \pcheckty{\P} \P x : A . B}[Prod-Inf] \ilabel{infrule:cic-prod-inf} \and
		\inferrule{\Gamma \vdash t \inferty T \\ T \rtred I\ulev{i}(\orr{a})}{\Gamma \vdash t \pcheckty{I} I\ulev{i}(\orr{a})}[Ind-Inf] \ilabel{infrule:cic-ind-inf} \and
		\inferrule{\Gamma \vdash t \inferty T \\ T \rtred \square{}_{i}}{\Gamma \vdash t \pcheckty{\square{}} \square{}_{i}}[Univ-Inf] \ilabel{infrule:cic-univ-inf}
	\end{mathpar}

	\boxedrule{$t \redCCIC u$} \text{(congruence rules omitted)}

  \begin{mathpar}
		(\l x: A . t)~u \redCCIC t \subs{u}{x} \and
    \match{I}{c_k(\mathbf{a}, \mathbf{b})}{z.P}{f.\orr{y.t}}
		\redCCIC t_k \subs{\l x: I(\mathbf{a}).
      \match{I}{x}{z.P}{f.\orr{y.t}}
      }{f} \subs{\mathbf{b}}{\mathbf{y}} \and
	\end{mathpar}

	\boxedrule{$t \conv u$}

	$$t \conv u \quad:=\quad \exists v~v', t \rtred v \wedge u \rtred v'
        \wedge v =_\alpha v'$$\\[0.5em]

        \begin{center}
          where $=_\alpha$ denotes syntactic equality up-to renaming
        \end{center}
	\caption{\CIC: Bidirectional typing}
	\label{fig:bidir}
\end{figure}

In the usual, declarative, presentation of \CIC, conversion between types is
allowed at any stage of a typing derivation through a free-standing conversion
rule.
However, when conversion is replaced by a non-transitive relation of consistency, this free-standing rule is much too permissive and would violate \pconst{\CIC}.
 Indeed, as every type should be consistent with the unknown type
 $\?_{\square{}}$, using such a rule twice in a row makes it possible to change the type of a
 typable term to any arbitrary type: if $\Gamma \vdash t : T$, because $T \cons \?_{[]}$
 and $\?_{[]} \cons S$, we could derive $\Gamma \vdash t : S$. This in turn would allow typeability of any term, including fully-precise terms, which is in contradiction with \pconst{\CIC}.

Thus, we rely on a bidirectional presentation of \CIC typing, presented in \cref{fig:bidir}, where the usual judgment $\Gamma \vdash t : T$ is decomposed into several mutually-defined judgments. The difference between the judgments lies in the role of the type:
in the \emph{inference} judgment $\Gamma \vdash t \inferty T$, the type is considered an output, whereas in the
\emph{checking} judgment $\Gamma \vdash t \checkty T$, the type is instead seen as an input.
Conversion can then be restricted to specific positions, namely to mediate between
inference and checking judgments (see \nameref{infrule:cic-check}), and can thus never appear
twice in a row.

Additionally, in the framework of an elaboration procedure, it is interesting to make a clear distinction between the subject of the rule (\ie the object that is to be elaborated), inputs that can be used for this elaboration, and outputs that must be constructed during the elaboration.
In the context checking judgment $\vdash \Gamma$, $\Gamma$ is the subject of the judgment. In all the other judgments, the subject is the term, the context is an input, and the type is either an input or an output, as we just explained.

An important discipline, that goes with this distinction, is that judgments should ensure that outputs are well-formed, under the hypothesis that the inputs are. All rules are built to ensure this invariant. This distinction between inputs, subject and output, and the associated discipline, are inspired by \citet{McBride2018,McBride2019}.
This is also the reason why no rule for term elaboration re-checks the context, as it is an input that is assumed to be well-formed. Hence, most properties we state in an open context involve an explicit hypothesis that the involved context is well-formed.

\paragraph{Constrained Inference}
Apart from inference and checking, we also use a set of {\em constrained inference} judgments $\Gamma \vdash t \pcheckty{\bullet} T$, with the same modes as inference. These judgments infer the type $T$ but under some constraint $\bullet$: for instance that it should be a universe at some level ($\bullet=\square{}$), a $\Pi$-type ($\bullet=\P$), or an instance of an inductive $I$ ($\bullet=\text{I}$).
Constrained inference judgments come from a close analysis of typing algorithms, such as the one of \Coq, where in some places, an intermediate judgment between inference and checking happens: inference is performed, but then the type is reduced to expose its head constructor, which is imposed to be a specific one.
A stereotypical example is \nameref{infrule:cic-app}: one starts by inferring a type for $t$, but want it to be a $\Pi$-type so that its domain can be used to check $u$.
To the best of our knowledge, these judgments have never been formally described elsewhere.
Instead, in the rare bidirectional presentations of \CIC, they are inlined in some way, as they only amount to some reduction.
However, this is no longer true in a gradual setting: $\?$ introduces an alternative, valid solution to the constrained inference, as a term of type $\?$ can be used where a term with a $\Pi$-type is expected. Thus, we will need multiple rules for constrained inference, which is why we make it explicit already at this stage.

\paragraph{Reduction}
From here on, we impose no reduction strategy by default, and use $\redCCIC$ and the unqualified word "reduction" for \emph{full} reduction, \ie reduction that can be performed at an arbitrary place in a term, and $\rtred$ for its reflexive, transitive closure. Most of the properties would however carry over if we fixed \emph{weak-head} reduction instead, and we sketch at the end of some proofs how they would carry over to such a fixed strategy. As uniqueness of inferred types and elaborated terms becomes stronger with a deterministic reduction strategy, we discuss weak-head reduction specifically in that case.

Finally, we observe that the equivalence of this bidirectional
formulation with standard \CIC relies on the transitivity of
conversion; this has been very recently spelled out in details and formalized by
\citet{LennonBertrand2021}.
However, in the gradual setting, this property does not hold. This is precisely the point of
using a bidirectional formulation: since
consistency is not a transitive relation, a standard
presentation of typing is not appropriate. 

\section{From \GCIC to \CCIC}
\label{sec:gcic}\label{sec:gcic-to-ccic}

We now present the elaboration from the source gradual system \GCIC to the cast calculus \CCIC.
We start with \CCIC, describing its typing, reduction
and metatheoretical properties (\cref{sec:cast-calculus}).
We next describe \GCIC and its elaboration to
\CCIC, along with few direct properties (\cref{sec:elaboration}). This elaboration is mainly an extension of the bidirectional \CIC presented in the previous section.
We illustrate the semantics of the different \GCIC variants by considering the $\Omega$ term (\cref{sec:back-to-omega}).
We finally expose technical properties
of the reduction of \CCIC (\cref{sec:gcic-simulation}) used to prove the most
important theorems on elaboration: conservativity over \CIC or \CICs, as well as
the gradual guarantees (\cref{sec:gcic-theorems2}).

\subsection{\CCIC}
\label{sec:cast-calculus}

\paragraph{Syntax}
The syntax of \CCIC\footnote{Written using a \targetcolor{blue color}.}
extends that of \CIC (\cref{sec:bidirectional-cic})
with three new term constructors: the unknown term $\tcol{\?_T}$ and dynamic
error $\tcol{\err_T}$ of type $\tcol{T}$, as well as the cast $\tcol{\cast{S}{T}{t}}$ of a term $\tcol{t}$ of type $\tcol{S}$ to type $\tcol{T}$
\begin{align}
  \label{fig:syntax-castcic}
  \tcol{\terms_{\CCIC}} \ni \tcol{t} ::= \dots \mid \tcol{\?_t} \mid \tcol{\err_t} \mid \tcol{\cast{t}{t}{t}} \tag{Syntax of \CCIC}
\end{align}
with casts associating to the right: $\tcol{\cast{S}{S'}{\cast{T'}{T}{t}}}$ is $\tcol{\cast{S}{S'}{\left( \cast{T}{T'}{t}\right)}}$. We also compress successive ones in the following way: $\tcol{\cast{T}{T'' \Leftarrow T'}{t}}$ is shorthand for $\tcol{\cast{T'}{T''}{\cast{T}{T'}{t}}}$.
The unknown term and dynamic error both behave as exceptions as
defined in \ExTT~\cite{pedrotTabareau:esop2018}.
Casts keep track of the use of consistency during elaboration, implementing
a form of runtime type-checking, raising the error $\tcol{\err_T}$ in case of a type mismatch.
We call \emph{static} the terms of \CCIC that do not use any of these new constructors---static \CCIC terms correspond to \CIC terms.
\paragraph{Universe parameters}
\begin{figure}
\begin{align}
\qquad \sortOfPi{i}{j} &:= \max(i,j) &   \castOfPi{i} &:= i \label{eq:spi-cpi-gcicp}\tag{\text{\GCICP-\CCICP}}\\[-0.5em]
\sortOfPi{i}{j} &:= \max(i,j) & \castOfPi{i} &:= i-1 \label{eq:spi-cpi-gcict}\tag{\text{\GCICT-\CCICT}}\\[-0.5em]
\sortOfPi{i}{j} &:= \max(i,j)+1 &  \castOfPi{i} &:= i-1 \label{eq:spi-cpi-gcics}\tag{\text{\GCICs-\CCICs}}
\end{align}
	\caption{Universe parameters}
	\label{fig:univ-param}
\end{figure}

\CCIC is parametrized by two functions, described in \cref{fig:univ-param}, to account for the three
different variants of \GCIC we consider (\cref{sec:gcic:-3-1}).
The first function $\sortOfPiName$ computes
the level of the universe of a dependent product, given
the levels of its domain and codomain (see the updated \nameref{infrule:ccic-prod} rule in \cref{fig:ccic-typing}). The second function
$\castOfPiName$ controls the universe level in the reduction of a cast between
$\? \rightarrow \?$ and $\?$ (see \cref{fig:CCIC-reduction}).

\paragraph{Typing}

\cref{fig:ccic-typing} gives the typing rules for the three new primitives of \CCIC. Apart from the modified \nameref{infrule:ccic-prod} rule, which uses the $\sortOfPiName$ parameter, all other typing rules are exactly the same as in \CIC. When disambiguation is needed, we note this typing judgment as $\caty$.
The typing rules \nameref{infrule:ccic-unk} and \nameref{infrule:ccic-err}
say that both $\tcol{\?_T}$ and $\tcol{\err_T}$ infer $\tcol{T}$
when $\tcol{T}$ is a type.
Note that in \CCIC, as is sometimes the case in cast calculi~\cite{siekAl:popl10,newAhmed:icfp2018}, no consistency premise is required for a cast to be well-typed. 
Here, consistency only plays a role in \GCIC, but disappears after
elaboration. 
Instead, we rely on the usual conversion, defined as
in \CIC as the existence of $\alpha$-equal reducts for the reduction
described hereafter.
The \nameref{infrule:ccic-cast} rule only ensures that both the source and target of the cast are indeed types, and that the casted term indeed has the source type.

\begin{figure}
		\boxedrule{$\tcol{\Gamma} \vdash t \inferty T$}
		\begin{mathpar}
			\qquad \dots \and
			\inferrule{
        \tcol{\Gamma} \vdash \tcol{A} \pcheckty{[]} \tcol{[]_{j}} \\
        \tcol{\Gamma, x : A} \vdash \tcol{B} \pcheckty{[]} \tcol{[]_{i}}}
			{\tcol{\Gamma} \vdash \tcol{\P x : A . B} \inferty
                          \tcol{[]_{\sortOfPi{i}{j}}}}[Prod] \ilabel{infrule:ccic-prod}
                        \and \dots \\
			\inferrule{
        \tcol{\Gamma} \vdash \tcol{T} \pcheckty{[]} \tcol{[]_i}}
      {\tcol{\Gamma} \vdash \tcol{\?_T} \inferty \tcol{T}}[Unk]
        \ilabel{infrule:ccic-unk} \and
      \inferrule{
        \tcol{\Gamma} \vdash \tcol{T} \pcheckty{[]} \tcol{[]_i}}
        {\tcol{\Gamma} \vdash \tcol{\err_T} \inferty \tcol{T}}[Err]
        \ilabel{infrule:ccic-err} \and
			\inferrule{
        \tcol{\Gamma} \vdash \tcol{A} \pcheckty{[]} \tcol{[]_i} \\
        \tcol{\Gamma} \vdash \tcol{B} \pcheckty{[]} \tcol{[]_j} \\ 
        \tcol{\Gamma} \vdash \tcol{t} \checkty \tcol{A}}
        {\tcol{\Gamma} \vdash \tcol{\cast{A}{B}{t}} \inferty \tcol{B}}[Cast]
        \ilabel{infrule:ccic-cast} \and
		\end{mathpar}

		\caption{\CCIC: Bidirectional typing (extending \CIC \cref{fig:bidir},
replacing \nameref{infrule:cic-prod})}
		\label{fig:ccic-typing}
\end{figure}

\paragraph{Reduction}
\begin{figure}
  \boxedrule{$\H$, $\hd : \types_{\CCIC} \to \H$ and $\stalk : \H \to \types_{\CCIC}$}

	\begin{mathpar}
    ~\\

    \H \ni h ::= \square_i \mid \Pi \mid I \\

	\hd(\tcol{\P x : A . B}) := \Pi  \and
	\hd(\tcol{\square_i}) := \square_i  \and
	\hd(\tcol{I(\mathbf{a})}) := I	 \\\\

	\stalkCIC{i}{[]_j}
	:= \left\{ \begin{array}{lr}
	\tcol{[]_j} & \text{ if $j < i$}
	\\
	\tcol{\err_{[]_i}} & \text{if $j \geq i$}
	\end{array} \right. \and
	\stalkCIC{i}{I}
	:= \tcol{I(\Unk{\pars(I,i)})}  \and
	\stalkCIC{i}{\Pi}
	:= \left\{ \begin{array}{lr}
	\tcol{\Unk{[]_{\castOfPi{i}}} \rightarrow
	\Unk{[]_{\castOfPi{i}}}} & \text{if $\castOfPi{i}
	\geq 0$}
	\\
	\tcol{\err_{[]_i}} & \text{if $\castOfPi{i}
	< 0$}
	\end{array} \right.
	\end{mathpar}

	\caption{Head constructor and germ}
	\label{fig:head-germ}
\end{figure}

The typing rules provide little insight on the new primitives; the interesting
part really lie in their reduction behavior. The reduction rules of \CCIC are given in
\cref{fig:CCIC-reduction} (congruence rules omitted).
Reduction relies on two auxiliary functions relating head constructors $h \in \H$ (\cref{fig:head-germ}) to those terms that start with either $\Pi$, $[]$ or $I$, the set of which we call $\types_{\CCIC}$.
The first is the function $\hd$, which returns the head constructor of a type.
In the other direction, the germ%
\footnote{
  The germ function corresponds to an abstraction function as in
  AGT~\cite{garciaAl:popl2016}, if one interprets the head $h$ as the set of all
  types whose head type constructor is $h$. \citet{wadlerFindler:esop2009}
  christened the corresponding notion a \emph{ground type}, later reused in the
  gradual typing literature. This terminology however clashes with
  its prior use in denotational semantics~\cite{Levy2004}: there a ground type
  is a first-order datatype. Note that \citet{siekTaha:sfp2006} also call ground types the base types of the language, such as $\bool$ and $\nat$. We therefore prefer the less overloaded term {\em germ}, used by analogy with the
  geometrical notion of the \emph{germ of a section}~\cite{maclane-moerdijk}: the
  germ of a head constructor represents an equivalence class of types that are
  locally the same.
} function
$\stalkCIC{i}{h}$ constructs the least
precise type with head $h$ at level $i$.
In the case where no such type exists (\eg when $\castOfPi{i} < 0$),
this least precise type is the error.

\begin{figure}
  \boxedrule{$\tcol{t} \redCCIC \tcol{t}$}
  \begin{mathpar}
    \text{\textbf{Propagation rules for $\tcol{\?}$ and $\tcol{\err}$}} \\
    \redrule{
      \tcol{\?_{\P(x:A).B}}
    }{
      \tcol{\l (x : A). \?_{B}}
    \hfill}[Prod-Unk]
    \ilabel{redrule:prod-unk} \\
    \redrule{
      \tcol{\err_{\P(x:A).B}}
    }{
      \tcol{\l (x : A). \err_{B}}
    \hfill}[Prod-Err]
    \ilabel{redrule:prod-err} \\
    \redrule{
      \tcol{\match{I}{\?_{I(\mathbf{a})}}{z.P}{f.\mathbf{y.t}}}
    }{
      \tcol{\?_{P\subs{\?_{I(\orr{a})}}{z}}}
    \hfill}[Match-Unk] \ilabel{redrule:match-unk} \\
    \redrule{
      \tcol{\match{I}{\err_{I(\mathbf{a})}}{z.P}{f.\mathbf{y.t}}}
    }{
      \tcol{\err_{P\subs{\err_{I(\orr{a})}}{z}}}
    \hfill}[Match-Err] \ilabel{redrule:match-err} \\
    \redrule{
      \tcol{\ascdom{\?_{I(\orr{a})}}{I(\orr{a''})}{I(\orr{a'})}}
    }{
      \tcol{\?_{I(\orr{a'})}}
    \hfill}[Ind-Unk] \ilabel{redrule:ind-unk} \\
    \redrule{
      \tcol{\ascdom{\err_{I(\orr{a})}}{I(\orr{a''})}{I(\orr{a'})}}
    }{
      \tcol{\err_{I(\orr{a'})}}
    \hfill}[Ind-Err] \ilabel{redrule:ind-err} \\
    \redrule{
      \tcol{\ascdom{\?_{\?_{\square}}}{\?_{\square}}{X}}
    }{
      \tcol{\?_{X}}
    \hfill}[Down-Unk] \ilabel{redrule:down-unk} \\
  \redrule{
    \tcol{\ascdom{\err_{\?_{\square}}}{\?_{\square}}{X}}
  }{
    \tcol{\err_{X}}
  \hfill}[Down-Err] \ilabel{redrule:down-err} \\
  
  \text{\textbf{Reduction rules for cast}} \\
  
  \redrule{
    	\tcol{\ascdom{(\l x : A . t)}{\P (x: A_1). B_1}{\P (y:A_2).B_2}}
    }{ 
      \myflushright
    }[Prod-Prod] \ilabel{redrule:prod-prod} \\
  \myflushright \tcol{ \l y : A_2. \ascdom{(t\subs{\cast{A_2}{A}{y}}{x})}{B_1\subs{\cast{A_2}{A_1}{y}}{x}}{B_2} } \\
  \redrule{
    \tcol{\ascdom{A}{[]_i}{[]_i}}
  }{\tcol{A} \hfill}[Univ-Univ] \ilabel{redrule:univ-univ} \\
  \redrule{
      \tcol{\ascdom{c(\orr{a},b_1, \dots, b_n)}{I(\orr{a_1})}{I(\orr{a_2})}}
    }{
      \tcol{ c(\orr{a'}, \orr{b'_1}, \dots, \orr{b'_n})}
    \hfill}[Ind-Ind] \label{redrule:ind-ind} \\
    \myflushright 
    \text{with $\tcol{\orr{b_k'}} :=
      \tcol{\ascdom{b_k}{\args_k(I,i,c)\parsub{\orr{a},\orr{b}}}{\args_k(I,i,c)\parsub{\orr{a'},\orr{b'}}}}$} \\
	\redrule{
    \tcol{\ascdom{t}{T}{T'}}
  }{
    \tcol{\err_{T'}}
  }[Head-Err] \ilabel{redrule:head-err}
	  \hfill \text{when $\tcol{T}, \tcol{T'} \in \types_{\CCIC}$ and $\hd \tcol{T} \neq \hd \tcol{T'}$} 
     \\
  \redrule{
    \tcol{\ascdom{t}{\err_{\square}}{T}}
  }{
    \tcol{\err_T}
  \hfill}[Dom-Err] \ilabel{redrule:err-dom} \\
  \redrule{
    \tcol{\ascdom{t}{T}{\err_{\square}}}
  }{
    \tcol{\err_{\err_\square}}
  }[Codom-Err] \label{redrule:err-codom}
  \hfill \text{when $T \in \types_{\CCIC}$} \\
  \redrule{
    \tcol{\ascdom{f}{\P x:A.B}{\?_{\square_i}}}
  }{
    \tcol{\cast{\P x:A.B}{\?_{[]_i} \Leftarrow \stalkCIC{i}{\Pi}}{f}}
  \hfill}[Prod-Germ] \label{redrule:prod-germ} \\
	\myflushright \text{when $\tcol{\P x : A . B} \neq \tcol{\stalkCIC{j}{\Pi}}$ for $j \geq i$} \\
	\redrule{
		\tcol{\ascdom{t}{I(\orr{a})}{\?_{\square_i}}}
	}{
		\tcol{\cast{I(\orr{a})}{\?_{[]_i} \Leftarrow \stalkCIC{i}{I}}{t}}
	}[Ind-Germ] \label{redrule:ind-germ}
	\hfill \text{when $\tcol{I(\orr{a})} \neq \tcol{\stalkCIC{j}{I}}$ for $j\geq i$} \\
  \redrule{
		\tcol{\cast{\stalkCIC{i}{h}}{X \Leftarrow \?_{\square_i}}{t}}
	}{
		\tcol{\ascdom{t}{\stalkCIC{i}{h}}{X}}
  }[Up-Down] \label{redrule:up-down}
	\hfill \text{when $\tcol{\stalkCIC{i}{h}} \neq \tcol{\err_{[]_i}}$} \\
  \redrule{
    \tcol{\ascdom{t}{A}{\?_{\square_i}}}
  }{
    \tcol{\err_{\?_{\square_i}}}
  }[Size-Err] \label{redrule:size-err}
  \hfill \text{when $\min \{j \mathop{|} \exists h \in \H, \tcol{\stalkCIC{j}{h}} = A\} > i$}
  \end{mathpar}
  \caption{\CCIC: Reduction rules (extending \cref{fig:bidir},
          congruence rules omitted)}
  \label{fig:CCIC-reduction}

\end{figure}

The design of the reduction rules is mostly dictated by the discrete
and monotone models of \CCIC presented later
in~\cref{sec:realizing-cast-calculus}.
Nevertheless, we now provide some intuition about their meaning.
Let us start with rules \nameref{redrule:prod-unk}, \nameref{redrule:prod-err}, \nameref{redrule:match-unk} and \nameref{redrule:match-err}. These rules specify the exception-like propagation behavior of both $\tcol{\?}$
and $\tcol{\err}$ at product and inductive types.
Rules \nameref{redrule:ind-unk} and \nameref{redrule:ind-err}
similarly propagate $\tcol{\?}$ and $\tcol{\err}$ when cast between the same inductive
type, and rules \nameref{redrule:down-unk} and \nameref{redrule:down-err} do the same
from the unknown type to any type $\tcol{X}$.

Next are rules \nameref{redrule:prod-prod}, \nameref{redrule:ind-ind} and \nameref{redrule:univ-univ},
which correspond to success cases of dynamic checks, where the cast is between types with the same head.
In that case, casts are either completely
erased when possible, or propagated. As usual in gradual typing, directly inspired by higher-order contracts~\cite{findlerFelleisen:icfp2002}, \nameref{redrule:prod-prod} distributes the function cast in two casts, one for the argument and one for the body; note the substitution in the source codomain in order to account for dependency.
Also, because constructors and inductives are fully-applied,
this \nameref{redrule:prod-prod} rule cannot be blocked on a partially-applied constructor or inductive.
Regarding inductive types, the restriction to reduce only on constructors means that a cast between $\tcol{\nat}$ and $\tcol{\nat}$ is blocked until its argument term is a constructor, rather than disappearing right away as for the universe. 
We follow this somewhat non-optimal strategy to be consistent
between inductive types, because for more complex inductive types such as lists, the propagation of casts on subterms cannot be avoided.

On the contrary, rule \nameref{redrule:head-err} specifies failure of a dynamic check when the considered types have different heads. Similarly, rules \nameref{redrule:err-dom}, \nameref{redrule:err-codom} specify that cast to or from the error type is always an error.

Finally, there are specific rules pertaining to casts to and from $\tcol{\?}$, showcasing its behaviour as a universal type. Rules \nameref{redrule:prod-germ} and \nameref{redrule:ind-germ} decompose an upcast into $\tcol{\?}$ as an upcast to a germ followed by an upcast from the germ to $\tcol{\?}$. This decomposition of an upcast to $\tcol{\?}$ into a series of "atomic" upcasts from a germ to $\tcol{\?}$ is a consequence of the way the cast operation is implemented in \cref{sec:realizing-ccic}, but similar decompositions appear e.g. in \citet{siekAl:snapl2015}, where the equivalent of our germs are called ground types.
The side conditions guarantee that this rule is used when no other applies.
Rule \nameref{redrule:up-down} erases the succession of an upcast to $\tcol{\?}$ and a downcast from it. Note that in this rule the upcast $\tcol{\ascdom{t}{\stalkCIC{h}{i}}{\?_{[]_i}}}$ works like a constructor for $\tcol{\?_{[]_i}}$ and $\tcol{\ascdom{}{\?_{[]_i}}{X}}$ as a destructor---a view reflected by the canonical and neutral forms of \cref{fig:CCIC-canonical} for $\tcol{\?_{[]}}$.%
\footnote{
  In a simply-typed language such as GTLC~\cite{siekAl:snapl2015},
  where there are no neutrals at the type level, 
  casts from a germ/ground type 
  to the unknown type are usually interpreted as tagged values~\cite{siekTaha:sfp2006}. 
  Here, these correspond exactly to the canonical forms of $\tcol{\?_{[]}}$, but we also have to account for the many neutral forms that appear in open contexts.
}
Finally, rule \nameref{redrule:size-err} corresponds to a peculiar kind of error, which only happens due to the presence of a type hierarchy: 
$\tcol{?_{[]_i}}$ is only universal with respect to types at level $i$, and so a type might
be of a level too high to fit into it. To detect such a case, we check whether $A$ is a germ for a level that is below $i$, and when not throw an error.

\paragraph{Meta-Theoretical Properties}
The typing and reduction rules just given ensure two of the
meta-theoretical properties introduced in \cref{sec:tradeoffs}:
\psafe for the three variants of \CCIC, as well as \pnorm for \CCICT and
\CCICs.
Before turning to these properties, let us establish a crucial lemma, namely the
confluence of the rewriting system induced by reduction.
\begin{lemma}[Confluence of \CCIC]
  \label{lem:confluence}
  If $\tcol{t}$ and $\tcol{u}$ are related by the symmetric, reflexive, transitive closure
  of $\redCCIC$, then there exists $s$ such that $\tcol{t} \rtred \tcol{s}$ and $\tcol{u} \rtred \tcol{s}$.
\end{lemma}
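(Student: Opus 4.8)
The plan is to factor $\rtred$ through a suitable parallel reduction and then conclude by Takahashi's complete-development technique. Concretely, I would introduce a parallel reduction $\paraRed$ on \CCIC terms, defined as the least relation closed under reflexivity and congruence and under a ``parallel'' version of every rule of \cref{fig:CCIC-reduction} together with the $\beta$ rule and the $\iota$ rule for inductive eliminators of \cref{fig:bidir} --- i.e.\ a version in which all immediate subterms, and the substituted arguments in the $\beta$ and eliminator cases, may themselves be reduced in parallel before the redex is contracted. One checks the routine inclusions $\redCCIC \subseteq \paraRed \subseteq \rtred$, from which $\rtred$ is exactly the reflexive--transitive closure of $\paraRed$, so that confluence of $\redCCIC$ reduces to the diamond property of $\paraRed$. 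This step also requires the standard substitution lemma for parallel reduction, namely that $t \paraRed t'$ and $u \paraRed u'$ imply $t\subs{u}{x} \paraRed t'\subs{u'}{x}$, which feeds the $\beta$ and eliminator cases.

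Next I would define the \emph{complete development} $\optRed{t}$, the function on raw terms that simultaneously contracts every redex at the current layer and recurses structurally into the remaining subterms, together with a fixed priority discipline resolving the overlaps at the root. This priority is exactly the one already dictated by the side conditions of \cref{fig:CCIC-reduction}: the $\tcol{\?}$/$\tcol{\err}$ propagation rules and \textsc{Up-Down} take precedence over the germ rules \textsc{Prod-Germ}/\textsc{Ind-Germ}; \textsc{Size-Err} takes precedence over the germ rules when the source is already a germ of too high a level; and \textsc{Head-Err} fires exactly when the two heads genuinely differ and none of \textsc{Prod-Prod}/\textsc{Ind-Ind}/\textsc{Univ-Univ} applies. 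The core of the proof is then the triangle lemma: for every $t$, if $t \paraRed u$ then $u \paraRed \optRed{t}$, proved by induction on the structure of $t$ with a case split on the head shape of $t$ and on which parallel rule justifies $t \paraRed u$. The diamond property of $\paraRed$ --- hence confluence of $\rtred$, hence the stated Church--Rosser property --- follows immediately, since $\optRed{t}$ is a common reduct of any two $\paraRed$-reducts of $t$.

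The main obstacle is the critical-pair bookkeeping inside the triangle lemma, which is substantially heavier than for pure \CIC because of the cast layer. The delicate overlaps are: (i) a root use of \textsc{Prod-Germ}/\textsc{Ind-Germ} versus reductions taking place inside the casted type or its arguments, where one must show that the syntactic side conditions ``$\tcol{\P x : A . B} \neq \stalkCIC{j}{\Pi}$ for $j \geq i$'' and the implicit ``when no other rule applies'' provisos are respected by $\optRed{\cdot}$, in particular when a subterm reduces so as to turn the source into a germ; (ii) \textsc{Up-Down} versus the propagation rules through $\tcol{\?_{\square}}$; (iii) \textsc{Head-Err} versus the success rules \textsc{Prod-Prod}, \textsc{Ind-Ind}, \textsc{Univ-Univ}; (iv) \textsc{Size-Err} versus the germ rules; and (v) the interaction of the new rules with the native $\beta$ and $\iota$ reductions, e.g.\ \textsc{Prod-Unk}/\textsc{Prod-Err} producing a $\lambda$-abstraction that may then head a $\beta$-redex, and \textsc{Match-Unk}/\textsc{Match-Err} overlapping with the $\iota$ rule. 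Because $\beta$ and $\iota$ do not obviously commute with the cast rules, treating everything within a single parallel reduction --- rather than combining two sub-systems by a Hindley--Rosen style argument --- is what makes the argument go through. Finally one records, routinely, that the metafunctions $\stalk$, $\pars$, $\args$ and the head-constructor function appearing in the rules are total and syntax-directed, and that the whole development is uniform in the universe parameters $\sortOfPiName$ and $\castOfPiName$, which only affect level bookkeeping inside $\stalk$ and never the rewriting structure.
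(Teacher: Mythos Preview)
Your proposal is correct and follows essentially the same approach as the paper: define a parallel reduction $\paraRed$ extending the one for \CIC, establish the triangle property via a complete development $\optRed{t}$, and derive confluence from the diamond property of $\paraRed$; the paper explicitly cites this as the Tait--Martin-L\"of/Takahashi method. Your additional discussion of the critical-pair bookkeeping for the cast rules is well-aimed and more explicit than the paper's own sketch.
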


\begin{proof}
  We extend the notion of parallel reduction ($\paraRed$) for \CIC
  from~\cite{SozeauBFTW20} to account for our additional reduction
  rules and show that the triangle property---the existence, for any
  term $t$, of an optimal reduced term $\tcol{\optRed{t}}$ in one step (\cref{fig:triangle})---still 
  holds. From the triangle property, it is
  easy to deduce confluence of parallel reduction in one step
  (\cref{fig:triangle-confluence}), which implies confluence
  because parallel reduction is between one-step reduction and
  iterated reductions.
  This proof method is basically an extension of the Tait-Martin Löf criterion on
  parallel reduction~\citep{barendregt:84,TAKAHASHI1995120}.
  \begin{figure}
  \begin{subfigure}[b]{.4\textwidth}
  $$
  \xymatrix{\tcol{t} \ar@3{->}[d] \ar@3{->}[rd] & \\
  \tcol{u} \ar@3{->}[r] & \tcol{\optRed{t}}}
  $$
  \caption{The triangle property.}
  \label{fig:triangle}
  \end{subfigure}
  \begin{subfigure}[b]{.5\textwidth}
  $$
   \xymatrix@R=0.5pc{& \tcol{t} \ar@3{->}[ld] \ar@3{->}[dd] \ar@3{->}[rd] &  \\
     \tcol{u} \ar@3{->}[rd] & &  \tcol{u'} \ar@3{->}[ld] \\
     & \tcol{\optRed{t}}& }
  $$
  \caption{The triangle property implies confluence.}
  \label{fig:triangle-confluence}
  \end{subfigure}
  \caption{Representation of the triangle property (left) and its
    consequence on confluence (right).}
\end{figure}
\end{proof}

Let us now turn to \psafe, which we prove using the standard
progress and subject reduction properties~\cite{wrightFelleisen1994}.
Progress describes a set of canonical forms, asserting that all terms that do not belong to
such canonical forms are not in normal form, \ie can take at least
one reduction step. \cref{fig:CCIC-canonical} provides the definition
of canonical forms, considering head reduction.

As standard in dependent type theories, 
we distinguish between canonical forms and neutral
terms. Neutral terms
correspond to (blocked) destructors, waiting for a substitution to
happen, while other canonical forms correspond to constructors.
Additionally, the notion of neutral terms naturally induces a weak-head
reduction strategy that consists in either applying a top-level
reduction or reducing the (only) argument of the
top-level destructor that is in a neutral position.

The canonical forms for plain \CIC are given by the first three lines of \cref{fig:CCIC-canonical}.
The added rules deal with
errors, unknown terms and casts.
First, an error $\tcol{\err_t}$ or an unknown term $\tcol{\?_t}$ is neutral when
$\tcol{t}$ is neutral, and is canonical only when $\tcol{t}$ is $\tcol{[]}$ or $\tcol{I(\orr{a})}$,
but not a $\Pi$-type. This is because exception-like terms reduce on 
$\Pi$-types~\cite{pedrotTabareau:esop2018}.
Second, there is an additional specific form of canonical
inhabitants of $\tcol{\?_{[]}}$: these are upcasts from a germ, which can be
seen as a term tagged with the head constructor of its type, in a
matter reminiscent of actual implementations of dynamic typing using
type tags. As we explained when presenting \cref{fig:CCIC-reduction},
these canonical forms work as constructors for $\tcol{\?_{[]}}$.
Finally, the cast operation behaves as a destructor on the
universe $\tcol{[]}$---as if it were an inductive type of usual \CIC.
This destructor first scrutinizes the source type of
the cast. This is why the cast is neutral as soon as its source type
is neutral. When the source type reduces to a head constructor, there are two
possibilities. Either that constructor is $\tcol{\?_{[]}}$, in which case the cast
scrutinizes its argument to be a canonical form $\tcol{\cast{t}{\?_{[]}}{\stalkCIC{i}{h}}}$ and is neutral when this is not the case.
In all other cases, it first scrutinizes the target type, so the cast is neutral when the target type is neutral.
Finally, when both types have head constructors, the cast
might still need its argument to be either a $\lambda$-abstraction or an inductive
constructor to reduce.

\begin{figure}
\begin{small}
\boxedrule{$\can{\tcol{t}}, \neu{\tcol{t}}$}\\
	\begin{mathpar}
		\inferrule{ }{\can{\tcol{\l x : A . t}}} \and
		\inferrule{ }{\can{\tcol{c(\orr{a},\orr{b})}}} \and
		\inferrule{ }{\can{\tcol{\P x : A . B}}} \\
		\inferrule{ }{\can{\tcol{[]}}} \and
		\inferrule{ }{\can{\tcol{I(\orr{a})}}} \and
    \inferrule{\neu{\tcol{t}}}{\can{\tcol{t}}} \\
		\inferrule{ }{\neu{\tcol{x}}} \and
		\inferrule{\neu{\tcol{t}}}{\neu{\tcol{t~u}}} \and
		\inferrule{\neu{\tcol{t}}}{\neu{\tcol{\match{I}{t}{z.P}{\orr{f.y.b}}}}} \\
    \inferrule{ \tcol{T} \in \{ ~\tcol{[]},\tcol{I(\orr{a})}, \tcol{\?_{[]}}, \tcol{\err_{[]}}~\} }{\can{\tcol{\?_{T}}}} \and
		\inferrule{\neu{\tcol{t}}}{\neu{\tcol{\?_{t}}}} \and 
    \inferrule{ \tcol{T} \in \{ ~\tcol{[]},\tcol{I(\orr{a})}, \tcol{\?_{[]}}, \tcol{\err_{[]}}~\} }{\can{\tcol{\err_{T}}}} \and
    \inferrule{\neu{\tcol{t}}}{\neu{\tcol{\err_{t}}}} \\
		\inferrule{ }{\can{\tcol{\cast{\stalkCIC{i}{h}}{\?_{[]_i}}{t}}}} \and
		\inferrule{\neu{\tcol{S}}}{\neu{\tcol{\cast{S}{T}{t}}}} \and
		\inferrule{\neu{\tcol{t}}}{\neu{\tcol{\cast{\?_{[]}}{T}{t}}}} \\
		\inferrule{\neu{\tcol{T}}}{\neu{\tcol{\cast{[]}{T}{t}}}} \and
		\inferrule{\neu{\tcol{T}}}{\neu{\tcol{\cast{\P x : A. B}{T}{t}}}} \and
		\inferrule{\neu{\tcol{t}}}{\neu{\tcol{\cast{\P x : A. B}{\P x : A'. B'}{t}}}} \and
		\inferrule{\neu{\tcol{T}}}{\neu{\tcol{\cast{I(\orr{a})}{T}{t}}}} \and
    \inferrule{\neu{\tcol{t}}}{\neu{\tcol{\cast{I(\orr{a})}{I(\orr{a'})}{t}}}} \and
	\end{mathpar}
\end{small}
	\caption{Head neutral and canonical forms for \CCIC}
	\label{fig:CCIC-canonical}
\end{figure}

Equipped with the notion of canonical forms, we can state \psafe for \CCIC:

\begin{theorem}[Safety of the three variants of \CCIC (\psafe)]
  \label{thm:ccic-psafe}
  \CCIC enjoys:
  \begin{description}
  \item[Progress:]
    if $\tcol{t}$ is a well-typed term of \CCIC, then either $\can{\tcol{t}}$ or there is some $\tcol{t'}$ such that $\tcol{t} \redCCIC \tcol{t'}$.
  \item[Subject reduction:] if $\tcol{\Gamma} \caty \tcol{t} \inferty \tcol{A}$ and $\tcol{t} \redCCIC{} \tcol{t'}$ then $\tcol{\Gamma} \caty \tcol{t'} \checkty \tcol{A}$.
  \end{description}
  Thus \CCIC enjoys \psafe.
\end{theorem}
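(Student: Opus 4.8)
The plan is the standard route to type safety for a dependently-typed calculus: prove \emph{progress} and \emph{subject reduction} separately for the bidirectional system of \cref{fig:ccic-typing}, and combine them. Two prerequisites are needed first. One is the usual package of structural lemmas --- weakening and a substitution lemma, stating that if $\tcol{\Gamma, x : A} \caty \tcol{t} \inferty \tcol{B}$ and $\tcol{\Gamma} \caty \tcol{u} \checkty \tcol{A}$ then $\tcol{\Gamma} \caty \tcol{t\subs{u}{x}} \checkty \tcol{B\subs{u}{x}}$ --- proven by routine mutual induction over derivations; the bidirectional rules are syntax-directed, so the inversion principles one needs (``what can the derivation of a cast/application/$\lambda$ look like'') are immediate. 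The second, and the real workhorse, is \emph{injectivity of type constructors up to conversion}: $\tcol{\P x : A . B} \conv \tcol{\P x : A' . B'}$ implies $\tcol{A} \conv \tcol{A'}$ and $\tcol{B} \conv \tcol{B'}$, and likewise for inductive heads $\tcol{I(\orr a)}$ and for universes. This is precisely where \cref{lem:confluence} is used: since $\tcol{\?}$ and $\tcol{\err}$ reduce \emph{away} from $\Pi$/inductive/universe shapes (cf.\ the propagation rules), those head shapes are stable under reduction, so confluence delivers injectivity in the classical way.

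\textbf{Progress.} I would proceed by induction on typing, mutually over the inference, checking and constrained-inference judgments, using the grammar of canonical and neutral forms of \cref{fig:CCIC-canonical}. The only interesting cases are the elimination forms, handled by a case analysis on the canonical form of the eliminated subterm supplied by the induction hypothesis, with the constrained-inference premise fixing its type shape. For $\tcol{t~u}$ with $\tcol{t} \pcheckty{\P} \tcol{\P x : A . B}$: either $\tcol{t}$ steps --- so does the application --- or $\tcol{t}$ is canonical, and then its inferred type reducing to a $\Pi$-type forces it (via the classification just described) to be a $\lambda$-abstraction (triggering $\beta$) or neutral (making the application neutral). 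The eliminator $\tcol{\match{I}{t}{z.P}{f.\orr{y.t}}}$ is analogous, with extra sub-cases where the scrutinee is $\tcol{\?_{I(\orr a)}}$ or $\tcol{\err_{I(\orr a)}}$, handled by \textsc{Match-Unk}/\textsc{Match-Err}. The delicate case is the cast $\tcol{\cast{A}{B}{t}}$, which behaves as a destructor first on $\tcol{A}$, then on $\tcol{B}$, then possibly on $\tcol{t}$: one walks down exactly the decision tree encoded in \cref{fig:CCIC-reduction} --- source neutral $\Rightarrow$ neutral; source a germ and target neutral $\Rightarrow$ neutral; distinct heads $\Rightarrow$ \textsc{Head-Err}; matching $\Pi$/inductive heads with $\tcol{t}$ not yet a constructor $\Rightarrow$ neutral; casts to/from $\tcol{\?_{[]}}$ $\Rightarrow$ \textsc{Prod-Germ}/\textsc{Ind-Germ}/\textsc{Up-Down}, or \textsc{Size-Err} when the germ level is too large --- and verifies every well-typed configuration is covered, which is exactly how the neutral/canonical clauses were designed.

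\textbf{Subject reduction.} I would induct on $\tcol{t} \redCCIC \tcol{t'}$ (strengthening the statement to cover the checking and constrained-inference judgments as well, since reduction may change an inferred type only up to conversion). Congruence steps follow from the induction hypothesis plus stability of the judgments under replacing a subterm by one whose inferred type is merely \emph{convertible} to the original --- again using injectivity of constructors, e.g.\ to re-derive \textsc{App} after the function's inferred type has moved up to conversion. Among head reductions: $\beta$ and the $\iota$-rule for the eliminator are discharged by the substitution lemma and inversion; the error-producing cast rules (\textsc{Head-Err}, \textsc{Dom-Err}, \textsc{Codom-Err}, \textsc{Size-Err}) are immediate since $\tcol{\err_{T}}$ infers $\tcol{T}$, the target type by construction; \textsc{Univ-Univ}, \textsc{Ind-Unk}/\textsc{Ind-Err}, \textsc{Down-Unk}/\textsc{Down-Err}, \textsc{Up-Down}, and the germ rules \textsc{Prod-Germ}/\textsc{Ind-Germ} are routine once one checks the germs $\tcol{\stalkCIC{i}{h}}$ are well-typed types (read off from \cref{fig:head-germ}), noting that the \textsc{Cast} rule lets source and target live at different universe levels, so no cumulativity is needed. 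The genuinely involved case is \textsc{Prod-Prod}: from well-typedness of $\tcol{\cast{\P x : A_1 . B_1}{\P y : A_2 . B_2}{(\l x : A . t)}}$ one extracts $\tcol{\P x : A . B} \conv \tcol{\P x : A_1 . B_1}$ (hence $\tcol{A} \conv \tcol{A_1}$, $\tcol{B} \conv \tcol{B_1}$) and then re-types the contractum $\tcol{\l y : A_2. \cast{B_1\subs{\cast{A_2}{A_1}{y}}{x}}{B_2}{(t\subs{\cast{A_2}{A}{y}}{x})}}$, threading the two inner casts through the dependent codomain via the substitution lemma.

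\textbf{Main obstacle.} The hard part is this \textsc{Prod-Prod} case, where dependency, conversion, and cast bookkeeping all interact --- it is the case the whole cast design is tuned around --- together with checking that the universe-level arithmetic of \cref{fig:univ-param} stays coherent across all three variants, in particular at the boundary handled by \textsc{Size-Err}, where $\tcol{\castOfPi{i}}$ may go negative and a germ degenerates to $\tcol{\err_{[]}}$. Once progress and subject reduction are both established, \psafe for each of \CCICP, \CCICT and \CCICs follows at once: every well-typed term is canonical or reduces, and reduction preserves typing, so the well-typed normal forms are exactly the canonical forms of \cref{fig:CCIC-canonical}.
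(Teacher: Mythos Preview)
Your proposal is correct and follows essentially the same approach as the paper: progress by induction on the typing derivation with a case analysis guided by the canonical/neutral grammar, and subject reduction by the standard route through structural lemmas plus injectivity of type constructors, the latter obtained from confluence (\cref{lem:confluence}). The paper's proof is much terser than yours---it merely names these ingredients and points to \cite{SozeauBFTW20} for the detailed pattern---but the content is the same, and your identification of \textsc{Prod-Prod} as the delicate case and of the germ-level bookkeeping as the variant-sensitive point is accurate.
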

\begin{proof}

  \begin{description}
  \item[Progress:]
	The proof is by induction on the typing derivation of $\tcol{t}$.
  As standard, we show that in all cases,
  either a reduction on a subterm happens, $\tcol{t}$ itself reduces
  because some canonical form was not neutral and creates a redex, or $\tcol{t}$ is neutral.

  \item[Subject reduction:]
  Subject reduction can be derived from the injectivity of type
  constructors, which is a direct consequence of confluence. See
  \cite{SozeauBFTW20} for a detailed account of this result in the
  simpler setting of \CIC.
  \end{description}
\end{proof}

We now establish normalization of \CCICT and \CCICs, although the proof below
relies on the discrete model defined in \cref{sec:bare-model}.

\begin{theorem}[Normalization of \CCICT and \CCICs (\pnorm)]
  \label{thm:ccic-pnorm}
  Every reduction path for a well-typed term in \CCICT or \CCICs is finite.
\end{theorem}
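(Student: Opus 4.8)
The strategy is to transfer strong normalization from a theory already known to be normalizing, via a reduction-preserving syntactic translation: the \emph{discrete model} of \cref{sec:bare-model}, which interprets \CCIC{} into \CICIR{}, the Calculus of Inductive Constructions extended with induction-recursion, a theory that is strongly normalizing~\cite{Martin-Lof1996,DybjerS03,GhaniMF15}. Writing $\bareTy{\cdot}$ and $\bareTm{\cdot}$ for the action of the translation on types and terms, the argument rests on two ingredients: \textbf{(model soundness)} if $\tcol{\Gamma} \caty \tcol{t} \inferty \tcol{A}$ then $\bareTm{t}$ is well-typed of type $\bareTy{A}$ in \CICIR{} (in the translated context), and \textbf{(strict simulation)} whenever $\tcol{t} \redCCIC \tcol{t'}$ then $\bareTm{t} \redIRn \bareTm{t'}$, i.e.\ the translation of a redex reduces to the translation of its contractum in \emph{at least one} step. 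Granting both, suppose towards a contradiction that a well-typed $\tcol{t_0}$ admits an infinite reduction $\tcol{t_0} \redCCIC \tcol{t_1} \redCCIC \dots$; by subject reduction (\cref{thm:ccic-psafe}) every $\tcol{t_i}$ is well-typed hence translatable, and strict simulation yields an infinite $\redIR$-reduction $\bareTm{t_0} \redIRn \bareTm{t_1} \redIRn \dots$ in \CICIR{}, contradicting its normalization.

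The discrete model itself would be built following \citet{pedrotTabareau:esop2018} for the exceptional constructs and \citet{BoulierPT17} for the universe: $\tcol{\?_T}$ and $\tcol{\err_T}$ are sent to the two distinguished exceptional inhabitants of $\bareTy{T}$, while $\tcol{\?_{[]_i}}$ is realized by a Tarski-style universe of \emph{codes}, defined by induction-recursion, whose constructors enumerate the head constructors $h \in \H$ together with the data needed to rebuild the germ $\stalkCIC{i}{h}$, and the cast $\tcol{\cast{A}{B}{t}}$ is interpreted by intensional type analysis (\texttt{typerec}, itself requiring induction-recursion) that inspects $A$ and then $B$ exactly along the decision tree of \cref{fig:CCIC-reduction}. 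The reason this yields a model only for \CCICT{} and \CCICs{}---hence the reason normalization is claimed only for these two---is universe stratification: in both variants $\castOfPiName$ sends $i$ to $i-1$, so the germ $\stalkCIC{i}{\Pi} = \tcol{\Unk{[]_{i-1}} \to \Unk{[]_{i-1}}}$ mentions only strictly lower universes, and the inductive-recursive universe of codes at level $i$ is well-founded (it only refers to codes at level $i-1$). In \CCICP{}, where $\castOfPiName$ is the identity, the code for a function type at level $i$ would have to refer to $\tcol{\?_{[]_i}}$ itself, a self-referential universe akin to \texttt{Type}\,:\,\texttt{Type} that cannot be built in \CICIR{}; this is exactly the self-reference that makes $\Omega$ loop in \CCICP{}.

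The bulk of the work is the strict simulation lemma, by case analysis on the reduction rules of \cref{fig:CCIC-reduction} after establishing a substitution lemma $\bareTm{t \subs{u}{x}} = \bareTm{t} \subs{\bareTm{u}}{x}$. The $\beta$- and $\iota$-rules inherited from \CIC{} reduce to their \CICIR{} counterparts; the propagation rules (\nameref{redrule:prod-unk}, \nameref{redrule:match-unk}, and their $\tcol{\err}$ analogues) reduce to the computation rules for exceptions in \ExTT{}; the cast success rules (\nameref{redrule:prod-prod}, \nameref{redrule:ind-ind}, \nameref{redrule:univ-univ}) and failure rules (\nameref{redrule:head-err}, \nameref{redrule:err-dom}, \nameref{redrule:err-codom}) correspond to the respective branches of the \texttt{typerec} that interprets casts; and the universe-specific rules (\nameref{redrule:prod-germ}, \nameref{redrule:ind-germ}, \nameref{redrule:up-down}, \nameref{redrule:size-err}) correspond to the constructor/destructor dynamics of the universe of codes, where one must check that the side conditions on germ levels line up with the stratification so that each step genuinely consumes a redex in the model. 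Reduction under a context is handled by compositionality of the translation and the fact that \CICIR{} reduction is a congruence, relying on the translation being non-erasing so that the redex is not discarded.

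The step I expect to be the main obstacle is making the discrete model precise while keeping the simulation \emph{strict}. On the model side, the delicate points are getting the inductive-recursive universe of codes and the \texttt{typerec} eliminator to typecheck in \CICIR{} with the intended computation rules, reconciling the level shift $\castOfPiName$ and the cutoff imposed by rule \nameref{redrule:size-err}. On the simulation side, the real danger is a \CCIC{} reduction rule that the translation collapses to \emph{zero} steps---for instance a rule like \nameref{redrule:up-down} or \nameref{redrule:univ-univ}, which merely erase a cast, if their translation happened to be an identity---since this would break the transfer of strong normalization. Ruling this out requires checking that every rule still triggers a genuine $\beta$- or $\iota$-redex under translation, or else refining the translation (e.g.\ guarding the erased cast with a dummy redex) or supplementing the argument with a lexicographic measure counting the remaining administrative steps. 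Establishing strictness uniformly across all the gradual-specific rules is the crux.
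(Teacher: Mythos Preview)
Your proposal is correct and matches the paper's approach exactly: the paper's proof invokes the discrete syntactic model of \cref{sec:bare-model} (\cref{thm:discrete-model}), which establishes precisely your two ingredients---typing preservation and strict simulation $\bareTm{t} \redIRn \bareTm{u}$ for each step $\tcol{t} \redCCIC \tcol{u}$---and concludes by normalization of \CICIR. Your analysis of why the model only works for \CCICT{} and \CCICs{} (the well-foundedness of the inductive-recursive universe hinging on $\castOfPi{i} < i$) and your worry about strictness for cast-erasing rules like \nameref{redrule:up-down} are both on point; the paper handles the latter by noting that stuck terms $\tcol{\ascdom{t}{\stalkCIC{i}{h}}{\?_{\square_i}}}$ are in bijection with the one-step reduced form $\pairSigma{h}{\bareTm{t}}$ of their translation, so every source step still consumes a genuine redex in the target.
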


\begin{proof}
  The translation induced by the discrete model presented
  in~\cref{sec:bare-model} maps each reduction step to at least one
  step (\cref{thm:discrete-model}). So strong normalization holds
  because the target calculus of the translation is normalizing.
\end{proof}

\subsection{Elaboration from \GCIC to \CCIC}
\label{sec:elaboration}
Now that \CCIC has been described, we move on to \GCIC. The typing judgment of \GCIC is {\em defined} by an elaboration judgment from \GCIC to \CCIC, based upon \cref{fig:bidir}, augmenting all judgments with an extra output: the elaborated \CCIC term. This definition of typing using elaboration is required because of the intricate interdependency between typing and reduction exposed in \cref{sec:gcic-overview}.

\paragraph{Syntax}
\label{gcic-syntax}
The syntax of \GCIC\footnote{We use \sourcecolor{green} for terms of
  \GCIC. To maintain a distinction in the absence of colors, we also
  use tildes ($\sourcecolor{\tilde{t}}$) for terms in \GCIC in expressions mixing both source and target terms.} extends that of \CIC
with a single new term constructor $\scol{\?\ulev{i}}$, where $i$ is a universe level. From a user perspective, one is not given direct access to the failure and cast primitives, those only arise through uses of $\scol{\?}$.

\paragraph{Consistent conversion}

\begin{figure}
  \begin{small}
  \boxedrule{$\tcol{t} \acons \tcol{t}$}
  \begin{mathpar}
    \inferrule{ }{\tcol{x} \acons \tcol{x}} \and
    \inferrule{ }{\tcol{\square{}_{i}} \acons \tcol{\square{}_{i}}} \and
    \inferrule{\tcol{A} \acons \tcol{A'} \\ \tcol{t} \acons \tcol{t'}}
    {\tcol{\l x : A . t} \acons \tcol{\l x : A' . t'}} \and
    \inferrule{\tcol{A} \acons \tcol{A'} \\ \tcol{B} \acons \tcol{B'}}
    {\tcol{\P x : A . B} \acons \tcol{\P x : A'.B'}} \and
    \inferrule{\tcol{t} \acons \tcol{t'} \\ \tcol{u} \acons \tcol{u'}}{\tcol{t~u} \acons \tcol{t'~u'}} \and
    \inferrule{\tcol{\orr{a}} \acons \tcol{\orr{a'}}}{\tcol{I(\orr{a})} \acons \tcol{I(\orr{a'})}} \and
    \inferrule{\tcol{\orr{a}} \acons \tcol{\orr{a'}} \\ \tcol{\orr{b}} \acons \tcol{\orr{b'}}}
    {\tcol{c_k(\orr{a},\orr{b})} \acons \tcol{c_k(\orr{a'},\orr{b'})}} \and
    \inferrule{\tcol{s} \acons \tcol{s'} \\ \tcol{P} \acons \tcol{P'} \\ \tcol{\orr{t}} \acons \tcol{\orr{t'}}}
    {\tcol{\match{I}{s}{z.P}{f.\orr{y.t}}} \acons \tcol{\match{I}{s'}{z.P'}{f.\orr{y.t'}}}} \\
    \inferrule{\tcol{t} \acons \tcol{t'}}
    {\tcol{t} \acons \tcol{\cast{A'}{B'}{t'}}} \and
    \inferrule{\tcol{t} \acons \tcol{t'}}
    {\tcol{\cast{A}{B}{t}} \acons \tcol{t'} } \and
    \inferrule{ }{\tcol{t} \acons \tcol{\?_{T'}}} \and
    \inferrule{ }{\tcol{\?_T} \acons \tcol{t}}
  \end{mathpar}
  \end{small}
  \caption{\CCIC: $\alpha$-consistency}
  \label{fig:acons}
\end{figure}

Before we can describe typing, we should focus on conversion. Indeed, to account for the imprecision introduced by $\tcol{\?}$, elaboration employs {\em consistent conversion} to compare \CCIC terms rather than usual conversion relation. 

\begin{definition}[Consistent conversion]
	\label{def:cons}
	Two \CCIC terms are \emph{$\alpha$-consistent}, written $\acons$, if they are in the relation defined by the inductive rules of \cref{fig:acons}.

	Two terms are \emph{consistently convertible}, or simply \emph{consistent}, noted $\tcol{s} \cons \tcol{t}$, if and only if there exists $\tcol{s'}$ and $\tcol{t'}$ such that $\tcol{s} \rtred \tcol{s'}$, $\tcol{t} \rtred \tcol{t'}$ and $\tcol{s'} \acons \tcol{t'}$.
\end{definition}

Thus $\alpha$-consistency is an extension of $\alpha$-equality that takes imprecision into account. Apart from the standard rules making $\tcol{\?}$ consistent with any term, $\alpha$-consistency optimistically ignores casts, and does not consider errors to be consistent with themselves. 
The first point is to prevent casts inserted by the elaboration from disrupting valid conversions, typically between static terms. The second is guided by the idea that if errors are encountered at elaboration already, the term cannot be well behaved, so it must be rejected as early as possible and we should avoid typing it.
The consistency relation is then built upon $\alpha$-consistency in a way totally similar to how conversion in \cref{fig:bidir,fig:CCIC-reduction} is built upon $\alpha$-equality. Also note that this formulation of consistent conversion makes no assumption of
normalization, and is therefore usable as such in the non-normalizing \GCICP.

An important property of consistent conversion,
and a necessary condition for the conservativity of \GCIC with respect to\ \CIC (\pconst{\CIC}),
is that it corresponds to conversion on static terms.

\begin{proposition}[Properties of consistent conversion]~
	\label{prop:cons-static}
  \begin{enumerate}
  \item Two static terms are consistently convertible if and only if they are convertible in \CIC.
  \item If $\tcol{s}$ and $\tcol{t}$ have a normal form, then $\tcol{s} \cons \tcol{t}$ is decidable.
  \end{enumerate}
\end{proposition}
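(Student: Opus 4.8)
The two clauses are independent; the plan is to prove each by first reducing consistent conversion on static terms to ordinary \CIC conversion, and then, for clause~(2), analysing the general case.

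For clause~(1) I would record two elementary inspections. \emph{(a) Reduction preserves staticness and agrees with \CIC reduction on static terms.} Every \CCIC-specific rule of \cref{fig:CCIC-reduction} has a redex mentioning $\?$, $\err$, or a cast, so a static term can only fire the $\beta$- and $\iota$-rules of \CIC; and substituting static terms into static terms stays static. Hence if $\tcol{s}$ is static and $\tcol{s} \rtred \tcol{s'}$ then $\tcol{s'}$ is static and the sequence lies in \CIC (and trivially every \CIC reduction is a \CCIC one). \emph{(b) On static terms, $\acons$ coincides with $\alpha$-equality.} In a derivation of $\tcol{u} \acons \tcol{u'}$ with $\tcol{u}, \tcol{u'}$ static, the two cast rules and the two $\?$ rules of \cref{fig:acons} cannot be the last rule used (their conclusions mention a cast or a $\?$), so only the congruence rules appear, and those are precisely the congruence rules defining $=_\alpha$; the converse $=_\alpha \subseteq \acons$ on static terms is immediate by induction on the term. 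Combining (a) and (b): for static $\tcol{s},\tcol{t}$, $\tcol{s} \cons \tcol{t}$ holds iff $\tcol{s}$ and $\tcol{t}$ have static \CIC-reducts that are $\alpha$-equal, i.e. iff $\tcol{s} \conv \tcol{t}$ in \CIC. This is also exactly the property that \pconst{\CIC} will build on.

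For clause~(2), the first observation is that $\acons$ is itself decidable: each rule of \cref{fig:acons} is either an axiom ($\tcol{t} \acons \Unk{T'}$ and $\Unk{T} \acons \tcol{t}$) or strictly decreases the combined size of the two related terms (the cast rules remove one cast, the congruence rules recurse on strict subterms), so $\acons$ is decided by structural recursion on the pair. Given $\tcol{s}, \tcol{t}$ with normal forms, \cref{lem:confluence} makes those normal forms unique, and a fair (breadth-first) enumeration of one-step reductions computes them effectively.

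The main obstacle is that, unlike for \CIC conversion, one \emph{cannot} decide $\tcol{s} \cons \tcol{t}$ merely by comparing normal forms under $\acons$: since $\acons$ optimistically discards casts, it is not stable under reduction. For instance $\cast{\nat \to \nat}{\bool \to \bool}{(\l x : \nat.\, 0)} \acons \l x : \nat.\, \Unk{\nat}$ (drop the cast, then $0 \acons \Unk{\nat}$), yet by \nameref{redrule:prod-prod} followed by \nameref{redrule:head-err} the left-hand side reduces to $\l y : \bool.\, \err_{\bool}$, whose domain annotation $\bool$ is not $\acons$-related to $\nat$ — a consistent pair whose (distinct) normal forms are not $\acons$. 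Thus the existential over reducts in \cref{def:cons} has to be searched genuinely. When $\tcol{s}$ and $\tcol{t}$ are strongly normalizing — automatic in \CCICT and \CCICs, the variants for which decidability of typechecking is intended — König's lemma bounds their finitely branching reduction trees, so each has a finite and effectively computable set of reducts $R(\tcol{s}), R(\tcol{t})$, and $\tcol{s} \cons \tcol{t}$ holds iff $\tcol{s'} \acons \tcol{t'}$ for some $(\tcol{s'}, \tcol{t'}) \in R(\tcol{s}) \times R(\tcol{t})$ — a finite disjunction of decidable tests. For pairs that are only weakly normalizing a finer argument would be needed, isolating a bounded family of partial reductions sufficient to witness $\cons$; I would leave that case for last.
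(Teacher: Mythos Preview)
Your proof is correct and follows the same approach as the paper: for (1), observe that $\acons$ restricted to static terms is $\alpha$-equality and that reduction on static terms is \CIC reduction; for (2), note that $\acons$ is decidable and that one must search over all pairs of reducts rather than only comparing normal forms. You are in fact more careful than the paper on two points: you give a concrete witness that comparing normal forms is insufficient (the paper only asserts this), and you make explicit the reliance on strong normalization (via K\"onig's lemma on the finitely-branching reduction tree) to bound the set of reducts, whereas the paper's ``finite number of reducts'' silently assumes this. Your residual worry about the merely weakly normalizing case is legitimate but moot for the paper's purposes, since the proposition is only invoked for the strongly normalizing variants \CCICT and \CCICs when establishing decidability of elaboration.
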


\begin{proof}
	\emph{(1)} First remark that $\alpha$-consistency between static terms corresponds to $\alpha$-equality of terms. Thus, and because the reduction of static terms in \CCIC is the same as the reduction of \CIC, two consistent static terms must reduce to $\alpha$-equal terms, which in turn implies that they are convertible.
	Conversely two convertible terms of \CIC have a common reduct,
        which is $\alpha$-consistent with itself. 

  \emph{(2)} If $\tcol{s}$ and $\tcol{t}$ are normalizing, they have a finite number
  of reducts, thus to decide their consistency it is sufficient to check each pair of reducts for the decidable
  $\alpha$-consistency. Comparing normal forms is not enough, because a term $\tcol{t}$might be stuck because of a cast while another one $\tcol{s}$ can be $\alpha$-consistent with it and reduce further, so that the normal form of $\tcol{t}$ and $\tcol{s}$ are not $\alpha$-consistent while $\tcol{t}$ and $\tcol{s}$ are consistent.
\end{proof}

\paragraph{Elaboration}

\begin{figure}
	\begin{small}

  \boxedrule{$\sinferelab{\Gamma}{t}{T}{\Gamma}{t}{T}$}
  \begin{mathpar}
		\inferrule{\tcol{(x : T) \in \Gamma}}{\sinferelab{}{x}{}{\Gamma}{x}{T}}[Var] \ilabel{infrule:gcic-var} \and
		\inferrule{ }{\sinferelab{}{\square{}_{i}}{}{\Gamma}{\square{}_{i}}{\square{}_{i+1}}}[Univ] \ilabel{infrule:gcic-univ} \\

		\inferrule{
			\spcheckelab{\square{}}{}{\tilde{A}}{}{\Gamma}{A}{\square{}_i} \\
			\spcheckelab{\square{}}{}{\tilde{B}}{\square{}_{j}}{\Gamma, x : A}{B}{\square{}_j}}
		{\sinferelab{}{ \P x : \tilde{A} . \tilde{B}}{}{\Gamma}{\P x : A.B}{\square{}_{\sortOfPi{i}{j}}}}[Prod] \ilabel{infrule:gcic-prod} \and
		\inferrule{
			\spcheckelab{\square{}}{}{\tilde{A}}{}{\Gamma}{A}{\square{}_i} \\
			\sinferelab{}{\tilde{t}}{}{\Gamma, x : A}{t}{B}}
		{\sinferelab{}{\l x : \tilde{A} . \tilde{t}}{}{\Gamma}{\l x : A. t}{ \P x : A . B}}[Abs] \ilabel{infrule:gcic-abs} \and
		\inferrule{
			\spcheckelab{\Pi}{}{\tilde{t}}{}{\Gamma}{t}{\P x : A . B} \\
			\scheckelab{}{\tilde{u}}{}{\Gamma}{u}{A}}
		{\sinferelab{}{\tilde{t}~\tilde{u}}{}{\Gamma}{t~u}{B\subs{u}{x}}}[App] \ilabel{infrule:gcic-app}

		\inferrule{ }
		{\sinferelab{}{\?\ulev{i}}{}{\Gamma}{\?_{\?_{[]_i}}}{\?_{[]_i}}}[Unk] \ilabel{infrule:gcic-unk}

                \\
		\inferrule{\scheckelab{}{\tilde{a}_k}{}{\Gamma}{a_k}{\pars_k(I,i)\parsub{\orr{a}}}}
		{\sinferelab{}{I\ulev{i}(\orr{\tilde{a}})}{}{\Gamma}{I\ulev{i}(\orr{a})}{[]_i}}[Ind] \ilabel{infrule:gcic-ind}
                \and
		\inferrule{
			\scheckelab{}{\tilde{a}_k}{}{\Gamma}{a_k}{\pars_k(I,i)\parsub{\orr{a}}} \\
			\scheckelab{}{\tilde{b}_m}{}{\Gamma}{b_m}{\args_m(I,i,c)\parsub{\orr{a},\orr{b}}}}
		{\sinferelab{}{c_k\ulev{i}(\orr{\tilde{a}},\orr{\tilde{b}})}{}{\Gamma}{c(\orr{a},\orr{b})}{I(\orr{a})}}[Cons] \ilabel{infrule:gcic-cons} \\
		\inferrule{
			\spcheckelab{I}{}{\tilde{s}}{}{\Gamma}{s}{I(\orr{a})} \\
			\spcheckelab{\square{}}{}{\tilde{P}}{}{\Gamma, z : I(\orr{a})}{P}{\square{}_i} \\\\
			\scheckelab{}{\tilde{t}_k}{}{\Gamma, f : (\P z : I(\orr{a}), P), \orr{y} : \args(I,i,c_k)\parsub{\orr{a},\orr{y}}}{t_k}{P\subs{c_k(\orr{a},\orr{y})}{z}}
		}{
    \sinferelab{}{\match{I}{\tilde{s}}{z.\tilde{P}}{f.\orr{y.\tilde{t}}}}{}{\Gamma}{\match{I}{s}{z.P}{\orr{f.y.t}}}{P\subs{s}{z}}}[Fix]
	\ilabel{infrule:gcic-fix}
	\end{mathpar}

	\boxedrule{$\scheckelab{\Gamma}{t}{T}{\Gamma}{t}{T}$}
    \begin{mathpar}
		\inferrule{
			\sinferelab{}{\tilde{t}}{}{\Gamma}{t}{T} \\ \targetcolor{T \cons S}}
		{\scheckelab{}{\tilde{t}}{}{\Gamma}{\cast{T}{S}{t}}{S}}[Check] \ilabel{infrule:gcic-check}
    \end{mathpar}

	\boxedrule{$\spcheckelab{\bullet}{\Gamma}{t}{T}{\Gamma}{t}{T}$} \vspace{1em}
    \begin{mathpar}
    	\inferrule{
    		\sinferelab{}{\tilde{t}}{}{\Gamma}{t}{T} \\ \targetcolor{T \rtred \square{}_{i}}}
		{\spcheckelab{\square{}}{}{\tilde{t}}{}{\Gamma}{t}{\square{}_{i}}}[Inf-Unk] \ilabel{infrule:gcic-inf-univ} \and
		 \inferrule{
		 	\sinferelab{}{\tilde{t}}{}{\Gamma}{t}{T} \\ \targetcolor{T \rtred \?_{[]_{i+1}}}}
		 {\spcheckelab{\square}{}{\tilde{t}}{}{\Gamma}{\cast{T}{[]_i}{t}}{[]_i}}[Inf-Univ?] \ilabel{infrule:gcic-univ-unk} \\
		\inferrule{
			\sinferelab{}{\tilde{t}}{}{\Gamma}{t}{T} \\ \targetcolor{T \rtred \P x : A. B}}
		{\spcheckelab{\Pi}{}{\tilde{t}}{}{\Gamma}{t}{\P x : A. B}}[Inf-Prod] \ilabel{infrule:gcic-inf-prod} \and
		 \inferrule{
		 	\sinferelab{}{\tilde{t}}{}{\Gamma}{t}{T} \\ \targetcolor{T \rtred \?_{[]_i}} \\ \castOfPi{i} \geq 0}
		 {\spcheckelab{\Pi}{}{\tilde{t}}{}{\Gamma}{\cast{T}{\stalkCIC{i}{\Pi}}{t}}{\stalkCIC{i}{\Pi}}}[Inf-Prod?] \ilabel{infrule:gcic-prod-unk} \\
		\inferrule{\sinferelab{}{\tilde{t}}{}{\Gamma}{t}{T} \\ \targetcolor{T \rtred I(\orr{a})}}
		{\spcheckelab{I}{}{\tilde{t}}{}{\Gamma}{t}{I(\orr{a})}}[Inf-Ind] \ilabel{infrule:gcic-inf-ind} \and
		\inferrule{
			\sinferelab{}{\tilde{t}}{}{\Gamma}{t}{T} \\ \targetcolor{T \rtred \?_{[]_i}}}
		{\spcheckelab{I}{}{\tilde{t}}{}{\Gamma}{\cast{T}{\stalkCIC{i}{I}}{t}}{\stalkCIC{i}{I}}}[Inf-Ind?] \ilabel{infrule:gcic-ind-unk}
	\end{mathpar}
  \end{small}
  \caption{Type-directed elaboration from \GCIC to \CCIC}
	\label{fig:elaboration}
\end{figure}

Elaboration from \GCIC to \CCIC is given in \cref{fig:elaboration}, closely following the bidirectional presentation of \CIC (\cref{fig:bidir}) for most rules, simply carrying around the extra elaborated terms. Note that only the subject of the judgment is a \sourcecolor{source term} in \GCIC;
other inputs (that have already been elaborated), as well as outputs (that are to be constructed), are \targetcolor{target terms} in \CCIC. Let us comment a bit on the specific modifications and additions compared to \cref{fig:bidir}.

The most salient feature of elaboration is the insertion of casts that mediate
between merely consistent but not convertible types.
They of course are needed in the rule \nameref{infrule:gcic-check} where the terms are compared using consistency. But this is not enough: casts also appear in the newly-introduced rules \nameref{infrule:gcic-univ-unk} \nameref{infrule:gcic-prod-unk} and \nameref{infrule:gcic-ind-unk} for constrained inference, where the type $\tcol{?_{[]_i}}$ is replaced by the least precise type of the appropriate universe level
having the constrained head constructor, which is exactly what the $\stalk$ function gives us. Note that in the case of \nameref{infrule:gcic-univ-unk} we could have replaced $\tcol{[]_i}$ with $\tcol{\stalkCIC{i+1}{[]_i}}$ to make for a presentation similar to the other two rules. The role of these three rules is to ensure that a term of type $\tcol{\?_{[]_i}}$ can be used as a function, or as a scrutinee of a match, by giving a way to derive constrained inference for such a term.

It is interesting to observe that the rules for constrained elaboration in a gradual setting bear a close resemblance with those described by \citet[Section 3.3]{ciminiSiek:popl2016}, where a matching operator is introduced to verify that an output type can fit into a certain type constructor---either by having that type constructor as head symbol or by virtue of being $\?$. Such a form of matching was already present in our static, bidirectional system, because of the presence of reduction in types. In a way, both \citet{ciminiSiek:popl2016} and \citet{LennonBertrand2021} have the same need of separating the inferred type from operations on it to recover its head constructor, and our mixing of both computation and gradual typing makes that need even clearer.

Rule \nameref{infrule:gcic-unk} also deserves some explanation: $\scol{\?\ulev{i}}$ is elaborated to
$\tcol{\?_{\?_{[]_i}}}$, the least precise term of the least precise type of the whole universe $\tcol{[]_i}$. This avoids unneeded type annotations on $\scol{\?}$ in \GCIC. Instead, the context is responsible for
inserting the appropriate cast, \eg $\scol{\asc{\?}{T}}$ elaborates to a term reducing to $\tcol{\?_T}$.
We do not drop annotations altogether because of an important property on which
bidirectional \CIC is built: any well-formed term should \emph{infer} a type, not just check.
Thus, we must be able to infer a type for $\scol{\?}$. The obvious choice is to have $\scol{\?}$ infer $\tcol{\?}$, but this $\tcol{\?}$ is a term of \CCIC, and thus needs a type index. 
Because this $\tcol{\?}$ is used as a type, this index must be $\tcol{[]}$, and the universe level of the source $\scol{\?}$ is there to give us the level of this $\tcol{[]}$. 
In a real system, this should be handled by \emph{typical ambiguity},\footnote{Typical ambiguity~\cite{Harper1991} is the possibility to avoid giving explicit
universe levels, letting the system decide whether a consistent assignment
of levels can be found. In \Coq, for instance, one almost never has to be explicit about universe levels when writing \texttt{Type}.\label{note:typ-amb}}
alleviating the user from the need to give any annotations when using $\scol{\?}$.

\paragraph{Direct properties}

As the elaboration rules are completely syntax-directed, they
immediately translate to
an algorithm for elaboration. Coupled with decidability of consistency (\cref{prop:cons-static}),
this makes elaboration decidable whenever $\rtred$ is normalizing; 
when $\rtred$ is not normalizing, the elaboration algorithm might diverge, resulting in only semi-decidability of typing (as in, for instance, Dependent Haskell~\cite{eisenberg2016dependent}).

\begin{theorem}[Decidability of elaboration]
	\label{thm:decidability}
	The relations of inference, checking and partial inference of \cref{fig:elaboration} are decidable in \GCICT and \GCICs. They are semi-decidable in \GCICP.
\end{theorem}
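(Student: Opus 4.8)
The plan is to read the elaboration algorithm directly off the rules of \cref{fig:elaboration}, which are completely syntax-directed, and then analyze its termination. Concretely, the algorithm is a mutual recursion over the three judgments (inference, checking, constrained inference), driven by the shape of the source term: for inference, the head symbol of $\scol{\tilde t}$ selects a unique rule; for checking and constrained inference on $\scol{\tilde t}$, one first calls inference on the same $\scol{\tilde t}$ and then inspects the inferred \CCIC type. This recursion is well founded for the lexicographic measure $(\text{size of }\scol{\tilde t}, k)$ where $k$ ranks inference below checking and constrained inference: inference only recurses into strict subterms of $\scol{\tilde t}$ (via checking or constrained inference), whereas checking and constrained inference recurse into inference on the very same $\scol{\tilde t}$. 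Apart from this bookkeeping, the algorithm performs only two genuinely semantic operations: (i) weak-head reducing an inferred type $\tcol{T}$ to expose its head constructor --- needed by the constrained-inference rules and, through consistency, by \nameref{infrule:gcic-check} --- and (ii) deciding consistent conversion $\tcol{T} \cons \tcol{S}$. Determinism of the rule selection rests on \cref{lem:confluence}: by confluence a type cannot reduce both to a $\Pi$-type (or an inductive, or a universe) and to some $\tcol{\?_{[]_i}}$, so exactly one of \nameref{infrule:gcic-inf-prod}/\nameref{infrule:gcic-prod-unk} (resp.\ the pairs for $\square$ and for $I$) can fire, and when neither head matches, the term is ill typed.

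For \GCICT and \GCICs, both semantic operations are total computable. Operation (ii) is decidable by \cref{prop:cons-static}(2), since \cref{thm:ccic-pnorm} guarantees that every well-typed \CCIC term --- in particular every inferred type produced along the run --- is normalizing. Operation (i) likewise terminates, as weak-head reducing a normalizing term reaches its weak-head normal form in finitely many steps. Combined with the well-founded recursion above, the algorithm always terminates and returns either the (unique) elaborated term and type or a failure, which establishes decidability of the three judgments.

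For \GCICP, reduction need not be normalizing, so operations (i) and (ii) become only semi-decidable: since one-step reduction is finitely branching and $\acons$ is decidable, ``$\tcol{T}$ weak-head reduces to a head form $h$'' and ``$\tcol{T} \cons \tcol{S}$'' are both recursively enumerable, and a semi-procedure for each is obtained by fair enumeration. Plugging these semi-procedures into the (otherwise unchanged) recursive algorithm yields a semi-algorithm for elaboration. It is sound and complete because divergence is \emph{blameless}: whenever a derivation exists, every inferred type that a constrained-inference or checking step inspects does reduce to the required head constructor (or admits the required $\alpha$-consistent reduct), and since weak-head reduction is a complete strategy for reaching weak-head normal forms, the semi-procedure finds the needed reduct in finite time; conversely, if no derivation exists the semi-algorithm correctly fails or diverges. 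Hence elaboration is semi-decidable, exactly as in Dependent Haskell. I expect the delicate point to be precisely this last argument --- certifying that the weak-head strategy is complete, so that non-termination of the algorithm only occurs on genuinely non-elaborable inputs, and that the enumeration-based consistency check is both sound and complete --- together with the routine but tedious verification that the syntax-directed rules are exhaustive and mutually exclusive, case by case.
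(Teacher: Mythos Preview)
Your proposal is correct and follows the same approach as the paper: the paper's own justification is essentially the one-sentence sketch that the syntax-directed rules yield an algorithm, which is decidable by \cref{prop:cons-static}(2) together with normalization (\cref{thm:ccic-pnorm}) in \GCICT and \GCICs, and only semi-decidable in \GCICP because reduction may diverge. You have simply fleshed out the termination measure and the semi-decidability argument in more detail than the paper does.
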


Let us now establish two important properties of elaboration that we can
prove at this stage: elaboration is {\em correct}, insofar as it produces well-typed
\CCIC terms, and functional, in the sense that a given \GCIC term can
be elaborated to at most one \CCIC term up to conversion.

\begin{theorem}[Correctness of elaboration]
  \label{thm:correction}
	The elaboration produces well-typed terms in a well-formed context. Namely, given $\tcol{\Gamma}$ such that $\caty \tcol{\Gamma}$, we have that:
	\begin{itemize}
		\item if $\inferelab{}{\tilde{t}}{}{\Gamma}{t}{T}$, then $\tcol{\Gamma} \caty \tcol{t} \inferty \tcol{T}$;
		\item if $\pcheckelab{\bullet}{}{\tilde{t}}{}{\Gamma}{t}{T}$ then $\tcol{\Gamma} \caty \tcol{t} \pcheckty{\bullet} \tcol{T}$ (with $\bullet$ denoting
		the same index in both derivations);
		\item if $\checkelab{}{\tilde{t}}{}{\Gamma}{t}{T}$ and $\tcol{\Gamma} \caty \tcol{T} \pcheckty{[]} \tcol{[]_i}$, then $\tcol{\Gamma} \caty \tcol{t} \checkty \tcol{T}$.
	\end{itemize}
\end{theorem}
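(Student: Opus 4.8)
The plan is to prove the three statements simultaneously, by mutual induction on the elaboration derivation, keeping the well-formedness hypothesis $\caty \tcol{\Gamma}$ as an invariant. Since the rules of \cref{fig:elaboration} mirror the bidirectional rules of \cref{fig:bidir} almost one-to-one, the backbone of the argument will be the correctness proof for bidirectional \CIC of \citet{LennonBertrand2021}: in each case, apply the induction hypotheses to the premises and conclude with the corresponding \CCIC rule from \cref{fig:bidir,fig:ccic-typing}. Two recurring obligations have to be discharged along the way. First, whenever a rule extends the context --- \nameref{infrule:gcic-prod}, \nameref{infrule:gcic-abs}, \nameref{infrule:gcic-fix} --- I have to re-establish $\caty (\tcol{\Gamma}, x : \tcol{A})$ before calling the inner induction hypothesis; this follows from the induction hypothesis on the premise that elaborates the domain $\tcol{A}$, which yields $\tcol{\Gamma} \caty \tcol{A} \pcheckty{[]} \tcol{[]_i}$, together with \nameref{infrule:cic-concat}. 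Second, whenever the third (checking) statement is invoked on a subderivation --- \eg~to check arguments against parameter or argument types in \nameref{infrule:gcic-app}, \nameref{infrule:gcic-ind}, \nameref{infrule:gcic-cons} or \nameref{infrule:gcic-fix} --- I need the checking type to be a well-typed type; this comes from a \emph{validity} lemma for \CCIC (discussed below), plus inversion of product typing and the well-formedness conditions on inductives.

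The genuinely new cases are those that insert casts: \nameref{infrule:gcic-check}, \nameref{infrule:gcic-univ-unk}, \nameref{infrule:gcic-prod-unk} and \nameref{infrule:gcic-ind-unk}. For \nameref{infrule:gcic-check}, the output is $\tcol{\cast{T}{S}{t}}$ where $\scol{\tilde{t}}$ infers $\tcol{T}$ and $\tcol{S}$ is the checking type; by induction $\tcol{\Gamma} \caty \tcol{t} \inferty \tcol{T}$, by the extra hypothesis of the third statement $\tcol{\Gamma} \caty \tcol{S} \pcheckty{[]} \tcol{[]_i}$, and by validity $\tcol{\Gamma} \caty \tcol{T} \pcheckty{[]} \tcol{[]_j}$; then \nameref{infrule:ccic-cast} gives $\tcol{\Gamma} \caty \tcol{\cast{T}{S}{t}} \inferty \tcol{S}$, and \nameref{infrule:cic-check} with reflexivity of $\conv$ yields the desired checking judgment. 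Note that the premise $\tcol{T} \cons \tcol{S}$ is not used here at all: correctness is about typing, not about the gradual guarantees. For the three constrained-inference rules the output is a cast from the inferred type $\tcol{T}$ (which reduces to $\tcol{\?_{[]_{i+1}}}$, \resp to $\tcol{\?_{[]_i}}$) to $\tcol{[]_i}$, \resp to the germ $\tcol{\stalkCIC{i}{\Pi}}$ or $\tcol{\stalkCIC{i}{I}}$. Two ingredients will be needed: subject reduction (\cref{thm:ccic-psafe}) together with confluence (\cref{lem:confluence}) and uniqueness of inferred types, to turn the reduction premise on $\tcol{T}$ into $\tcol{\Gamma} \caty \tcol{T} \pcheckty{[]} \tcol{[]_i}$ (\resp at level $i+1$); and a small lemma that the germs $\tcol{\stalkCIC{i}{\Pi}}$ and $\tcol{\stalkCIC{i}{I}}$ (which, under the side conditions $\castOfPi{i} \geq 0$, are $\tcol{\Unk{[]_{\castOfPi{i}}} \to \Unk{[]_{\castOfPi{i}}}}$ and $\tcol{I(\Unk{\pars(I,i)})}$), as well as $\tcol{\?_{[]_i}}$ and $\tcol{[]_i}$, are well-typed types in any well-formed context. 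Granting these, \nameref{infrule:ccic-cast} produces the cast at the germ (\resp at $\tcol{[]_i}$), and since the germ already has the expected head constructor, the matching constrained-inference rule of \CCIC applies.

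Accordingly, I would first establish: (a) a \emph{validity} lemma for \CCIC, stating that if $\caty \tcol{\Gamma}$ and $\tcol{\Gamma} \caty \tcol{t} \inferty \tcol{T}$ then $\tcol{\Gamma} \caty \tcol{T} \pcheckty{[]} \tcol{[]_k}$ for some $k$, extending the corresponding \CIC result of \citet{LennonBertrand2021} --- the three new \CCIC rules require no extra work, since \nameref{infrule:ccic-unk}, \nameref{infrule:ccic-err} and \nameref{infrule:ccic-cast} already check their output types to be well-typed types in their premises; and (b) the germ-typing lemma above. Everything else is available: confluence (\cref{lem:confluence}) and progress and subject reduction (\cref{thm:ccic-psafe}). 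The hard part will not be any single case but rather threading the bidirectional discipline ``inputs well-formed implies outputs well-formed'' through the whole mutual induction, and in particular, in the cast-inserting constrained-inference rules, reconciling the \emph{reduction} premise on the inferred type with the \emph{typing} information delivered by validity; once that interplay is under control, each remaining case is a direct transcription of the corresponding step in the correctness proof for bidirectional \CIC.
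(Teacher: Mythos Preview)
Your proposal is correct and follows essentially the same approach as the paper: mutual induction on the elaboration derivation, maintaining well-formedness of inputs as an invariant, and invoking validity (\cref{prop:validity}) to type the domains of inserted casts. Your treatment of the constrained-inference cast rules is slightly more elaborate than necessary---validity alone already gives $\tcol{\Gamma} \caty \tcol{T} \pcheckty{[]} \tcol{[]_j}$ for \emph{some} $j$, which is all that \nameref{infrule:ccic-cast} requires for the source type, so you do not actually need to reconcile this level with the $i$ appearing in the reduction premise via subject reduction and confluence---but this does no harm.
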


\begin{proof}
  The proof is by induction on the elaboration derivation,
  mutually with similar properties for all typing judgments.
  In particular, for checking, we have an extra hypothesis that the
  given type is well-formed, as it is an input that should already
  have been typed.

  Because the bidirectional typing rules of \CIC are very similar to the \GCIC-to-\CCIC elaboration rules, the induction is mostly routine. Let us point however that the careful design of the bidirectional rules already in \CIC regarding the input/output separation is important here. Indeed, we have that inputs to the successive premises of a rule are always well-formed, either as inputs to the conclusion, or thanks to previous premises. In particular, all context extensions
    are valid, \ie $\tcol{\Gamma, x : A}$ is used only when $\tcol{\Gamma} \vdash \tcol{A} \pcheckty{[]} \tcol{[]_i}$, and similarly only well-formed types are used for checking. This ensures that we can always use the induction hypothesis.

    The only novel points to consider are the rules where a cast is inserted. For these, we rely on the validity property (an inferred type is always well-typed itself) to ensure that the domain of inserted casts is well-typed, and thus that the casts can be typed.
\end{proof}

Because of the absence of a fixed, deterministic reduction strategy, the elaborated term is not unique. Indeed, since a type can be reduced to multiple product types in rule \cref{infrule:gcic-prod}, a term can infer multiple, different types, and since those appear later on in casts, the elaborated terms can differ by having different, albeit convertible, types in their casts. We thus state two theorems: one is uniqueness up to conversion, in case full reduction is used. The second is a strengthening if a weak-head reduction strategy is imposed for reduction.

\begin{theorem}[Uniqueness of elaboration---Full reduction]
	\label{thm:uniqueness}
	Elaborated terms are convertible:
	\begin{itemize}
		\item if $\inferelab{}{\tilde{t}}{}{\Gamma}{t}{T}$ and $\inferelab{}{\tilde{t}}{}{\Gamma}{t'}{T'}$, then $\tcol{t} \conv \tcol{t'}$ and $\tcol{T} \conv \tcol{T'}$;
		\item if $\pcheckelab{\bullet}{}{\tilde{t}}{}{\Gamma}{t}{T}$ and $\pcheckelab{\bullet}{}{\tilde{t}}{}{\Gamma}{t'}{T'}$ then $\tcol{t} \conv \tcol{t'}$ and $\tcol{T} \conv \tcol{T'}$;
		\item if $\checkelab{}{\tilde{t}}{}{\Gamma}{t}{T}$ and $\checkelab{}{\tilde{t}}{}{\Gamma}{t'}{T}$ then $\tcol{t} \conv \tcol{t'}$.
	\end{itemize}
\end{theorem}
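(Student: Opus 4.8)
The plan is to proceed by mutual induction on the elaboration derivations, simultaneously for the inference, constrained inference, and checking judgments, strengthening the checking statement slightly so that it allows the two checking types to be merely convertible rather than syntactically equal. This generalization is harmless (the statement in the theorem is the special case where they coincide) but necessary to feed the induction, for the reason explained below. Since the elaboration rules of \cref{fig:elaboration} are syntax-directed, the rule applicable to a given source term $\scol{\tilde{t}}$ is determined by the head symbol of $\scol{\tilde{t}}$; hence the two derivations begin with the same rule, except possibly in constrained inference, where two rules (e.g.\ \nameref{infrule:gcic-inf-prod} and \nameref{infrule:gcic-prod-unk}) both target the same constraint. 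Those pairs are mutually exclusive here: the inference premise is shared, so by the induction hypothesis the inferred types are convertible, and by confluence (\cref{lem:confluence}) a type cannot reduce both to a $\Pi$-type and to $\tcol{\?_{[]_i}}$, since these normal forms have distinct heads. So in every case the two derivations use the very same rule.

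Each rule is then discharged by combining three ingredients. First, the induction hypotheses on the premises, which give convertibility of the corresponding elaborated subterms and of their inferred types. Second, \cref{lem:confluence} together with injectivity of type constructors (a consequence of confluence, as used for subject reduction in the proof of \cref{thm:ccic-psafe}): when a rule reduces an inferred type $\tcol{T}$ to expose a head constructor (as in \nameref{infrule:gcic-app}, \nameref{infrule:gcic-inf-prod}, \nameref{infrule:gcic-prod-unk}, \nameref{infrule:gcic-fix}, and the other constrained-inference rules), the two derivations may pick different reducts, but since the inferred types are convertible the reducts are convertible too, and injectivity transfers convertibility to their domains, codomains, parameters, and universe levels; in particular, when a reduct is the neutral form $\tcol{\?_{[]_i}}$, confluence forces the two levels $i$ to agree, so the germs $\tcol{\stalkCIC{i}{\Pi}}$, $\tcol{\stalkCIC{i}{I}}$, $\tcol{[]_i}$ inserted as cast targets are literally the same term. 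Third, congruence of conversion, plus its stability under substitution (needed for the type $\tcol{B\subs{u}{x}}$ in \nameref{infrule:gcic-app}): once all immediate components of the two conclusions are pairwise convertible, the conclusions themselves are, including the inserted casts, since $\tcol{T} \conv \tcol{T'}$, $\tcol{S} \conv \tcol{S'}$ and $\tcol{t} \conv \tcol{t'}$ entail $\tcol{\cast{T}{S}{t}} \conv \tcol{\cast{T'}{S'}{t'}}$.

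The main obstacle is the interplay between the checking judgment and the rules that consume it. In \nameref{infrule:gcic-app}, the argument $\scol{\tilde{u}}$ is checked against the domain $\tcol{A}$ in one derivation and against $\tcol{A'}$ in the other, and these are only convertible, not syntactically equal, so the theorem as literally stated does not apply directly to that premise --- which is exactly why the checking statement must be generalized to convertible checking types. Fortunately the generalization is easy to maintain: checking is nothing but inference followed by a cast (\nameref{infrule:gcic-check}), so its case reduces to the inference case plus the congruence of $\tcol{\cast{\cdot}{\cdot}{\cdot}}$ already used above. The remaining cases are routine bookkeeping mirroring the corresponding uniqueness argument for bidirectional \CIC, which the elaboration closely tracks; correctness of elaboration (\cref{thm:correction}) can be invoked to know that the inferred types appearing in casts are themselves well-typed, although it is not strictly needed for the conversions themselves.
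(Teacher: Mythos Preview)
Your proposal is correct and follows essentially the same strategy as the paper: mutual induction on the elaboration derivation, syntax-directedness of the rules to force both derivations to use the same rule, and confluence (\cref{lem:confluence}) to exclude the overlap between the two constrained-inference rules for each head. Your write-up is in fact more careful than the paper's short sketch: you explicitly identify the need to strengthen the checking clause to convertible (rather than equal) checking types so that the induction goes through at \nameref{infrule:gcic-app}, a point the paper's proof glosses over with ``simply combining the hypothesis of uniqueness is enough''; and you spell out the uses of injectivity of type constructors and of stability of conversion under substitution, which the paper leaves implicit in the phrase ``reducts of convertible types are still convertible''.
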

\noindent 
(Recall that conversion $\conv$ in \CCIC is defined (similarly as in \CIC) as the existence of $\alpha$-equal reducts for the reduction given in \cref{fig:CCIC-reduction}.)

\begin{theorem}[Uniqueness of elaboration---Weak-head reduction]
	\label{thm:uniqueness-wh}
	If in \cref{fig:elaboration}, $\rtred$ is replaced by weak-head reduction, then elaborated terms are unique:
	\begin{itemize}
		\item given $\tcol{\Gamma}$ and $\scol{\tilde{t}}$, there is at most one $\tcol{t}$ and one $\tcol{T}$ such that $\inferelab{}{\tilde{t}}{}{\Gamma}{t}{T}$;
		\item given $\tcol{\Gamma}$ and $\scol{\tilde{t}}$, there is at most one $\tcol{t}$ and one $\tcol{T}$ such that $\pcheckelab{\bullet}{}{\tilde{t}}{}{\Gamma}{t}{T}$;
		\item given $\tcol{\Gamma}$, $\scol{\tilde{t}}$ and $\tcol{T}$, there is at most one $\tcol{t}$ such that $\checkelab{}{\tilde{t}}{}{\Gamma}{t}{T}$.
	\end{itemize}\end{theorem}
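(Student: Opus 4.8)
The plan is to prove the three statements simultaneously by induction on the source term $\scol{\tilde t}$, mirroring the proof of \cref{thm:uniqueness} and, at each term, treating the inference judgment first and then the constrained-inference and checking judgments, which recurse on the same term only through inference. Since the elaboration rules of \cref{fig:elaboration} are syntax-directed by $\scol{\tilde t}$, the induction is a case analysis on the head constructor of $\scol{\tilde t}$, invoking the induction hypotheses on its immediate subterms. The one --- and crucial --- difference from \cref{thm:uniqueness} is the role of reduction: once $\rtred$ is instantiated with the weak-head reduction strategy induced by the neutral forms of \cref{fig:CCIC-canonical} (reduce at the head, or else recursively reduce the unique destructor argument sitting in neutral position), that strategy never branches, so weak-head reduction is a deterministic partial function into weak-head normal forms. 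Hence every \CCIC term has \emph{at most one} weak-head normal form, whose head --- a universe $\tcol{\square_i}$, a $\Pi$-type, an inductive $\tcol{I(\orr{a})}$, $\tcol{\?_{[]_i}}$, $\tcol{\err_{[]_i}}$, or a neutral --- is uniquely determined; this is exactly what upgrades ``convertible'' to ``equal'' at each step.

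For the inference judgment, each syntactic form of $\scol{\tilde t}$ matches exactly one rule; by the induction hypotheses the elaborated subterms and their inferred types are unique, and the conclusion rebuilds the output term and type by purely functional operations --- syntactic reconstruction, capture-avoiding substitution (such as $\tcol{B\subs{u}{x}}$ in \nameref{infrule:gcic-app}), the germ $\stalk$, and the universe-level function $\sortOfPiName$ --- so both outputs are unique. In \nameref{infrule:gcic-app} the premise that $\scol{\tilde t}$ constrained-infers a $\Pi$-type now pins down $\tcol{A}$ and $\tcol{B}$ on the nose (not merely up to conversion), hence $\tcol{u}$ by the checking hypothesis against $\tcol{A}$, hence $\tcol{B\subs{u}{x}}$. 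The checking judgment has the single rule \nameref{infrule:gcic-check}: given $\scol{\tilde t}$ and the input type $\tcol{S}$, inference yields a unique $\tcol{T}$ and $\tcol{t}$, the output is the fixed cast $\tcol{\cast{T}{S}{t}}$, and the premise $\tcol{T \cons S}$ is only a side condition, not affecting uniqueness.

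The constrained-inference judgments $\pcheckty{\bullet}$ are the only ones with more than one rule per value of $\bullet$, and this is where determinism of weak-head reduction does its work. Given $\scol{\tilde t}$, the inference hypothesis yields a unique $\tcol{T}$; its weak-head normal form, when it exists, is unique, and its head decides which of the (at most two) rules for $\bullet$ can fire --- \nameref{infrule:gcic-inf-prod} if the head is a $\Pi$-type versus \nameref{infrule:gcic-prod-unk} if it is $\tcol{\?_{[]_i}}$, the side condition $\castOfPi{i} \geq 0$ of \nameref{infrule:gcic-prod-unk} only further restricting applicability, and symmetrically for $\bullet = \tcol{\square}$ and for $\bullet = I$. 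If the weak-head reduction of $\tcol{T}$ diverges, gets stuck on a neutral, or exposes a head incompatible with $\bullet$, then no rule applies and $\scol{\tilde t}$ simply fails to elaborate at that judgment, consistently with the claimed ``at most one''. In each firing rule the output term and type are obtained from $\tcol{T}$ and its unique weak-head normal form via $\stalk$ and a single cast, hence are again unique.

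The only substantive point, and the step I would be most careful with, is checking that the weak-head strategy is genuinely deterministic --- that at every node (the source, then the target, then the argument of a cast; the scrutinee of a match; the function of an application; and so on) there is exactly one candidate subterm to reduce --- so that a premise such as $\tcol{T \rtred \P x : A . B}$ fixes $\tcol{A}$ and $\tcol{B}$ exactly. This is a routine inspection of \cref{fig:CCIC-canonical}, consistent with the confluence of \cref{lem:confluence} already used for subject reduction. Granting it, the strengthening of \cref{thm:uniqueness} to the present statement goes through mechanically at each inductive step.
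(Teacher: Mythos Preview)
Your proposal is correct and follows essentially the same approach as the paper: a mutual induction exploiting that the elaboration rules are syntax-directed (one inference rule per term former, one checking rule), with the constrained-inference case handled by observing that the weak-head normal form of the inferred type is unique and its head selects at most one applicable rule. The paper phrases the last point slightly differently---it invokes confluence to show that $\tcol{\?_{\square}}$ and a $\Pi$-type cannot both be reducts, then notes that determinism of the weak-head strategy upgrades ``unique up to conversion'' to ``unique''---but your argument from determinism alone is equivalent and arguably cleaner for this theorem specifically. The paper does the induction on the elaboration derivation rather than on the source term; since the system is syntax-directed these coincide, and you correctly handle the fact that constrained inference and checking recurse on the same term through inference.
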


\begin{proof}
    Like for \cref{thm:correction}, those are proven mutually by induction on the typing derivation.

    The main argument is that there is always at most one rule that can apply to get a typing conclusion for a given term. This is true for all inference statements because there is exactly one inference rule for each term constructor, and for checking because there is only one rule to derive checking. In those cases simply combining the hypothesis of uniqueness is enough.

    For $\pcheckty{\Pi}$, by confluence of \CCIC the inferred type cannot at the same time reduce to $\tcol{\?_{\square{}}}$ and $\tcol{\P x : A . B}$, because those do not have a common reduct. Thus, only one of the two rules \nameref{infrule:gcic-inf-prod} and \nameref{infrule:gcic-prod-unk} can apply. It is enough to conclude for \cref{thm:uniqueness}, because reducts of convertible types are still convertible. For \cref{thm:uniqueness-wh} the deterministic reduction strategy ensures that the inferred type is indeed unique, rather than unique up to conversion. The reasoning is similar for the other constrained inference judgments.
\end{proof}

\subsection{Illustration: Back to Omega}
\label{sec:back-to-omega}

Now that \GCIC has been entirely presented, let us come back to the important example of $\scol{\Omega}$, and explain in detail the behavior described in \cref{sec:gcic:-3-1} for the three \GCIC variants. 

Recall that $\scol{\Omega}$ is the
term $\scol{\delta~\delta}$, with $\scol{\delta} := \scol{\l x : \?\ulev{i+1} .
  x~x}$. We leave out the casts present in
\cref{sec:tradeoffs,sec:gcic-overview}, knowing that they will be introduced by
elaboration. We also use $\scol{\?}$ at level $i + 1$, because
$\scol{\?\ulev{i+1}}$, when elaborated as a type, becomes $\tcol{T} :=
\tcol{\cast{\?_{[]_{i+1}}}{[]_i}{\?_{\?_{[]_{i+1}}}}}$, such that $\tcol{T} \rtred \tcol{\?_{[]_i}}$.
For the rest of this section, we write $\tcol{\?_{j}}$ instead of $\tcol{\?_{[]_j}}$
to avoid stacked indices and ease readability.

If $i = 0$ the elaboration of $\scol{\delta}$ (and thus of $\scol{\Omega}$)
fails in \GCICs and \GCICT,
because the inferred type for $x$ is $\tcol{T}$, which reduces to $\tcol{\?_0}$. 
Then, because $\castOfPi{0} = -1 < 0$ in both \GCICs and \GCICT, rule \nameref{infrule:gcic-prod-unk} does not apply and $\scol{\delta}$ is deemed ill-typed, as is $\scol{\Omega}$.

Otherwise, if $i > 0$ or we are considering \GCICP, $\scol{\delta}$ can be elaborated, and we have
\[ \inferelab{}{\delta}{}{\cdot}{\tcol{\l x : T . \left(\cast{T}{\stalkCIC{i}{\Pi}}{x}
\right)~\left(\cast{T}{\?_{\castOfPi{i}}}{x} \right)}}{T \to \?_{\castOfPi{i}}} \]
From this, we get that $\scol{\Omega}$ also elaborates, namely 
(with $\tcol{\delta'}$ the elaboration of $\scol{\delta}$ above)
\[ \inferelab{}{\Omega}{}{\cdot}{\delta'~\left(\cast{T \to \?_{\castOfPi{i}}}{T}{\delta'}\right)}{\?_{\castOfPi{i}}} \]
Let us now look at the reduction behavior of this elaborated term $\tcol{\Omega'}$ in
the three systems: it reduces seamlessly when $\castOfPi{i} = i$ (\GCICP/\CCICP),
while having $\castOfPi{i} < i$ makes it fail (\GCICs/\CCICs and \GCICT/\CCICT).
The reduction of $\tcol{\Omega'}$ in \CCICP is as follows:
$$\begin{array}{rcl}
  \tcol{\Omega'} & \rtred & \tcol{
    \left(\l x : \?_{i} . \left(\cast{T}{\?_i \to \?_i}{x}\right)~\left(\cast{T}{\?_i}{x}\right)\right)~\left(\cast{T \to \?_i}{T}{\delta'}\right)
  } \\
   & \rtred & \tcol{\left(\l x : \?_{i} . \left(\cast{\?_i}{\?_i \to \?_i}{x}\right)~\left(\cast{\?_i}{\?_i}{x}\right)\right)\left(\cast{\?_i \to \?_i}{\?_i}{\delta'}\right) } \\
  & \rtred & \tcol{\left(\cast{\?_i \to \?_i}{\?_i \to \?_i \Leftarrow \?_i}{\delta'}\right)~\left(\cast{\?_i \to \?_i}{\?_i \Leftarrow \?_i}{\delta'}\right)} \\
  & \rtred & \tcol{\left(\cast{\?_i \to \?_i}{\?_i \to
             \?_i}{\delta'}\right)~\left(\cast{\?_i \to
             \?_i}{\?_i}{\delta'}\right)} \\
                  & \rtred & \tcol{\left(\l x : \?_i . \castbg{pastelgray}{\?_i}{\?_i}{\left(\left(\cast{\?_i}{\?_i \to \?_i}{x}\right)~(\castbg{pastelgray}{\?_i}{\?_i}{x})\right)}\right)~\left(\cast{\?_i \to \?_i}{\?_i}{\delta'}\right)} \\
  \end{array}
  $$
The first step is the identity, simply replacing $\tcol{\Omega'}, \castOfPi{i}$ and the first occurrence of $\tcol{\delta'}$ by their definitions.
The second reduces $\tcol{T}$ to $\tcol{\?_i}$. In the third, the casted $\tcol{\delta{}'}$ is substituted for $\tcol{x}$ by a $\beta$ step. Casts are finally simplified using \nameref{redrule:up-down} and \nameref{redrule:prod-prod}.
At that point, the reduction has almost looped back to the second step, apart from the casts $\tcol{\castbg{pastelgray}{\?_i}{\?_i}{}}$ in the first occurrence of $\tcol{\delta'}$, which will simply accumulate through reduction, but without hindering divergence.

On the contrary, the normalizing variants have $\castOfPi{i} < i$, 
and thus share the following reduction path:
$$\begin{array}{rcl}
 \tcol{\Omega'} 
  &\rtred & \tcol{\left(\cast{\?_i \to \?_{i-1}}{\?_{i-1} \to \?_{i-1} \Leftarrow \?_i}{\delta'}\right)~\left(\castbg{pastelgray}{\?_i \to \?_{i-1}}{\?_{i-1} \Leftarrow \?_i}{\delta'}\right)} \\
  &\rtred & \tcol{\left(\cast{\?_i \to \?_{i-1}}{\?_{i-1} \to \?_{i-1} \Leftarrow \?_i}{\delta'}\right)
  \left(\cast{\?_i \to \?_{i-1}}{\?_{i-1} \Leftarrow \?_{i-1} \to \?_{i-1}}{\delta'}\right)} \\
  &\rtred & \tcol{\left(\cast{\?_i \to \?_{i-1}}{\?_{i-1} \to \?_{i-1} \Leftarrow \?_i}{\delta'}\right)~\err_{\?_{i-1}}} \\
  &\rtred & \tcol{\left(\cast{\?_i \to \?_{i-1}}{\?_{i-1} \to \?_{i-1} \Leftarrow \?_{i-1} \to \?_{i-1}}{\delta'}\right)~\err_{\?_{i-1}}} \\
  &\rtred & \tcol{\left(\l x : \?_{i-1} . \cast{\?_{i-1}}{\?_{i-1} \Leftarrow \?_{i-1}}{\left(\left(\cast{\?_i}{\?_{i-1} \to \?_{i-1}}{x'}\right)~\left(\cast{\?_i}{\?_{i-1}}{x'}\right)\right)}\right)~\err_{\?_{i-1}}} \\
  && \text{ where $\tcol{x'}$ is $\tcol{\cast{\?_{i-1}}{\?_i \Leftarrow \?_{i-1}}{x}}$} \\
                &\rtred & \tcol{\cast{\?_{i-1}}{\?_{i-1} \Leftarrow \?_{i-1}}{(\err_{\?_{i-1} \to \?_{i-1}}~\err_{\?_{i-1}})}} \\
  &\rtred & \tcol{\err_{\?_{i-1}}}
  \end{array}
$$
The first step corresponds to the first three above, the only difference being the value of $\castOfPi{i}$. The reductions however differ in the next step because $\tcol{\?_i \to \?_{i-1}} \neq \tcol{\stalkCIC{i}{\Pi}}$, so \nameref{redrule:prod-germ} applies before \nameref{redrule:up-down}.
For the third step, note that $\tcol{\?_{i-1} \to \?_{i-1}} = \tcol{\stalkCIC{i}{\Pi}}$,
so that \nameref{redrule:down-err} applies in the rightmost sequence of casts.
The last three steps of reduction then propagate the error by first using \nameref{redrule:prod-germ}, \nameref{redrule:up-down} and \nameref{redrule:prod-prod}, then the $\beta$ rule, and finally \nameref{redrule:down-err}, \nameref{redrule:prod-err} and a last $\beta$ step.
At a high-level, the error can be seen as a dynamic universe inconsistency, triggered by the invalid downcast $\castbg{pastelgray}{\?_i}{\?_{i-1}}{}$ highlighted on the first line.

\subsection{Precision is a simulation for reduction}
\label{sec:gcic-simulation}

Establishing the graduality of elaboration---the formulation of
the static gradual guarantee (SGG) in our setting---is no small feat, as it
requires properties about
computations in \CCIC that amount to the {\em dynamic} gradual guarantee (DGG).
Indeed, to handle
the typing rules for checking and constrained inference, it is
necessary to know how consistency and reduction evolve as a type
becomes less precise.
As already explained in \cref{sec:precision-graduality}, we
cannot directly prove graduality for a syntactic notion of
precision. However, we can still show that this relation is a simulation for reduction. While weaker than graduality, this property implies the DGG and suffices to conclude that graduality of elaboration holds.
The purpose of this section is to establish it.
Our proof is partly inspired by the proof of DGG by
\citet{siekAl:snapl2015}.\footnote{
Lemma 7 in \citet{siekAl:snapl2015} is similar to our \cref{thm:simulation}, and \cref{fig:apre-ccic} draws from their Fig. 9, especially for \nameref{infrule:capre-castr} and \nameref{infrule:capre-castl}. Also, while we do not make them explicit here, Lemmas 8, 10 and 11 also appear in our proofs.
}
We however had to adapt to the much higher complexity of \CIC compared to \STLC.
In particular, the presence of computation in the domain and codomain of casts is quite subtle to tame, as we must in general reduce types in a cast before we can reduce the cast itself.\footnote{
  Thus, while \cref{lem:catchup-lambda,lem:catchup-cons} correspond roughly to Lemma 9 in \citet{siekAl:snapl2015}, \cref{lem:catchup-univ,lem:catchup-type} are completely novel.
}

Technically, we need to distinguish between two notions of precision, one for \GCIC and one for \CCIC:
(i) {\em syntactic precision} on terms in \GCIC, which corresponds to the
usual syntactic precision of gradual typing~\cite{siekAl:snapl2015}, (ii)
{\em structural precision} on terms in \CCIC, which corresponds
to syntactic precision together with a proper account of casts.
In this section, we concentrate on properties of structural precision in \CCIC. 
We only state and discuss the various lemmas and
theorems on a high level, and refer the reader to \cref{sec:precision-reduction} for
the detailed proofs.

\paragraph{Structural precision for \CCIC} As emphasized already, the key property we want to establish is that precision is a simulation for reduction,
\ie that less precise terms reduce at least as well as more precise ones. This property guides the quite involved definition we are about to give for structural precision: it is rigid enough to give the induction hypotheses needed to prove simulation, while being lax enough to be a consequence of syntactic precision after elaboration, which is the key point to establish elaboration graduality (\cref{thm:static-graduality}), our equivalent of the static gradual guarantee.

\begin{figure}
	\boxedrule{$\tcol{\GG} \vdash \tcol{t} \capre \tcol{t'}$}
	\begin{mathpar}
		\inferrule{ }{\tcol{\GG} \vdash \tcol{[]_{i}} \capre \tcol{[]_{i}}}[Diag-Univ] \ilabel{infrule:capre-univ} \and
		\inferrule{\tcol{\GG} \vdash \tcol{A} \capre \tcol{A'} \\ \tcol{\GG, x : A \mid A'} \vdash \tcol{B} \capre \tcol{B'}}
		{\tcol{\GG} \vdash \tcol{\P x : A . B} \capre \tcol{\P x : A'.B'}}[Diag-Prod] \ilabel{infrule:capre-prod} \and
		\inferrule{\tcol{\GG} \vdash \tcol{A} \cdpre \tcol{A'} \\ \tcol{\GG, x : A \mid A'} \vdash \tcol{t} \capre \tcol{t'}}
		{\tcol{\GG} \vdash \tcol{\l x : A . t} \capre \tcol{\l x : A' . t'}}[Diag-Abs] \ilabel{infrule:capre-abs} \and
		\inferrule{\tcol{\GG} \vdash \tcol{t} \capre \tcol{t'} \\ \tcol{\GG} \vdash \tcol{u} \capre \tcol{u'}}
		{\tcol{\GG} \vdash \tcol{t~u} \capre
                  \tcol{t'~u'}}[Diag-App] \ilabel{infrule:capre-app}
                \and
		\inferrule{ }{\tcol{\GG} \vdash \tcol{x} \capre \tcol{x}}[Diag-Var] \ilabel{infrule:capre-var} \and
		\inferrule{\tcol{\GG} \vdash \tcol{A} \capre \tcol{A'} \\ \tcol{\GG} \vdash \tcol{B} \capre \tcol{B'} \\ \tcol{\GG} \vdash \tcol{t} \capre \tcol{t'} }
		{\tcol{\GG} \vdash \tcol{\cast{A}{B}{t}} \capre \tcol{\cast{A'}{B'}{t'}}}[Diag-Cast] \ilabel{infrule:capre-diag-cast} \and
		\inferrule{\tcol{\GG} \vdash \tcol{\orr{a}} \capre \tcol{\orr{a'}} \\ i = i'}
		{\tcol{\GG} \vdash \tcol{I\ulev{i}(\orr{a})} \capre \tcol{I\ulev{i'}(\orr{a'})}}[Diag-Ind] \ilabel{infrule:capre-ind} \and
		\inferrule{\tcol{\GG} \vdash \tcol{\orr{a}} \capre \tcol{\orr{a'}} \\ \tcol{\GG} \vdash \tcol{\orr{b}} \capre \tcol{\orr{b'}} \\ i = i'}
		{\tcol{\GG} \vdash \tcol{c\ulev{i}(\orr{a},\orr{b})} \capre \tcol{c\ulev{i'}(\orr{a},\orr{b})}}[Diag-Cons] \ilabel{infrule:capre-cons} \and
		\inferrule{
			\tcol{\GG} \vdash \tcol{s} \capre \tcol{s'} \\
			\tcol{\fs{\GG}} \vdash \tcol{s} \pcheckty{I} \tcol{I(\orr{a})} \\
			\tcol{\sn{\GG}} \vdash \tcol{s'} \pcheckty{I} \tcol{I(\orr{a'})} \\
			\tcol{\GG, z : I(\orr{a}) \mid I(\orr{a'})} \vdash \tcol{P} \capre \tcol{P'} \\
			\tcol{\GG, f : (\P z : I(\orr{a}), P) \mid (\P z : I(\orr{a'}), P'), \orr{y} : \orr{Y_k}\subs{\orr{a}}{\orr{x}} \mid \orr{Y_k}\subs{\orr{a'}}{\orr{x}}} \vdash \tcol{t_k} \capre \tcol{t_k'}}
		{\tcol{\GG} \vdash \tcol{\match{I}{s}{z.P}{\orr{f.y.t}}} \capre \tcol{\match{I}{s'}{z.P'}{\orr{f.y.t'}}}}[Diag-Fix] \ilabel{infrule:capre-fix} \\
		\inferrule{
			\tcol{\fs{\GG}} \vdash \tcol{t} \inferty \tcol{T} \\
			\tcol{\GG} \vdash \tcol{T} \cdpre \tcol{A'} \\
			\tcol{\GG} \vdash \tcol{T} \cdpre \tcol{B'} \\
			\tcol{\GG} \vdash \tcol{t} \capre \tcol{t'}}
		{\tcol{\GG} \vdash \tcol{t} \capre \tcol{\cast{A'}{B'}{t'}}}[Cast-R] \ilabel{infrule:capre-castr} \\
		\inferrule{
			\tcol{\sn{\GG}} \vdash \tcol{t'} \inferty \tcol{T'} \\
			\tcol{\GG} \vdash \tcol{A} \cdpre \tcol{T'} \\
			\tcol{\GG} \vdash \tcol{B} \cdpre \tcol{T'} \\
			\tcol{\GG} \vdash \tcol{t} \capre \tcol{t'}}
		{\tcol{\GG} \vdash \tcol{\cast{A}{B}{t}} \capre \tcol{t'}}[Cast-L] \ilabel{infrule:capre-castl} \\
		\inferrule{\tcol{\fs{\GG}} \vdash \tcol{t} \inferty \tcol{T} \\ \tcol{\GG} \vdash \tcol{T} \cdpre \tcol{T'}}
		{\tcol{\GG} \vdash \tcol{t} \capre \tcol{\?_{T'}}}[Unk] \ilabel{infrule:capre-unk} \and
		\inferrule{\tcol{\fs{\GG}} \vdash \tcol{A} \pcheckty{[]} \tcol{[]_i} \\ i \leq j}
		{\tcol{\GG} \vdash \tcol{A} \capre \tcol{\?_{[]_j}}}[Unk-Univ] \ilabel{infrule:capre-unk-univ} \and
    \inferrule{\tcol{\sn{\GG}} \vdash \tcol{t'} \inferty \tcol{T'} \\ \tcol{\GG} \vdash \tcol{T} \cdpre \tcol{T'}}
    {\tcol{\GG} \vdash \tcol{\err_{T}} \capre \tcol{t'}}[Err] \ilabel{infrule:capre-err} \and
    \inferrule{\tcol{\fs{\GG}} \vdash \tcol{t'} \pcheckty{\Pi} \tcol{\P x : A' . B'} \\ \tcol{\GG} \vdash \tcol{\P x : A . B} \cdpre \tcol{\P x : A' . B'}}
    {\tcol{\GG} \vdash \tcol{ \l x : A . \err_{B}} \capre \tcol{t'}}[Err-Lambda] \ilabel{infrule:capre-err-lambda}
	\end{mathpar}
	\boxedrule{$\tcol{\GG} \vdash \tcol{t} \cdpre \tcol{t'}$}
	\begin{mathpar}
		\hspace{6em}
		\inferrule{\tcol{\GG} \vdash \tcol{t} \capre \tcol{t'}}{\tcol{\GG} \vdash \tcol{t} \cdpre \tcol{t'}} \and
		\inferrule{\tcol{\GG} \vdash \tcol{s} \cdpre \tcol{t'} \\ \tcol{t} \redCCIC \tcol{s}}{\tcol{\GG} \vdash \tcol{t} \cdpre \tcol{t'}} \and
		\inferrule{\tcol{\GG} \vdash \tcol{t} \cdpre \tcol{s'} \\ \tcol{t'} \redCCIC \tcol{s'}}{\tcol{\GG} \vdash \tcol{t} \cdpre \tcol{t'}}
	\end{mathpar}

	\caption{Structural precision in \CCIC}
	\label{fig:apre-ccic}
\end{figure}

Similarly to $\acons$, precision can ignore some casts, in order to handle casts that might appear or disappear in one term but not the other
during reduction. But in order to control what
casts can be ignored, we impose some restriction on the types
involved. In particular, we want to ensure that ignored casts
would not have raised an error: \eg we want to prevent $\tcol{0} \capre \tcol{\cast{\nat}{\bool}{0}}$.
Thus the definition of structural precision relies on typing, and to do this we need to record the contexts of the two compared terms. We do so by using double-struck letters to denote contexts where each variable is given two types, writing $\tcol{\GG , x : A \mid  A'}$ for context extensions. We use $\tcol{\GG_i}$ for projections, \ie $\tcol{\fs{(\GG, x : A \mid A')}} := \tcol{\fs{\GG}, x : A}$, and write $\tcol{\Gamma \mid \Gamma'}$ for the converse pairing operation.

\begin{definition}[Structural and definitional precision in \CCIC]~

\emph{Structural precision}, denoted $\tcol{\GG} \vdash \tcol{t}
\capre \tcol{t'}$, is defined in \cref{fig:apre-ccic}, mutually with
\emph{definitional precision},
denoted $\tcol{\GG} \vdash \tcol{t} \cdpre \tcol{t'}$,
which is its closure by reduction.
We write $\tcol{\Gamma} \capre \tcol{\Gamma'}$ and $\tcol{\Gamma} \cdpre \tcol{\Gamma'}$ for the pointwise extensions of those to contexts.
\end{definition}
Although $\tcol{\GG} \vdash \tcol{t}\cdpre \tcol{t'}$ is defined in a stepwise way, it is equivalent to the existence of $\tcol{s}$ and $\tcol{s'}$
such that $\tcol{t} \rtred \tcol{s}$, $\tcol{t'} \rtred \tcol{s'}$ and
$\tcol{\GG} \vdash \tcol{s} \capre \tcol{s'}$.
The situation is the same as for consistency (\resp conversion), which
is the closure by reduction of $\alpha$-consistency (\resp $\alpha$-equality).
However, here definitional precision is also used in
the definition of structural precision, in order to permit
computation in types---recall that in a dependently-typed setting 
the two types involved in a cast may need to reduce before the cast itself can
reduce---and thus the two notions are mutually defined.

Let us now explain the rules defining structural precision. Diagonal rules are completely structural, apart from the \nameref{infrule:capre-fix} rule, where typing assumptions provide us with the contexts needed to compare the predicates. More interesting are the non-diagonal rules.
First, $\tcol{\?_{T}}$ is greater than any term of the ''right type''. This incorporates loss of precision (rule \nameref{infrule:capre-unk}), and accommodates for a small bit of cumulativity (rule \nameref{infrule:capre-unk-univ}). This is needed because of technical reasons linked with possibility to form products between types at different levels.
On the contrary, the error is smaller than any term (rule \nameref{infrule:capre-err}), even in its extended form on $\Pi$-types (rule \nameref{infrule:capre-err-lambda}), with a typing premise similar to that of rule \nameref{infrule:capre-unk}.
Finally, casts on the right-hand side can be ignored as long as they
are performed on types that are \emph{less} precise than the type of the
term on the left (rule \nameref{infrule:capre-castr}). Dually, casts on the
left-hand side can be ignored as long as they are performed on types
that are \emph{more} precise than the type of the term on the right
(rule \nameref{infrule:capre-castl}).

\paragraph{Catch-up lemmas}
The fact that structural precision is a simulation relies on a series of lemmas that all have the same form: under the assumption that a term $\tcol{t'}$ is less precise than a term $\tcol{t}$ with a known head ($\tcol{[]}$, $\tcol{\P}$, $\tcol{I}$, $\tcol{\lambda}$ or $\tcol{c}$), the term $\tcol{t'}$ can be reduced to a term that either has the same head, or is some $\tcol{\?}$. We call these {\em catch-up} lemmas, as they enable the less precise term to catch up to the more precise one whose head is already known. Their aim is to ensure that casts appearing in a less precise term never block reduction, as they can always be reduced away.

The lemmas are established in a descending fashion: first, on the
universe in \cref{lem:catchup-univ}, then on other types in
\cref{lem:catchup-type}, and finally on terms, namely on $\lambda$-abstractions in \cref{lem:catchup-lambda} and inductive constructors in \cref{lem:catchup-cons}. Each time, the previously proven catch-up lemmas are used to reduce types in casts appearing in the less precise term, apart from \cref{lem:catchup-univ}, where the induction hypothesis of the lemma being proven is used instead.

\begin{lemma}[Universe catch-up]~\\
	\label{lem:catchup-univ}
	Under the hypothesis that $\tcol{\fs{\GG}} \capre \tcol{\sn{\GG}}$, if $\tcol{\GG} \vdash \tcol{[]_i} \cdpre \tcol{T'}$ and $\tcol{\sn{\GG}} \vdash \tcol{T'} \pcheckty{[]} \tcol{[]_{j}}$, either $\tcol{T'} \rtred \tcol{\?_{[]_{j}}}$ with $i < j$, or $\tcol{T'} \rtred \tcol{[]_i}$.
\end{lemma}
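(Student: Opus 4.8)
The plan is to unfold the definition of definitional precision first: since $\tcol{[]_i}$ is already in normal form, $\tcol{\GG} \vdash \tcol{[]_i} \cdpre \tcol{T'}$ holds exactly when there is a reduct $\tcol{T'} \rtred \tcol{s'}$ with $\tcol{\GG} \vdash \tcol{[]_i} \capre \tcol{s'}$. So it suffices to prove, by induction on the derivation of structural precision, the sharper claim: \emph{if $\tcol{\GG} \vdash \tcol{[]_i} \capre \tcol{u}$ and $\tcol{u}$ is well-typed in $\tcol{\sn{\GG}}$ with an inferred type that reduces to a universe, then $\tcol{u} \rtred \tcol{[]_i}$ or $\tcol{u} \rtred \tcol{\?_{[]_k}}$ for some $k > i$}. (That side condition on $\tcol{u}$ is preserved along the induction, using subject reduction, \cref{thm:ccic-psafe}, together with the typing premises carried by the rules of \cref{fig:apre-ccic}.) Granting the claim, I conclude: combining $\tcol{T'} \rtred \tcol{s'}$ with the hypothesis $\tcol{\sn{\GG}} \vdash \tcol{T'} \pcheckty{[]} \tcol{[]_{j}}$, subject reduction and syntax-directedness of inference show that $\tcol{s'}$ infers a type convertible to $\tcol{[]_j}$, hence reducing (by \cref{lem:confluence}) to $\tcol{[]_j}$; so in the first alternative $\tcol{[]_{i+1}} \conv \tcol{[]_j}$, \ie $j = i+1$, and $\tcol{T'} \rtred \tcol{[]_i}$, while in the second alternative $k = j$ and $\tcol{T'} \rtred \tcol{\?_{[]_j}}$ with $i < j$.

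The only rules that can derive $\tcol{\GG} \vdash \tcol{[]_i} \capre \tcol{u}$ are \nameref{infrule:capre-univ}, \nameref{infrule:capre-unk-univ}, \nameref{infrule:capre-unk} and \nameref{infrule:capre-castr} (other diagonal rules have a different left-hand head or form, and \nameref{infrule:capre-err}, \nameref{infrule:capre-err-lambda}, \nameref{infrule:capre-castl} do not have $\tcol{[]_i}$ on the left). Rule \nameref{infrule:capre-univ} gives $\tcol{u} = \tcol{[]_i}$ and rule \nameref{infrule:capre-unk-univ} gives $\tcol{u} = \tcol{\?_{[]_k}}$ with $i < k$ forced by its premise; both are immediate. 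For \nameref{infrule:capre-unk} we have $\tcol{u} = \tcol{\?_R}$ with $\tcol{\GG} \vdash \tcol{[]_{i+1}} \cdpre \tcol{R}$, which unfolds to $\tcol{R} \rtred \tcol{R_0}$ and $\tcol{[]_{i+1}} \capre \tcol{R_0}$ via a strict sub-derivation; applying the induction hypothesis to $\tcol{R_0}$ and using that the inferred type $\tcol{R}$ of $\tcol{u}$ reduces to a universe, confluence excludes the reduct $\tcol{\?_{[]_m}}$, leaving $\tcol{R_0} \rtred \tcol{[]_{i+1}}$ and hence $\tcol{u} \rtred \tcol{\?_{[]_{i+1}}}$.

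The crux is rule \nameref{infrule:capre-castr}, where $\tcol{u} = \tcol{\cast{A'}{B'}{t'}}$ with $\tcol{\GG} \vdash \tcol{[]_{i+1}} \cdpre \tcol{A'}$, $\tcol{\GG} \vdash \tcol{[]_{i+1}} \cdpre \tcol{B'}$ and $\tcol{\GG} \vdash \tcol{[]_i} \capre \tcol{t'}$. The idea is to let the less precise term catch up: the induction hypothesis applied to $\tcol{A'}$ and $\tcol{B'}$ (well-typed types, being source and target of a cast in $\tcol{u}$) shows each reduces to $\tcol{[]_{i+1}}$ or to some $\tcol{\?_{[]_k}}$, and since $\tcol{B'}$ is the inferred type of $\tcol{u}$ it reduces to a universe, so confluence forces $\tcol{B'} \rtred \tcol{[]_{i+1}}$. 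If also $\tcol{A'} \rtred \tcol{[]_{i+1}}$, then $\tcol{t'}$ has a type reducing to a universe, the induction hypothesis applies to it, and $\tcol{u}$ reduces to $\tcol{\cast{[]_{i+1}}{[]_{i+1}}{v}}$ with $\tcol{v} \in \{\tcol{[]_i}, \tcol{\?_{[]_{i+1}}}\}$, hence to $\tcol{v}$ by \nameref{redrule:univ-univ}. Otherwise $\tcol{A'} \rtred \tcol{\?_{[]_k}}$ with $k > i+1$, so $\tcol{t'}$ is a well-typed inhabitant of $\tcol{\?_{[]_k}}$ and reduces to one of the canonical forms of \cref{fig:CCIC-canonical}: an unknown $\tcol{\?_{\?_{[]_k}}}$, in which case \nameref{redrule:down-unk} directly yields $\tcol{u} \rtred \tcol{\?_{[]_{i+1}}}$; or an upcast from a germ, in which case \nameref{redrule:up-down} peels it off and, after reiterating the argument on the inner term (whose enclosing casts are again between imprecise universes), one ends either with a $\tcol{\cast{[]_{i+1}}{[]_{i+1}}{\cdot}}$ resolved by \nameref{redrule:univ-univ}, or with a chain of \nameref{redrule:down-unk} steps yielding $\tcol{\?_{[]_{i+1}}}$. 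Throughout, well-typedness together with the catch-up conclusions and confluence are what exclude degenerate reducts such as $\tcol{\err_{[]}}$ or a germ whose head is $\Pi$ or $I$. This cast case, with its nested catch-ups and the bookkeeping of germ reductions, is the main obstacle; the three congruence clauses of $\cdpre$ are immediate from the induction hypothesis.
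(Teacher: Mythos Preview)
Your overall structure---reduce $\cdpre$ to $\capre$ using that $\tcol{[]_i}$ is normal, then case on the last $\capre$ rule---matches the paper's, and your handling of \nameref{infrule:capre-univ}, \nameref{infrule:capre-unk-univ}, \nameref{infrule:capre-unk} and the sub-case of \nameref{infrule:capre-castr} where $\tcol{A'} \rtred \tcol{[]_{i+1}}$ is fine. The problem is the other sub-case of \nameref{infrule:capre-castr}, where $\tcol{A'} \rtred \tcol{\?_{[]_k}}$.

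There your stated IH cannot be applied to $\tcol{t'}$: the side condition you built into the claim (``inferred type reduces to a universe'') fails, since the inferred type of $\tcol{t'}$ is convertible to $\tcol{A'}$, which reduces to $\tcol{\?_{[]_k}}$, not to a universe. You fall back on asserting that $\tcol{t'}$ ``reduces to one of the canonical forms of \cref{fig:CCIC-canonical}'', but nothing justifies this: you are in an open context, the lemma must also hold in \CCICP where normalization is not available, and even under normalization the canonical forms of $\tcol{\?_{[]_k}}$ include arbitrary neutrals, not just $\tcol{\?_{\?_{[]_k}}}$ and germ-upcasts. The sentence about ``well-typedness together with the catch-up conclusions and confluence'' excluding bad reducts does not rescue this: typing alone does not rule out errors or neutrals at $\tcol{\?_{[]_k}}$, and the only thing that does---the premise $\tcol{[]_i} \capre \tcol{t'}$---is precisely what you did not exploit. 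The phrase ``reiterating the argument on the inner term'' is likewise not backed by your induction principle, since the IH hypothesis fails at that point.

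The paper avoids this by structuring the induction differently. It performs a \emph{mutual} induction on $\capre$ and $\cdpre$, and in the $\capre$ case it first decomposes the whole nest of \nameref{infrule:capre-castr} applications at once, obtaining $\tcol{T'} = \tcol{\cast{U_{n-1}}{S_n}{\dots \cast{U_1}{S_2}{T''}}}$ with $\tcol{T''}$ equal to $\tcol{[]_i}$ or $\tcol{\?_S}$. The inductive hypothesis (for $\cdpre$) is then applied only to the cast \emph{types} $\tcol{S_k}$, $\tcol{U_k}$ and possibly $\tcol{S}$; these are always well-formed types, so their inferred type is a universe and the side condition is automatically satisfied. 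The innermost $\tcol{T''}$ is handled directly, without any IH on it. This sidesteps exactly the case where your argument breaks: you never need to know what $\tcol{t'}$ reduces to, only what the surrounding cast types reduce to, and then the three rules \nameref{redrule:down-unk}, \nameref{redrule:univ-univ}, \nameref{redrule:up-down} suffice to eliminate all casts regardless of the interleaving of $\tcol{[]_{i+1}}$ and $\tcol{\?_{[]_l}}$.
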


\begin{lemma}[Types catchup]
	\label{lem:catchup-type}
	Under the hypothesis that $\tcol{\fs{\GG}} \capre \tcol{\sn{\GG}}$, we have the following:
	\begin{itemize}
		\item if $\tcol{\GG} \vdash \tcol{\?_{[]_i}} \capre \tcol{T'}$ and $\tcol{\sn{\GG}} \vdash \tcol{T'} \pcheckty{[]} \tcol{[]_{j}}$, then $\tcol{T'} \rtred \tcol{\?_{[]_{j}}}$ and $i \leq j$;
		\item if $\tcol{\GG} \vdash \tcol{\P x : A.B} \capre \tcol{T'}$, $\tcol{\fs{\GG}} \vdash \tcol{\P x : A. B} \inferty \tcol{[]_i}$ and $\tcol{\sn{\GG}} \vdash \tcol{T'} \pcheckty{[]} \tcol{[]_j}$ then either $\tcol{T'} \rtred \tcol{\?_{[]_j}}$ and $i \leq j$, or $\tcol{T'} \rtred \tcol{\P x : A' . B'}$ for some $\tcol{A'}$ and $\tcol{B'}$ such that $\tcol{\GG} \vdash \tcol{\P x : A . B} \capre \tcol{\P x :A' . B'}$;
		\item if $\tcol{\GG} \vdash \tcol{I(\orr{a})} \capre \tcol{T'}$, $\tcol{\fs{\GG}} \vdash \tcol{I(\orr{a})} \inferty \tcol{[]_i}$ and $\tcol{\sn{\GG}} \vdash \tcol{T'} \pcheckty{[]} \tcol{[]_j}$ then either $\tcol{T'} \rtred \tcol{\?_{[]_j}}$ and $i \leq j$, or $\tcol{T'} \rtred \tcol{I(\orr{a'})}$ for some $\tcol{a'}$ such that $\tcol{\GG} \vdash \tcol{I(\orr{a})} \capre \tcol{I(\orr{a'})}$.
	\end{itemize}
\end{lemma}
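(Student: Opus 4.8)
The plan is to prove the three statements simultaneously, by induction on the derivation of structural precision $\GG \vdash t \capre T'$ in which $t$ is the term with the known head ($\?_{[]_i}$, $\P x : A . B$, or $I(\orr{a})$). I will freely use the characterization of definitional precision $\cdpre$ as the closure of $\capre$ under reduction on either side, confluence (\cref{lem:confluence}), the already-established universe catch-up lemma (\cref{lem:catchup-univ}), and the invariant that all terms occurring in the judgment are well-typed in the relevant projected contexts (so that, together with the hypothesis $\fs{\GG} \capre \sn{\GG}$, the constrained-inference side conditions needed to fire the induction hypotheses are met). A key preliminary observation is that in all three parts the subject $t$ infers a universe $[]_i$ — trivially for $\?_{[]_i}$, by hypothesis for $\P x : A . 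B$ and $I(\orr{a})$ — which pins down the left-hand type in the premises of the cast rule. The argument is then a case analysis on the last rule applied.

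The easy cases are those where $T'$ is essentially already in the desired shape. The diagonal rules \nameref{infrule:capre-prod} and \nameref{infrule:capre-ind} give $T'$ with the expected head directly, and their premises provide the required sub-precision. The rules \nameref{infrule:capre-err}, \nameref{infrule:capre-err-lambda} and \nameref{infrule:capre-diag-cast} (and the remaining diagonal rules) cannot conclude the judgment, since $t$ is none of an error term, a $\lambda$-abstraction, a cast, an application, a variable, a constructor or a match. For \nameref{infrule:capre-unk-univ} we are done, as $T'$ is already of the form $\?_{[]_j}$ and $i \leq j$ is a premise. For \nameref{infrule:capre-unk}, $T' = \?_{S'}$ where the premise gives $\GG \vdash []_i \cdpre S'$; from $\sn{\GG} \vdash T' \pcheckty{[]} []_j$ we get $S' \rtred []_j$, and \cref{lem:catchup-univ} applied to $S'$ (typed by some universe) leaves only $S' \rtred []_i$ (confluence excludes $S' \rtred \?_{[]_k}$), so $j = i$ and $T' \rtred \?_{[]_i}$ with $i \leq j$.

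The crux is the rule \nameref{infrule:capre-castr}, with $T' = \cast{A'}{B'}{t'}$ and premises $\GG \vdash []_i \cdpre A'$, $\GG \vdash []_i \cdpre B'$ and $\GG \vdash t \capre t'$. Applying \cref{lem:catchup-univ} to $A'$ and to $B'$, each reduces to $[]_i$ or to some $\?_{[]_k}$ with $i < k$. Since $\cast{A'}{B'}{t'}$ infers $B'$ and $\sn{\GG} \vdash T' \pcheckty{[]} []_j$, confluence forces $B' \rtred []_i$ and $j = i$. It then remains to push the outer cast $\cast{A'}{[]_i}{t'}$ through a reduct of $t'$. If $A' \rtred []_i$, that cast is erased by \nameref{redrule:univ-univ} and we conclude by the induction hypothesis applied to $\GG \vdash t \capre t'$ (whose type $[]_i$ does constrained-infer a universe). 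If instead $A' \rtred \?_{[]_k}$ (with $i < k$), then $t'$ has type a reduct of $\?_{[]_k}$ and is not directly amenable to the induction hypothesis; here one analyzes the derivation of $\GG \vdash t \capre t'$ itself. Its last rule is again either \nameref{infrule:capre-unk} (forcing $t' \rtred \?_{\?_{[]_k}}$, after which $\cast{\?_{[]_k}}{[]_i}{t'}$ fires \nameref{redrule:down-unk} and gives $\?_{[]_i}$) or \nameref{infrule:capre-castr} (peeling off another cast whose source and target, by \cref{lem:catchup-univ}, reduce to $[]_i$ or to some $\?_{[]_{k'}}$ with $i < k'$; using that $[]_i = \stalkCIC{k}{[]_i}$ since $i < k$, the germ-to-$\?$ upcasts so produced are consumed by the outer downcasts via \nameref{redrule:up-down} and then \nameref{redrule:univ-univ}, bottoming out on a structurally smaller precision derivation to which the induction hypothesis applies). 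In every sub-case one lands on $\?_{[]_i}$ or on a term with the expected head, with the level inequalities as claimed.

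The main obstacle is precisely this $A' \rtred \?_{[]_k}$ branch and its nested iterations: one must carefully thread the universe-level inequalities through the germ, \nameref{redrule:up-down} and \nameref{redrule:univ-univ} reduction rules, and argue, using the precision premises together with $\fs{\GG} \capre \sn{\GG}$ and well-typedness, that none of the error-producing rules (\nameref{redrule:head-err}, \nameref{redrule:size-err}) ever fires — the heads involved being always a universe cast to a universe, and the germ levels being $> i$ — since an error reduct of $T'$ would be incompatible with $t \capre T'$. The $\Pi$- and $I$-specific disjuncts add only the bookkeeping of propagating the sub-precision $\GG \vdash \P x : A . B \capre \P x : A' . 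B'$ (resp. on indices) obtained from the inner induction hypothesis back through the cancelled casts, which is routine once the reduction shuffle above is in place.
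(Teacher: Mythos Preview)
Your argument is correct in substance, but it takes a longer route than the paper, and the induction you set up is precisely what the paper avoids. The paper's proof observes that ``this time without the need for induction---we use \cref{lem:catchup-univ} instead'': since the left-hand term is fixed throughout (one of $\?_{\square_i}$, $\Pi x{:}A.B$, $I(\orr{a})$, each inferring $\square_i$), one can invert the whole derivation of $\GG \vdash t \capre T'$ in one go and decompose $T'$ as a stack of casts $\cast{U_{n-1}}{S_n}{\dots\cast{U_1}{S_2}{T''}}$ around a non-cast core $T''$ that is either some $\?_S$ or already has the expected head. Every cast type $U_k,S_k$ (and $S$) satisfies $\square_i \cdpre {-}$ by the \nameref{infrule:capre-castr}/\nameref{infrule:capre-unk} premises, so \cref{lem:catchup-univ} applies to each simultaneously, and the same three reduction rules used in that lemma's proof (\nameref{redrule:univ-univ}, \nameref{redrule:down-unk}, \nameref{redrule:up-down}) erase all the casts at once.

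Your inductive framing performs this same decomposition one layer at a time, but then the $A' \rtred \?_{\square_k}$ branch bites: the inner term $t'$ no longer satisfies the $\pcheckty{\square}$ side-condition of the statement, so the induction hypothesis is unavailable and you are forced into an ad hoc nested case analysis on the sub-derivation $t \capre t'$. That nested analysis is sound---the derivation is finite, and in that position only \nameref{infrule:capre-unk} or another \nameref{infrule:capre-castr} can occur (a diagonal rule or \nameref{infrule:capre-unk-univ} would make $t'$ infer some $\square_m$, which is not convertible to $\?_{\square_k}$)---but it is exactly the global peeling the paper does up front. So the induction buys nothing and costs the detour; your phrase ``bottoming out on a structurally smaller precision derivation to which the induction hypothesis applies'' is also slightly off, since the chain may just as well bottom out at \nameref{infrule:capre-unk} without the hypothesis ever becoming applicable. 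The direct decomposition is both shorter and makes the termination of the cast-erasure argument immediate.
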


\begin{lemma}[$\lambda$-abstraction catch-up]~\\
	\label{lem:catchup-lambda}
	If $\tcol{\GG} \vdash \tcol{\l x : A . t} \capre \tcol{s'}$, where $\tcol{t}$ is not an error, $\tcol{\fs{\GG}} \vdash \tcol{\l x : A. t} \inferty \tcol{\P x : A.B}$ and $\tcol{\sn{\GG}} \vdash \tcol{s'} \pcheckty{\Pi} \tcol{\P x : A'. B'}$, then $\tcol{s'} \rtred \tcol{\l x : A'.t'}$ with $\tcol{\GG} \vdash \tcol{\l x : A . t} \capre \tcol{\l x : A'. t'}$.

  This holds in \CCICP, \CCICs, and for terms without $\tcol{\?}$ in \CCICT.
\end{lemma}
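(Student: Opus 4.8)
The plan is to argue by induction on the derivation of $\tcol{\GG} \vdash \tcol{\l x : A . t} \capre \tcol{s'}$, casing on its last rule in \cref{fig:apre-ccic}. Because the left-hand side is a $\lambda$-abstraction whose body is not an error and which infers a $\Pi$-type, confluence eliminates every rule except \nameref{infrule:capre-abs}, \nameref{infrule:capre-unk} and \nameref{infrule:capre-castr}: the error rules \nameref{infrule:capre-err}, \nameref{infrule:capre-err-lambda} would need the left term to be an error or a $\lambda$-abstraction with error body (excluded by the ``$\tcol{t}$ not an error'' hypothesis), the cast rules \nameref{infrule:capre-diag-cast} and \nameref{infrule:capre-castl} would need it to be a cast, and \nameref{infrule:capre-unk-univ} would force $\tcol{\P x : A.B}$ to reduce to a universe. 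The case \nameref{infrule:capre-abs} is immediate, $\tcol{s'}$ being already a $\lambda$-abstraction (whose binder is convertible to the $\tcol{A'}$ of the hypothesis by $\Pi$-injectivity, a corollary of confluence). In the case \nameref{infrule:capre-unk}, $\tcol{s'} = \tcol{\?_{T'}}$ with $\tcol{\P x : A.B} \cdpre \tcol{T'}$; since $\tcol{s'}$ constrained-infers a $\Pi$-type, $\tcol{T'} \rtred \tcol{\P x : A'.B'}$, so $\tcol{s'} \rtred \tcol{\?_{\P x : A'.B'}} \redCCIC \tcol{\l x : A' . \?_{B'}}$ by \nameref{redrule:prod-unk}; inverting $\tcol{\P x : A.B} \cdpre \tcol{\P x : A'.B'}$ through \nameref{infrule:capre-prod} (the only applicable rule, as neither side is a cast, a universe, an error or $\tcol{\?}$) yields $\tcol{A} \cdpre \tcol{A'}$ and the body precision, whence $\tcol{\l x : A . t} \capre \tcol{\l x : A' . \?_{B'}}$ via \nameref{infrule:capre-abs} and \nameref{infrule:capre-unk}.

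The substance is in the case \nameref{infrule:capre-castr}, where $\tcol{s'} = \tcol{\cast{C'}{D'}{s''}}$ with $\tcol{\GG} \vdash \tcol{\l x : A.t} \capre \tcol{s''}$ and $\tcol{\P x : A.B} \cdpre \tcol{C'}, \tcol{D'}$. Well-typedness of $\tcol{s'}$ gives $\tcol{D'} \rtred \tcol{\P x : A'.B'}$ (the cast infers $\tcol{D'}$), and $\tcol{s''}$ checks against $\tcol{C'}$. Types catch-up (\cref{lem:catchup-type}) applied to $\tcol{\P x : A.B} \cdpre \tcol{C'}$ gives $\tcol{C'} \rtred \tcol{\P x : C_1'.C_2'}$ or $\tcol{C'} \rtred \tcol{\?_{[]_j}}$. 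If $\tcol{C'}$ reduces to a $\Pi$-type, then $\tcol{s''}$ infers (by conversion with $\tcol{C'}$) a type reducing to a $\Pi$-type, the induction hypothesis applies and turns $\tcol{s''}$ into $\tcol{\l x : A_1.u_1}$ with $\tcol{\l x : A.t} \capre \tcol{\l x : A_1.u_1}$; then $\tcol{s'} \rtred \tcol{\cast{\P x : C_1'.C_2'}{\P x : A'.B'}{\l x : A_1.u_1}} \redCCIC \tcol{\l x : A' . \cdots}$ by \nameref{redrule:prod-prod}, and the residual precision goal reduces to combining $\tcol{t} \capre \tcol{u_1}$ with substitutivity of structural precision and rule \nameref{infrule:capre-castr}, whose side-conditions (the domains and codomains of the distributed casts being less precise than those of $\tcol{\P x : A.B}$) follow from $\tcol{\P x : A.B} \cdpre \tcol{C'}, \tcol{D'}$. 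If instead $\tcol{C'} \rtred \tcol{\?_{[]_j}}$, then $\tcol{s''}$ is a well-typed inhabitant of (a reduct of) $\tcol{\?_{[]_j}}$ lying above $\tcol{\l x : A.t}$; re-casing on its precision derivation in the same descending fashion as \cref{lem:catchup-type}, either $\tcol{s''}$ is some $\tcol{\?_{T'}}$, and then $\tcol{s'} \rtred \tcol{\cast{\?_{[]_j}}{D'}{\?_{\?_{[]_j}}}} \redCCIC \tcol{\?_{D'}} \rtred \tcol{\l x : A' . \?_{B'}}$ via \nameref{redrule:down-unk} then \nameref{redrule:prod-unk}, handled as before; or $\tcol{s''}$ is again a cast, on which we recurse with a strictly smaller derivation, using types catch-up on its source type and, once the inner term has been caught up to a $\lambda$-abstraction, \nameref{redrule:prod-germ} followed by \nameref{redrule:up-down} and \nameref{redrule:prod-prod} to expose a $\lambda$-abstraction in $\tcol{s'}$ --- a step that requires $\tcol{\stalkCIC{j}{\Pi}} \neq \tcol{\err_{[]_j}}$, \ie $\castOfPi{j} \geq 0$.

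The main obstacle, and the reason for the restriction in \CCICT, is exactly that side-condition. In \CCICP we have $\castOfPi{j} = j \geq 0$ always. In \CCICs, the hypothesis $\tcol{\P x : A.B} \cdpre \tcol{\?_{[]_j}}$ is routed through the $\tcol{\?}$-introduction rules \nameref{infrule:capre-unk}/\nameref{infrule:capre-unk-univ}, which force $j$ to be at least the level of $\tcol{\P x : A.B}$, itself $\geq 1$ by the shifted $\sortOfPiName$, so $\castOfPi{j} = j - 1 \geq 0$; the universe shift is precisely what rescues the lemma here. In \CCICT, $\sortOfPiName$ is unshifted, $j = 0$ is not excluded, $\tcol{\stalkCIC{0}{\Pi}} = \tcol{\err_{[]_0}}$, and the would-be catch-up reduces to an error through \nameref{redrule:err-dom} --- the dynamic universe inconsistency already witnessed by $\scol{\Omega}$ in \cref{sec:back-to-omega}. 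Restricting to $\tcol{\?}$-free terms sidesteps this: reduction never introduces $\tcol{\?}$, so no subterm of $\tcol{s'}$ can reduce to $\tcol{\?_{[]}}$, the $\tcol{C'} \rtred \tcol{\?_{[]_j}}$ branch never fires, and the induction stays within $\tcol{\?}$-free \CCICT. The remaining work --- chiefly the substitutivity of structural precision invoked when distributing function casts into bodies --- is routine bookkeeping, parallel to Lemma 9 and its companions in \citet{siekAl:snapl2015}.
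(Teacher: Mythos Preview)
Your approach matches the paper's: strip the casts surrounding the inner term, reduce each, and verify that precision is preserved. You phrase this as an induction on the precision derivation, whereas the paper decomposes $\tcol{s'}$ upfront as a cast chain $\tcol{\cast{U_{n-1}}{S_n}{\dots\cast{U_1}{S_2}{u'}}}$ with $\tcol{u'}$ either an abstraction or some $\tcol{\?_S}$. The decomposition is cleaner here precisely because your induction hypothesis requires $\tcol{s''} \pcheckty{\Pi} \ldots$, which is unavailable in the branch where $\tcol{C'} \rtred \tcol{\?_{[]_j}}$; your ``re-casing'' there is correct but amounts to re-deriving the paper's decomposition by hand.

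There is one substantive imprecision in your analysis of the obstruction. You locate it at ``$\tcol{\stalkCIC{j}{\Pi}} \neq \tcol{\err_{[]_j}}$, \ie $\castOfPi{j} \geq 0$''. That is necessary for the reduction not to collapse to an error, but it is not the condition that actually governs the proof. After \nameref{redrule:prod-germ}, the germ $\tcol{\?_{[]_{\castOfPi{j}}} \to \?_{[]_{\castOfPi{j}}}}$ appears among the cast types, and to re-establish precision via \nameref{infrule:capre-castr} you must show $\tcol{\GG} \vdash \tcol{\P x : A . B} \cdpre \tcol{\stalkCIC{j}{\Pi}}$, which through \nameref{infrule:capre-unk-univ} demands $k \leq \castOfPi{j}$ and $l \leq \castOfPi{j}$ for the universe levels $k,l$ of $\tcol{A}$ and $\tcol{B}$. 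This is the inequality the paper isolates. Your own premise ``$j$ is at least the level of $\tcol{\P x:A.B}$'' gives $\sortOfPi{k}{l} \leq j$, and in \CCICP and \CCICs that does imply $k,l \leq \castOfPi{j}$, so your argument completes. But in \CCICT the implication fails whenever $\max(k,l) = j$, not only when $j = 0$; your diagnosis ``$j = 0$ is not excluded'' therefore undersells the failure mode. The \nameref{infrule:capre-castr} side-conditions you defer to ``routine bookkeeping'' are exactly where the three variants part ways.
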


\begin{lemma}[Constructors and inductive error catch-up]~\\
  \label{lem:catchup-cons}
  If $\tcol{\GG} \vdash \tcol{c(\orr{a},\orr{b})} \capre \tcol{s'}$, $\tcol{\fs{\GG}} \vdash \tcol{c(\orr{a},\orr{b})} \inferty \tcol{I(\orr{a})}$ and $\tcol{\sn{\GG}} \vdash \tcol{s'} \pcheckty{I} \tcol{I(\orr{a'})}$, then either $\tcol{s'} \rtred \tcol{\?_{I(\orr{a'})}}$ or $\tcol{s'} \rtred \tcol{c(\orr{a'},\orr{b'})}$ with $\tcol{\GG} \vdash \tcol{c(\orr{a},\orr{b})} \capre \tcol{c(\orr{a'},\orr{b'})}$.

\noindent Similarly, if $\tcol{\GG} \vdash \tcol{\?_{I(\orr{a})}} \capre \tcol{s'}$, $\tcol{\fs{\GG}} \vdash \tcol{\?_{I(\orr{a})}} \inferty \tcol{I(\orr{a})}$ and $\tcol{\sn{\GG}} \vdash \tcol{s'} \pcheckty{I} \tcol{I(\orr{a'})}$, then $\tcol{s'} \rtred \tcol{\?_{I(\orr{a'})}}$ with 
\mbox{$\tcol{\GG} \vdash \tcol{I(\orr{a})} \cdpre \tcol{I(\orr{a'})}$}.
\end{lemma}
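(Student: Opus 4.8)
The plan is to prove the two statements simultaneously by mutual induction on the derivations of $\tcol{\GG} \vdash \tcol{c(\orr{a},\orr{b})} \capre \tcol{s'}$ and $\tcol{\GG} \vdash \tcol{\?_{I(\orr{a})}} \capre \tcol{s'}$, using the already-established type-level catch-up results (\cref{lem:catchup-univ,lem:catchup-type}) to peel away the casts that may sit in front of $\tcol{s'}$, and confluence (\cref{lem:confluence}) throughout to pin down the shapes of reducts. Inspecting \cref{fig:apre-ccic}, the only rules whose conclusion can put $\tcol{c(\orr{a},\orr{b})}$ on the left are \nameref{infrule:capre-cons}, \nameref{infrule:capre-castr} and \nameref{infrule:capre-unk}, and for $\tcol{\?_{I(\orr{a})}}$ on the left only \nameref{infrule:capre-castr} and \nameref{infrule:capre-unk} (rules like \nameref{infrule:capre-unk-univ} or \nameref{infrule:capre-err} do not apply, since neither term is universe-typed nor an error). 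The \nameref{infrule:capre-cons} case is immediate: $\tcol{s'}$ is already of the form $\tcol{c(\orr{a'},\orr{b'})}$ with the componentwise precision it requires. In the \nameref{infrule:capre-unk} case, $\tcol{s'} = \tcol{\?_{T'}}$ with $\tcol{\GG} \vdash \tcol{I(\orr{a})} \cdpre \tcol{T'}$ (the inferred type of both $\tcol{c(\orr{a},\orr{b})}$ and $\tcol{\?_{I(\orr{a})}}$ being $\tcol{I(\orr{a})}$); by \cref{lem:catchup-type} either $\tcol{T'} \rtred \tcol{\?_{[]_j}}$ or $\tcol{T'} \rtred \tcol{I(\orr{a'})}$, and the first alternative contradicts the hypothesis $\tcol{\sn{\GG}} \vdash \tcol{s'} \pcheckty{I} \tcol{I(\orr{a'})}$ since $\tcol{\?_{T'}}$ infers $\tcol{T'}$, which by confluence cannot reduce to two distinct heads, so $\tcol{\?_{T'}} \rtred \tcol{\?_{I(\orr{a'})}}$ and $\tcol{\GG} \vdash \tcol{I(\orr{a})} \cdpre \tcol{I(\orr{a'})}$, as needed by both parts.

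The heart of the proof is the \nameref{infrule:capre-castr} case, where $\tcol{s'} = \tcol{\cast{A'}{B'}{t'}}$ with $\tcol{\GG} \vdash \tcol{I(\orr{a})} \cdpre \tcol{A'}$, $\tcol{\GG} \vdash \tcol{I(\orr{a})} \cdpre \tcol{B'}$, and $\tcol{\GG} \vdash \tcol{c(\orr{a},\orr{b})} \capre \tcol{t'}$ (resp. with $\tcol{\?_{I(\orr{a})}}$). I would proceed in three steps. First, catch up the types: by \cref{lem:catchup-type}, $\tcol{A'}$ and $\tcol{B'}$ each reduce to some $\tcol{I(\dots)}$ or to $\tcol{\?_{[]}}$, and since $\tcol{B'}$ is exactly the type inferred for $\tcol{s'}$ by \nameref{infrule:ccic-cast}, the $\pcheckty{I}$ hypothesis plus confluence force $\tcol{B'} \rtred \tcol{I(\orr{a'})}$. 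Second, catch up the subject: $\tcol{t'}$ is checked against $\tcol{A'}$, so its inferred type is convertible to the reduct of $\tcol{A'}$; if that reduct is $\tcol{I(\dots)}$ then $\tcol{t'} \pcheckty{I} \tcol{I(\dots)}$ and the induction hypothesis gives $\tcol{t'} \rtred \tcol{\?_{I(\dots)}}$ or $\tcol{t'} \rtred \tcol{c(\dots)}$; if it is $\tcol{\?_{[]}}$, a short descending induction on the precision derivation of $\tcol{t'}$ (only \nameref{infrule:capre-castr} and \nameref{infrule:capre-unk} can apply, as a constructor cannot be typed at $\tcol{\?_{[]}}$) shows $\tcol{t'} \rtred \tcol{\?_{\?_{[]}}}$. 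Third, fire the cast: with $\tcol{A'}$, $\tcol{B'}$ and $\tcol{t'}$ all in head form, $\tcol{s'}$ reduces by \nameref{redrule:ind-ind}, \nameref{redrule:ind-unk}, or \nameref{redrule:down-unk} to $\tcol{c(\orr{a'},\orr{b''})}$ or $\tcol{\?_{I(\orr{a'})}}$. Re-assembling the precision claim, the argument re-casts introduced by \nameref{redrule:ind-ind} are absorbed by \nameref{infrule:capre-castr}, whose side conditions hold because the cast types sit between the relevant inferred types (using the usual stability of precision under substitution to compare the $\args_k$ instantiations), and the jump from index $\orr{a}$ to index $\orr{a'}$ is handled by decomposing $\tcol{\GG} \vdash \tcol{I(\orr{a})} \cdpre \tcol{I(\orr{a'})}$ into componentwise definitional precision of the indices.

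I expect the main obstacle to be exactly this last bit of bookkeeping in the \nameref{infrule:capre-castr} case: a cast cannot be reduced before its domain and codomain have been — which is precisely why structural and definitional precision are defined mutually — and one must then track how the parameter substitution and the argument re-casts of \nameref{redrule:ind-ind} interact with the diagonal and non-diagonal precision rules. A specific delicate point is reconciling structural precision $\capre$ with definitional precision $\cdpre$ on the indices: the data we recover is only $\tcol{\GG} \vdash \tcol{a_k} \cdpre \tcol{a'_k}$ (indeed the $\tcol{\?_{I(\orr{a})}}$-branch of the statement is itself phrased with $\cdpre$), so I would isolate a small inversion/monotonicity lemma — $\tcol{\GG} \vdash \tcol{I\ulev{i}(\orr{a})} \cdpre \tcol{I\ulev{i'}(\orr{a'})}$ implies $i = i'$ and $\tcol{\GG} \vdash \tcol{a_k} \cdpre \tcol{a'_k}$ — and feed it, together with the catch-up lemmas, into the re-assembly, exactly as in the $\lambda$-abstraction catch-up proof (\cref{lem:catchup-lambda}). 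Termination of the mutual induction is not an issue, since in the \nameref{infrule:capre-castr} case the induction hypothesis is used only on the strict sub-derivation for $\tcol{t'}$, and the second statement never invokes the first because $\tcol{\?_{I(\orr{a})}}$ is never structurally below a constructor. Finally, unlike \cref{lem:catchup-lambda}, no variant restriction is required here, because the reductions involved — \nameref{redrule:ind-ind}, \nameref{redrule:ind-unk}, \nameref{redrule:down-unk}, and the type-level catch-up — never mention the $\castOfPiName$ parameter.
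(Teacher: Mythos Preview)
Your overall organisation---induction on the precision derivation, peeling off one \nameref{infrule:capre-castr} layer at a time---is equivalent to the paper's approach of first decomposing $\tcol{s'}$ as a stack of casts $\tcol{\cast{U_{n-1}}{S_n}{\dots\cast{U_1}{S_2}{u'}}}$ around a non-cast core and then reducing the casts from the inside out. The paper's version is just your induction unfolded.

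There is, however, a real gap in your handling of the \nameref{infrule:capre-castr} case when $\tcol{A'} \rtred \tcol{\?_{[]}}$. Your claim that ``a short descending induction \ldots shows $\tcol{t'} \rtred \tcol{\?_{\?_{[]}}}$'' is false. Take $\tcol{t'} = \tcol{\cast{I(\orr{c})}{\?_{[]}}{c(\orr{a''},\orr{b''})}}$: this term checks against $\tcol{\?_{[]}}$, is $\capre$-above $\tcol{c(\orr{a},\orr{b})}$ via \nameref{infrule:capre-castr}, yet does \emph{not} reduce to $\tcol{\?_{\?_{[]}}}$. Instead it reduces via \nameref{redrule:ind-germ} to the canonical form $\tcol{\cast{\stalkCIC{j}{I}}{\?_{[]}}{\cast{I(\orr{c})}{\stalkCIC{j}{I}}{c(\orr{a''},\orr{b''})}}}$. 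Your descending induction goes wrong precisely at the step where the innermost source type $\tcol{A''}$ of a nested cast reduces to an inductive head rather than to $\tcol{\?_{[]}}$: there the main induction hypothesis applies to the inner term, producing a constructor, and the resulting cast into $\tcol{\?_{[]}}$ is stuck as a germ-cast, not an unknown.

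Consequently the three rules you list (\nameref{redrule:ind-ind}, \nameref{redrule:ind-unk}, \nameref{redrule:down-unk}) are not enough to fire the outer cast. You also need \nameref{redrule:ind-germ} (to factor a cast into $\tcol{\?_{[]}}$ through the germ) and \nameref{redrule:up-down} (to collapse the resulting $\tcol{\cast{\?_{[]}}{I(\orr{a'})}{\cast{\stalkCIC{j}{I}}{\?_{[]}}{\ldots}}}$ into $\tcol{\cast{\stalkCIC{j}{I}}{I(\orr{a'})}{\ldots}}$), exactly as the paper does. With these two rules added, your inductive approach goes through: when $\tcol{A'} \rtred \tcol{\?_{[]}}$, the possible head forms of $\tcol{t'}$ are $\tcol{\?_{\?_{[]}}}$ \emph{or} $\tcol{\cast{\stalkCIC{j}{I}}{\?_{[]}}{v}}$ with $\tcol{v}$ a constructor or an $\tcol{\?}$ at an inductive type, and in the second case \nameref{redrule:up-down} followed by \nameref{redrule:ind-ind} (resp.\ \nameref{redrule:ind-unk}) finishes the job. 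Your final remark that no variant restriction is needed remains correct, since $\tcol{\stalkCIC{j}{I}}$ does not involve $\castOfPiName$.
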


Note that for \cref{lem:catchup-cons}, we need to deal with unknown terms
specifically, which is not necessary for \cref{lem:catchup-lambda}
because the unknown term in a $\Pi$-type reduces to a $\lambda$-abstraction.

\cref{lem:catchup-lambda} deserves a more extensive discussion, because it is the critical point where the difference between the three variants of \CCIC manifests. In fact, it does not hold in full generality for \CCICT. Indeed, the fact that $i \leq \castOfPi{\sortOfPi{i}{j}}$ and $j \leq \castOfPi{\sortOfPi{i}{j}}$ is used crucially to ensure that casting from a $\Pi$-type into $\tcol{\?}$ and back does not reduce to an error, given the restrictions on types in \nameref{infrule:capre-castr}. This is the manifestation in the reduction of the embedding-projection property~\cite{newAhmed:icfp2018}. In \CCICT it holds only if one restricts to terms without $\tcol{\?}$, where such casts never happen. This is important with regard to conservativity, as elaboration produces terms with casts but without $\tcol{\?}$, and \cref{lem:catchup-lambda} ensures that for those precision is still a simulation, even in \CCICT.

\begin{example}[Catch-up of $\lambda$-abstraction]
  \ilabel{ex:lambda-abstr-catch-up}
  The following term $\tcol{t_i}$ illustrates these differences
\[\tcol{t_i} := \tcol{\cast{\nat \to \nat}{\nat \to \nat \Leftarrow \?_{[]_i}}{\l x : \nat . \natsuc(x)}}\]
where $\tcol{\nat}$ is taken at the lowest level, \ie to mean $\tcol{\nat\ulev{0}}$. Such terms appear naturally whenever a loss of precision happens on a function, for instance when elaborating a term like $\scol{\left(\asc{(\l x : \nat . \natsuc(x))}{\?}\right)~0}$. Now this term $\tcol{t_i}$ always reduces to
\[\tcol{\cast{\stalkCIC{i}{\Pi} \Leftarrow \?_{[]_i} \Leftarrow \stalkCIC{i}{\Pi} \Leftarrow \nat \to \nat}{\nat \to \nat}{\l x : \nat . \natsuc(x)}}\]
and at this point the difference kicks in: if $\tcol{\stalkCIC{i}{\Pi}}$ is $\tcol{\err_{\?_{[]_i}}}$ (\ie if $\castOfPi{i} < 0$) then the whole term reduces to $\tcol{\err_{\nat \to \nat}}$. Otherwise, further reductions finally give
\[\tcol{\l x : \nat . \natsuc\left(\cast{\nat \Leftarrow \nat}{\nat}{x}\right) }\]
Although the body is blocked by the variable $\tcol{x}$, applying the function to $\tcol{0}$ would reduce to $\tcol{1}$ as expected. Let us compare what happens in the three systems.

In all of them, if $i \geq 1$, we have $\vdash \tcol{\l x : \nat . \natsuc(x)} \capre \tcol{t_i}$ via repeated uses of \nameref{infrule:capre-castr} since $\inferelab{}{\nat}{}{}{\nat}{[]_{\sortOfPi{0}{0}}}$ and $\sortOfPi{0}{0} \leq 1 \leq i$.
Moreover, also $0 \leq i - 1 \leq \castOfPi{i}$ and so the reduction is errorless.
Thus \cref{lem:catchup-lambda} holds in all three systems when $i \geq 1$.

The difference appears in the specific case where $i = 0$.
In \CCICP and \CCICT, we still have $\vdash \tcol{\l x : \nat . \natsuc(x)} \capre \tcol{t_0}$, since $\sortOfPi{0}{0} = 0 \leq i$. In the former, $\castOfPi{0} = 0$ so $\tcol{t_0}$ reduces safely and \cref{lem:catchup-lambda} holds. In the latter, however, $\castOfPi{0} = -1$,
and so $\tcol{t_0}$ errors even if it is less precise than an errorless term---\cref{lem:catchup-lambda} does not hold in that case.
Finally, in \CCICs, $\tcol{t_0}$ errors since again $\castOfPi{0} = -1$.
However, because $s_{\Pi}(0,0) = 1$, $\tcol{t_0}$ is not less precise than $\tcol{\l x : \nat. \natsuc(x)}$ thanks to the typing restriction in \nameref{infrule:capre-castr}, so this error does not contradict \cref{lem:catchup-lambda}.

\end{example}

Note that in an actual implementation with typical ambiguity (\cref{note:typ-amb}),
the case where $i = 0$ would most likely not manifest: elaborating
$\scol{\left(\asc{(\l x : \nat . \natsuc(x))}{\?}\right)~0}$ would produce a fresh
level that could be chosen high enough so as to prevent the error we just described.
Only more involved situations like that of $\scol{\Omega}$ (\cref{sec:back-to-omega}) would actually exhibit
failures due to universe levels, which are precisely those unavoidable to ensure normalization.

\paragraph{Simulation}
We finally come to the main property of this section, the advertised
simulation.
Remark that the simulation property needs to be stated (and proven) mutually for structural and definitional precision, but it is really informative only for
structural precision (definitional precision is somehow a
simulation by construction).

\begin{theorem}[Precision is a simulation for reduction]~\\
	\label{thm:simulation}
  Let $\tcol{\fs{\GG}} \cdpre \tcol{\sn{\GG}}$, $\tcol{\fs{\GG}} \vdash \tcol{t}
  \inferty \tcol{T}$, $\tcol{\sn{\GG}} \vdash \tcol{u} \inferty
  \tcol{U}$ and $\tcol{t} \rtred \tcol{t'}$. Then
  \begin{itemize}
  \item
    if $\tcol{\GG} \vdash \tcol{t} \capre \tcol{u}$ then there exists $\tcol{u'}$ such that $\tcol{u} \rtred \tcol{u'}$ and $\tcol{\GG} \vdash \tcol{t'} \capre \tcol{u'}$;
  \item
    if $\tcol{\GG} \vdash \tcol{t} \cdpre \tcol{u}$ then
    $\tcol{\GG} \vdash \tcol{t'} \cdpre \tcol{u}$.
  \end{itemize}
  This holds in \CCICP,
\CCICs and for terms without $\tcol{\?}$ in \CCICT.
\end{theorem}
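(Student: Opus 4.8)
The plan is to reduce to the one-step case, derive the definitional statement from the structural one, and then run an induction on the derivation of structural precision in which the catch-up lemmas let the less precise term mimic any reduction of the more precise one.

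\emph{Preliminary reductions.} It suffices to treat a single step $\tcol{t} \redCCIC \tcol{t'}$: the general statement then follows by induction on the length of $\tcol{t} \rtred \tcol{t'}$, each intermediate term staying well-typed by subject reduction (\cref{thm:ccic-psafe}). The claim for definitional precision follows from the one for structural precision using confluence (\cref{lem:confluence}) together with the fact that $\tcol{\GG} \vdash \tcol{t} \cdpre \tcol{u}$ amounts to the existence of $\tcol{s}, \tcol{s'}$ with $\tcol{t} \rtred \tcol{s}$, $\tcol{u} \rtred \tcol{s'}$ and $\tcol{\GG} \vdash \tcol{s} \capre \tcol{s'}$: given $\tcol{t} \rtred \tcol{t'}$ and $\tcol{t} \rtred \tcol{s}$, confluence produces a common reduct of $\tcol{t'}$ and $\tcol{s}$, on which the structural statement applies. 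So the real content is the one-step structural statement.

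\emph{Core induction.} I would induct on the derivation of $\tcol{\GG} \vdash \tcol{t} \capre \tcol{u}$, splitting in each case on whether $\tcol{t} \redCCIC \tcol{t'}$ is a congruence step inside a subterm or a head reduction. For the diagonal rules, a congruence step is dispatched by the induction hypothesis on the matching sub-judgment, reassembling the result (going under a binder uses the paired-context bookkeeping of \cref{fig:apre-ccic} and weakening of precision). A head reduction can only occur for \nameref{infrule:capre-app} ($\beta$), \nameref{infrule:capre-fix} ($\iota$), and \nameref{infrule:capre-diag-cast} (the cast rules of \cref{fig:CCIC-reduction}, plus the $\tcol{\?}$/$\tcol{\err}$ propagation rules): in the first two, the $\lambda$-abstraction catch-up (\cref{lem:catchup-lambda}), resp.\ constructor catch-up (\cref{lem:catchup-cons}), reduces the relevant subterm of $\tcol{u}$ to the same head (or to $\tcol{\?}$), after which one fires the analogous redex in $\tcol{u}$ and concludes with a substitution lemma for $\capre$ (proven by its own induction); for a cast one additionally applies universe and type catch-up (\cref{lem:catchup-univ}, \cref{lem:catchup-type}) to the source and target types of $\tcol{u}$'s cast, then matches case by case the reduction rule of \cref{fig:CCIC-reduction} that fired and rebuilds precision — residual casts are absorbed by \nameref{infrule:capre-castr} or \nameref{infrule:capre-castl}, the case where a type of $\tcol{u}$ caught up to $\tcol{\?}$ by \nameref{infrule:capre-unk}, and the cases where $\tcol{u}$'s cast errors by \nameref{infrule:capre-err} or \nameref{infrule:capre-err-lambda}, the typing side-conditions of these rules (re-established via subject reduction) being exactly what licenses this. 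The non-diagonal rules are handled similarly: \nameref{infrule:capre-castr} reduces strictly inside the more precise term, so one recurses on the inner judgment and re-wraps with the same rule; \nameref{infrule:capre-castl} additionally requires a split on whether the outer cast of $\tcol{t}$ itself fires, checking that each of its reducts stays below $\tcol{u}$ (the side-condition that the casted types be more precise than the type of $\tcol{u}$ rules out the problematic reducts); the $\tcol{\?}$/$\tcol{\err}$-headed rules are immediate once $\tcol{u}$ is reduced, respectively once one observes that $\tcol{t}$ can only reduce by unfolding $\tcol{\?}$ or $\tcol{\err}$ at a $\Pi$-type into a $\lambda$.

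\emph{Main obstacle.} The delicate point is the cast machinery. Since $\tcol{\cast{S}{T}{s}}$ reduces only once $\tcol{S}$ and $\tcol{T}$ have exposed their heads, a single step on the more precise side may require several reductions on the less precise side merely to reshape the types of its casts, and one must ensure that none of these — nor the firing of a cast absorbed by \nameref{infrule:capre-castr}/\nameref{infrule:capre-castl} — produces an error absent from the other side. This is precisely where the three variants diverge: showing that a round trip through a germ does not error needs $i \le \castOfPi{\sortOfPi{i}{j}}$ and $j \le \castOfPi{\sortOfPi{i}{j}}$, which holds for \CCICP and \CCICs but fails for \CCICT; hence \cref{lem:catchup-lambda}, and with it the theorem, is available in \CCICT only for $\tcol{\?}$-free terms, a restriction that threads through the induction without trouble since subterms of a $\tcol{\?}$-free term are themselves $\tcol{\?}$-free.
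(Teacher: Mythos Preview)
Your proposal is correct and follows essentially the same approach as the paper: reduce to the one-step case, handle definitional precision via confluence, and induct on the structural precision derivation, using the catch-up lemmas to align head forms and the substitution lemma for $\capre$ to close $\beta$/$\iota$ cases, with \nameref{infrule:capre-castr}/\nameref{infrule:capre-castl}/\nameref{infrule:capre-unk}/\nameref{infrule:capre-err} absorbing residual casts and exceptional forms. Two small points the paper makes explicit that you gloss over: the two parts are proved by a genuinely \emph{mutual} induction (the structural cases with typing side-conditions, e.g.\ \nameref{infrule:capre-castr} and \nameref{infrule:capre-unk}, appeal to the definitional part to show that the type inferred for a reduct of $\tcol{t}$ is still $\cdpre$ the required bound---subject reduction alone only gives convertibility), and the induction principle is strengthened so that the hypothesis is also available on inferred types, which is needed in the $\iota$ case to relate $\tcol{P\subs{c_k(\orr{a},\orr{b})}{z}}$ to $\tcol{P'\subs{\?_{I(\orr{a'})}}{z}}$.
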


\begin{proof}[Proof sketch]
  The case of definitional precision follows by confluence of
  reduction.
	For the case of structural precision, the hardest point is to simulate $\beta$ and $\iota$ redexes---terms of the shape $\match{I}{c(\orr{a})}{z.P}{f.\orr{y}.\orr{t}}$. This is where we use \cref{lem:catchup-lambda,lem:catchup-cons}, to show that similar reductions can also happen in $\tcol{t'}$.
	We must also put some care into handling the premises of precision where typing is involved. In particular, subject reduction is needed to relate the types inferred after reduction to the type inferred before, and the mutual induction hypothesis on $\cdpre$ is used to conclude that the premises holding on $\tcol{t}$ still hold on $\tcol{t'}$.
	Finally, the restriction to terms without $\tcol{\?}$ in \CCICT similar to \cref{lem:catchup-lambda} appears again when treating \nameref{redrule:up-down}, where having $\castOfPi{\sortOfPi{i}{i}} = i$ is required.
\end{proof}

From this theorem, we get as direct corollaries the following properties, that are required to handle reduction (\cref{cor:red-types}) and consistency (\cref{cor:mon-cons}) in elaboration. Again those corollaries hold in \GCICP, \GCICs and for terms in \GCICT containing no $\tcol{\?}$.

\begin{corollary}[Monotonicity of reduction to type constructor]~\\
	\label{cor:red-types}
	Let $\tcol{\GG}$, $\tcol{T}$ and $\tcol{T'}$ be such that $\tcol{\fs{\GG}} \vdash \tcol{T}  \pcheckty{[]} \tcol{\square{}_i}$, $\tcol{\sn{\GG}} \vdash \tcol{T'} \pcheckty{[]} \tcol{\square{}_j}$, $\tcol{\GG} \vdash \tcol{T} \capre \tcol{T'}$. Then
	\begin{itemize}
		\item if $\tcol{T} \rtred \tcol{\?_{[]_i}}$ then $\tcol{T'} \rtred \tcol{\?_{[]_j}}$ with $i \leq j$;
		\item if $\tcol{T} \rtred \tcol{\square{}_{i-1}}$ then either $\tcol{T'} \rtred \tcol{\?_{\square{}_j}}$ with $i \leq j$, or $\tcol{T'} \rtred \tcol{\square{}_{i-1}}$;
		\item if $\tcol{T} \rtred \tcol{\P x : A. B}$ then either $\tcol{T'} \rtred \tcol{\?_{[]_j}}$ with $i \leq j$, or $\tcol{T'} \rtred \tcol{\P x : A'. B'}$ and\\ 
    \mbox{$\GG \vdash \tcol{\P x : A.B} \capre \tcol{\P x : A'.B'}$};
		\item if $\tcol{T} \rtred \tcol{I(\orr{a})}$ then either $\tcol{T'} \rtred \tcol{\?_{[]_j}}$ with $i \leq j$, or $\tcol{T'} \rtred \tcol{I(\orr{a'})}$ and $\GG \vdash \tcol{I(\orr{a})} \capre \tcol{I(\orr{a'})}$.
	\end{itemize}
\end{corollary}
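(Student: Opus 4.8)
The plan is to obtain this corollary as a direct consequence of the simulation theorem (\cref{thm:simulation}) combined with the catch-up lemmas \cref{lem:catchup-univ,lem:catchup-type}. First I would unfold the two constrained-inference hypotheses: $\tcol{\fs{\GG}} \vdash \tcol{T} \pcheckty{[]} \tcol{\square{}_i}$ means that $\tcol{T}$ infers some type in $\tcol{\fs{\GG}}$ (which reduces to $\tcol{\square{}_i}$), and likewise $\tcol{T'}$ infers a type in $\tcol{\sn{\GG}}$. Together with $\tcol{\fs{\GG}} \cdpre \tcol{\sn{\GG}}$ — which holds because throughout this section $\tcol{\GG}$ ranges over well-formed double contexts — the side conditions of \cref{thm:simulation} are met. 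Write $\tcol{V}$ for the head form that $\tcol{T}$ reduces to in the case at hand, i.e. one of $\tcol{\?_{[]_i}}$, $\tcol{\square{}_{i-1}}$, $\tcol{\P x : A.B}$, or $\tcol{I(\orr{a})}$. Applying \cref{thm:simulation} to $\tcol{\GG} \vdash \tcol{T} \capre \tcol{T'}$ and $\tcol{T} \rtred \tcol{V}$ yields some $\tcol{V'}$ with $\tcol{T'} \rtred \tcol{V'}$ and $\tcol{\GG} \vdash \tcol{V} \capre \tcol{V'}$.

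Next I would propagate the typing data along these reductions. By subject reduction (\cref{thm:ccic-psafe}) and confluence (\cref{lem:confluence}), from $\tcol{T'} \rtred \tcol{V'}$ we get $\tcol{\sn{\GG}} \vdash \tcol{V'} \pcheckty{[]} \tcol{\square{}_j}$, which is the typing premise the catch-up lemmas require on the less precise side; symmetrically $\tcol{\fs{\GG}} \vdash \tcol{V} \pcheckty{[]} \tcol{\square{}_i}$, which, when $\tcol{V}$ is a product or an inductive type, strengthens to $\tcol{\fs{\GG}} \vdash \tcol{V} \inferty \tcol{\square{}_i}$ by uniqueness of inferred types (\cref{thm:uniqueness}, or determinism of weak-head reduction under that strategy).

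It then suffices to case-split on $\tcol{V}$ and invoke the matching catch-up lemma on $\tcol{\GG} \vdash \tcol{V} \capre \tcol{V'}$. If $\tcol{V} = \tcol{\?_{[]_i}}$, the first item of \cref{lem:catchup-type} gives $\tcol{V'} \rtred \tcol{\?_{[]_j}}$ with $i \leq j$. If $\tcol{V} = \tcol{\square{}_{i-1}}$, \cref{lem:catchup-univ} applied at level $i-1$ gives either $\tcol{V'} \rtred \tcol{\?_{[]_j}}$ with $i-1 < j$ (hence $i \leq j$) or $\tcol{V'} \rtred \tcol{\square{}_{i-1}}$. If $\tcol{V} = \tcol{\P x : A.B}$ (resp.\ $\tcol{I(\orr{a})}$), the second (resp.\ third) item of \cref{lem:catchup-type} gives either $\tcol{V'} \rtred \tcol{\?_{[]_j}}$ with $i \leq j$, or a reduct of $\tcol{V'}$ that is a product (resp.\ inductive) related to $\tcol{V}$ by $\capre$. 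In every case, composing $\tcol{T'} \rtred \tcol{V'} \rtred \dots$ yields exactly the stated conclusion. The restriction to \GCICP, \GCICs, and $\tcol{\?}$-free terms of \GCICT is inherited unchanged, since these are precisely the hypotheses under which \cref{thm:simulation} (and \cref{lem:catchup-lambda}, used inside it) hold.

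There is no genuinely new difficulty here: all the substance sits in \cref{thm:simulation} and the catch-up lemmas, which are assumed. The only point requiring care is the bookkeeping around the $\pcheckty{[]}$ judgments — making sure that ``lives in $\tcol{\square{}_i}$'' is stable under reduction and, for $\Pi$- and inductive reducts, upgraded to a genuine inference of $\tcol{\square{}_i}$ so that the catch-up lemmas are literally applicable.
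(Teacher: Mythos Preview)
Your proposal is correct and follows essentially the same route as the paper: simulate the reduction of $\tcol{T}$ via \cref{thm:simulation} to obtain a reduct $\tcol{V'}$ of $\tcol{T'}$ with $\tcol{\GG} \vdash \tcol{V} \capre \tcol{V'}$, then apply \cref{lem:catchup-univ} or \cref{lem:catchup-type} to $\tcol{V'}$ and compose reductions. You spell out more of the bookkeeping (propagating $\pcheckty{[]}$ via subject reduction and confluence, upgrading to $\inferty$ for the $\Pi$/inductive cases) than the paper does, but the argument is the same.
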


 \begin{proof}
 	It suffices to simulate the reductions of $\tcol{T}$ by using \cref{thm:simulation}, and then use \cref{lem:catchup-type,lem:catchup-univ} to conclude. Note that head reductions are simulated using head reductions in \cref{thm:simulation}, and the reductions of \cref{lem:catchup-type,lem:catchup-univ} are also head reductions. Thus the corollary still holds when fixing weak-head reduction as a reduction strategy.
 \end{proof}

\begin{corollary}[Monotonicity of consistency]
	\label{cor:mon-cons}
	If $\tcol{\GG} \vdash \tcol{T} \capre \tcol{T'}$, $\tcol{\GG} \vdash \tcol{S} \capre \tcol{S'}$ and $\tcol{T} \cons \tcol{S}$ then $\tcol{T'} \cons \tcol{S'}$.
\end{corollary}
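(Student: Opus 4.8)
The plan is to unfold the definition of consistency, transport the witnessing $\alpha$-consistency along precision using the simulation theorem (\cref{thm:simulation}), and then close the argument with an induction on $\alpha$-consistency. Concretely: from $\tcol{T} \cons \tcol{S}$ I get reducts $\tcol{T_0}$, $\tcol{S_0}$ with $\tcol{T} \rtred \tcol{T_0}$, $\tcol{S} \rtred \tcol{S_0}$ and $\tcol{T_0} \acons \tcol{S_0}$. Feeding $\tcol{\GG} \vdash \tcol{T} \capre \tcol{T'}$ together with the reduction $\tcol{T} \rtred \tcol{T_0}$ to \cref{thm:simulation} produces $\tcol{T'_0}$ with $\tcol{T'} \rtred \tcol{T'_0}$ and $\tcol{\GG} \vdash \tcol{T_0} \capre \tcol{T'_0}$, and symmetrically $\tcol{S'} \rtred \tcol{S'_0}$ with $\tcol{\GG} \vdash \tcol{S_0} \capre \tcol{S'_0}$ (here I use the implicit well-typedness of the types involved and the standing hypothesis $\tcol{\fs{\GG}} \cdpre \tcol{\sn{\GG}}$). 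It then remains to prove the key lemma: whenever $\tcol{v} \acons \tcol{w}$, $\tcol{v} \capre \tcol{v'}$ and $\tcol{w} \capre \tcol{w'}$ (each precision taken in its own double context, whose two projections are $\cdpre$-related), then $\tcol{v'} \cons \tcol{w'}$; applied to $\tcol{T_0} \acons \tcol{S_0}$ this gives $\tcol{T'_0} \cons \tcol{S'_0}$, hence $\tcol{T'} \cons \tcol{S'}$.

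I would prove this lemma by induction on the derivation of $\tcol{v} \acons \tcol{w}$. For the diagonal rules of \cref{fig:acons} (products, $\lambda$-abstractions, applications, inductives, constructors, matches, variables, universes), $\tcol{v}$ and $\tcol{w}$ share a head with $\alpha$-consistent immediate components; the matching catch-up lemma (\cref{lem:catchup-univ} and \cref{lem:catchup-type} for types, \cref{lem:catchup-lambda} for abstractions, \cref{lem:catchup-cons} for constructors, and direct inversion of the precision rules for applications, matches and variables) lets $\tcol{v'}$ and $\tcol{w'}$ each reduce to a term that is either headed like $\tcol{v}$ (\resp $\tcol{w}$) with precision-related components---in which case I recurse with the induction hypothesis on the components and rebuild an $\alpha$-consistency derivation on the reducts---or an unknown $\tcol{\?_{\bullet}}$, which is $\alpha$-consistent with anything on either side. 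The two $\alpha$-consistency rules that drop a cast (\eg deriving $\tcol{v} \acons \tcol{\cast{A'}{B'}{t'}}$ from $\tcol{v} \acons \tcol{t'}$, and its mirror) fall to the induction hypothesis, since such casts are precisely those ignorable for precision through rules \nameref{infrule:capre-castr} and \nameref{infrule:capre-castl}. The genuinely delicate cases are the two unknown rules: in, say, the case $\tcol{w} = \tcol{\?_W}$, I analyse which precision rules can conclude $\tcol{\?_W} \capre \tcol{w'}$ (namely \nameref{infrule:capre-castr}, \nameref{infrule:capre-unk}, \nameref{infrule:capre-unk-univ}) and, driving any nested casts with \cref{cor:red-types} and the catch-up lemmas, show that $\tcol{w'}$ reduces to an unknown-headed term (or, when $\tcol{W}$ is a product type, to a $\lambda$-abstraction with unknown body); since the $\alpha$-consistency rule $\tcol{t} \acons \tcol{\?_{T'}}$ leaves the left-hand side unconstrained, that reduct is $\alpha$-consistent with $\tcol{v'}$, so $\tcol{v'} \cons \tcol{w'}$. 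The dual case $\tcol{v} = \tcol{\?_V}$ is symmetric.

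The main obstacle is exactly this last family of cases: a less precise term can carry $\tcol{\?}$ buried under a stack of casts, and one must argue that those casts never block reduction, so that the less precise side really does expose an unknown (or an errorless $\lambda$) head. This is the same phenomenon already tamed for \cref{cor:red-types} and \cref{lem:catchup-lambda}, and it is what forces the restriction to $\tcol{\?}$-free terms in \CCICT: the argument goes through unconditionally in \GCICP and \GCICs, and only for $\tcol{\?}$-free terms in \GCICT, exactly as in \cref{thm:simulation}.
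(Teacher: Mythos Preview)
Your overall strategy—unfold consistency, simulate the witnessing reductions via \cref{thm:simulation}, then lift $\alpha$-consistency across the resulting structural precision—is exactly the paper's. The difference is only in how the final step is argued. The paper says it is ``direct by induction on structural precision'': the non-diagonal rules of $\capre$ (casts via \nameref{infrule:capre-castr}/\nameref{infrule:capre-castl}, unknown via \nameref{infrule:capre-unk} and \nameref{infrule:capre-unk-univ}, error via \nameref{infrule:capre-err}) mirror the non-congruence rules of $\acons$ (casts are ignored, $\tcol{\?}$ is a wildcard on either side, $\tcol{\err}$ is only $\acons$-related to $\tcol{\?}$ up to casts), and the diagonal $\capre$ rules preserve head constructors so that the matching $\acons$ congruence applies directly. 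Hence one obtains $\tcol{U'}\acons\tcol{V'}$ by a syntactic induction on the precision derivations, with no further reduction on $\tcol{U'}$ or $\tcol{V'}$. You instead induct on the $\acons$ derivation and reach for the catch-up lemmas to \emph{reduce} $\tcol{v'}$ and $\tcol{w'}$ into the right head shape; this is not wrong, but it is heavier than needed and drags in the typing premises and side conditions of those lemmas (notably the ``body not an error'' clause of \cref{lem:catchup-lambda}). The simplification you are missing is that inversion on the precision derivation $\tcol{v}\capre\tcol{v'}$ already tells you that $\tcol{v'}$ is—up to extra casts absorbed by the cast-ignoring rules of $\acons$, and up to $\tcol{\?}$—of the same head shape as $\tcol{v}$; no catch-up is required.
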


\begin{proof}
	By definition of $\cons$, we get some $\tcol{U}$ and $\tcol{V}$ such that $\tcol{T} \rtred \tcol{U}$ and $\tcol{S} \rtred
 \tcol{V}$, and $\tcol{U} \acons \tcol{V}$. By \cref{thm:simulation},
 we can simulate these reductions to get some $\tcol{U'}$ and $\tcol{V'}$ such that $\tcol{T'} \rtred \tcol{U'}$ and $\tcol{S'} \rtred \tcol{V'}$, and also $\tcol{\fs{\GG}} \vdash \tcol{U} \capre \tcol{U'}$ and $\tcol{\fs{\GG}} \vdash \tcol{V} \capre \tcol{V'}$. Thus we only need to show that $\alpha$-consistency is monotone with respect to structural precision, which is direct by induction on structural precision.
\end{proof}

\subsection{Properties of \GCIC}
\label{sec:gcic-theorems2}

We now have enough technical tools to prove most of the properties of \GCIC.
We state those theorems in an empty context in this section to make
them more readable, but they are of course corollaries of similar
statements including contexts, proven by mutual induction. The complete
statements and proofs can be found in \cref{sec:properties-gcic}.

\paragraph{Conservativity with respect to \CIC}
Elaboration systematically inserts casts during checking, thus even
static terms are not elaborated to themselves. Therefore we use a
(partial) erasure function $\eras$ that translates terms of \CCIC to terms of \CIC by erasing all casts. We also introduce the notion of erasability, characterizing terms that contain ``harmless'' casts, such that in particular the elaboration of a static term is always erasable.

\begin{definition}[Equiprecision]
  \label{def:equipre}
	Two terms $\tcol{s}$ and $\tcol{t}$ are equiprecise in a context $\tcol{\GG}$, denoted $\tcol{\GG} \vdash \tcol{s} \caequipre \tcol{t}$ if both $\tcol{\GG} \vdash \tcol{s} \capre \tcol{t}$ and $\tcol{\GG} \vdash \tcol{t} \capre \tcol{s}$.
\end{definition}

\begin{definition}[Erasure, erasability]
  \label{def:erasure}
	Erasure $\eras$ is a partial function from the syntax of \CCIC to the syntax of \CIC, which is undefined on $\tcol{\?}$ and $\tcol{\err}$, is such that $\eras(\tcol{\cast{A}{B}{t}}) = \eras(\tcol{t})$, and is a congruence for all other term constructors.

	Given a context $\tcol{\GG}$ we say that a term well-typed in $\tcol{\GG_1}$ $\tcol{t}$ is erasable if $\eras(\tcol{t})$ is defined, well-typed in $\tcol{\GG_2}$, and equiprecise to $\tcol{t}$ in $\tcol{\GG}$. Similarly a context $\tcol{\Gamma}$ is called erasable if it is pointwise erasable. When $\tcol{\Gamma}$ is erasable, we say that a term $\tcol{t}$ is erasable in $\tcol{\Gamma}$ to mean that it is erasable in $\tcol{\Gamma} \mid \eras(\tcol{\Gamma})$.
\end{definition}

Conservativity holds in all three systems, typeability being of course taken into the corresponding variant of \CIC: full \CIC for \GCICP and \GCICT, and \CICs for \GCICs.

\begin{theorem}[Conservativity]
	\label{thm:conservativity}
	Let $\scol{\tilde{t}}$ be a static term (\ie is a term of \GCIC that is also a term of \CIC).
  If $\cicty \scol{\tilde{t}} \inferty T$ for some type $T$,
  then there exists $\tcol{t}$ and $\tcol{T'}$ such that
  $\inferelab{}{\tilde{t}}{}{}{t}{T'}$, and
  moreover $\eras(\tcol{t}) = \scol{\tilde{t}}$ and $\eras(\tcol{T'}) = T$. Conversely if $\inferelab{}{\tilde{t}}{}{}{t}{T}$ for some $\tcol{t}$ and $\tcol{T}$, then $\cicty \scol{\tilde{t}} \inferty \eras(\tcol{T})$.
\end{theorem}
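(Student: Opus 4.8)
The plan is to prove both implications simultaneously by mutual induction, generalizing the statement from the empty context to an arbitrary \emph{erasable} context $\GG$ (a \CCIC context paired with its erasure) and stating it for all the bidirectional judgments at once — inference, checking, constrained inference, and context well-formedness — so that the induction tracks the exact pairing between the rules of bidirectional \CIC (\cref{fig:bidir}) and the elaboration rules (\cref{fig:elaboration}). The statement of \cref{thm:conservativity} is then the closed-context instance. The strengthened invariant carried through the induction is that \emph{all \CCIC terms and types produced by elaborating a static source term are erasable} in the sense of \cref{def:erasure}; in particular they contain no $\?$ and no $\err$, so $\eras$ is everywhere defined on them and the statements make sense in both directions.

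A first preliminary observation underlies everything: on a static source term no elaboration rule can introduce $\?$ or $\err$ — rule \nameref{infrule:gcic-unk} is the sole producer and fires only on $\scol{\?}$ — and the inferred \CCIC types, built by elaboration, likewise never contain $\?$, hence never reduce to some $\?_{[]}$. Consequently the $?$-constrained inference rules \nameref{infrule:gcic-univ-unk}, \nameref{infrule:gcic-prod-unk} and \nameref{infrule:gcic-ind-unk} are never applicable on static terms, and the applicable rules are in bijection with those of \cref{fig:bidir}, up to the cast insertion performed by \nameref{infrule:gcic-check}.

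The heart of the argument is that every cast inserted by \nameref{infrule:gcic-check} is \emph{harmless}: the inserted $\cast{S}{T}{s}$ is equiprecise to $s$. The chain is: by induction $\eras(S) \conv \eras(T)$ in \CIC (in the forward direction this is the \CIC conversion premise with $\eras(S)$ the \CIC-inferred type; in the backward direction it comes from the consistency premise, see below), and $S,T$ are erasable, so $S \caequipre \eras(S)$ and $T \caequipre \eras(T)$; since $\eras(S),\eras(T)$ are static and \CIC-convertible, \cref{prop:cons-static} gives $\eras(S) \cons \eras(T)$, whence $S \cons T$ by monotonicity of consistency (\cref{cor:mon-cons}), discharging the \nameref{infrule:gcic-check} premise; and then rules \nameref{infrule:capre-castr} and \nameref{infrule:capre-castl} of \cref{fig:apre-ccic}, instantiated at the type inferred for $s$ (which is itself $\caequipre$ to both $S$ and $T$ by the same reasoning), yield $s \caequipre \cast{S}{T}{s}$. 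Erasability is then closed under the congruence rules using \cref{thm:correction} (elaboration yields well-typed terms) and subject reduction (\cref{thm:ccic-psafe}) to discharge the typing side-conditions in the structural-precision rules, and \cref{thm:simulation} together with \cref{lem:confluence} for the reduction bookkeeping hidden inside $\cdpre$.

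With harmlessness established, each direction is a rule-by-rule recursion. Forward: recurse on the \CIC derivation, apply the induction hypothesis to the premises, and fire the matching elaboration rule — the \nameref{infrule:cic-check} case inserts a harmless cast as above, and the constrained-inference cases use the preliminary observation that the $?$-variants never fire. Backward: recurse on the elaboration derivation; each rule maps to its \CIC counterpart, and the \nameref{infrule:gcic-check} case is closed by pushing its premise $S \cons T$ down to $\eras(S) \cons \eras(T)$ via $S \caequipre \eras(S)$, $T \caequipre \eras(T)$ and \cref{cor:mon-cons}, whence $\eras(S) \conv \eras(T)$ by \cref{prop:cons-static} and the \CIC Check rule applies. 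I expect the main obstacle to be precisely the harmlessness step: verifying that the typing side-conditions in \nameref{infrule:capre-castr} and \nameref{infrule:capre-castl} are met for every inserted cast forces one to thread subject reduction, correctness of elaboration, and the simulation property through the mutual induction so that the "all produced terms are erasable" invariant is actually maintainable; the remainder is a largely bureaucratic mirroring of \cref{fig:bidir} against \cref{fig:elaboration}. The same proof yields \pconst{\CICs} for \GCICs and \pconst{\CIC} for \GCICP and \GCICT, since the only place the universe parameters intervene is in the $\sortOfPiName$ used by \nameref{infrule:gcic-prod}, which matches the corresponding \CIC (resp. \CICs) rule by construction.
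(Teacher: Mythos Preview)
Your proposal is correct and follows essentially the same route as the paper: generalize to open erasable contexts, carry the erasability invariant through a mutual induction over all bidirectional judgments, use \cref{cor:mon-cons} and \cref{prop:cons-static} to move between consistency in \CCIC and conversion in \CIC at the \nameref{infrule:gcic-check} rule, and use simulation (\cref{thm:simulation}) to transfer reduction behaviour between an erasable term and its erasure so that the constrained-inference premises line up. The paper presents the two directions as two separate open-context theorems rather than a single simultaneous induction, but this is purely presentational; your identification of the harmlessness of inserted casts via \nameref{infrule:capre-castl}/\nameref{infrule:capre-castr} as the crux, and of the no-$\?$-in-elaborated-static-terms observation to rule out the $?$-variants of constrained inference, matches the paper's argument exactly.
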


\begin{proof}[Proof sketch]

	Because $\scol{t}$ is static, its typing derivation in \GCIC can only use rules that have a counterpart in \CIC, and conversely all rules of \CIC have a counterpart in \GCIC. The only difference is about the reduction/conversion side conditions, which are used on elaborated types in \GCIC, rather than their non-elaborated counterparts in \CIC.

	Thus, the main difficulty is to ensure that the extra casts
  inserted by elaboration do not alter reduction. For this we
  maintain the property that all terms $\tcol{t}$ considered in
  \CCIC are erasable, and in particular that any static term $\scol{{t}}$
  that elaborates to some $\tcol{t}$ is such that
  $\eras(\tcol{t}) = \scol{{t}}$. From the simulation property of
  structural precision (\cref{thm:simulation}), we get that an
  erasable term $\tcol{t}$ has the same reduction behavior as
  its erasure, \ie if $\tcol{t} \rtred \tcol{s}$ then
  $\eras(\tcol{t}) \rtred s'$ with $s'$ and $\tcol{s}$
  equiprecise, and conversely if $\eras(\tcol{t}) \rtred s'$
  then $\tcol{t} \rtred \tcol{s}$ with $s'$ and $\tcol{s}$
  equiprecise. Using that property, we  prove that
  constraint reductions ($\pcheckty{\Pi}$, $\pcheckty{[]}$ and
  $\pcheckty{I}$) in \CCIC and \CIC behave the same on static terms.
\end{proof}

\paragraph{Elaboration Graduality}

\begin{figure}
	\begin{mathpar}
		\inferrule{ }{\scol{x} \apre \scol{x}} \and
		\inferrule{ }{\scol{[]_i} \apre \scol{[]_i}} \and
		\inferrule{\scol{A} \apre \scol{A'} \\ \scol{B} \apre \scol{B'}}
		{\scol{\P x : A . B} \apre \scol{\P x : A'.B'}} \and
		\inferrule{\scol{A} \apre \scol{A'} \\ \scol{t} \apre \scol{t'}}
		{\scol{\l x : A . t} \apre \scol{\l x : A . t}} \and
		\inferrule{\scol{t} \apre \scol{t'} \\ \scol{u} \apre \scol{u'}}{\scol{t~u} \apre \scol{t'~u'}} \and
		\inferrule{\scol{\orr{a}} \apre \scol{\orr{a'}}}{\scol{I(\orr{a})} \apre \scol{I(\orr{a'})}} \and
		\inferrule{\scol{\orr{a}} \apre \scol{\orr{a'}} \\ \scol{\orr{\scol{b}}} \apre \scol{\scol{\orr{b'}}}}{\scol{c(\orr{a},\orr{b})} \apre \scol{c(\orr{a'},\orr{b'})}} \and
		\inferrule{\scol{a} \apre \scol{a'} \\ \scol{P} \apre \scol{P'} \\ \scol{\orr{t}} \apre \scol{\orr{t'}}}
		{\scol{\match{I}{a}{z.P}{\orr{f.y.t}}} \apre \scol{\match{I}{a'}{z.P'}{\orr{f.y.t'}}}} \and
		\inferrule{ }
		{\scol{t} \apre \scol{\?}}
	\end{mathpar}
	\caption{Syntactic precision for \GCIC}
	\label{fig:apre-gcic}
\end{figure}

Next, we turn to elaboration graduality, the equivalent of the static gradual
guarantee (SGG) of \citet{siekAl:snapl2015} in our setting.
We state it with respect to a notion of precision for terms in \GCIC,
\emph{syntactic precision} $\apre$, defined in \cref{fig:apre-gcic}.
Syntactic precision is the usual and expected source-level notion of precision in gradual languages: it is generated by a single non-trivial rule $\scol{t} \apre
\scol{\?\ulev{i}}$, and congruence rules for all term formers.

In contrast with the simply-typed setting, the presence of multiple unknown types $\scol{\?}$,
one for each universe level $\scol{i}$, requires an additional hypothesis relating elaboration
and precision.
We say that two judgments $\scol{\tilde{t}} \apre \scol{\?\ulev{i}}$ and
$\sinferelab{}{\tilde{t}}{}{\Gamma}{t}{T}$ are \emph{universe adequate} if the
universe level $j$ given by the well-formedness judgment $\tcol{\Gamma} \vdash \tcol{T}
\pcheckty{\square} \tcol{\square_j}$ induced by correction of the elaboration satisfies $i = j$.
More generally, $\scol{\tilde{t}} \apre \scol{\tilde{s}}$ and
$\sinferelab{}{\tilde{t}}{}{}{t}{T}$ are \emph{universe adequate} if for any
subterm $\scol{\tilde{t}_0}$ of $\scol{\tilde{t}}$ inducing judgments $\scol{\tilde{t}_0}
\apre \scol{\?\ulev{i}}$ and $\sinferelab{}{\tilde{t}_0}{}{\Gamma_0}{t}{T}$, those are
universe adequate.
Note that this extraneous technical assumption on universe levels is not needed if we use typical ambiguity (\cref{note:typ-amb}),
since universe levels are not given explicitly.

\begin{theorem}[Elaboration Graduality / Static Gradual Guarantee]
	\label{thm:static-graduality}
  In \GCICP and \GCICs, if $\scol{\tilde{t}} \apre \scol{\tilde{s}}$ and
  $\sinferelab{}{\tilde{t}}{}{}{t}{T}$ are universe adequate, then $\sinferelab{}{\tilde{s}}{}{}{s}{S}$ for some $\tcol{s}$ and $\tcol{S}$ such that $\vdash \tcol{t} \capre \tcol{s}$ and $\vdash \tcol{T} \capre \tcol{S}$.
\end{theorem}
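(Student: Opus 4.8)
The plan is to prove this by induction on the elaboration derivation of $\scol{\tilde{t}}$---which, since elaboration is syntax-directed, follows the structure of $\scol{\tilde{t}}$---with an inner case analysis on the last rule of the syntactic precision derivation $\scol{\tilde{t}} \apre \scol{\tilde{s}}$. As usual, two strengthenings are needed before the induction goes through. First, the statement should be established in an arbitrary context, for a pair of contexts $\tcol{\fs{\GG}} \capre \tcol{\sn{\GG}}$ typing $\scol{\tilde{t}}$ and $\scol{\tilde{s}}$ respectively. Second, it must be proven mutually with the analogues for the checking judgment and for the three constrained inference judgments: for checking, the expected type is an \emph{input}, so one additionally assumes the two expected types are $\capre$-related and concludes only the $\capre$-relation of the elaborated terms; for constrained inference, one also tracks that the inferred universe level (resp.\ the inductive parameters, resp.\ the product domain and codomain) end up $\capre$-related, with the level allowed to grow, as permitted by the structural precision rule \textsc{Unk-Univ}.

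The base case is the rule $\scol{\tilde{t}} \apre \scol{\?\ulev{i}}$: the term $\scol{\?\ulev{i}}$ elaborates by rule \textsc{Unk} to $\tcol{\?_{\?_{[]_i}}}$ of type $\tcol{\?_{[]_i}}$, and if $\tcol{T}$ is the type inferred for $\scol{\tilde{t}}$, correctness of elaboration (\cref{thm:correction}) yields $\tcol{\fs{\GG}} \vdash \tcol{T} \pcheckty{[]} \tcol{[]_j}$; universe adequacy then forces $j = i$, so rule \textsc{Unk-Univ} gives $\tcol{\GG} \vdash \tcol{T} \capre \tcol{\?_{[]_i}}$, and rule \textsc{Unk} of structural precision (using $\tcol{T} \capre \tcol{\?_{[]_i}}$, hence $\cdpre$) gives $\tcol{\GG} \vdash \tcol{t} \capre \tcol{\?_{\?_{[]_i}}}$, settling both required precisions. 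The congruence cases all proceed uniformly: apply the induction hypothesis to each premise---threading the context precision through binders via the extension $\tcol{\GG, x : A \mid A'}$, and invoking the expected stability of structural precision under substitution to handle the dependent types of applications and matches---rebuild the elaboration of $\scol{\tilde{s}}$ with the \emph{same} elaboration rule, and rebuild the structural precision derivation with the matching diagonal rule (\textsc{Diag-Prod}, \textsc{Diag-Abs}, \textsc{Diag-App}, \textsc{Diag-Ind}, \textsc{Diag-Cons}, \textsc{Diag-Fix}).

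Two sources of friction then require the technology of \cref{sec:gcic-simulation}. The first arises in \textsc{App}, \textsc{Fix}, and more generally in every use of a constrained inference judgment: the elaboration of $\scol{\tilde{t}}$ relied on its inferred type $\tcol{T}$ reducing to a head of the expected shape, while on the less precise side the induction hypothesis only gives an inferred type $\tcol{S}$ with $\tcol{T} \capre \tcol{S}$. Here \cref{cor:red-types} is exactly what is needed: $\tcol{S}$ either reduces to a head of the same shape, with $\capre$-related constituents (and the elaboration of $\scol{\tilde{s}}$ continues with the corresponding constrained inference rule \textsc{Inf-Prod}, \textsc{Inf-Ind}, or \textsc{Inf-Unk}), or it reduces to some $\tcol{\?_{[]_j}}$ (and one switches to the $\scol{\?}$-variant \textsc{Inf-Prod?}, \textsc{Inf-Ind?}, or \textsc{Inf-Univ?}, which inserts a cast to the appropriate germ $\tcol{\stalkCIC{j}{h}}$); in both situations the elaborated term of $\scol{\tilde{s}}$ is related to that of $\scol{\tilde{t}}$ by the non-diagonal structural precision rules \textsc{Cast-R}, \textsc{Unk}, and \textsc{Unk-Univ}, together with \textsc{Diag-Cast} when both sides carry a germ cast. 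The second source of friction is rule \textsc{Check}: the elaboration of $\scol{\tilde{t}}$ inserted a cast $\tcol{\cast{T}{S}{t_0}}$ justified by $\tcol{T} \cons \tcol{S}$, and to replay this on the less precise side---where the inferred type is now some $\tcol{T'}$ with $\tcol{T} \capre \tcol{T'}$ and the expected type some $\tcol{S'}$ with $\tcol{S} \capre \tcol{S'}$---one invokes \cref{cor:mon-cons} to obtain $\tcol{T'} \cons \tcol{S'}$, and the two casts are then related by rule \textsc{Diag-Cast}.

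The restriction to \GCICP{} and \GCICs{} (rather than all three variants) is inherited directly from \cref{thm:simulation} and \cref{cor:red-types}: elaborating $\scol{\tilde{s}}$ can introduce occurrences of $\tcol{\?}$ (precisely where $\scol{\tilde{s}}$ contains $\scol{\?}$), so the versions of simulation and of the monotonicity corollaries that tolerate $\tcol{\?}$ are needed, and these hold only in \GCICP{} and \GCICs{}; concretely, in \GCICT{} a cast into the germ of a $\Pi$-type may reduce to an error (the side condition $\castOfPi{j} \ge 0$ of the $\scol{\?}$-variant rules, as illustrated by \cref{lem:catchup-lambda} and the reduction of $\scol{\Omega}$ in \cref{sec:back-to-omega}), which would break the $\capre$-relation. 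I expect the main obstacle to be exactly this first source of friction: the cases where precision is lost from a head constructor down to $\tcol{\?_{[]_j}}$, where one must check both that inserting the germ cast typechecks and that the two elaborated terms stay $\capre$-related, which forces the universe levels to be tracked precisely so that the germ lands at the correct level. This is why the catch-up lemmas behind \cref{thm:simulation}, the cumulativity slack of rule \textsc{Unk-Univ}, and the universe adequacy hypothesis must all be deployed together; everything else is a routine, if tedious, traversal of the elaboration rules.
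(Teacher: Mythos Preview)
Your proposal is correct and follows essentially the same approach as the paper: induction on the elaboration derivation, mutually strengthened to open contexts and to all three judgment forms, with the non-diagonal $\scol{\tilde{t}} \apre \scol{\?\ulev{i}}$ case discharged via universe adequacy and the structural precision rules for $\tcol{\?}$, the constrained inference cases handled via \cref{cor:red-types} (switching between the plain and $\scol{\?}$-variant rules as needed), and the checking case via \cref{cor:mon-cons}. The only cosmetic difference is that the paper uses rule \nameref{infrule:capre-unk} twice in the base case where you invoke \nameref{infrule:capre-unk-univ} followed by \nameref{infrule:capre-unk}; both are valid here.
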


\begin{proof}[Proof sketch]
	The proof is by induction on the elaboration derivation for $\scol{\tilde{t}}$. All cases for inference consist in a straightforward combination of the hypotheses, with the universe adequacy hypothesis used in the case where $\scol{\tilde{s}}$ is $\scol{\?\ulev{i}}$.
	Here again the technical difficulties arise in the rules involving reduction. This is where \cref{cor:red-types} is useful, proving that the less structurally precise term obtained by induction in a constrained inference reduces to a less precise type. Thus either the same rule can still be used, or one has to trade a \nameref{infrule:gcic-inf-univ}, \nameref{infrule:gcic-inf-prod} or \nameref{infrule:gcic-inf-ind} rule respectively for a \nameref{infrule:gcic-univ-unk}, \nameref{infrule:gcic-prod-unk} or \nameref{infrule:gcic-ind-unk} rule in case the less precise type is some $\tcol{\?_{[]_i}}$ and the more precise type is not. Similarly, \cref{cor:mon-cons} proves that in the checking rule the less precise types are still consistent.
  Note that again, because \cref{cor:red-types} holds when restricted to weak-head reduction, elaboration graduality also holds when fixing a weak-head strategy in \cref{fig:elaboration}.
\end{proof}

\paragraph{Dynamic Gradual Guarantee}

Following \citet{siekAl:snapl2015}, using the fact that structural precision is
a simulation (\cref{thm:simulation}), we can prove the DGG for \CCICP
and \CCICs (stated using the notion of observational refinement
$\obsRef$ from \cref{def:obsref}).

\begin{theorem}[Dynamic Gradual Guarantee for \CCICP and \CCICs]
  \label{thm:dgg}
  Suppose that $\tcol{\Gamma} \vdash \tcol{t}
  \inferty \tcol{A}$ and $\tcol{\Gamma} \vdash \tcol{u} \inferty
  \tcol{A}$.
  If moreover $\tcol{\Gamma \mid \Gamma} \vdash \tcol{t} \capre \tcol{u}$ then $\tcol{t} \obsRef \tcol{u}$.
\end{theorem}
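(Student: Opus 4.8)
The plan is to reduce the Dynamic Gradual Guarantee to the simulation property of structural precision (\cref{thm:simulation}), following the blueprint of \citet{siekAl:snapl2015} but adapting it to the observational refinement $\obsRef$ of \cref{def:obsref}. Concretely, we must show that whenever $\tcol{\Gamma\mid\Gamma}\vdash\tcol{t}\capre\tcol{u}$ (with both $\tcol{t},\tcol{u}$ inferring the same type $\tcol{A}$ in $\tcol{\Gamma}$), then for every boolean-valued observation context $\mathcal{C}:(\tcol{\Gamma}\vdash\tcol{A})\Rightarrow(\vdash\bool)$, if $\mathcal{C}[\tcol{u}]$ reduces to $\tcol{\err_\bool}$ or diverges, then $\mathcal{C}[\tcol{t}]$ does the same. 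First I would establish that structural precision is a congruence compatible with contexts: from $\tcol{\Gamma\mid\Gamma}\vdash\tcol{t}\capre\tcol{u}$ one gets $\vdash\mathcal{C}[\tcol{t}]\capre\mathcal{C}[\tcol{u}]$ for any closing context $\mathcal{C}$, since all term formers have diagonal rules in \cref{fig:apre-ccic}. This requires a lemma stating that structural precision is stable under substitution (needed to thread the context into a closed term), which follows by induction on the precision derivation together with the typing invariants built into the definition.

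Second, I would run the simulation. Suppose $\mathcal{C}[\tcol{u}]\rtred\tcol{\err_\bool}$. Since $\vdash\mathcal{C}[\tcol{t}]\capre\mathcal{C}[\tcol{u}]$, and since \cref{thm:simulation} says that less precise terms reduce at least as well as more precise ones—i.e.\ any reduction of the \emph{more} precise side can be matched on the \emph{less} precise side—I would instead use the contrapositive direction: I want information flowing from $\mathcal{C}[\tcol{u}]$ (less precise) to $\mathcal{C}[\tcol{t}]$ (more precise). Here the key observation, exactly as in \citet{siekAl:snapl2015}, is that $\tcol{t}$, being \emph{more} precise, can only produce \emph{more} errors: by iterating \cref{thm:simulation} along the reduction $\mathcal{C}[\tcol{t}]\rtred\dots$, every maximal reduction of $\mathcal{C}[\tcol{t}]$ is either infinite (divergence) or reaches a normal form $\tcol{v}$, and by simulation $\mathcal{C}[\tcol{u}]$ reduces to some $\tcol{v'}$ with $\vdash\tcol{v}\capre\tcol{v'}$; since $\mathcal{C}[\tcol{u}]$ is a closed boolean and confluent (\cref{lem:confluence}) with normal form $\tcol{\err_\bool}$, we get $\vdash\tcol{v}\capre\tcol{\err_\bool}$, and by inspection of structural precision at type $\bool$ (the catch-up reasoning of \cref{lem:catchup-cons} applied to $\bool$, noting $\tcol{v}$ is a closed normal form) the only possibility is $\tcol{v}=\tcol{\err_\bool}$. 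Hence $\mathcal{C}[\tcol{t}]$ either diverges or reaches $\tcol{\err_\bool}$, which is the required conclusion. The case where $\mathcal{C}[\tcol{u}]$ diverges is handled by the same argument: if $\mathcal{C}[\tcol{t}]$ did \emph{not} diverge, it would reach a normal form, which by simulation would force $\mathcal{C}[\tcol{u}]$ to reach a normal form too, contradicting divergence—so $\mathcal{C}[\tcol{t}]$ diverges or errors.

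The main obstacle I expect is the handling of \CCICP, which is non-normalizing, so one cannot argue simply "reductions terminate." The subtlety is that one must reason about \emph{potentially infinite} reduction sequences and use the fact that \cref{thm:simulation} gives a \emph{lockstep} simulation: every single reduction step of $\mathcal{C}[\tcol{t}]$ is matched by zero-or-more steps of $\mathcal{C}[\tcol{u}]$ while preserving $\capre$. This means an infinite reduction of $\mathcal{C}[\tcol{t}]$ need \emph{not} lift to an infinite reduction of $\mathcal{C}[\tcol{u}]$ (the less precise side may take zero steps each time, e.g.\ if $\mathcal{C}[\tcol{t}]$ loops only on casts that are absent in $\mathcal{C}[\tcol{u}]$)—but that is fine, since the conclusion we need is about $\mathcal{C}[\tcol{u}]$ erroring or diverging implying the same for $\mathcal{C}[\tcol{t}]$, i.e.\ the implication goes the convenient way. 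The genuinely delicate point is the finite-reduct case: we must ensure that from $\vdash\tcol{v}\capre\tcol{v'}$ with $\tcol{v'}=\tcol{\err_\bool}$ and $\tcol{v}$ a closed normal form of type $\bool$, we can conclude $\tcol{v}=\tcol{\err_\bool}$; this relies on the observation that the only rules of \cref{fig:apre-ccic} that can relate a non-error closed boolean normal form to $\tcol{\err_\bool}$ on the right are vacuous, since $\tcol{\err}$ only appears on the \emph{left} of $\capre$ in the error rules (\nameref{infrule:capre-err}, \nameref{infrule:capre-err-lambda}), not on the right. Restricting to \CCICP and \CCICs (rather than \CCICT) is precisely what makes \cref{thm:simulation} applicable without the "$\tcol{\?}$-free" caveat, which is why the theorem is stated only for those two variants. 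The remaining details—substitution stability of $\capre$, congruence under observation contexts, and the bookkeeping of typing premises through the simulation—are routine inductions of the kind already carried out for \cref{thm:correction} and \cref{thm:simulation}.
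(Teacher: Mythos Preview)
Your approach is correct in spirit and uses the same ingredients as the paper (congruence of $\capre$ under contexts, the simulation \cref{thm:simulation}, and the catch-up \cref{lem:catchup-cons}), but it is organized contrapositively---you case-split on the outcome of $\mathcal{C}[\tcol{u}]$---whereas the paper case-splits on the outcome of $\mathcal{C}[\tcol{t}]$. The paper's direction is slightly cleaner: if $\mathcal{C}[\tcol{t}]$ diverges or reaches $\tcol{\err_\bool}$ we are done; if it reaches $\tcol{\btrue}$, $\tcol{\bfalse}$ or $\tcol{\?_\bool}$, simulation plus \cref{lem:catchup-cons} force $\mathcal{C}[\tcol{u}]$ to reach the same value or $\tcol{\?_\bool}$, so $\mathcal{C}[\tcol{u}]$ neither diverges nor errors---the contrapositive of which is exactly $\obsRef$.

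Your argument has one genuine imprecision in the divergence case: you write that ``by simulation $\mathcal{C}[\tcol{u}]$ would reach a normal form too,'' but \cref{thm:simulation} only gives $\mathcal{C}[\tcol{u}]\rtred\tcol{v'}$ with $\tcol{v}\capre\tcol{v'}$---nothing says $\tcol{v'}$ is normal. You need \cref{lem:catchup-cons} here (both parts, covering constructors and $\tcol{\?_{\bool}}$) to conclude that $\tcol{v'}$ further reduces to a canonical boolean, which then contradicts divergence of $\mathcal{C}[\tcol{u}]$. Once patched, your proof goes through; your confluence-based handling of the error case and your rule-inspection argument that no closed non-error boolean normal form is $\capre\tcol{\err_\bool}$ are both sound, just more circuitous than the paper's direct route.
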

\begin{proof}
  Let $\tcol{\mathcal{C}} : (\tcol{\Gamma} \vdash \tcol{A}) \Rightarrow (\vdash \tcol{\bool})$
  closing over all free variables.
  By the diagonal rules of structural precision, we have
  $\tcol{\Gamma \mid \Gamma} \vdash \tcol{\mathcal{C}[t]} \capre \tcol{\mathcal{C}[u]}$.
  By progress (\cref{thm:ccic-psafe}), $\tcol{\mathcal{C}[t]}$ either reduces
  to $\tcol{\btrue}$, $\tcol{\bfalse}$, $\tcol{\?_{\bool}}$, $\tcol{\err_{\bool}}$ or diverges, and similarly for $\tcol{\mathcal{C}[u]}$.
  If $\tcol{\mathcal{C}[t]}$ diverges or reduces to $\tcol{\err_\bool}$, we are done.
  If it reduces to either $\tcol{\btrue}$, $\tcol{\bfalse}$ or $\tcol{\?_\bool}$, then by the catch-up \cref{lem:catchup-cons},
  $\tcol{\mathcal{C}[u]}$ either reduces to the same value, or
  to $\tcol{\?_\bool}$. In particular, it cannot diverge or reduce to
  an error.
\end{proof}

Note that \cref{ex:lambda-abstr-catch-up} provides a counter-example to this theorem for \CCICT, by choosing the context $\tcol{\match{\nat}{\bullet~0}{z.\bool}{f.\btrue,f.n.\btrue}}$, because in that context the function $\tcol{\l x : \nat . \suc(x)}$ reduces to $\tcol{\btrue}$ while the less precise casted function reduces to $\tcol{\err_{\bool}}$.

As observed in \cref{sec:grad-simple}, 
graduality---and in particular the fact that precision induces \eppairs{}---is inherently semantic, and thus cannot rely on the syntactic precision $\apre$ introduced in this section.
Therefore, we defer the proof of \pgrad for \CCICs and \CCICP to the next
section, where the semantic notion of {\em propositional precision} is introduced.

\section{Realizing \CCIC and Graduality}
\label{sec:realizing-cast-calculus}
\label{sec:realizing-ccic}
To prove normalization of \CCICT and \CCICs, we now build a model of both
theories with a simple implementation of casts using case-analysis on types as well
as exceptions, yielding the \emph{\bareModel}, allowing us to reduce the
normalization of both theories to the normalization of the target theory
(\cref{sec:bare-model}).

Then, to prove graduality of \CCICs,
we build a more elaborate \emph{monotone model} inducing a
precision relation well-behaved with respect to conversion.
Following generalities about the interpretation of \CIC's types as posets
in~\cref{sec:poset-model-dtt}, we describe the construction of a monotone
unknown type $\unkMon$ in~\cref{sec:realizing-unknown-type} and a hierarchy of
universes in~\cref{sec:monotone-universe} and put these pieces together
in~\cref{sec:monotone-model}, culminating in a proof of graduality
for \CCICs (\cref{sec:grad-term}).
In both the discrete and monotone case, the parameters $\castOfPi{-}$ and
$\sortOfPi{-}{-}$
appear when building the hierarchy of universes and tying the knot with the
unknown type.

Finally, to deduce graduality for the non-terminating variant, \CCICP,
we describe at the end of this section a model based on
$\omega$-complete partial orders, extending the seminal
model of \citet{scott76} for $\lambda$-calulus to \CCICP (\cref{sec:grad-non-term}).

The discrete model embeds into a variant of \CIC extended with
induction-recursion~\cite{DybjerS03}, noted \CICIR, and the monotone model into a variant that additionally features quotients (and hence also function extensionality~\cite{ShulmanIntervalFunext}), noted \CICIRQ.

\paragraph{Formalization in \Agda}
We use \Agda{}~\cite{norell:afp2008} as a practical tool to typecheck
the components of the models and assume that \Agda{} satisfies
standard metatheoretical properties, namely subject reduction and
strong normalization.

The correspondence between the notions developed in the following
sections and the formal development in \Agda{}~\cite{GCICAgdaCode} is as follows.
The formalization covers most component of the discrete
(\texttt{DiscreteModelPartial.agda}) and monotone model
(\texttt{UnivPartial.agda}) in a partial (non-normalizing) setting and only
the discrete model is proved to be normalizing assuming normalization of the
type theory implemented by Agda (no escape hatch to termination checking is
used in \texttt{DiscreteModelTotal}).
The main definitions surrounding posets can be found in \texttt{Poset.agda}:
top and bottom elements (called \texttt{Initial} and \texttt{Final} in the
formalization), embedding-projection pairs (called \texttt{Distr}) as well as
the notions corresponding to indexed families of posets
(\texttt{IndexedPoset}, together with \texttt{IndexedDistr}). It is then
proved that we endow can the translation of each type
formers from \CCIC{} with a poset structure: natural numbers in \texttt{nat.agda}, booleans in
\texttt{bool.agda}, dependent product in \texttt{pi.agda}. The definition of
the monotone unknown type \unkMon{} is defined in the subdirectory
\texttt{Unknown/}. It is more involved since we need to use a
quotient (that we axiomatize together with a rewriting rule in
\texttt{Unknown/Quotient.agda}).
Finally, all these building blocks are put together when assembling the
inductive-recursive hierarchies of universes (\texttt{UnivPartial.agda},
\texttt{DiscreteModelPartial.agda} and \texttt{DiscreteModelTotal.agda}).

\subsection{Discrete Model of \CCIC}
\label{sec:bare-model}

The discrete model explains away the new term formers of
\CCIC~(\ref{fig:syntax-castcic}) by a translation into \CIC{} using two important ingredients from the literature:
\begin{itemize}
\item Exceptions, following the approach of \ExTT~\cite{pedrotTabareau:esop2018}: each inductive type is extended with two new constructors, one for $\?$ and one for $\err$. 
As alluded to early on (\cref{sec:graduality}), both $\?$ and $\err$ are exceptional terms in their propagation semantics, and only differ in their static interpretation: \coqe{?_A} is consistent with any other term of type \coqe{A}, while \coqe{err_A} is not consistent with any such term.

\item Case analysis on types~\cite{BoulierPT17} to define the cast operator. The
  essence of the translation is to interpret types as {\em codes} when they are
  seen as terms, and as {\em the semantics of those codes} when they are seen as
  types.
  This allows us to get the standard interpretation for a term inhabiting
  a type, but at the same time, it allows functions taking terms in the universe
  $\U{}_i$ to perform a case analysis on the code of the type, because this
  time, the type is seen as a term in $\U{}_i$.
\end{itemize}

The latter ingredient for intensional type
analysis requires the target theory of the translation to be an extension of
\CIC with induction-recursion~\cite{DybjerS03}, noted \CICIR. We write $\redIR$
and $\irty$ to denote the reduction and typing judgments of \CICIR,
respectively.

\paragraph{Inductive types}
Following the general pattern of \ExTT, we interpret each inductive type $I$ by
an inductive type $\liftErr{I}$ featuring all constructors of $I$
and extended with two new constructors
$\lifttop{\liftErr{I}}$ and $\liftbot{\liftErr{I}}$, corresponding respectively to
$\?_{I}$ and $\err_I$ of \CCIC.
The constructors $\lifttop{\liftErr{I}}$ and $\liftbot{\liftErr{I}}$ of
$\liftErr{I}$ are called \emph{exceptional} by opposition to the other
constructors that we call \emph{non-exceptional}.
For instance, the inductive type used to interpret natural numbers,
$\liftNat{}$, thus has 4 constructors: the non-exceptional constructors $0$
and $\suc$, and the exceptional 
constructors $\lifttop{\liftNat}$, $\liftbot{\liftNat}$.
In the rest of this section, we only illustrate inductive types on natural numbers.

\paragraph{Universe and type-case}
Case analysis on types is obtained through an explicit inductive-recursive description of
the universes \cite{itt,McBride10} to build a type of codes $\U_i$
described in \cref{fig:discrete-univ}.
Codes are noted with $\code{\cdot}$ and the universe type contains codes for
dependent product ($\PiU$), universes ($\uU_j$), inductive types (\eg $\natU$) as well as $\unkU$ for the unknown type and $\errU$ for the error type. The main subtlety here is that the code $\PiU\,A\,B$ is at level $\sortOfPi{i}{j}$
when $A$ is at $i$ and $B$ is a family at $j$, emulating the rule of
\cref{fig:ccic-typing}.
Accompanying the inductive definition of $\U_i$, the recursively defined
decoding function $\El$ provides a semantics for these codes.
The semantics of $\PiU$ is given by the dependent product in the target theory,
applying $\El$ on the domain and the codomain of the code.  
The semantics of $\uU_j$ is precisely the type of codes $\U_j$.
The semantics of $\natU$ is given by the extended natural numbers $\liftNat$, explained above.

Intuitively, the semantics of $\unkU[i]$ is that an inhabitant of 
the unknown type corresponds to a pair of a type and an inhabitant of
that type. More precisely, we first define a notion of germ for codes where we stratify the head constructors $\H$ (see~\cref{fig:head-germ}) according to the
universe level $i$, e.g.
$\stalkCode{i}{\Pi} := \PiU\,\uU_{\castOfPi{i}}\,(\lambda (x :
\U_{\castOfPi{i}}).\uU_{\castOfPi{i}})$ when $\castOfPi{i} \geq 0$,
and its decoding to types
$\stalkCIC{i}{h} := \El\,(\stalkCode{i}{h})$.
The unknown type $\unkU[i]$ is then decoded to the extended dependent sum
$\liftSigma \, \C_i \, \stalkCIC{i}{}$ whose elements are
either:
\begin{itemize}
\item one of the two freely added constructors $\lifttop{\liftSigma},
  \liftbot{\liftSigma}$ following the interpretation scheme of inductive types;
\item or a dependent pair $\pairSigma{h}{t}$ of a head constructor $h \in \H_i$
  together with an element $t \in \stalkCIC{i}{h}$.
  \end{itemize}
Finally, the error type $\errU[i]$ is decoded to the unit type $\unit$ containing a unique element
\unitK. 
\begin{figure}
  \begin{small}
    \begin{mathpar}
      \inferrule{A \in \U_i\\ B \in \El\, A \to \U_j}{\PiU\,A\,B
        \in \U_{\sortOfPi{i}{j}}} \and
      \inferrule{j < i}{\uU_j \in \U_i} \and
      \natU \in \U_i \and
      \unkU[i] \in \U_i \and
      \errU[i] \in \U_i
    \end{mathpar}
    \begin{align*}
      \El\,(\PiU\,A\,B) &= \Pi(a : \El\,A)~\El (B\,a)
      & \El\,\uU_j &= \U_j
      & \El\,\natU &= \liftNat{}
      & \El\,\unkU[i] &= \liftSigma \, \C_i \, \stalkCIC{i}{}
      & \El\,\errU[i] &= \unit
    \end{align*}
  \end{small}
  \caption[IR universe & exceptions]{Inductive-recursive encoding of the
    discrete universe hierarchy}
  \label{fig:discrete-univ}
\end{figure}
\paragraph{Variants of \CCIC}
Crucially, the code for $\Pi$-types (\cref{fig:discrete-univ}) depends on the choice made for $\sortOfPi{i}{j}$.
Observe that for the choice of parameters corresponding to \CCICP, the
inductive-recursive definition of $\U_i$ is ill-founded since
$\castOfPi{\sortOfPi{i}{i}} = \sortOfPi{i}{i}$. We can thus inject
$\stalk{}_{\sortOfPi{i}{i}}\Pi = \El\,\unkU{\sortOfPi{i}{i}} \to
\El\,\unkU{\sortOfPi{i}{i}}$ into $\El\,\unkU{\sortOfPi{i}{i}}$ and project
back in the other direction, exhibiting an embedding-retraction
suitable to interpret the untyped $\lambda$-calculus and hence $\Omega$.%
\footnote{In the \Agda{} implementation, we deactivate the termination checker
on the definition of the universe for the model interpreting \CCICP,
thus effectively working in a partial, inconsistent type theory.}
In order to maintain normalization, the construction of the unknown type and the
universe therefore needs to be stratified, which is possible when
$\castOfPi{\sortOfPi{i}{i}} < \sortOfPi{i}{i}$.
This strict inequality occurs for both \CCICT and \CCICs.
We then proceed by strong induction on the universe level, and note that thanks to
the level gap, the decoding $\El\,\unkU[i]$ of the unknown type at a level $i$
can be defined solely from the data of smaller universes available by inductive
hypothesis, without any reference to $\U_i$.
We can then define the rest of the universe $\U_i$ and the decoding function
$\El$ at level $i$ in a well-founded manner, validating the strict positivity
criterion of \Agda's termination checker.
\begin{figure}
  \begin{small}
    \begin{align*}
      \?_{\PiU{}\,A\,B} &:= \l x : \El\,A. \?_{\El\,(B\,x)} 
      & \?_{\uU_j} &:= \unkU[j] 
      &\?_{\natU} &:= \lifttop{\liftNat} 
                        & \?_{\unkU[j]} &:= \lifttop{\El\,\uU_j}
                   & \?_{\errU[j]} &:= \unitK 
      \\
      \err_{\PiU{}\,A\,B} &:= \l x : \El\,A. \err_{\El\,(B\,x)} 
      & \err_{\uU_j} &:= \errU[j] 
      &\err_{\natU} &:= \liftbot{\liftNat}
                        &\err_{\unkU[j]} &:= \liftbot{\El\,\uU_j}
                   & \err_{\errU[j]} &:= \unitK 
    \end{align*}
  \end{small}
	\caption{Realization of exceptions}
	\label{fig:exceptions}
\end{figure}
\paragraph{Exceptions}
The definition of exceptions $\?_{A}, \err_{A} : \El~A$ at an arbitrary code $A$ then
follows by case analysis on the code, as shown in \cref{fig:exceptions}.
On the code for the universe, $\uU_j$, we directly use the code for the unknown and the error
types respectively.
On codes that have an inductive interpretation---$\natU$,
$\unkU[i]$---we use the two added constructors.
On the code for dependent functions, exceptions are defined by re-raising the exception at the
codomain in a pointwise fashion.
Finally, on the error type $\errU$, exceptions are degenerated and
forced to take the only value $\unitK:\unit$\footnote{This definition is indeed uniform if
  $\unit$ is seen as the record type with no projection.} of its interpretation as a
type.
\begin{figure}
  \[
    \begin{array}[t]{lcl}
      \cas~(\PiU\,A^\pidom\,A^\picod)~(\PiU\,B^\pidom\,B^\picod)~f & := & \l b :
                                            \El\,B^\pidom.~\letin{a}{\cas~B^\pidom~A^\pidom~b}{\cas\,
                                            (A^\picod\,a)~(B^\picod\,b)~(f\,a)} \\
      \cas~(\PiU\,A^\pidom\,A^\picod)~\unkU[i]~f & := & (\Pi ;
                                                      \cas~(\PiU\,A^\pidom\,A^\picod)~(\stalkCode{i}{\Pi})~f)\qquad
                                                      \text{if }
                                                       \stalkCode{i}{\Pi} \neq \errU\\
      \cas~(\PiU\,A^\pidom\,A^\picod)~X~f & := & \err_X \qquad \text{otherwise}
    \end{array}
  \]

    \[
      \hspace{1cm}\begin{array}[t]{lcl}
        \cas~\natU~\natU~n &:=& n\\
        \cas~\natU~\unkU[i]~n &:=& (\nat; n)\\
        \cas~\natU~X~n &:=& \err_X\\[0.3cm]

        \cas~\errU[i]~Z~\unitK &:=& \err_Z\\
        \\
        \\
      \end{array}%
      \hspace{1cm}%
      \begin{array}[t]{lcl}
        \cas~\uU_j~\uU_j~A & := & A \\
        \cas~\uU_j~\unkU[i]~A & := & (\square_j ; A) \qquad \text{if } j < i\\
        \cas~\uU_j~X~A & := & \err_X \qquad \text{otherwise}\\[0.3cm]
        \cas~\unkU[i]~Z~(c ; x) & := & \cas~(\stalkCode{i}{c})~Z~x\\
        \cas~\unkU[i]~Z~\lifttop{\El\,\unkU[i]} & := & \?_Z\\
        \cas~\unkU[i]~Z~\liftbot{\El\,\unkU[i]} & := & \err_Z\\
      \end{array}
    \]
  \caption{Definition of $\cas$ (discrete model)}
  \label{fig:cast-implem-discrete}
\end{figure}

\paragraph{Casts}
Equipped with exceptions and type analysis, we define
\mbox{$\cas : \P (A : \U_i)(B : \U_j). A \to B$} by
induction on the universe levels and case analysis on the codes of the types
$A$ and $B$ (\cref{fig:cast-implem-discrete}).
In the total setting (when $\castOfPi{\sortOfPi{i}{i}} < \sortOfPi{i}{i}$), the definition of $\cas$ is well-founded: each recursive
call happens either at a strictly smaller universe (the two cases for
$\PiU$) or on a strict subterm of the term being cast (case of
inductives, \ie~$\natU$ and $\unkU$).
Note that each of the defining equations of $\cas$ corresponds straightforwardly to a
reduction rule of~\cref{fig:CCIC-reduction}.

\paragraph{Discrete translation}
We can finally define the discrete syntactic model of \CCIC in \CICIR (\cref{fig:discrete-translation}).
The translations $\bareTm{-}$ and $\bareTy{-}$ are defined by induction on the syntax of
terms and types.
A type $A$ is translated to its corresponding code $\bareTm{A}$
in $\U_i$ when seen as a term, and is translated to the interpretation
of this code $\bareTy{A} := \El\,\bareTm{A}$ when seen as a type.
$\?_A$ and $\err_A$ are directly translated using the exceptions
defined in \cref{fig:exceptions}.
The following theorem shows that the translation is a syntactic model in the sense
of~\citet{BoulierPT17}.
\begin{figure}
  \[
    \begin{array}[t]{lcl}
      \bareTy{\cdot} & := & \cdot\\[0.6em]
      \bareTy{A} & := & \El\,\bareTm{A} \\
      \\
      {\bareTm{ x }} & := & x\\
      {\bareTm{\square{}_i}} & := & \uU_i \\[0.6em]
      {\bareTm{\P x : A . B}} & := & \PiU~\bareTm{ A }~(\l x : \bareTy{A} . \bareTm{ B }) \\
      {\bareTm{t~u}} & := & \bareTm{t}~\bareTm{u}\\
      {\bareTm{\l x : A . t}} & := & \l x : \bareTy{ A } . \bareTm{t}
    \end{array}%
    \begin{array}[t]{lcl}
      \bareTy{\Gamma, x : A} & := & \bareTy{\Gamma}, x : \bareTy{A}\\[0.6em]
      {\bareTm{\nat}} & := & \natU \\
      {\bareTm{0}} &:= & 0\\
      {\bareTm{\suc}} & := & \suc \\
      \bareTm{\ind_{\nat}}~ P~h_0\,h_{\suc} & :=& \ind_{\liftNat{}}~P~h_0\,h_{\suc}\,\?_{P\,\?_{\liftNat{}}}\,\err_{(P\err_{\liftNat{}})} \\[0.6em]
      {\bareTm{\?_A}} & := & \?_{\bareTm{A}}\\
      {\bareTm{\err_A}} &:= & \err_{\bareTm{A}} \\
      {\bareTm{ \ascdom{t}{A}{B} }} &:= & \cas\,\bareTm{A}\,\bareTm{B}\,\bareTm{t}
    \end{array}
  \]
  \caption{Discrete translation from \CCIC to \CICIR}
  \label{fig:discrete-translation}
\end{figure}
\begin{theorem}[Discrete syntactic model]
  \label{thm:discrete-model}
  The translation defined in \cref{fig:discrete-translation} 
  preserves conversion and typing derivations:
  \begin{enumerate}
  \item if $\tcol{\Gamma} \caty \tcol{t} \redCCIC{} \tcol{u}$ then $\bareTy{\Gamma} \irty \bareTm{t}
    \redIRn \bareTm{u}$, in particular $\bareTy{\Gamma} \irty \bareTm{t}
    \equiv \bareTm{u}$,
  \item if $\tcol{\Gamma} \caty \tcol{t} : \tcol{A}$ then $\bareTy{\Gamma} \irty \bareTm{t} : \bareTy{A}$.
  \end{enumerate}
\end{theorem}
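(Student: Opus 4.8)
The plan is to prove both statements by mutual induction on the derivation of the source judgment, exploiting the fact that the discrete translation is defined by a straightforward structural recursion and that each reduction rule of \CCIC\ in \cref{fig:CCIC-reduction} was deliberately designed to mirror a defining equation of the operators $\?$, $\err$ and $\cas$ in \CICIR\ (\cref{fig:exceptions,fig:cast-implem-discrete}). First I would dispatch claim (1), the preservation of reduction. The proof goes by case analysis on the reduction rule $\tcol{t} \redCCIC \tcol{u}$ used at the root, together with a congruence argument for reductions happening deeper in the term. For the congruence cases, since $\bareTm{-}$ is a congruence for all term constructors, a reduction in a subterm translates to a reduction in the corresponding subterm of the translation. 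For the base cases, I would check rule by rule that the left-hand side translates to a \CICIR\ term that reduces in one or more $\redIR$ steps to the translation of the right-hand side: e.g.\ \nameref{redrule:prod-unk} corresponds to the defining equation $\?_{\PiU\,A\,B} := \l x. \?_{\El(B\,x)}$, \nameref{redrule:up-down} to the clause $\cas~(\stalkCode{i}{c})~Z~x$ in the definition of $\cas$ on $\unkU$, \nameref{redrule:prod-prod} to the $\PiU/\PiU$ clause of $\cas$, the $\beta$- and $\iota$-rules of \CIC\ translate to the corresponding rules of \CICIR, and so on. The arithmetic side conditions on universe levels (such as $\castOfPi{i}\geq 0$) line up exactly with the case splits in the germ and $\cas$ definitions. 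Once (1) is established for a single step, the ``in particular'' clause follows because $\equiv$ in \CICIR\ is the reflexive–symmetric–transitive closure of $\redIR$, so $\bareTm{t} \redIRn \bareTm{u}$ certainly gives $\bareTm{t} \equiv \bareTm{u}$.

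For claim (2), preservation of typing, I would proceed by induction on the bidirectional typing derivation of \cref{fig:ccic-typing} (extending \cref{fig:bidir}), mutually with the corresponding statement for the context well-formedness judgment and the constrained inference judgments $\pcheckty{\bullet}$. The key supporting lemmas are: (i) the translated universe hierarchy is well-typed in \CICIR, i.e.\ $\irty \uU_i : \U_{i+1}$ and $\El\,\uU_i = \U_i$ up to conversion, which is exactly the inductive–recursive encoding of \cref{fig:discrete-univ}; (ii) for every code $A \in \U_i$ the exceptions satisfy $\irty \?_A : \El\,A$ and $\irty \err_A : \El\,A$, proved by induction on the code following \cref{fig:exceptions}; and (iii) the cast operator is well-typed, $\irty \cas : \P(A:\U_i)(B:\U_j).\,\El\,A \to \El\,B$, proved by the same well-founded induction on universe levels and codes that justifies its definition in \cref{fig:cast-implem-discrete} (this is where the strict inequality $\castOfPi{\sortOfPi{i}{i}} < \sortOfPi{i}{i}$ for \CCICT\ and \CCICs\ is used). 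Given these, each typing rule of \CCIC\ maps to a derivable \CICIR\ judgment: \nameref{infrule:ccic-prod} uses that $\PiU\,A\,B$ lands at level $\sortOfPi{i}{j}$ by construction; \nameref{infrule:ccic-unk} and \nameref{infrule:ccic-err} use (ii); \nameref{infrule:ccic-cast} uses (iii); and the conversion/constrained-inference rules use part (1) already proved, since preservation of reduction gives preservation of conversion, so the \CICIR\ translations of convertible \CCIC\ types are convertible.

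The main obstacle I expect is not any individual rule but the bookkeeping around the mutual recursion underlying the universe construction, and in particular making precise the stratified well-founded definition of $\El\,\unkU[i]$, the germ codes $\stalkCode{i}{h}$, and $\cas$ so that all the induction hypotheses are available exactly where needed. Concretely: the germ $\stalkCode{i}{\Pi}$ refers to $\uU_{\castOfPi{i}}$, so one must ensure $\castOfPi{i} < i$ to stay within the portion of the hierarchy already defined, and the well-typedness of $\cas$ on the $\unkU$/germ clauses depends on the matching clause of the germ being itself well-defined; threading these dependencies through the strong induction on levels, while simultaneously verifying Agda's strict-positivity/termination criterion, is the delicate part. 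A secondary subtlety is the decoding of the unknown type as an \emph{extended} dependent sum $\liftSigma\,\C_i\,\stalkCIC{i}{}$ with the two extra constructors $\lifttop{\liftSigma},\liftbot{\liftSigma}$: one must check that the eliminator used to define $\cas~\unkU[i]~Z$ covers these exceptional constructors (mapping them to $\?_Z$ and $\err_Z$), which is exactly what makes the $\?$/$\err$ propagation rules \nameref{redrule:down-unk} and \nameref{redrule:down-err} translate correctly. Everything else is a mechanical, if lengthy, rule-by-rule verification, and I would present it as such, pointing the reader to \citet{BoulierPT17} and \citet{pedrotTabareau:esop2018} for the analogous arguments in the non-gradual and purely exceptional cases respectively.
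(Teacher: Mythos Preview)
Your proposal is correct and follows essentially the same approach as the paper: rule-by-rule verification for (1) using the defining equations of $\?$, $\err$, and $\cas$, and induction on the typing derivation for (2) using well-typedness of those operators plus (1) for the conversion rule. The one technical point the paper singles out that you leave implicit is that the \CCIC canonical forms $\tcol{\ascdom{t}{\stalkCIC{i}{h}}{\?_{\square_i}}}$ are not literally preserved by $\bareTm{-}$ but rather correspond, after one $\redIR$ step, to the pairs $\pairSigma{h}{\bareTm{t}}$ in $\liftSigma\,\H_i\,\stalkCIC{i}{}$; this mismatch is why the conclusion needs $\redIRn$ rather than a single step, and it is the only place where the translation of a normal form is not itself normal.
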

\begin{proof}
  (1) All reduction rules from \CIC are preserved without a
  change so that we only need to be concerned with the reduction rules involving
  exceptions or casts.
  A careful inspection shows that these reductions are preserved too once we
  observe that the terms of the shape
  $\tcol{\ascdom{t}{\stalkCIC{i}{h}}{\?_{\square_i}}}$ that are stuck in \CCIC
  are in one-to-one correspondence with the one-step reduced form of its
  translation $\pairSigma{h}{\bareTm{t}} : \liftSigma \, \H_i \,
  \stalkCIC{i}{}$.
  (2) Proved by a direct induction on the typing derivation of
  $\catyJ{\Gamma}{t}{A}$, using the fact that exceptions and casts are well-typed---that is $\irty \? : \P(A : \U_i)\,\El\,A$ , $\irty \err : \P(A : \U_i)\,\El\,A$, and $\irty \cas : \P(A : \U_i)(B : \U_i)
    \El\,A {\to} \El\,B$---and relying on assertion $(1)$ to handle the
  conversion rule.
\end{proof}

As explained in \cref{thm:ccic-pnorm}, \cref{thm:discrete-model}
implies in particular that \CCICs and \CCICT are strongly
normalizing.

\subsection{Poset-Based Models of Dependent Type Theory}
\label{sec:poset-model-dtt}

The simplicity of the discrete model comes at the price of an inherent inability to
characterize which casts are guaranteed to succeed, \ie a graduality theorem.
To overcome this limitation, we develop a monotone model on top of the discrete
model where, by construction, each type $A$ comes equipped with an order
structure $\sqsubseteq^A$---a reflexive, transitive, antisymmetric and
proof-irrelevant relation---modelling precision between terms.
In particular, the exceptions $\err_A$ and $\?_A$ correspond respectively to the smallest
and greatest element of $A$ for this order.
We note \poset{} for a universe of types equipped with the structure of a poset
together with smallest and greatest elements.
Each term and type constructor is enforced to be monotone with respect to these orders,
providing a strong form of graduality.
This implies in particular that such a model cannot be defined for \CCICT
because this type theory lacks graduality, as shown by~\cref{ex:lambda-abstr-catch-up}.

As an illustration, the order on extended natural numbers~(\cref{fig:nat-numbers}) makes
$\liftbot{\liftNat}$ the smallest element and $\lifttop{\liftNat}$ the biggest element.\footnote{We abusively note $\liftNat$ for both
  the poset and its carrier to avoid introducing too many notations.}
The standard natural numbers -- $\natzero$ or $\natsuc~n$ for a standard natural
number $n$ -- then stand between failure and indeterminacy,
but are never related to each other by precision. Indeed, in order to ensure conservativity with respect to \CIC, $\sqsubseteq^{\natU{}}$ must coincide with \CIC{}'s conversion on static closed natural numbers.
\begin{figure}
  \begin{minipage}{0.35\linewidth}
    \begin{align*}
      0 &\sqsubseteq^{\natU} 0
      &\liftbot{\liftNat} &\sqsubseteq^{\natU} n\\
      0 &\sqsubseteq ^{{\natU}} \lifttop{\liftNat}
       &\lifttop{\liftNat} &\sqsubseteq^{{\natU}} \lifttop{\liftNat}
    \end{align*}
  \end{minipage}%
  \begin{minipage}{0.40\linewidth}
    \begin{mathpar}
      \inferrule{n \sqsubseteq^{{\natU}} m}{\suc\,n \sqsubseteq^{{\natU}} \suc\,m} \and
      \inferrule{n \sqsubseteq^{{\natU}} \lifttop{\liftNat}}{\suc\,n \sqsubseteq^{{\natU}} \lifttop{\liftNat}}
    \end{mathpar}
  \end{minipage}
  \caption{Order structure on extended natural numbers}
  \label{fig:nat-numbers}
\end{figure}
Beyond the precision order on types, the nature of dependency forces us to spell
out what the precision between types entails.
Following the analysis of~\citet{newAhmed:icfp2018},
a relation $A \sqsubseteq B$ between types should
induce an embedding-projection pair (\eppair{}): a pair of an \emph{upcast}
${\upcast{} : A {\to} B}$ and a \emph{downcast} ${\downcast{} : B {\to} A}$
satisfying a handful of properties with gradual guarantees as a
corollary.
\begin{definition}[Embedding-projection pairs]
  \label{def:ep-pair}
  An \eppair{} $d : A \distr{} B$ between posets $A,B : \poset$ consists of
  \begin{itemize}
  \item an underlying relation $d \subseteq A \times B$ such that
    \[a' \sqsubseteq^A a \>\wedge\> d(a,b) \>\wedge\> b \sqsubseteq^B b' \quad\implies\quad d(a',b')\]
  \item that is bi-represented by $\upcast_d~ : A \to B$, $\downcast{}_d : B \to
    A$, \ie
    \[\upcast_d a \sqsubseteq^B b \quad\Leftrightarrow\quad d(a,b) \quad\Leftrightarrow\quad a \sqsubseteq^A \;\downcast_d b,\]
  \item such that the equality $\downcast_d \circ \upcast_d~ = \id_A$ holds.
  \end{itemize}
\end{definition}
Note that here equiprecision of the retraction becomes an equality
because of antisymmetry.
Under these conditions, $\upcast_d : A \hookrightarrow B$ is injective,
$\downcast_d : B \twoheadrightarrow A$ is surjective and both preserve bottom
elements, explaining that we call $d : A \distr{} B$ an embedding-projection
pair.
The definition of \eppairs{} is based on a relation rather than just its
pair of representing functions to highlight the connection between \eppairs{} and
parametricity~\cite{newAl:popl2020}.
Assuming function extensionality, being an \eppair{} is a
property of the underlying relation: there is at most one pair $(\upcast{}_d,
\downcast_d)$ representing the underlying relation of $d$.
An \eppair{} straightforwardly induces the following
relations that will be used in later proofs.
\begin{lemma}[Properties of \eppairs{}]
  \label{lem:adj}
  Let $d : A \distr{} B$ be an \eppair{} between posets.
  \begin{enumerate}
  \item If $a : A$ then $d\,(a, \upcast_d a)$ and $a \sqsubseteq^A
    \downcast_d\upcast_d a$.
  \item If $b : B$ then $d\,(\downcast_d b, b)$ and $\upcast_d\downcast_d b \sqsubseteq^B
    b$.
  \end{enumerate}
\end{lemma}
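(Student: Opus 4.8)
\textbf{Proof plan for Lemma~\ref{lem:adj}.}

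The plan is to unfold \cref{def:ep-pair} and show that each of the four desired facts is an immediate consequence of one of the three defining clauses of an \eppair{}, together with reflexivity of the poset orders. First I would prove item~(1). Given $a : A$, apply reflexivity of $\sqsubseteq^B$ to get $\upcast_d a \sqsubseteq^B \upcast_d a$; by the bi-representation clause ($\upcast_d a \sqsubseteq^B b \Leftrightarrow d(a,b)$, instantiated at $b := \upcast_d a$), this yields $d(a, \upcast_d a)$. For the second half of~(1), use that same membership $d(a, \upcast_d a)$ and feed it into the \emph{other} direction of bi-representation, $d(a,b) \Leftrightarrow a \sqsubseteq^A \downcast_d b$, instantiated at $b := \upcast_d a$; this gives exactly $a \sqsubseteq^A \downcast_d \upcast_d a$. (Alternatively, this second inequality is just $\id_A = \downcast_d \circ \upcast_d$ rewritten with reflexivity, but the bi-representation argument is cleaner and does not even need the retraction equality.)

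Item~(2) is strictly dual. Given $b : B$, reflexivity of $\sqsubseteq^A$ gives $\downcast_d b \sqsubseteq^A \downcast_d b$; the right-hand equivalence $d(a,b) \Leftrightarrow a \sqsubseteq^A \downcast_d b$, instantiated at $a := \downcast_d b$, then yields $d(\downcast_d b, b)$. Plugging this membership into the left-hand equivalence $\upcast_d a \sqsubseteq^B b \Leftrightarrow d(a,b)$, again with $a := \downcast_d b$, delivers $\upcast_d \downcast_d b \sqsubseteq^B b$, which is the remaining claim.

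There is no real obstacle here: the lemma is essentially a restatement of the adjunction packaged in \cref{def:ep-pair}, and the only ``work'' is keeping track of which of the two equivalences in the bi-representation clause is being used and in which direction. The one point worth a sentence of care in the writeup is that each instantiation is legitimate---$\upcast_d a$ genuinely lives in $B$ and $\downcast_d b$ in $A$, so reflexivity of the appropriate order applies---after which the four statements fall out mechanically. The retraction equality $\downcast_d \circ \upcast_d = \id_A$ is not needed for any of the four items (it is what makes item~(1)'s second inequality an equality, but the lemma as stated only asks for $\sqsubseteq^A$), so I would not invoke it.
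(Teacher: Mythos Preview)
Your proof is correct. The paper does not actually give a proof of this lemma---it simply states that an \eppair{} ``straightforwardly induces'' these relations---and your argument is exactly the straightforward unfolding of the bi-representation clause in \cref{def:ep-pair} that the paper has in mind.
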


\paragraph{Posetal families}
By monotonicity, a family $B : A \to \poset$ over a poset $A$ gives rise not
only to a poset $B\,a$ for each $a \in A$, but also to \eppairs{} $B_{a,a'} :
B\,a \distr{} B\,a'$ for each $a \sqsubseteq^A a'$.
These \eppairs{} need to satisfy functoriality conditions:
\[B_{a,a} = {\sqsubseteq^{B\,a}} \qquad\text{and}\qquad B_{a,a''} = B_{a',a''}
  \circ B_{a,a'}\quad \text{whenever}\quad a \sqsubseteq^A a' \sqsubseteq^A
  a''.\] In particular, this ensures that heterogeneous transitivity is well
defined:
\[B_{a,a'}(b,b') \wedge B_{a',a''}(b',b'') \Rightarrow
  B_{a,a''}(b,b'').\]

\paragraph{Dependent products}
Given a poset $A$ and a posetal family $B$ over $A$, we can form the poset
$\Pmon{}\,A\,B$ of \emph{monotone} dependent functions from $A$ to $B$,
equipped with the pointwise order.
Its inhabitants are dependent functions $f : \P (a : A). B\,a$ such that $a \sqsubseteq^A
a' {\Rightarrow} B_{a,a'}\,(f\,a, f\,a')$.
Moreover, given \eppairs{} $d_A : A \distr{} A'$ and $d_B : B \distr{} B'$, we
can build an induced \eppair{} $d_\Pi : \Pmon{}\,A\,B \distr{} \Pmon{}\,A'\,B'$
with underlying relation
\begin{align*}
  d_\Pi(f,f') &:= d_A(a,a') \Rightarrow d_B(f\,a,f'\,a'),\\
  \upcast{}_{d_\Pi}f := {\upcast_{d_B}}\circ f \circ \downcast{}_{d_A}
              &\qquad\text{ and }\qquad
  \downcast{}_{d_\Pi}f := {\downcast_{d_B}} \circ f \circ \upcast{}_{d_A}.
\end{align*}
The general case where $B$ and $B'$ actually depend on $A, A'$ is obtained with
similar formulas, but a larger amount of data is required to handle
the dependency: we refer to the accompanying \Agda{} development for details.

\paragraph{Inductive types}
Generalizing the case of natural numbers, the order on an arbitrary extended
inductive type $\liftErr{I}$ uses the following scheme:
\begin{enumerate}
\item $\liftbot{\liftErr{I}}$ is the least element
\item $\lifttop{\liftErr{I}} \sqsubseteq^{\liftErr{I}} \lifttop{\liftErr{I}}$
\item $c~\mathbf{t} \sqsubseteq^{\liftErr{I}} \lifttop{\liftErr{I}}$ whenever $t_i \sqsubseteq^{X_i} \lifttop{X_i}$ for
  all $i$
\item each constructor $c$ is monotone with respect to the order on its arguments
\end{enumerate}
The precondition on subterms in the third case is unnecessary in simple cases
and is kept to be uniform with definition of order on the monotone unknown type
in the following section.
Similarly to dependent product, an \eppair{} $\mathbf{X} \distr{} \mathbf{X'}$
between the parameters of an extended inductive type $\liftErr{I}$ induces an \eppair{}
$\liftErr{I}\,\mathbf{X} \distr{} \liftErr{I}\,\mathbf{X'}$.
For instance, \eppairs{} $d_A : A \distr{} A'$ and $d_B : B \distr{} B'$ induce
an \eppair{} $d_{\liftSigma} : \liftSigma\,A\,B \distr{} \liftSigma\,A'\,B'$
defined by $d_{\liftSigma}((a,b),(a',b')) := d_A(a,a') \wedge d_B(b,b')$.

\subsection{Microcosm: the Monotone Unknown Type \unkMon}
\label{sec:realizing-unknown-type}

The interpretation \unkMon[i] of the unknown type in the monotone model should
morally group together approximations of every type at the same universe
level.
Working in (bi)pointed orders, \unkMon[i] can be realized as a coalesced sum
~\cite[section 3.2.3]{AbramskyJung} of
the family $\stalkCIC{i}{h}$ indexed by head constructors $h \in \H_i$.
A concrete presentation of \unkMon[i] is obtained as
the quotient of $\liftSigma\, \H_i \, \stalkCIC{i}{}$
identifying $\liftbot{\liftSigma\, \H_i \, \stalkCIC{i}{}}$ with any pair
$\pairSigma{h}{\err_{\stalkCode{i}{h}}}$.
The equivalence classes of $\pairSigma{h}{x}$ is noted as $\unkInj{h}{x}$,
$\liftbot{\liftSigma\, \H_i \, \stalkCIC{i}{}}$ as $\liftbot{\unkMon[i]}$ and
$\lifttop{\liftSigma\, \H_i \, \stalkCIC{i}{}}$ as $\lifttop{\unkMon[i]}$.
The obtained type \unkMon[i] is then equipped
with a precision relation defined by the rules:
\begin{align}
  \label{eq:unknown-prec}
  \liftbot{\unkMon[i]} &\sqsubseteq^{\unkU[i]} z
  & \lifttop{\unkMon[i]} &\sqsubseteq^{\unkU[i]} \lifttop{\unkMon[i]}
  & \inferrule{x \sqsubseteq^{\stalkCode{i}{h}} x'}{ \unkInj{h}{x} \sqsubseteq^{\unkU[i]} \unkInj{h}{x'}} &
  & \inferrule{}
    {\unkInj{h}{x} \sqsubseteq^{\unkU[i]} \lifttop{\unkMon[i]}}
\end{align}
These rules ensure that the exceptions
$\liftbot{\unkMon[i]}$ and $\lifttop{\unkMon[i]}$ are
respectively the smallest and biggest elements of $\unkMon[i]$.
Non-exceptional elements
are comparable only if they have the same head constructor $h$ and if so are
compared according to the interpretation of that head constructor as an ordered
type $\stalkCIC{i}{h}$ .
Because of the quotient, it is not immediate that this presentation of
$\sqsubseteq^{\unkMon[i]}$ is independent of the choice of representatives in
equivalence classes and that it forms a proof-irrelevant relation.
In the formal development, we define the relation by quotient-induction on each argument, thus verifying that it respects the quotient, and also show that
it is irrelevant. This relies crucially on equality being decidable
on head constructors when comparing $\unkInj{h}{x}$ and $\unkInj{h'}{x'}$.
In order to globally satisfy $\pgrad$, \unkMon[i] should admit an \eppair{} ${d_h
  : \stalkCIC{i}{h} \distr{} \unkMon[i]}$ whenever we have a head constructor $h
\in \H_i$ such that $\stalkCode{i}{h} \sqsubseteq \unkU[i]$ (we return to that
point in the next section~\cref{sec:monotone-universe}).
Embedding an element $x \in \stalkCIC{i}{h}$ by $\upcast_{d_h}x = \unkInj{h}{x}$
and projecting out of $\stalkCIC{i}{h}$ by the following equations form a reasonable
candidate:
\begin{align*}
  \downcast_{d_h}\unkInj{h'}{x} &=
  \begin{dcases*}
    x & if $h = h'$ \\
    \err_{\stalkCIC{i}{h}} & otherwise
  \end{dcases*}&
  \downcast_{d_h} \lifttop{\unkMon[i]} &= \?_{\stalkCIC{i}{h}} &
  \downcast_{d_h} \liftbot{\unkMon[i]} &= \err_{\stalkCIC{i}{h}}
\end{align*}
Note that we rely again on $\H$ having decidable equality to compute the
$\downcast{}_{d_h}$.
Moreover $\upcast_{d_h} \dashv \downcast_{d_h}$ should be adjoints; in particular,
 the following precision relation needs to hold:
\begin{align*}
  \err_{\stalkCIC{i}{h}} \sqsubseteq^{\stalkCIC{i}{h}} {\downcast{}_{d_h}\liftbot{\unkMon[i]}}\quad \iff \quad
  \unkInj{h}{\err_{\stalkCIC{i}{h}}} = {\upcast_{d_h}\err_{\stalkCIC{i}{h}}} \sqsubseteq^{\unkMon[i]} \liftbot{\unkMon[i]}
\end{align*}
Since $\sqsubseteq^{\unkMon[i]}$ should be antisymmetric, this is possible only
if $\unkInj{h}{\err_{\stalkCIC{i}{h}}}$ and $\liftbot{\unkMon[i]}$ are
identified in $\unkMon[i]$, explaining why we have to quotient in the first place.

\subsection{Realization of the Monotone Universe Hierarchy}
\label{sec:monotone-universe}

\begin{figure}
\begin{small}
  \begin{center}
    \textbf{Monotone universes} $\U_i$
    \textbf{and decoding function}  $\El\, : \U_i \to \poset$
    \text{(cases distinct from \cref{fig:discrete-univ})}
  \end{center}
  \vspace{0.5em}
  \begin{minipage}{0.40\linewidth}
    \[
      \inferrule{A \in \U_i\\ B \in \Pmon(a :\El\, A).\,\U_j}{\PiU\,A\,B \in \U_{\sortOfPi{i}{j}}}
    \]
  \end{minipage}%
  \begin{minipage}{0.60\linewidth}
    \begin{align*}
      \El\,(\PiU\,A\,B) &= \Pmon (a : \El\,A).\, \El (B\,a)
      & \El\,\unkU[i] &= \unkMon[i]
    \end{align*}
  \end{minipage}
  \vspace{1em}
  \begin{center}
   \textbf{Precision order} $\sqsubseteq$ \textbf{on the universes}
   (where $i \le j$)
  \end{center}
  \vspace{0.5em}
  \begin{mathpar}
    \inferrule[$\errU$-$\sqsubseteq$]{A : \U_j }{\Ue{\errU[i]}{A}}
    \and
    \inferrule[$\natU$-$\sqsubseteq{}$]{ }{\Ue{\natU}{\natU}}
    \and
    \inferrule[$\uU$-$\sqsubseteq$]{ }{\Ue{\uU_i}{\uU_{i}}}
    \and
    \inferrule[$\unkU$-$\sqsubseteq$]{ }{\Ue{\unkU[i]}{~\unkU[j]}}
    \and
    \inferrule[$\PiU$-$\sqsubseteq$]{\Ue{A}{A'}\\ a:\El\,A,a':\El\,A', a_\epsilon : a
      \precision{A}{A'} a' \vdash \Ue{B\,a}{B'\,a'}}
    {\Ue{\PiU\,A\,B}{\PiU\,A'\,B'}}
    \and
    \inferrule[$\C$-$\sqsubseteq$]{h = \hd{A} \in \C_i\\\Ue{A}{\stalkCode{j}{h}}}{\Ue{A}{~\unkU[j]}}
  \end{mathpar}
  \vspace{1em}
  \begin{center}
   \textbf{Precision on terms} $\precision{A}{B} := \ElRel(A{\sqsubseteq}B)$ 
  \end{center}
  \vspace{0.5em}
  \begin{mathpar}
    \inferrule[$\ElRel(\errU\text{-}{\sqsubseteq})$]{a:\El~A}{\unitK \precision{\errU}{A} a}
    \and
    \inferrule[$\ElRel(\natU\text{-}{\sqsubseteq}),\ElRel(\uU\text{-}\sqsubseteq)$]
    {A = \natU, \uU_{i}\\ x \sqsubseteq^{\El~A} y}{x \precision{A}{A} y}
    \and
    \inferrule[$\ElRel(\unkU\text{-}{\sqsubseteq})$]{z : \unkMon[j]}{\err_{\unkU[i]} \precision{\unkU[i]}{\unkU[j]} z}
    \and
    \inferrule{z : \unkMon[i]}{ z \precision{\unkU[i]}{\unkU[j]} \?_{\unkU[j]}}
    \and
    \inferrule{x \precision{\stalkCode{i}{h}}{\stalkCode{j}{h}} x'}{\unkInj{h}{x} \precision{\unkU[i]}{\unkU[j]} \unkInj{h}{x'}}
    \and
    \inferrule[$\ElRel(\PiU\text{-}{\sqsubseteq})$]{a:\El\,A,a':\El\,A', a_\epsilon : a \precision{A}{A'} a' \vdash
      f~a \precision{B\,a}{B'\,a'}f'~a'}
    {f \precision{\PiU\,A\,B}{\PiU\,A'\,B'} f'}
    \and
    \inferrule[$\ElRel(\C\text{-}{\sqsubseteq})$]{a \precision{A}{\stalkCode{j}{(\hd{A})}} x}{a \precision{A}{\unkU[j]} \unkInj{\hd{A}}{x}}
    \and
    \inferrule{a:\El~A}{a \precision{A}{\unkU[j]} \?_{\unkU[j]}}
  \end{mathpar}
  \end{small}
  \caption{Monotone universe of codes and precision}
  \label{fig:IRUniverse}
\end{figure}

Following the discrete model, the monotone universe hierarchy is
also implemented through an inductive-recursive datatype of codes $\U_i$
together with a decoding function $\El : \U_i \to \square$, both presented in
\cref{fig:IRUniverse}.
The precision relation ${\sqsubseteq} : \U_i \to \U_j \to \square$
presented below is an
order (\cref{thm:universe-hierarchy}) on this universe hierarchy.
The ``diagonal'' inference rules, providing evidence for relating type
constructors from \CIC, coincide with those of binary
parametricity~\cite{bernardyAl:jfp2012}.
Outside the diagonal, $\errU$ is placed at the bottom.
More interestingly, the derivation of a precision proof $A \sqsubseteq
\unkU$ provides a unique decomposition of $A$ through iterated germs
directed by the relevant head constructors.
For instance, in the gradual systems \CCICP and \CCICs where the equation $\castOfPi{\sortOfPi{i}{j}} =
\max(i,j)$ holds for any universe levels $i,j$, the derivation of $(\natU{} {\to} \natU{}) {\to} \natU{}
\sqsubseteq \unkU$ canonically decomposes as:
\[ (\natU{} {\to} \natU{}) {\to} \natU{} \quad \sqsubseteq
  \quad (\unkU {\to} \unkU) {\to} \natU{} \quad \sqsubseteq
  \quad \unkU {\to} \unkU \quad \sqsubseteq\quad \unkU\]

This unique decomposition is at the heart of the reduction of the cast
operator given in~\cref{fig:CCIC-reduction}, and it can be described informally
as taking the path of maximal length between two related types.\footnote{This
  decomposition is already present in~\cite{newAhmed:icfp2018} and
  to be contrasted with the AGT approach~\cite{garciaAl:popl2016}, which tends 
  to pair a value with the most precise witness of its type, \ie~canonical path of minimal length.}
Such a derivation of precision $A \sqsubseteq B$ gives rise through decoding to
\eppairs{} $\ElRel\,(A {\sqsubseteq}B) : \El\,A \distr{}\El\,B$, with underlying relation
noted ${\precision{A}{B}} : \El\,A \to \El\,B \to \square$.
This decoding function $\ElRel$ is described on generators of $\sqsubseteq$ at
the bottom of~\cref{fig:IRUniverse}.
$\ElRel(\errU\text{-}{\sqsubseteq})$ states that the unique value $\unitK$ of
$\El\,\errU = \unit$ is smaller than any other value.
The diagonal cases
$\ElRel(\natU\text{-}{\sqsubseteq})$ and $\ElRel(\uU\text{-}\sqsubseteq)$ reuse
the order specified on the carrier.
The \eppair $\ElRel(\unkU\text{-}{\sqsubseteq})$ between two unknown types
$\unkMon[i] \distr{} \unkMon[j]$ at potentially distinct universe levels $i \leq
j$ stipulate that $\err_{\unkU[i]}$ and $\?_{\unkU[j]}$ are respectively smaller
and greater than any other value, and that the comparison between two injected
terms with same head is induced by their second component.
Note that these rules are redundant since $\unkMon$ is obtained through a quotient.
Functions $f,f'$ are related by $\ElRel(\PiU\text{-}{\sqsubseteq})$ when they map
related elements $a \precision{A}{A'} a'$ to related elements $f\,a \precision{B\,a}{B'\,a'}f'\,a'$.
Finally, $\ElRel(\C\text{-}{\sqsubseteq})$ embeds a type
$A$ into $\unkMon$ through its $\hd$.
It is interesting to observe what happens in \CCICT, 
 where
$\castOfPi{\sortOfPi{i}{j}} \neq \max(i,j)$, for instance on the previous example:
\[\natU{} {\to} \natU{} \quad \not\sqsubseteq \quad \errU =
  \stalkCode{0}{\Pi} \quad \sqsubseteq \quad \unkU \]
So $\natU{} {\to} \natU{}$ is not lower than $\unkU$ in that setting.
One crucial point of the monotone model is the mutual definition of codes $\U_i$
together with the precision relation, particularly salient on codes for
$\Pi$-types: in $\PiU\,A\,B$, $B : \El\,A \to \U_i$ is a monotone function with
respect to the order on $\El\,A$ and the precision on $\U_i$.
This intertwining happens because the order is required to be reflexive, a fact
observed previously by~\citet{AtkeyGJ14} in the similar setting of reflexive
graphs.
Indeed, a dependent function $ f : \Pi (a : \El\,A).\,  \El\,(B\,a)$ is related to
itself $f \precision{\PiU{}\,A\,B}{\PiU{}\,A\,B} f$ if and only if $f$ is
\emph{monotone}.
\begin{theorem}[Properties of the universe hierarchy]\text{}
  \label{thm:universe-hierarchy}
  \begin{enumerate}
  \item $\sqsubseteq$ is reflexive, transitive, antisymmetric and irrelevant so
    that $(\U_i, \sqsubseteq)$ is a poset.
  \item $\U_i$ has a bottom element $\errU[i]$ and a top element $\unkU[i]$; in
    particular, $A \sqsubseteq \unkU[i]$ for any $A : \U_i$.
  \item $\El : \U_i \to \square$ is a family of posets over $\U_i$ with underlying relation
    $\precision{A}{B}$ whenever $A \sqsubseteq B$.
  \item $\U_i$ and $\El\;A$ for any $A : \U_i$ verify UIP\footnote{Uniqueness of
    Identity Proofs; in HoTT parlance, $\U_i$ and $\El\,A$ are hSets.}: the equality on these
    types is irrelevant.
  \end{enumerate}
\end{theorem}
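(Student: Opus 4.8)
The plan is to prove the four statements of \cref{thm:universe-hierarchy} essentially simultaneously, by well-founded induction on the universe level $i$, exploiting the stratification that makes the inductive-recursive definition of $\U_i$ and $\El$ well-founded (which holds precisely because $\castOfPi{\sortOfPi{i}{j}} < \sortOfPi{i}{j}$ in \CCICs). At a fixed level $i$, I would then proceed by induction on the inductive structure of the codes in $\U_i$ and, where needed, on the derivations of $\sqsubseteq$. The key observation that keeps everything tractable is the one already highlighted after \cref{fig:IRUniverse}: a derivation of $A \sqsubseteq \unkU[j]$ via the rule $\C$-$\sqsubseteq$ canonically decomposes $A$ through iterated germs directed by head constructors, and since $\H$ has decidable equality, these decompositions are unique. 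So for the bulk of the combinatorics I would first establish a normal-form/inversion lemma for precision derivations: every proof of $A \sqsubseteq B$ is either a diagonal rule applied to subderivations, or an $\errU$-$\sqsubseteq$ at the bottom, or a uniquely-determined chain of germ steps into some $\unkU[j]$.

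For part~(1), reflexivity is proved by induction on codes, with the subtlety at $\PiU\,A\,B$ that reflexivity of the order on $\U_i$ at the codomain is exactly what forces $B$ to be monotone — this is the mutual-definition knot noted in the text, following \citet{AtkeyGJ14}, so I would be careful to phrase the induction hypothesis so that "$A \sqsubseteq A$" already carries "$\El\,A$ is a poset and $B$ respects it". Transitivity is the most delicate: given $A \sqsubseteq B$ and $B \sqsubseteq C$, I would case-split using the inversion lemma; the genuinely interesting cases are when the middle type $B$ is $\unkU[j]$ (use the germ decomposition of the first derivation and the bottom/top rules for the second, invoking the functoriality of germ \eppairs), and the diagonal-diagonal cases (structural recursion, using the inductive hypothesis at strictly smaller codes, and at $\PiU$ the heterogeneous transitivity of posetal families spelled out in \cref{sec:poset-model-dtt}). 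Antisymmetry follows similarly: if $A \sqsubseteq B$ and $B \sqsubseteq A$ then a head-constructor analysis shows neither derivation can be a nontrivial germ step (a germ step strictly "lowers" the head, and the decidable-equality argument rules out cycles), so both must be diagonal or $\errU$, and then one recurses; irrelevance (proof-irrelevance of $\sqsubseteq$) is by induction on derivations using decidable equality of heads to rule out ambiguity between the $\C$-$\sqsubseteq$ rule and the diagonal rules, exactly as the monotone-unknown-type discussion already does for $\unkMon[i]$.

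For part~(2), $\errU[i] \sqsubseteq A$ is literally the $\errU$-$\sqsubseteq$ rule, and $A \sqsubseteq \unkU[i]$ is proved by induction on $A$: each code has a head $h = \hd{A} \in \C_i$, one checks $A \sqsubseteq \stalkCode{i}{h}$ (structural, using the inductive hypothesis for the arguments/domain/codomain and the fact that in \CCICs the relevant level inequalities hold so germs are not $\errU$), and then applies $\C$-$\sqsubseteq$; the $\unkU[i]$ case is the reflexivity instance. Part~(3) — that $\El$ is a posetal family with underlying relation $\precision{A}{B}$ — is obtained by reading off the $\ElRel$ clauses at the bottom of \cref{fig:IRUniverse} and checking, by the same layered induction, that each $\ElRel(A \sqsubseteq B)$ is an \eppair in the sense of \cref{def:ep-pair} (bi-representability by the up/down casts, and $\downcast \circ \upcast = \id$), together with the functoriality equations $B_{a,a} = {\sqsubseteq^{B\,a}}$ and $B_{a,a''} = B_{a',a''}\circ B_{a,a'}$; here the $\unkU$ cases reuse the concrete \eppair $d_h$ built in \cref{sec:realizing-unknown-type} and the $\PiU$ case reuses the induced \eppair $d_\Pi$ from \cref{sec:poset-model-dtt}. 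Part~(4), UIP for $\U_i$ and each $\El\,A$, follows because all these types are built from codes with decidable equality (head constructors, natural numbers, and — via the stratification — lower universes already known to be hSets by the induction hypothesis), so Hedberg's theorem applies; the quotient used for $\unkMon[i]$ is set-truncated by construction, and the decidability of the relation on representatives (already needed for irrelevance) is what makes the quotient an hSet.

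The main obstacle I expect is transitivity through $\unkU[j]$ combined with the universe-level bookkeeping: one must show that composing a germ-decomposition into $\unkU[j]$ with an \eppair $\unkU[j] \distr \unkU[k]$ and then projecting again yields exactly the germ-decomposition predicted by the "maximal-length path" description, and that the relevant germs $\stalkCode{\cdot}{h}$ are never the error code — this is where the \CCICs-specific equation $\castOfPi{\sortOfPi{i}{j}} = \max(i,j)$ is used, and where a careless formulation would accidentally also "prove" the statement for \CCICT, which is false. So I would isolate the needed level arithmetic as a preliminary lemma and keep it explicitly separate from the structural induction, mirroring how the \Agda{} development in \texttt{UnivPartial.agda} treats the hierarchy.
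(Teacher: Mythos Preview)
Your proposal is correct and follows essentially the same approach as the paper: mutual strong induction on universe levels, then induction on codes and precision derivations, with the $\PiU$ reflexivity case tied to monotonicity and the \CCICs level arithmetic used to ensure germs are non-erroneous. The only point the paper makes more explicit than you do is that transitivity is carried out by induction on the \emph{lexicographic} pair of derivations $(A \sqsubseteq B, B \sqsubseteq C)$, which is what justifies the recursive calls in the $\C$-$\sqsubseteq$ cases; also note that the level arithmetic you flag is needed primarily for part~(2), not for transitivity itself, since in the transitivity proof the premise $A \sqsubseteq \stalkCode{j}{h}$ is already given.
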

\begin{proof}[Proof sketch]
  All these properties are proved mutually, first by strong induction on the
  universe levels, then by induction on the codes of the universe or the
  derivation of precision. Here, we only sketch the proof of point $(1)$ and refer to the Agda
  development (\cf \texttt{UnivPartial.agda}) for detailed formal proofs.

  For reflexivity, all cases are immediate but for $\PiU\,A\,B$: the induction
  hypothesis provides $A \sqsubseteq A$ and by point $(3)$ $\ElRel{}(A
  \sqsubseteq A) = {\sqsubseteq^A}$ so we can apply the monotonicity of $B$.
  For anti-symmetry, assuming $A \sqsubseteq B$ and $B \sqsubseteq A$, we prove
  by induction on the derivation of $A \sqsubseteq B$ and case analysis on the other
  derivation that $A \equiv B$. Note that we never need to consider the rule
  $\C$-$\sqsubseteq$. The case $\Pi$-$\sqsubseteq$ holds by induction hypothesis
  and because the relation ${}_A\sqsubseteq_A$ is reflexive. All the other cases
  follow from antisymmetry of the order on universe levels.
  For transitivity, assuming $AB : A \sqsubseteq B$ and $BC : B
  \sqsubseteq C$, we prove by induction on the (lexicographic) pair $(AB,BC)$
  that $A \sqsubseteq C$:
  \begin{description}
  \item[Case $AB = \unkU\dash{\sqsubseteq}$,] necessarily $BC =
    \unkU\dash{\sqsubseteq}$, we conclude by $\unkU\dash{\sqsubseteq}$.
  \item[Case $AB = \C\dash{\sqsubseteq}$,] necessarily $BC =
    \unkU\dash{\sqsubseteq} , \unkU[j] \sqsubseteq \unkU[j']$, we can
    thus apply the inductive hypothesis to
    ${A \sqsubseteq \stalkCode{j}{(\hd{A})}}$ and
    ${\stalkCode{j}{(\hd{A})} \sqsubseteq \stalkCode{j}{(\hd{A})}}$
      in order to conclude with $\C\dash{\sqsubseteq}$.
  \item[Case $AB = \errU\dash{\sqsubseteq}$,] we conclude immediately by $\errU\dash{\sqsubseteq}$.
  \item[Case $AB = \natU\dash{\sqsubseteq}, BC = \natU\dash{\sqsubseteq}$] we
    conclude with $\natU\dash{\sqsubseteq}$.
  \item[Case $AB = \uU\dash{\sqsubseteq}, BC = \uU\dash{\sqsubseteq}$] immediate
    by $\uU\dash{\sqsubseteq}$.
  \item[Case $AB = \PiU\dash{\sqsubseteq}, BC = \PiU\dash{\sqsubseteq}$]
    by hypothesis we have
    \begin{mathpar}
      A = \PiU\,A^\pidom\,A^\picod \and
      B = \PiU\,B^\pidom\,B^\picod \and
      C = \PiU\,C^\pidom\,C^\picod \and
      A^\pidom \sqsubseteq B^\pidom \and
      B^\pidom \sqsubseteq C^\pidom \and
      AB^\picod : \forall a\,b, a \precision{A^\pidom}{B^\pidom} b \to A^\picod\,a \sqsubseteq B^\picod\,b \and
      BC^\picod : \forall b\,c, b \precision{B^\pidom}{C^\pidom} c \to B^\picod\,b \sqsubseteq C^\picod\,c
    \end{mathpar}
    By induction hypothesis applied to $A^\pidom \sqsubseteq B^\pidom$ and
    $B^\pidom \sqsubseteq C^\pidom$, the domains of the dependent product are related $A^\pidom \sqsubseteq C^\pidom$.
    For the codomains, we need to show that for any $a : A^\pidom, c : C^\pidom$
    such that $a \precision{A^\pidom}{C^\pidom} c$ we have
    $A^\picod\,a \sqsubseteq C^\picod\,c$. By induction hypothesis, it is enough to prove that
    $A^\picod\,a \sqsubseteq B^\picod\,(\upcast_{A^\pidom\sqsubseteq B^\pidom}a)$
    and $B^\picod\,(\upcast_{A^\pidom\sqsubseteq B^\pidom}a) \sqsubseteq
    C^\picod\,c$.
    The former follows from $AB^\picod$ applied to
    $ a \precision{A^\pidom}{B^\pidom} \upcast_{A^\pidom\sqsubseteq B^\pidom} a
    \Leftrightarrow a \sqsubseteq^{A^\pidom} { \downcast{} }{ \upcast{} } a \Leftrightarrow a \sqsubseteq^{A^\pidom} a$ which holds by reflexivity, and the latter
    follows from $BC^\picod$ applied to $\upcast_{A^\pidom\sqsubseteq B^\pidom}
    a \precision{B^\pidom}{C^\pidom} c \Leftrightarrow a \precision{A^\pidom}{C^\pidom} c$.
  \item[Otherwise,]
    we are left with the cases where $AB = \natU\dash{\sqsubseteq},
    \PiU\dash{\sqsubseteq}$ or $\uU\dash{\sqsubseteq}$ and
    $BC = \C\dash{\sqsubseteq}$, we apply the inductive hypothesis to
    $AB$ and $B \sqsubseteq \stalkCode{j}{(\hd{B})}$ in order to conclude
    with $\C\dash{\sqsubseteq}$.
  \end{description}
  Finally, we show proof-irrelevance, \ie~that for any $A,B$ there is at most one
  derivation of $A \sqsubseteq B$. Since the conclusions of the rules do not
  overlap, we only have to prove that the premises of each rules are uniquely
  determined by the conclusion. This is immediate for $\PiU\dash{\sqsubseteq}$.
  For $\C\dash{\sqsubseteq}$, $h = \hd{A} \in \C_i$ with $i = \pred\,j$ are uniquely
  determined by the conclusion so it holds too.
\end{proof}

\subsection{Monotone Model of \CCICs}
\label{sec:monotone-model}

The monotone translation $\monTm{-}$ presented in \cref{fig:monotone-translation} brings together the monotone
interpretation of inductive types (e.g. $\liftNat$), dependent products, the unknown type
$\unkMon$ as well as the universe hierarchy.
Following the approach of~\citet{newAhmed:icfp2018}, casts are derived out of the
canonical decomposition through the unknown type using the property $(2)$
from~\cref{thm:universe-hierarchy}:
\[\monTm{\ascdom{t}{A}{B}} \quad:= \quad\downcast{}_{\raisebox{-2pt}{\tiny$\ElRel{\monTm{B}{\sqsubseteq}\unkU}$}}\upcast_{\raisebox{-2pt}{\tiny$\ElRel{\monTm{A}{\sqsubseteq}\unkU}$}}\monTm{t} \]
Note that this definition formally depends on a chosen universe level $j$ for
$\unkU : \U_{j}$, but the resulting operation is independent of this choice thanks to the
section-retraction properties of \eppairs{}.
The difficult part of the model, the monotonicity of $\cas$, thus holds by design.
However, the translation of some terms do not reduce as in \CCIC{}: $\cas$ can
get stuck on type variables eagerly, \eg on a \nameref{redrule:down-err}
step.\footnote{An analysis of the correspondence between the discrete and
  monotone models can be found in \cref{sec:logical-relation}.}
These reduction rules still hold propositionally though so that we have at least
a model in an extensional variant of the target theory (\ie~in which two terms
are \emph{definitionally} equal whenever they are \emph{propositionally} so).
\begin{lemma}
  If $\tcol{\Gamma} \caty \tcol{t} \redCCIC{} \tcol{u}$ then there exists a \CICIRQ
  term $e$ such that
  $\monTm{\Gamma} \irty e : \monTm{t} = \monTm{u}$.
\end{lemma}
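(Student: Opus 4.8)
The plan is to prove this lemma by induction on the derivation of the reduction step $\tcol{t} \redCCIC \tcol{u}$, exhibiting in each case an explicit term $e$ of the target theory \CICIRQ witnessing the propositional equality $\monTm{t} = \monTm{u}$. The crucial difference with the discrete model (\cref{thm:discrete-model}, part (1)), where conversion was preserved \emph{definitionally}, is that here the cast operator is not defined directly by a recursive function on codes but obtained through the canonical decomposition $\monTm{\ascdom{t}{A}{B}} := \downcast{}_{\ElRel(\monTm{B}\sqsubseteq\unkU)}\upcast_{\ElRel(\monTm{A}\sqsubseteq\unkU)}\monTm{t}$, so the reduction rules of \cref{fig:CCIC-reduction} only hold up to propositional equality. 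The congruence closure is handled routinely: if $e : \monTm{t} = \monTm{u}$ then applying the appropriate congruence (ap) of the enclosing term former yields the witness for the context; this is where quotients (and hence function extensionality, available in \CICIRQ) are needed, since congruence under $\l$-abstraction and under the pointwise order on $\Pmon$ requires funext.

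First I would dispatch the rules inherited from \CIC ($\beta$ and $\iota$), which are preserved definitionally exactly as in the discrete case, so $e := \refl{}{}$. Next I would treat the propagation rules for $\tcol{\?}$ and $\tcol{\err}$ (\nameref{redrule:prod-unk}, \nameref{redrule:prod-err}, \nameref{redrule:match-unk}, \nameref{redrule:match-err}, \nameref{redrule:ind-unk}, \nameref{redrule:ind-err}, \nameref{redrule:down-unk}, \nameref{redrule:down-err}): here the realization of exceptions in \cref{fig:exceptions} mostly validates these definitionally on positive types, but for $\Pi$-types one needs funext to equate $\err_{\PiU\,A\,B}$ with $\l x.\err_{B\,x}$ after the decoding — again a propositional, not definitional, equality in the quotiented setting. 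The substantial cases are the cast reduction rules: \nameref{redrule:prod-prod}, \nameref{redrule:univ-univ}, \nameref{redrule:ind-ind}, \nameref{redrule:head-err}, \nameref{redrule:err-dom}, \nameref{redrule:err-codom}, \nameref{redrule:prod-germ}, \nameref{redrule:ind-germ}, \nameref{redrule:up-down}, \nameref{redrule:size-err}. For each I would unfold $\cas$ to its definition via $\downcast{}\circ\upcast{}$ through $\unkU$, then use the section-retraction laws of \eppairs{} (\cref{lem:adj} and the equation $\downcast_d\circ\upcast_d = \id$ from \cref{def:ep-pair}) together with the functoriality of posetal families to collapse the composite of up/downcasts on both sides to the same map. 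For instance, \nameref{redrule:up-down} becomes exactly the retraction equation $\downcast{}_{d}\upcast{}_{d} = \id$ applied at the germ $\stalkCIC{i}{h}$; \nameref{redrule:prod-prod} reduces to the definition of $\upcast{}_{d_\Pi}$, $\downcast{}_{d_\Pi}$ as pre/post-composition with the component casts, matched against the distributed function cast on the right; \nameref{redrule:head-err} and \nameref{redrule:size-err} follow from the fact that the decomposition $A \sqsubseteq \unkU$ goes through an intermediate germ whose head is incompatible with the target, so the downcast lands in $\err$ by the definition of $\downcast{}_{d_h}$ in \cref{sec:realizing-unknown-type}.

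The main obstacle I expect is the \nameref{redrule:prod-germ} / \nameref{redrule:ind-germ} / \nameref{redrule:up-down} triangle of rules, whose correctness hinges precisely on the \emph{unique} canonical decomposition of a precision derivation $A \sqsubseteq \unkU$ as an iterated sequence of germs (the ``path of maximal length'' described after \cref{fig:IRUniverse}), combined with proof-irrelevance of $\sqsubseteq$ from \cref{thm:universe-hierarchy}(1). One must check that the cast computed by \CCIC{}'s operational semantics — which decomposes an upcast-to-$\tcol{\?}$ into an upcast-to-germ followed by an atomic upcast-from-germ — matches the cast computed in the model, which factors through the same decomposition by construction. Concretely this means: given a germ $\stalkCIC{i}{h}$, the derivation $\stalkCIC{i}{h} \sqsubseteq \unkU[i]$ obtained via $\C$-$\sqsubseteq$ is the one the model uses, and the side conditions on the operational rules (e.g. $\tcol{\P x : A.B} \neq \tcol{\stalkCIC{j}{\Pi}}$ for $j \geq i$) exactly ensure we are not already at a germ, so the two decompositions agree step by step. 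Since the whole argument is mutually recursive with \cref{thm:universe-hierarchy}, the induction should be organized on the universe level first, then on the reduction derivation; with that structure in place each equality witness $e$ is assembled from \texttt{transport}, \texttt{ap}, funext, and the \eppair{} laws, and no step requires genuinely new ideas beyond careful bookkeeping — which is why the statement is given without a detailed proof in the excerpt and deferred to the \Agda{} development.
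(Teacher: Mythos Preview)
The paper states this lemma without proof, so there is no argument in the text to compare against directly. Your approach --- induction on the reduction derivation, dispatching the \CIC rules definitionally and the cast rules via the section-retraction and adjunction laws of \eppairs{} together with proof-irrelevance of $\sqsubseteq$ from \cref{thm:universe-hierarchy} --- is the natural one and matches what the surrounding narrative sets up: the sentence immediately preceding the lemma explains that the monotone cast ``can get stuck on type variables eagerly'' precisely because it is defined as $\downcast\circ\upcast$ through $\unkU$ rather than by direct recursion on codes, which is exactly why the reduction rules only hold propositionally. Your identification of the canonical germ decomposition and proof-irrelevance as the hinge for the \nameref{redrule:prod-germ}/\nameref{redrule:ind-germ}/\nameref{redrule:up-down} cluster is also the right diagnosis.

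One correction: the argument is \emph{not} mutually recursive with \cref{thm:universe-hierarchy}. That theorem is established first, by induction on universe levels and codes, and delivers the poset structure on $\U_i$, the \eppairs{} $\ElRel(A\sqsubseteq B)$, and their proof-irrelevance as finished facts. The present lemma then consumes those facts in a plain induction on the reduction step; there is no feedback loop. So you do not need to ``organize the induction on the universe level first'': the only induction is on the reduction derivation, and all the universe-level reasoning is already packaged inside the \eppair{} laws you invoke (retraction for \nameref{redrule:up-down}, the explicit formulas for $\upcast_{d_\Pi},\downcast_{d_\Pi}$ from \cref{sec:poset-model-dtt} for \nameref{redrule:prod-prod}, the failure branch of $\downcast_{d_h}$ from \cref{sec:realizing-unknown-type} for \nameref{redrule:head-err}, and the independence-of-level remark following the definition of $\monTm{\ascdom{t}{A}{B}}$ for cases where two casts chain through $\unkU$ at possibly different levels).
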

We can further enhance this result using the fact that we assume
functional extensionality in our target and can prove that the translation of
all our types satisfy \UIP{}.
Under these assumptions, the conservativity results
of~\citet{Hofmann95} and \citet{WinterhalterST19} apply, so we can recover a translation
targeting \CICIRQ.
\begin{theorem}[Monotone model]
  \label{thm:monotone-model}
 The translation $\monTm{-}$ of \cref{fig:monotone-translation} extends to a
 model of \CCIC into \CIC extended with induction-recursion and functional
 extensionality: if $\catyJ{\Gamma}{t}{A}$ then $\monTy{\Gamma}
 \irty \monTm{t} : \monTy{A}$.
\end{theorem}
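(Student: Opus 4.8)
The plan is to prove Theorem~\ref{thm:monotone-model} by mutual induction on the typing derivation $\catyJ{\Gamma}{t}{A}$, exactly mirroring the structure of the proof of the discrete model (Theorem~\ref{thm:discrete-model}), but now carrying the extra monotonicity data. The key observation is that all the hard work has already been front-loaded into the construction of the monotone universe hierarchy (Theorem~\ref{thm:universe-hierarchy}) and the monotone interpretations of the type formers in \cref{sec:poset-model-dtt,sec:realizing-unknown-type,sec:monotone-universe}: each type former of \CCIC{} has been equipped with a poset structure, an induced \eppair{} action on precision-related arguments, and the decoding function $\El$ has been shown to be a family of posets over $\U_i$. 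So the induction is largely a matter of checking that the translation $\monTm{-}$ is well-typed clause by clause, using that exceptions $\?_A,\err_A$, the cast operator $\cas$, and the inductive eliminators are all monotone.

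First I would record the ``toolkit'' lemmas needed at each node: (i) $\irty \?\,:\,\P(A:\U_i).\,\El\,A$ and $\irty \err\,:\,\P(A:\U_i).\,\El\,A$, which hold because on each code the exceptions are defined pointwise (for $\PiU$) or as the added constructors / unique element (for inductives, $\unkU$, $\errU$); (ii) $\cas$ is well-typed, $\irty \cas\,:\,\P(A:\U_i)(B:\U_j).\,\El\,A\to\El\,B$, which is the content of the definition $\monTm{\ascdom{t}{A}{B}} := \downcast{}_{\ElRel(\monTm{B}\sqsubseteq\unkU)}\upcast_{\ElRel(\monTm{A}\sqsubseteq\unkU)}\monTm{t}$ together with property $(2)$ of \cref{thm:universe-hierarchy} (every code is below $\unkU[j]$, giving canonical \eppairs{} by decoding); (iii) monotonicity of each clause, which for $\PiU$ is the statement that $\Pmon$ is the poset of monotone functions, for inductives follows from the monotonicity scheme on constructors, and for $\cas$ ``holds by design'' since it is assembled from \eppairs{} and $\El$ is a posetal family. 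Then the induction cases are routine: \textsc{Var}, \textsc{Univ}, \textsc{Prod} (using the $\PiU$-$\sqsubseteq$ rule and the fact that $B:\El\,A\to\U_j$ must be monotone, which is forced by the term-precision condition $\ElRel(\PiU\text{-}\sqsubseteq)$), \textsc{Abs}, \textsc{App}, \textsc{Ind}, \textsc{Cons}, \textsc{Fix}, \textsc{Unk}, \textsc{Err}, \textsc{Cast}.

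The main subtlety — and this is where I expect the real obstacle — is the conversion rule. Unlike in the discrete model, where reduction is preserved on the nose ($\bareTy{\Gamma}\irty\bareTm{t}\redIRn\bareTm{u}$), here $\cas$ can get stuck eagerly on type variables (e.g.\ on a \nameref{redrule:down-err} step), so the reduction rules of \CCIC{} hold only \emph{propositionally} in the target. The plan is therefore to proceed in two steps, exactly as the excerpt sketches: first establish the weaker Lemma stating that if $\tcol{\Gamma}\caty\tcol{t}\redCCIC\tcol{u}$ then there is a \CICIRQ{} term $e:\monTm{t}=\monTm{u}$ (a case analysis over the reduction rules of \cref{fig:CCIC-reduction}, checking each against the defining equations of $\cas$ and the exception clauses, with the \eppair{} section-retraction identities doing the bookkeeping); then, since the target theory has functional extensionality and all translated types satisfy \UIP{} by point $(4)$ of \cref{thm:universe-hierarchy}, invoke the conservativity theorems of \citet{Hofmann95} and \citet{WinterhalterST19} to transport the extensional model back into \CICIRQ{}, upgrading the propositional equalities to definitional ones for the purpose of interpreting \CCIC{}'s conversion. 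Thus the only genuinely new work beyond the discrete case is the propositional-reduction lemma and the correct invocation of these conservativity results; everything else is a transcription of the discrete-model induction with monotonicity witnesses threaded through, and the heavy lifting lives in \cref{thm:universe-hierarchy} and the type-former constructions.
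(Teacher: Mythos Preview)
Your proposal is correct and follows essentially the same approach as the paper: the paper itself does not give a standalone proof of this theorem but rather assembles it from exactly the pieces you identify---the preceding Lemma that reductions are preserved only propositionally, point~(4) of \cref{thm:universe-hierarchy} ensuring \UIP{} on all translated types, and the conservativity results of Hofmann and Winterhalter--Sozeau--Tabareau to recover an intensional model from the extensional one. Your elaboration of the induction structure (mirroring \cref{thm:discrete-model} with monotonicity witnesses threaded through, and the ``toolkit'' lemmas for $\?$, $\err$, and $\cas$) is a faithful and more explicit rendering of what the paper leaves implicit.
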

It is unlikely that the principle that we demand in the target calculus $\CICIRQ$ are optimal.
We conjecture that a variation of the translation described here could be
developed in \CIC{} extended only with induction-induction to describe the
intensional content of the codes $\U$ in the universe, and strict
propositions~\cite{GilbertCST19} following the construction of the setoid models of
type theory~\cite{Altenkirch99,AltenkirchBKT19,AltenkirchBKSS21}.
\begin{figure}
\begin{small}
  \begin{center}
    \textbf{Monotone translation of contexts}
  \end{center}
  \[
    \vspace{0.5em}
    \begin{array}[t]{lcl}
      \monTy{\cdot} & := & \cdot\\
      \monTyRel{\cdot} & := & \cdot\\
    \end{array}\qquad%
    \begin{array}[t]{lcl}
      \monTy{\Gamma, x : A} & := & \monTy{\Gamma}, x : \monTy{A}\\
      \monTyRel{\Gamma, x : A} & := & \monTyRel{\Gamma}
                                      , x_0 : {\monTy{A}}_{0}
                                      , x_1 : {\monTy{A}}_{1}
                                      , x_\varepsilon : \monTyRel{A}\,x_0\,x_1
    \end{array}
  \]

  \begin{center}
    \textbf{Monotone translation on terms and types}
  \end{center}
  \vspace{0.5em}
  \[
    \begin{array}[t]{lcl}
      \monTy{A} & := & \El\,\monTm{A} : \poset \\
      \\
      {\monTm{ x }} & := & x\\
      {\monTm{\square{}_i}} & := & \uU_i \\[0.6em]
      {\monTm{\P x : A . B}} & := & \PiU~\monTm{ A }~\monTmRel{\l x : A . B} \\
      {\monTm{t~u}} & := & \monTm{t}~\monTm{u}\\
      {\monTm{\l x : A . t}} & := & \l x : \monTy{ A } . \monTm{t} \\
      {\monTm{\nat}} & := & \natU \\
      {\monTm{\?_A}} & := & \?_{\monTm{A}}\\
      {\monTm{\err_A}} &:= & \err_{\monTm{A}} \\
      \monTm{\ascdom{t}{A}{B}} & :=&
                                     \downcast{}_{\raisebox{-2pt}{\tiny$\ElRel{\monTm{B}{\sqsubseteq}\unkU}$}}\upcast_{\raisebox{-2pt}{\tiny$\ElRel{\monTm{A}{\sqsubseteq}\unkU}$}}\monTm{t}
    \end{array}%
    \begin{array}[t]{lcl}
      \monTyRel{A} &:=& \ElRel\,\monTmRel{A} : \monTy{A} \distr \monTy{A}\\
      \\
      {\monTmRel{ x }} & := & x_\varepsilon\\
      {\monTmRel{\square{}_i}} & := & \uU\dash{\sqsubseteq}_{i} \\[0.6em]
      {\monTmRel{\P x : A . B}} & := & \PiU\dash{\sqsubseteq}~\monTmRel{ A
                                       }~\monTmRel{\l x : A. B}\\
      {\monTmRel{t~u}} & := & \monTmRel{t}~{\monTm{u}}_0~{\monTm{u}}_1~\monTmRel{u}\\
      {\monTmRel{\l x : A .~t}} &:= &\l (x_0\,x_1 : \monTy{A})(x_\varepsilon :
                                      \monTyRel{A}\,x_0\,x_1).~\monTmRel{t} \\
      {\monTmRel{\nat}} & := & \natU\dash{\sqsubseteq} \\
      {\monTmRel{\?_A}} & := & \texttt{refl}~\monTy{A}~\?_{\monTm{A}}\\
      {\monTmRel{\err_A}} &:= & \texttt{refl}~\monTy{A}~\err_{\monTm{A}} \\
      {\monTmRel{ \ascdom{t}{A}{B} }} &:= & \downcast{}_{\raisebox{-2pt}{\tiny$\ElRel{\monTm{B}{\sqsubseteq}\unkU}$}}\dash\texttt{mon}~\upcast_{\raisebox{-2pt}{\tiny$\ElRel{\monTm{A}{\sqsubseteq}\unkU}$}}\!\!\dash\texttt{mon}~\monTmRel{t}
    \end{array}
  \]
  \vspace{0.5em}

  ${ \monTm{-} }_\alpha$ and ${ \monTy{-} }_\alpha$ where $\alpha \in \{0,1\}$
  stand for the variable-renaming
  counterparts of $\monTm{-}$ and $\monTy{-}$.
  \caption{Translation of the monotone model}
  \label{fig:monotone-translation}
\end{small}
\end{figure}

\subsection{Back to Graduality}
\label{sec:grad-term}
The precision order equipping each types of the monotone model can be reflected back to
\CCIC, giving rise to the {\em propositional precision} judgment:
\begin{align}
  \label{def:propositional-precision}
  \tcol{\Gamma} \caty \tcol{t} \precision{\tcol{T}}{\tcol{U}} \tcol{u} \qquad := \qquad \exists e, \quad \monTyRel{\Gamma} \irty e : \monTm{t}
  \precision{\monTm{T}}{\monTm{U}} \monTm{u}.
\end{align}
By the properties of the monotone model~(\cref{thm:universe-hierarchy}), there is at most one witness
up to propositional equality in the target that this judgment holds.
This precision relation bears a similar relationship to the structural precision
$\capre$ as propositional equality with
definitional equality in \CIC{}.
On the one hand, propositional precision can be used to prove precision statements
inside the target type theory, for instance we can show by a straightforward
case analysis on $\tcol{b} : \tcol{\bool}$ that $\tcol{b} : \tcol{\bool} \caty \tcol{\texttt{if } b \texttt{ then } A \texttt{ else } A}
\precision{\tcol{\square}}{\tcol{\square}} \tcol{A}$, a judgment that does not hold for syntactic precision.
In particular, propositional precision is compatible with propositional
equality, and a fortiori it is invariant by conversion in \CCIC: if
$\tcol{t} \equiv \tcol{t'}$, $\tcol{u} \equiv \tcol{u'}$ and
$\tcol{\Gamma} \caty \tcol{t} \precision{\tcol{T}}{\tcol{U}} \tcol{u}$
then
$\tcol{\Gamma} \caty \tcol{t'} \precision{\tcol{T}}{\tcol{U}}
\tcol{u'}$.
On the other hand, propositional precision is not decidable, thus
not suited for typechecking, where structural precision has to be used instead.
\newcommand{\precBool}{\precision{\tcol{\bool}}{\tcol{\bool}}}

\begin{lemma}[Compatibility of structural and propositional precision]
  \label{thm:prop-prop-prec}\text{}
  \begin{enumerate}
  \item If $\catyJ{}{t}{T}$, $\catyJ{}{u}{U}$ and $\vdash \tcol{t} \capre \tcol{u}$ then $\caty \tcol{t} \precision{\tcol{T}}{\tcol{U}} \tcol{u}$.
  \item Conversely, if the target of the translation $\CICIRQ$ is logically
    consistent and
    $\caty \tcol{v_1} \precision{\tcol{\bool}}{\tcol{\bool}}
    \tcol{v_2}$ for normal forms $\tcol{v_1}, \tcol{v_2}$, then
    $\vdash \tcol{v_1} \capre \tcol{v_2}$.
  \end{enumerate}
\end{lemma}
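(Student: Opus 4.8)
To prove part \emph{(1)}, I would first generalise the statement from the empty context to arbitrary ones and proceed by mutual induction on the derivations of structural precision $\tcol{\GG} \vdash \tcol{t} \capre \tcol{t'}$ and definitional precision $\tcol{\GG} \vdash \tcol{t} \cdpre \tcol{t'}$, showing that whenever $\tcol{\fs{\GG}} \caty \tcol{t} \inferty \tcol{T}$ and $\tcol{\sn{\GG}} \caty \tcol{t'} \inferty \tcol{T'}$ there is a \CICIRQ{} term inhabiting $\monTm{t} \precision{\monTm{T}}{\monTm{T'}} \monTm{t'}$ in the heterogeneous relational context $\monTyRel{\GG}$ obtained by pairing the two halves of $\tcol{\GG}$ with a precision witness per variable (together with the type-level instance $\monTm{T} \sqsubseteq \monTm{T'}$, which is the same statement at the level of types). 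For the diagonal rules of \cref{fig:apre-ccic} (\nameref{infrule:capre-prod}, \nameref{infrule:capre-app}, \nameref{infrule:capre-fix}, \dots) I would feed the induction hypotheses into the matching clause of $\precision{}{}$ in \cref{fig:IRUniverse}, exactly as the relational translation $\monTmRel{-}$ of \cref{fig:monotone-translation} does for parametricity; for \nameref{infrule:capre-unk} and \nameref{infrule:capre-unk-univ} I would use that $\tcol{\?}$ translates to the top element of its type (\cref{thm:universe-hierarchy}(2)) plus the rule $\unkU\dash{\sqsubseteq}$ for the cumulativity bit, and dually for \nameref{infrule:capre-err} and \nameref{infrule:capre-err-lambda} using the bottom element.

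The interesting cases are the cast-absorbing rules \nameref{infrule:capre-castr} and \nameref{infrule:capre-castl}, where the type side-conditions are precisely what makes $\monTm{-}$ turn the absorbed cast into a pre- or post-composition with embedding-projection pairs whose (de)composition is controlled by the functoriality of posetal families and its universe-level analogue (\cref{thm:universe-hierarchy}). Concretely, in \nameref{infrule:capre-castr} one has $\tcol{T} \cdpre \tcol{A'}$ and $\tcol{T} \cdpre \tcol{B'}$, hence ep-pairs $\monTy{T} \distr{} \monTy{A'}$, $\monTy{T} \distr{} \monTy{B'}$ and $\monTy{T} \distr{} \unkU$; functoriality together with the retraction law $\downcast \circ \upcast = \id$ give that post-composing the cast $\monTy{A'} \to \monTy{B'}$ (itself $\downcast \circ \upcast$ through $\unkU$) with the downcast $\monTy{B'} \to \monTy{T}$ equals the direct downcast $\monTy{A'} \to \monTy{T}$, so the induction hypothesis $\monTm{t} \precision{\monTm{T}}{\monTm{A'}} \monTm{t'}$ yields $\monTm{t} \precision{\monTm{T}}{\monTm{B'}} \monTm{\cast{A'}{B'}{t'}}$ via \cref{lem:adj} and \cref{def:ep-pair}. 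In the mutual $\cdpre$ case, a reduction $\tcol{t} \redCCIC \tcol{s}$ is sent by the reduction-preservation lemma preceding \cref{thm:monotone-model} to a propositional equality $\monTm{t} = \monTm{s}$ in \CICIRQ{}; since propositional precision is invariant under propositional equality, a witness for $\tcol{s} \cdpre \tcol{t'}$ transports to one for $\tcol{t} \cdpre \tcol{t'}$ (and symmetrically on the right), while \cref{thm:monotone-model} and subject reduction keep every type-level premise well-typed at the expected type.

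For part \emph{(2)}, I would recall that, by progress and the canonical-forms analysis of \cref{fig:CCIC-canonical}, the only closed normal forms of type $\tcol{\bool}$ are $\tcol{\btrue}$, $\tcol{\bfalse}$, $\tcol{\?_{\bool}}$ and $\tcol{\err_{\bool}}$, and then split on the $16$ pairs $(\tcol{v_1},\tcol{v_2})$. For the nine pairs with $\tcol{v_1} = \tcol{\err_{\bool}}$, with $\tcol{v_2} = \tcol{\?_{\bool}}$, or with $\tcol{v_1} = \tcol{v_2}$, structural precision $\vdash \tcol{v_1} \capre \tcol{v_2}$ holds outright by \nameref{infrule:capre-err}, \nameref{infrule:capre-unk}, or the relevant diagonal rule (\nameref{infrule:capre-cons}). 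For each of the seven remaining pairs, $\monTm{v_1}$ and $\monTm{v_2}$ are not related by the order $\sqsubseteq^{\liftBool}$ on the extended booleans (two distinct non-exceptional constructors, or $\lifttop{\liftBool}$/$\liftbot{\liftBool}$ on the wrong side; see the scheme of \cref{fig:nat-numbers}), so a case analysis on a purported derivation shows that $(\monTm{v_1} \precision{\tcol{\bool}}{\tcol{\bool}} \monTm{v_2}) \to \bot$ is provable in \CICIRQ{}; if $\caty \tcol{v_1} \precision{\tcol{\bool}}{\tcol{\bool}} \tcol{v_2}$ held we would obtain an inhabitant of that empty type, hence a proof of $\bot$, contradicting the assumed logical consistency of \CICIRQ{}. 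Hence those seven cases are vacuous.

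The step I expect to be the main obstacle is the cast-absorption bookkeeping in part \emph{(1)}: one has to pin down exactly which ep-pair associativity and identity equations from \cref{thm:universe-hierarchy} and the posetal-family structure are invoked, and check that the heterogeneous precision relation threads correctly through the dependent premises of \nameref{infrule:capre-prod} and \nameref{infrule:capre-fix}, where the two compared contexts differ and the relational context $\monTyRel{\GG}$ must be unfolded carefully.
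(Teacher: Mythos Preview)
Your proposal is correct and follows essentially the same approach as the paper. For part~(1) the paper also generalises to open contexts and proceeds by induction on the structural-precision derivation, handling the diagonal rules by monotonicity of the term formers, the $\?$/$\err$ rules by top/bottom, and the cast rules via the \eppair{} structure; your choice to make the induction mutual with $\cdpre$ is a slightly more explicit packaging of what the paper leaves implicit (since the $\cdpre$ premises in \nameref{infrule:capre-castr}/\nameref{infrule:capre-castl} are discharged via the reduction-to-equality lemma of the monotone model). Your equational argument for \nameref{infrule:capre-castr}---collapsing $\downcast_{T\sqsubseteq B'}\circ\downcast_{B'\sqsubseteq\unkU}\circ\upcast_{A'\sqsubseteq\unkU}$ to $\downcast_{T\sqsubseteq A'}$ via functoriality and retraction, then invoking the adjunction---is equivalent to the paper's phrasing, which instead uses heterogeneous transitivity with \cref{lem:adj} and then the adjunction; both unpack the same \eppair{} laws from \cref{thm:universe-hierarchy}. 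For part~(2) your $16$-case split coincides exactly with the paper's.
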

\begin{proof}
  For the first statement, we strengthen the inductive hypothesis,
  proving by induction on the derivation of structural precision the stronger
  statement:
  \begin{center}
    If $\tcol{\GG} \vdash \tcol{t} \capre \tcol{u}$, $\tcol{\GG_1} \caty \tcol{t} : T$ and $\tcol{\GG_2} \caty \tcol{u} : \tcol{U}$
    then there exists a term $e$ such that $\monTy{\GG} \irty e : \monTm{t} \precision{\monTm{T}}{\monTm{U}} \monTm{u}$.
  \end{center}
  The cases for variables (\nameref{infrule:capre-var}) and universes (\nameref{infrule:capre-univ}) hold by reflexivity.
  The cases involving $\?$ (\nameref{infrule:capre-unk},
  \nameref{infrule:capre-unk-univ}) and $\err$ (\nameref{infrule:capre-err}, \nameref{infrule:capre-err-lambda}) amount to $\monTm{\?}$ and
  $\monTm{\err}$ being respectively interpreted as top and bottom elements at
  each type.
  For \nameref{infrule:capre-castr}, we have $\tcol{u} =
  \tcol{\cast{A'}{B'}{t'}}$, $\tcol{B'} = \tcol{U}$, and
  by induction hypothesis $\monTy{\GG} \vdash e : \monTm{t}
  \precision{\monTm{{T}}}{\monTm{A'}} \monTm{t'}$
  and $\monTy{\GG} \vdash \monTm{T} \sqsubseteq \monTm{B'}$.
  Let $j$ be a universe level such that $\monTm{A'} \sqsubseteq \unkU[j]$,
  $\monTm{B'} \sqsubseteq \unkU[j]$.
  By (heterogeneous) transitivity of precision applied to $e$ and a witness of
  $\monTy{\GG} \vdash \monTm{t'} \precision{\monTm{A'}}{\unkU[j]}
  \upcast_{\monTm{A'} \sqsubseteq \unkU^j} \monTm{t'}$ (\cref{lem:adj}), we obtain a proof $e'$ of
  $\monTy{\GG} \vdash e' : \monTm{t} \precision{\monTm{{T}}}{\monTm{B'}}
  \upcast_{\monTm{A'} \sqsubseteq \unkU^j} \monTm{t'}$ and finally, using the
adjunction property,
  a proof $e''$ of
  \[\monTy{\GG} \vdash e : \monTm{t}
  \precision{\monTm{{T}}}{\monTm{B'}}
  \downcast_{\monTm{B'} \sqsubseteq \unkU^j}
  \upcast_{\monTm{A'} \sqsubseteq \unkU^j} \monTm{t'} \equiv
  \monTm{\cast{A'}{B'}{t'}} \equiv \monTm{u}.\]
  The case \nameref{infrule:capre-castl} proceeds in an entirely symmetric fashion
  since we only use the adjunction laws.
  All the other cases, being congruence rules with respect to some term
  constructor, are consequences of the monotonicity of said term constructor with a
  direct application of the inductive hypothesis and inversion of the typing
  judgments.
  For the second statement, by progress (\cref{thm:ccic-psafe}), both
  $\tcol{v_1}$ and $\tcol{v_2}$ are canonical booleans, so we can proceed by case analysis on the canonical forms $\tcol{v_1}$ and $\tcol{v_2}$ that are either
  $\tcol{\btrue}, \tcol{\bfalse}, \tcol{\err_\bool}$ or $\tcol{\?_\bool}$, ruling out the impossible cases by inversion of the
  premise $\caty \tcol{v_1} \precBool \tcol{v_2}$ and logical consistency of $\irty$. Out of the $16$
   cases, we obtain that only the following $9$ cases are possible:
  $$
  \begin{array}{lrclrrclrrcl}
    \caty & \tcol{\err_\bool} & \hspace{-1em} \precBool \hspace{-1em}  & \tcol{\err_\bool}
    & \qquad \caty & \tcol{\err_\bool} &\hspace{-1em} \precBool \hspace{-1em} &  \tcol{\btrue}
    & \qquad \caty & \tcol{\err_\bool} &\hspace{-1em} \precBool \hspace{-1em}  & \tcol{\bfalse} \\
    \caty & \tcol{\err_\bool} & \hspace{-1em} \precBool \hspace{-1em}  & \tcol{\?_\bool}
    & \caty & \tcol{\btrue} &\hspace{-1em} \precBool \hspace{-1em}  & \tcol{\btrue}
    & \caty &  \tcol{\btrue} &\hspace{-1em} \precBool \hspace{-1em}  & \tcol{\?_\bool} \\
    \caty & \hspace{-1em} \tcol{\bfalse} &\hspace{-1em} \precBool \hspace{-1em}  & \tcol{\bfalse}
    & \caty & \hspace{-1em} \tcol{\bfalse} &\hspace{-1em} \precBool \hspace{-1em}   &\tcol{\?_\bool}
    &\caty & \tcol{\?_\bool} &\hspace{-1em} \precBool \hspace{-1em}   & \tcol{\?_\bool}
  \end{array}
  $$
  For each case, a corresponding rule exists for the structural precision,
  proving that $\vdash v_1 \capre v_2$.
\end{proof}

With a similar method, we show that 
\CCICs satisfies graduality, which is the key missing point of
\cref{sec:gcic} and the \emph{raison d'etre} of the monotone model.

\begin{theorem}[Graduality for \CCICs]
  \label{thm:graduality-gcics}
  For $\tcol{\Gamma }\caty \tcol{t} : \tcol{T}, \tcol{\Gamma} \caty \tcol{t'} : \tcol{T}$ and $\tcol{\Gamma} \caty \tcol{u} : \tcol{U}$, we have
  
  \begin{itemize}
  \item (DGG) If
    $\tcol{\Gamma} \caty \tcol{t} \precision{\tcol{T}}{\tcol{T}} \tcol{t'}$
    then $ \tcol{t} \obsRef \tcol{t'}$;
  \item (Ep-pairs)
    If $\tcol{\Gamma} \caty \tcol{T} \precision{\tcol{\square}}{\tcol{\square}} \tcol{U}$ then
    \[ \tcol{\Gamma} \caty \tcol{\cast{T}{U}{t}} \precision{\tcol{U}}{\tcol{U}} \tcol{u}
      \Leftrightarrow  \tcol{\Gamma} \caty \tcol{t} \precision{\tcol{T}}{\tcol{U}} \tcol{u}
      \Leftrightarrow  \tcol{\Gamma} \caty \tcol{t}
      \precision{\tcol{T}}{\tcol{T}} \tcol{\cast{U}{T} u}, \]
    Furthermore, $\tcol{\Gamma} \caty \tcol{{\cast{U}{T}{\cast{T}{U}{t}}} \equiprecise \tcol{t}}$.
  \end{itemize}

\end{theorem}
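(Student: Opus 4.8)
The plan is to leverage the monotone model of \CCICs from \cref{thm:monotone-model} and the machinery of embedding-projection pairs from \cref{def:ep-pair} and \cref{lem:adj}, essentially transporting the abstract properties of \eppairs{} back to \CCIC through propositional precision. I would first dispatch the two bullet points separately, then deduce equiprecision.

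\textbf{Ep-pairs.} Assume $\tcol{\Gamma} \caty \tcol{T} \precision{\tcol{\square}}{\tcol{\square}} \tcol{U}$, which by definition \eqref{def:propositional-precision} gives a witness $\monTmRel{\Gamma} \irty e : \monTm{T} \sqsubseteq \monTm{U}$ in the target, hence (by point $(3)$ of \cref{thm:universe-hierarchy}) an \eppair{} $d := \ElRel(\monTm{T} \sqsubseteq \monTm{U}) : \monTy{T} \distr \monTy{U}$. The key observation is that $\monTm{\cast{T}{U}{t}} \equiv \upcast_{d}\monTm{t}$ and $\monTm{\cast{U}{T}{u}} \equiv \downcast_{d}\monTm{u}$ up to propositional equality, using the section-retraction laws of \eppairs{} to bridge between the literal definition of $\monTm{\cast{-}{-}{-}}$ (which factors through $\unkU[j]$) and the canonical \eppair{} $d$. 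Then the chain of equivalences is exactly the bi-representability clause of \cref{def:ep-pair}: $\upcast_d\monTm{t} \sqsubseteq^{\monTy{U}} \monTm{u} \iff d(\monTm{t},\monTm{u}) \iff \monTm{t} \sqsubseteq^{\monTy{T}} \downcast_d\monTm{u}$, where the middle relation $d(\monTm{t},\monTm{u})$ unfolds to precisely $\monTm{t} \precision{\monTm{T}}{\monTm{U}} \monTm{u}$. Each $\iff$ in the statement is obtained by applying one half of this and reading back through \eqref{def:propositional-precision}; invariance of propositional precision under conversion (noted just before \cref{thm:prop-prop-prec}) lets us replace $\downcast_d\upcast_d\monTm{t}$ etc.\ by the syntactic casts.

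\textbf{DGG.} For the first bullet, suppose $\tcol{\Gamma} \caty \tcol{t} \precision{\tcol{T}}{\tcol{T}} \tcol{t'}$; we must show $\tcol{t} \obsRef \tcol{t'}$. I would first use \cref{thm:prop-prop-prec}$(1)$ in reverse---no, more carefully: propositional precision at type $\tcol{T}$ is weaker than structural precision, so we cannot directly invoke \cref{thm:dgg}. Instead I would argue directly in the model: for any boolean observation context $\tcol{\mathcal{C}}$, monotonicity of every term constructor (\cref{thm:monotone-model}) gives $\caty \tcol{\mathcal{C}[t]} \precBool \tcol{\mathcal{C}[t']}$, i.e.\ $\monTm{\mathcal{C}[t]} \sqsubseteq^{\liftBool} \monTm{\mathcal{C}[t']}$ in the target. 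By normalization of \CCICs (\cref{thm:ccic-pnorm}) and progress (\cref{thm:ccic-psafe}), $\tcol{\mathcal{C}[t']}$ reduces to a canonical boolean $\tcol{v'} \in \{\tcol{\btrue},\tcol{\bfalse},\tcol{\?_\bool},\tcol{\err_\bool}\}$ and similarly $\tcol{\mathcal{C}[t]}$ to some $\tcol{v}$; since \CCICs normalizes, the divergence clause of $\obsRef$ is vacuous on the $\tcol{\mathcal{C}[u]}$-diverges side and we only need: if $\tcol{v'} = \tcol{\err_\bool}$ then $\tcol{v} = \tcol{\err_\bool}$. From $\monTm{v} \sqsubseteq^{\liftBool} \monTm{v'} = \liftbot{\liftBool}$ and antisymmetry (plus $\liftbot{\liftBool}$ being the least element), we get $\monTm{v} = \liftbot{\liftBool}$, and then by the case analysis in the proof of \cref{thm:prop-prop-prec}$(2)$ (ruling out impossible pairs via logical consistency of $\CICIRQ$), the only canonical form with that image is $\tcol{\err_\bool}$. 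Hence $\tcol{v} = \tcol{\err_\bool}$, establishing $\obsRef$.

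\textbf{Equiprecision.} Finally, $\tcol{\Gamma} \caty \tcol{\cast{U}{T}{\cast{T}{U}{t}}} \equiprecise \tcol{t}$ follows from the retraction law $\downcast_d \circ \upcast_d = \id_{\monTy{T}}$ of \cref{def:ep-pair}: the translation of the roundtrip is propositionally equal to $\monTm{t}$, so propositional precision holds in both directions (by reflexivity, point $(1)$ of \cref{thm:universe-hierarchy}, transported along that equality). The main obstacle I anticipate is the bookkeeping identifying $\monTm{\cast{T}{U}{t}}$ with $\upcast_{d}\monTm{t}$: the definition in \cref{fig:monotone-translation} routes casts through the unknown type at some level $j$ as $\downcast_{\ElRel(\monTm{U}\sqsubseteq\unkU)}\upcast_{\ElRel(\monTm{T}\sqsubseteq\unkU)}$, so one must show this composite coincides (propositionally) with the direct \eppair{} $d$ associated to $\monTm{T}\sqsubseteq\monTm{U}$---this is the ``at most one \eppair{} represents a given relation'' uniqueness from the discussion after \cref{def:ep-pair}, combined with functoriality of posetal families, and it is where the canonical-decomposition-through-germs structure of \cref{sec:monotone-universe} does the real work.
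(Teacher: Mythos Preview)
Your proposal is correct and follows essentially the same approach as the paper: both parts are read off directly from the monotone model, with the Ep-pairs clause being the bi-representability and retraction of the \eppair{} $\ElRel(\monTm{T}\sqsubseteq\monTm{U})$ (the paper likewise appeals to uniqueness of the representing pair to identify the through-$\unkU$ cast with the direct \eppair{}), and the DGG clause being monotonicity of the translation plus normalization and a case analysis on closed boolean canonical forms using consistency of the target. Your DGG argument is in fact slightly more careful than the paper's printed version, which contains what appear to be typos (an ``equiprecise'' that should be one-sided, and a claim that $v\conv v'$ that is stronger than needed); your reduction to ``if $\monTm{v'}=\liftbot{\liftBool}$ then $\monTm{v}=\liftbot{\liftBool}$, hence $v=\err_{\bool}$'' is exactly what is required for $\obsRef$.
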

\begin{proof}\mbox{}
  \begin{itemize}
  \item (DGG)
  Let $\tcol{\mathcal{C}[ - ]} : (\tcol{\Gamma} \vdash \tcol{T}) \Rightarrow
  (\vdash \tcol{\bool})$ be an observation context, by monotonicity of the
  translation
  $\caty \tcol{\mathcal{C}[t]} \precBool \tcol{\mathcal{C}[t']}$.
  By progress, subject reduction (\cref{thm:ccic-psafe}) and strong normalization
  (\cref{thm:ccic-pnorm}) of $\CCICs$, there exists canonical forms
  $\caty \tcol{v},\tcol{v'} : \tcol{\bool}$ such that $\tcol{\mathcal{C}[t]}
  \conv \tcol{v}$ and $\tcol{\mathcal{C}[t']} \conv \tcol{v'}$.
  Since propositional precision is stable by conversion in \CCIC, $\caty \tcol{v'}
  \equiprecise \tcol{v}$.
  Finally, we conclude that $\tcol{v} \conv \tcol{v'}$ by a case analysis on the boolean
  normal forms $\tcol{v}$ and $\tcol{v'}$, that are either $\tcol{\btrue}, \tcol{\bfalse},
  \tcol{\err_\bool}$ or $\tcol{\?_\bool}$: if $\tcol{v}$ and $\tcol{v'}$ are distinct normal
  forms then $\caty \tcol{v'} \equiprecise \tcol{v}$ is a closed proof of an empty type,
  contradicting the consistency of the target.
  \item (Ep-pairs)
    The fact that propositional precision induces an adjunction
    is a direct reformulation of the fact that the relation
    $\precision{\monTm{T}}{\monTm{U}}$ underlies an \eppair{}
    (\cref{thm:universe-hierarchy}.(3)), using the fact that there is at most one
  upcast and downcast between two types.
  Similarly, the equi-precision statement is an application of the first point to the
  proofs
  $$
  \left\{
  \begin{array}{l}
    {\monTy{\Gamma} \irty \_ : \monTm{t} \precision{\monTm{T}}{\monTm{T}}
  \monTm{\cast{T}{U}{\cast{U}{T}{t}}}} \\
    {\monTy{\Gamma} \irty \_ : \monTm{\cast{U}{T}{\cast{T}{U}{t}}} \precision{\monTm{T}}{\monTm{T}}
    \monTm{t}}
  \end{array} \right .
$$
  which holds because $\downcast{}_{\monTm{T}\sqsubseteq\monTm{U}} \circ
  \upcast{}_{\monTm{T}\sqsubseteq\monTm{U}} = \id{}$ in the monotone
  model.
\end{itemize}
\end{proof}

We conjecture that the target $\CICIRQ$ mentioned in the above theorem and
propositions is consistent relative to a strong enough metatheory,\footnote{For
  instance, $\mathcal{ZFC}$ + the existence of Mahlo cardinals~\cite{Setzer00,Forsberg13,DybjerS03}.} that is the
assumed inductive-recursive definition for the universe does not endanger
consistency.
As can be seen from the proof, this hypothesis allows to move from a
contradiction internal to $\CICIRQ$ to a contradiction in the ambient metatheory.

\subsection{Graduality of \CCICP}
\label{sec:grad-non-term}

To prove graduality of \CCICP, we need to provide a model accounting
for both monotony and non-termination. 
The monotone model presented in the previous sections, which gives us
graduality for \CCICs and can be related to the pointed model of
\citet[Section 6.1]{NewLicatalmcs2020}, only accounts for terminating
functions. 
In order to capture also non-termination, we can adapt the Scott model of
\citet[Section 6.2]{NewLicatalmcs2020} based on pointed $\omega$-cpo to our
setting.
We now explain the construction of the main type formers, overloading the
notations from the previous sections.

Types are interpreted as bipointed $\omega$-cpos, that is as orders $(A,
\sqsubseteq)$ equipped with a smallest element $\err_A \in A$, a largest element
$\?_A \in A$ and an operation $\sup_{i}\,a_i$ computing the suprema of countable ascending chains, \ie sequences $(a_i)_{i\in \omega} \in A^\omega$ indexed by the ordinal $\omega =
\{0 < 1 < \ldots \}$ such that $a_i \sqsubseteq^A a_j$ whenever $i < j$.
A monotone function $f : A \to B$ between $\omega$-cpos is called
\emph{$\omega$-continuous} if for any ascending chain $(a_k)_{k\in\omega}$, $\sup_k f\,a_k
= f (\sup_k a_k)$; we write $d : A \distrcont{} B$ for an \eppair{} between
$\omega$-cpos where $\upcast_d$ preserves suprema (the left adjoint $\downcast_d$ automatically preserves suprema).
A type-theoretical construction of the free $\omega$-cpo on a set is described
in~\cite{BidlingmaierFS19,ChapmanUV19} using quotient-inductive-inductive
types (QIIT)~\cite{AltenkirchCDKF18,KaposiKA19}.
We can adapt this technique to provide an interpretation for inductive types,
and in particular natural numbers, throwing in freely a new constructor $\sup$
denoting the suprema of any chain of elements and quotienting by the appropriate
(in)equations: the suprema of a chain is greater than any of its parts $a_i
\sqsubseteq \sup\,a_i$ and an element $b$ that is greater than a chain $\forall
i, a_i \sqsubseteq b$ is greater than its suprema $\sup a_i \sqsubseteq b$.
Functions $f : A \to B$ between types are interpreted as continuous monotone
maps, and the type $A \tocont B$ of continuous monotone
functions is an $\omega$-cpo with suprema computed pointwise.
As in~\cref{sec:realizing-unknown-type}, the construction of the
$\omega$-cpo corresponding to the unknown type is intertwined with the
universe hierarchy.
Assuming by induction that we have $\omega$-cpos $\U_0$, \ldots, $\U_i$ for
universes at level lower than $i$, we follow the seminal work
of~\citet{scott76} on domains and we take the unknown type $\unkMon[i+1]$ to be
a solution to the recursive equation:
\[\unkMon[i+1] \quad\cong\quad \liftNat + (\unkMon[i+1] \tocont \unkMon[i+1]) + \U_0 + \ldots + \U_i \]
The key techniques to build a solution to this equation in the setting of
$\omega$-cpos are detailed in~\cite{SmythP77,Wand79}.
In a nutshell, this construction amounts to iterate the assignment $F(X) :=
\liftNat + (X \tocont X) + \U_0 + \ldots + \U_i$
starting from the initial bipointed $\omega$-cpo
$\liftZero$---the free bipointed $\omega$-cpo on an empty
  type, consisting just of $\liftbot{\liftZero} \sqsubseteq
  \lifttop{\liftZero}$---and to take the colimit of the induced sequence:
\begin{align}
  \label{eq:colimUnkMon}
  \liftZero \distrcont F(\liftZero) \distrcont \ldots\> F^k(\liftZero)\> \ldots\>
  \distrcont   \mathrm{colim}_k\,F^k(\liftZero) =: \unkMon[i+1] 
\end{align}
For this construction to succeed, $F$ should extend to an $\omega$-continuous
functor on the category of $\omega$-cpos and $\omega$-continuous \eppairs{} that
moreover preserves countable sequential colimits as above so that the following hold:
\[\unkMon[i+1] = \mathrm{colim}_k\,F^k(\liftZero) \cong
  \mathrm{colim}_k\,F^{k+1}(\liftZero) \cong F(\mathrm{colim}_k\,F^k(\liftZero)) = F(\unkMon[i+1]).\]
The construction of $\unkMon[i+1]$ as a fixpoint for $F$ should be contrasted
with the construction from~\cref{sec:realizing-unknown-type} where we
essentially describe explicitly a construction of $\unkMon[i+1] :=
F(\unkMon[i])$ in the setting of bipointed posets.
The existence of countable sequential colimits in the category of $\omega$-cpos
and \eppairs{}, as employed in \cref{eq:colimUnkMon}, is an interesting fact
proved in \cite[Theorem 3.1]{Wand79}, which we also use to equip the next universe
of codes $\U_{i+1}$ with an $\omega$-cpo structure.
In brief, we adapt the inductive description of the universe of codes $\U_i$
given in~\cref{fig:IRUniverse} with an additional code $\sup\,(A_k)_{k \in
  \omega} : \U_i$ for suprema of chains of codes $A_k : \U_i$, and
decode it with the function $\El$ satisfying $\El\,(\sup\,(A_k)_{k \in
  \omega}) \cong \mathrm{colim}_k\, (\El~A_k)$.
However, the isomorphism above cannot be used as a definition because
the definition of $\El$ has to respect the quotiented nature of
$\U_i$. In particular, when the chain is the constant chain $(\natU)_{k \in \omega}$, $\sup\,(\natU)_{k \in \omega} = \natU$ and thus
$\El\,(\sup\,(\natU)_{k \in \omega})$
must also be equal to $\liftNat$, which is different from $\mathrm{colim}_k\, \liftNat$.
Technically, we define $\El\,A$ as its isomorphic image onto $\unkMon[i]$,
recovering a canonical choice for the inhabitants of $\El\,(\sup\,(A_k)_{k \in \omega})$.

Note that in contrast with the construction from~\cref{sec:realizing-unknown-type}
that depended on $\stalkCIC{i}{}$ and hence $\El_{i}$, the present construction of
$\unkMon[i]$ does not depend on the construction of $\U_{i}$ and $\El_i$,
cutting the non-wellfounded loop observed in~\cref{sec:bare-model}.
The components that we describe assemble as a model of \CCICP into
\CICIRQ. In order to be able to prove DGG for \CCICP, we first need to
characterize the semantic interpretation of diverging terms of type $\bool$ in the
model.  
\begin{lemma}
  \label{lem:diverge-err}
  If $\tcol{\Gamma} \caty \tcol{t} \inferty \tcol{\bool}$ and $\tcol{t}$ has no weak head
  normal form, then $\monTm{\tcol{t}} = \err_{\monTm{{\tcol{\bool}}}}$.
\end{lemma}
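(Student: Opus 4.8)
The plan is to prove \cref{lem:diverge-err} by a \emph{computational adequacy} argument for the Scott-style model of \CCICP, adapting the method of \citet{scott76} and \citet[Section 6.2]{NewLicatalmcs2020} to the dependent cast calculus. In this model $\monTy{\bool} = \liftBool$ is the four-element $\omega$-cpo with $\err_{\bool} \sqsubseteq \btrue, \bfalse \sqsubseteq \?_{\bool}$, so it suffices to establish the adequacy direction: for a closed $\caty \tcol{t} : \tcol{\bool}$, if $\monTm{t} = \btrue$ then $\tcol{t} \whred \btrue$, and likewise for $\bfalse$ and $\?_\bool$. Granting this, a closed term of type $\tcol{\bool}$ with no weak head normal form has, by progress (\cref{thm:ccic-psafe}), an infinite weak-head reduction; hence its interpretation is none of $\btrue$, $\bfalse$, $\?_\bool$, leaving only $\err_{\monTm{\bool}}$. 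The general open case reduces to the closed one, since an infinite weak-head reduction never blocks on a free variable and is preserved by closing substitutions.

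First I would define a \emph{formal approximation} relation $\tcol{t} \precsim_{\tcol{A}} a$ between closed \CCIC terms $\tcol{t}$ of type $\tcol{A}$ and semantic values $a \in \monTy{A}$, by induction on types --- more precisely, on the finite stages of the universe hierarchy and of the unknown type. At $\tcol{\bool}$: $\tcol{t} \precsim_{\tcol{\bool}} a$ iff $a = \btrue$ implies $\tcol{t} \whred \btrue$ (and symmetrically for $\bfalse$, $\?_\bool$), so $\err_{\bool} \precsim_{\tcol{\bool}} \tcol{t}$ holds vacuously. At $\tcol{\P x : A . B}$: $\tcol{t} \precsim f$ iff $\tcol{u} \precsim_{\tcol{A}} a$ implies $\tcol{t~u} \precsim_{\tcol{B}\subs{u}{x}} f\,a$. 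At the unknown type $\tcol{\?_{\square_i}}$: using the presentation $\unkMon[i] \cong \mathrm{colim}_k F^k(\liftZero)$ with $F(X) := \liftNat + (X \tocont X) + \U_0 + \ldots + \U_{i-1}$ from \cref{sec:grad-non-term}, ask that for every $k$ the $k$-th projection $p_k(z)$, which lands in one of the summands $\liftNat$, $\unkMon[i] \tocont \unkMon[i]$, or $\U_j$, be related to $\tcol{t}$ by the corresponding clause. The two key facts about each $\precsim_{\tcol{A}}$ are that it is \emph{pointed} ($\err_{\tcol{A}} \precsim_{\tcol{A}} \tcol{t}$ always) and \emph{admissible} (closed under suprema of ascending chains); admissibility is exactly what makes the clause at $\tcol{\?}$ well defined as the solution of its recursive specification, and is also what lets one push infinite reduction sequences through the relation.

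Next I would prove the \emph{fundamental lemma}: if $\tcol{\Gamma} \caty \tcol{t} \inferty \tcol{A}$ and $\sigma$ is an environment relating the variables of $\tcol{\Gamma}$ to semantic values pointwise, then $\tcol{t}\sigma \precsim_{\tcol{A}\sigma} \monTm{t}\sigma$, by induction on the typing derivation. The cases for $\lambda$, application, inductive constructors and eliminators, and the exceptions $\tcol{\?}$, $\tcol{\err}$ are routine, using soundness of reduction ($\monTm{t} = \monTm{u}$ when $\tcol{t} \redCCIC \tcol{u}$) together with admissibility. The genuinely new cases are the cast $\tcol{\cast{A}{B}{t}}$, interpreted as $\downcast{}_{\monTm{B} \sqsubseteq \unkU} \circ \upcast{}_{\monTm{A} \sqsubseteq \unkU}$ and handled with \cref{lem:adj} on the induced \eppairs{} together with the relational clause at $\tcol{\?}$, and the universe and code constructors, discharged at the appropriate finite stage. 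Specializing the fundamental lemma to $\caty \tcol{t} : \tcol{\bool}$ gives $\tcol{t} \precsim_{\tcol{\bool}} \monTm{t}$, which is the adequacy statement, and hence \cref{lem:diverge-err}.

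I expect the main obstacle to be the clause at the unknown type. Since $\unkMon[i]$ is a fixpoint of $F$, the relation $\precsim_{\tcol{\?}}$ must be assembled from the colimit stages and then shown to be admissible and invariant under the folding/unfolding isomorphism --- the standard but delicate minimal-invariant step of Scott-style adequacy --- and here it must additionally be threaded through the stratified universe construction while tracking universe levels. A more syntactic alternative would be to introduce fuel-bounded approximants $\mathrm{apx}_k(\tcol{t})$ replacing every cast through $\tcol{\?}$ below depth $k$ by $\tcol{\err}$, prove $\monTm{t} = \sup_k \monTm{\mathrm{apx}_k(t)}$ by continuity of the interpretation, and note that when $\tcol{t}$ has no weak head normal form each $\mathrm{apx}_k(\tcol{t})$ weak-head reduces to $\tcol{\err_\bool}$ --- it follows the same reduction until the $k$-th unfolding, where it meets $\tcol{\err}$, which then propagates --- so that $\monTm{t} = \sup_k \err_{\liftBool} = \err_{\liftBool}$; this route trades the logical relation for the task of aligning the syntactic fuel discipline with the semantic colimit.
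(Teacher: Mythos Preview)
Your approach via a Plotkin-style computational adequacy relation is sound and shares the overall architecture of the paper's proof---a logical relation between \CCICP terms and elements of the $\omega$-cpo model, followed by a fundamental lemma---but the two differ in presentation. The paper's relation (\cref{sec:divergence-omega-cpo-model}) is Kripke-style over open terms and \emph{determinate}: each clause at a base type has an explicit case stating that a term which weak-head reduces to $\tcol{\err}$ \emph{or} has no weak-head normal form is related to $\liftbot{}$, so the conclusion follows directly from the fundamental lemma by a short inversion (\cref{lem:div-to-error}). Your relation is instead the classical \emph{formal approximation}, where the semantic bottom is vacuously related to every term and the conclusion is obtained by eliminating over the four elements of $\liftBool$. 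The paper also treats $\tcol{\?_{\square_i}}$ syntactically---its clause inspects the weak-head normal form $\tcol{\cast{\stalkCIC{i}{h}}{\?_{\square_i}}{t'}}$ and recurses on $t'$---which sidesteps the minimal-invariant construction through the colimit that you rightly flag as the crux of your route. Both styles work; yours is more domain-theoretic, the paper's more tailored to the cast calculus and somewhat shorter.

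There is one small gap in your reduction of the open case to the closed one. From ``$\tcol{t\sigma}$ has no weak-head normal form for every closing substitution $\sigma$'' you obtain $\monTm{t}(\monTm{\sigma}) = \err$ only for semantic environments of the form $\monTm{\sigma}$, and your approximation relation does \emph{not} give you a syntactic $\sigma$ related to an arbitrary semantic $\gamma$ (e.g.\ neither $\tcol{\err}$ nor $\tcol{\?}$ is $\precsim$-related to $\btrue$ at $\tcol{\bool}$). The fix is to substitute $\tcol{\?}$ for every variable: then $\monTm{\sigma}$ is the \emph{top} environment, so $\monTm{t}(\top) = \err$; since $\monTm{t}$ is monotone and $\err$ is least in $\liftBool$, $\monTm{t}(\gamma) \sqsubseteq \monTm{t}(\top) = \err$ forces $\monTm{t}(\gamma) = \err$ for all $\gamma$, and function extensionality finishes. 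The paper avoids this detour by keeping the relation open throughout (with an explicit neutral case) and concluding directly in context.
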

\begin{proof}
  The proof of this lemma is based on the definition of a logical
  relation which is shown to relate $\tcol{t}$ to its translation
  $\monTm{\tcol{t}}$ in the model (a.k.a. the fundamental lemma).
  The precise definition of the logical and proof of the fundamental
  lemma is given in~\cref{sec:divergence-omega-cpo-model}.
\end{proof}

Relativizing the notion of precision $\tcol{\Gamma} \caty \tcol{t}
\precision{\tcol{T}}{\tcol{S}} \tcol{u}$ of the monotone model to use the order
induced by this $\omega$-cpo model instead of the monotone model,
we can replay the steps of
\cref{thm:graduality-gcics} and derive graduality for \CCICP.

\begin{theorem}[Graduality for \CCICP]
  \label{thm:GCICP-graduality}
  For $\tcol{\Gamma }\caty \tcol{t} : \tcol{T}, \tcol{\Gamma} \caty \tcol{t'} : \tcol{T}$ and $\tcol{\Gamma} \caty \tcol{u} : \tcol{U}$, we have
  
  \begin{itemize}
  \item (DGG) If $\tcol{\Gamma} \caty \tcol{t} \precision{\tcol{T}}{\tcol{T}} \tcol{t'}$
    then $ \tcol{t} \obsRef \tcol{t'}$;
  \item (Ep-pairs)
    If $\tcol{\Gamma} \caty \tcol{T} \precision{\tcol{\square}}{\tcol{\square}} \tcol{U}$ then
    \[ \tcol{\Gamma} \caty \tcol{\cast{T}{U}{t}} \precision{\tcol{U}}{\tcol{U}} \tcol{u}
      \Leftrightarrow  \tcol{\Gamma} \caty \tcol{t} \precision{\tcol{T}}{\tcol{U}} \tcol{u}
      \Leftrightarrow  \tcol{\Gamma} \caty \tcol{t}
      \precision{\tcol{T}}{\tcol{T}} \tcol{\cast{U}{T} u}, \]
    Furthermore, $\tcol{\Gamma} \caty \tcol{{\cast{U}{T}{\cast{T}{U}{t}}} \equiprecise \tcol{t}}$.
  \end{itemize}

\end{theorem}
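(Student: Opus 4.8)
The plan is to replay the proof of \cref{thm:graduality-gcics} almost verbatim, with the monotone model systematically replaced by the $\omega$-cpo model of \cref{sec:grad-non-term}. Two properties of that model drive the argument, both obtained exactly as their monotone counterparts. First, the monotone translation $\monTm{-}$ extends to a model of \CCICP into \CICIRQ that is monotone, so that $\tcol{\Gamma} \caty \tcol{t} \precision{\tcol{T}}{\tcol{U}} \tcol{t'}$ entails the existence of a target proof of $\monTm{t} \precision{\monTm{T}}{\monTm{U}} \monTm{t'}$, and it preserves conversion propositionally (the analogues of \cref{thm:monotone-model} and \cref{thm:discrete-model}.(1)); here $\sup$-continuity of all type and term formers is what makes the model well-defined. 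Second, whenever $\tcol{\Gamma} \caty \tcol{T} \precision{\tcol{\square}}{\tcol{\square}} \tcol{U}$, the decoded relation $\precision{\monTm{T}}{\monTm{U}}$ is the underlying relation of an $\omega$-continuous \eppair{} $\monTy{T} \distrcont \monTy{U}$, with $\err_{\monTm{\bool}}$-style least elements and $\?$-style greatest elements at every type (the analogue of \cref{thm:universe-hierarchy}.(2)--(3)). Given the second property, the ep-pairs part of the theorem is immediate and identical to \cref{thm:graduality-gcics}: the chain of equivalences $\tcol{\cast{T}{U}{t}} \precision{\tcol{U}}{\tcol{U}} \tcol{u} \Leftrightarrow \tcol{t} \precision{\tcol{T}}{\tcol{U}} \tcol{u} \Leftrightarrow \tcol{t} \precision{\tcol{T}}{\tcol{T}} \tcol{\cast{U}{T}{u}}$ is the bi-representation of the \eppair{} together with uniqueness of the up- and downcast, and $\tcol{\cast{U}{T}{\cast{T}{U}{t}}} \equiprecise \tcol{t}$ is the retraction law $\downcast{} \circ \upcast{} = \id$ of the \eppair{}.

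The DGG part is where non-termination genuinely enters. Fix an observation context $\tcol{\mathcal{C}} : (\tcol{\Gamma} \vdash \tcol{T}) \Rightarrow (\vdash \tcol{\bool})$ and suppose $\tcol{\Gamma} \caty \tcol{t} \precision{\tcol{T}}{\tcol{T}} \tcol{t'}$. Using the diagonal and congruence rules of precision to push through $\tcol{\mathcal{C}}$ and then monotonicity of the translation, we get $\monTm{\mathcal{C}[t]} \sqsubseteq^{\monTy{\bool}} \monTm{\mathcal{C}[t']}$. I would then analyse the behaviour of $\tcol{\mathcal{C}[t']}$: by progress and subject reduction (\cref{thm:ccic-psafe}) it either diverges or reduces to one of the canonical booleans $\tcol{\btrue}, \tcol{\bfalse}, \tcol{\err_\bool}, \tcol{\?_\bool}$. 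If $\tcol{\mathcal{C}[t']}$ diverges then $\monTm{\mathcal{C}[t']} = \err_{\monTm{\bool}}$ by \cref{lem:diverge-err}; if $\tcol{\mathcal{C}[t']} \rtred \tcol{\err_\bool}$ then $\monTm{\mathcal{C}[t']} = \err_{\monTm{\bool}}$ by conversion preservation. In both cases $\monTm{\mathcal{C}[t]} \sqsubseteq \err_{\monTm{\bool}}$, hence equals $\err_{\monTm{\bool}}$ since it is the least element of $\monTy{\bool}$. Applying progress once more to $\tcol{\mathcal{C}[t]}$: either it diverges, and we are done, or it reduces to a canonical boolean $\tcol{v}$ with $\monTm{v} = \monTm{\mathcal{C}[t]} = \err_{\monTm{\bool}}$; a case analysis on $\tcol{v}$, eliminating $\tcol{\btrue}, \tcol{\bfalse}, \tcol{\?_\bool}$ because their interpretations differ from $\err_{\monTm{\bool}}$ (using logical consistency of \CICIRQ relative to the ambient metatheory), forces $\tcol{v} = \tcol{\err_\bool}$, i.e. $\tcol{\mathcal{C}[t]} \rtred \tcol{\err_\bool}$. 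Thus $\tcol{t} \obsRef \tcol{t'}$ in the sense of \cref{def:obsref}; note that, in contrast with \CCICs, this cannot be upgraded to observational equivalence since divergence and error are deliberately collapsed.

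The main obstacle is \cref{lem:diverge-err}, the adequacy statement that a diverging boolean term is interpreted as $\err$: its proof requires building a logical relation over \CCICP and establishing a fundamental lemma relating every term to its translation, and it is precisely there that the Scott-style solution of the recursive domain equation $\unkMon[i+1] \cong \liftNat + (\unkMon[i+1] \tocont \unkMon[i+1]) + \U_0 + \ldots + \U_i$ and the colimit construction of $\U_{i+1}$ are exploited. Within the theorem proof itself, the only subtleties are (a) replacing the normalization step of \cref{thm:graduality-gcics} by the two uses of progress above, while keeping track of the divergence branch, and (b) the reliance — already present in \cref{thm:graduality-gcics} — on the relative logical consistency of the quotient-inductive-recursive target \CICIRQ, needed to turn the internal statement ``two distinct boolean normal forms are related by precision'' into a contradiction in the metatheory.
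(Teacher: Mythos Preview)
Your proposal is correct and follows essentially the same approach as the paper: both parts rely on the $\omega$-cpo model, with the ep-pairs part being a direct reformulation of the \eppair{} structure and the DGG part combining monotonicity of the translation, \cref{lem:diverge-err}, and a case analysis on boolean canonical forms. The only cosmetic difference is that the paper cases first on whether $\tcol{\mathcal{C}[t]}$ diverges and then on $\tcol{\mathcal{C}[t']}$, whereas you case directly on the behaviour of $\tcol{\mathcal{C}[t']}$---which is arguably more natural given the shape of \cref{def:obsref}---but the underlying argument is the same.
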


\begin{proof}\mbox{ }
  \begin{itemize}
  \item (DGG)
    Similarly to the proof of \cref{thm:graduality-gcics}, we consider
    a context $\tcol{\mathcal{C}[ - ]} : (\tcol{\Gamma} \vdash \tcol{T}) \Rightarrow
    (\vdash \tcol{\bool})$. We know  by monotonicity of the
    translation that $\caty \tcol{\mathcal{C}[t]} \precBool \tcol{\mathcal{C}[t']}$.
    We need to distinguish whether the evaluation of $\tcol{\mathcal{C}[t]}$ and
    $\tcol{\mathcal{C}[t']}$ terminates or not.
    If $\tcol{\mathcal{C}[t]}$ diverges, we are done. If $\tcol{\mathcal{C}[t]}$
    terminates and $\tcol{\mathcal{C}[t']}$ diverges, by progress,
    $\tcol{\mathcal{C}[t]}$ reduces to a value $\tcol{v}$ and by
    \cref{lem:diverge-err}, $\monTm{\tcol{\mathcal{C}[t']}} =
    \err_{\monTm{{\tcol{\bool}}}}$.
    This means that $\monTm{\tcol{\mathcal{C}[t]}} =
    \err_{\monTm{{\tcol{\bool}}}}$ because $\err_{\monTm{{\tcol{\bool}}}}$ is
    the smallest element of $\liftBool$.
    Since $\monTm{-}$ is stable by conversion, $\monTm{\tcol{v}} =
    \monTm{\tcol{\mathcal{C}[t]}} = \err_{\monTm{{\tcol{\bool}}}}$, and so
    $\tcol{v} = \err_{\tcol{\bool}}$ by case analysis of the possible values for
    $\tcol{v}$.
    If both terminates, then the reasoning is the same as in the proof of
    \cref{thm:graduality-gcics}.
    
  \item (Ep-pairs)
    As for \cref{thm:graduality-gcics}, this fact derives directly from the
    interpretation of the precision order as \eppairs in the $\omega$-cpo model.
  \end{itemize}
\end{proof}

\section{Gradual Indexed Inductive Types}
\label{sec:giit}

We now explore how indexed inductive types, as used in the introduction (\cref{ex:indices}), can be handled in \GCIC.
Recall the definition of \coqe{vec}:
\begin{coq}
Inductive vec (A : Type) : nat -> Type :=
| nil  : vec A 0
| cons : A -> forall n : nat, vec A n -> vec A (S n).
\end{coq}
and recall the difference between {\em parameters} (here, \coqe{A}), which are common to all constructors, and {\em indices} (here, \coqe{n}), which can differ between constructors.
Also recall from \cref{sec:bidirectional-cic} that our formal development does not consider indexed inductive types, only parametrized ones.

This section first explains two alternatives to indexed inductive types that can directly be expressed in \GCIC (\cref{sec:alt-indices}). We then describe how these alternatives actually behave in the gradual setting (\cref{sec:index-induct-type,sec:decid-equal-more}). Finally, we present an extension of \CCIC to directly support indexed inductive types, focusing on the specific case of vectors (\cref{sec:case-vectors}), showing that it combines the advantages of the other approaches. \cref{sec:giit-summary} summarizes our findings.

\subsection{Alternatives to indexed inductive types}
\label{sec:alt-indices}

Indexed inductive types make it possible to define structures that are intrinsically characterized by some property, which holds by construction, as opposed to extrinsically establishing such properties after the fact. There are two well-known alternatives to indexed inductive types for capturing properties intrinsically: type-level fixpoints, and ``forded'' inductive types.

\begin{description}
\item[Type-level fixpoint.] The vector can be defined as a recursive function on the index, at the type level. For instance, the following formulation represents sized lists as nested pairs:

\begin{coq}
  Fixpoint FVec (A : Type) (n : nat) :Type := match n with 0 => unit | S n => A * FVec A n end.
\end{coq}
  Type-level fixpoints can be used as soon as the indices are \emph{concretely
  forceable}~\cite{bradyEtAl:types2004}. Intuitively, concretely forceable indices are those that can be matched upon (like \coqe{n} in this example definition). See \citet{GilbertCST19} for a description of a general translation.

\item[Forded inductive type.] Instead of using an indexed inductive type, one can use a parametrized inductive type, with {\em explicit} equalities as arguments to constructors.%
\footnote{This technique has reportedly been coined ``fording'' by \citet[\textsection 3.5]{mcbride99}. Fording is in allusion to the Henry Ford quote {\em ``Any customer can have a car painted any color that he wants, so long as it is black.''}}
For instance, vectors can be defined in this style as follows:
\begin{coq}
  Inductive vec_eqdec (A : Type) (n : nat) : Type :=
  | nil_eqdec : eq_nat 0 n -> vec_eqdec A n
  | cons_eqdec : A -> forall m : nat, eq_nat (S m) n -> vec_eqdec A m -> vec_eqdec A n.
\end{coq}
Note that this definition uses \coqe{eq_nat}, the type of decidable equality proofs over natural numbers, for expressing the constraints on \coqe{n} instead of propositional equalities (\eg~\coqe{0=n}), because propositional equality is not available in \GCIC (\cref{sec:limit-propeq}).
\end{description}

In \CIC, these two alternative presentations of an indexed inductive type can be shown internally to be
equivalent. But each of these presentations has advantages and drawbacks depending on the considered system and scenarios of use, so practitioners have different preferences in that respect. More important to us here, these presentations are {\em not} equivalent in \GCIC.

\subsection{Type-level fixpoints}
\label{sec:index-induct-type}

\paragraph{Constructors.}
The definition of \coqe{FVec} above can directly be written in \GCIC, as it
uses only inductive types with parameters (here the unit and product
types and natural numbers). 
The vector constructors can be defined as:
\begin{coq}
  Definition Fnil (A:Type) : FVec A 0 := tt. 
\end{coq}
\begin{coq}
  Definition Fcons (A:Type) (a:A) (n:nat) (v:FVec A n) : FVec A (S n) := (a , v). 
\end{coq}
whose definitions typecheck because \coqe{FVec} computes on its
indices.

\paragraph{Behavior.}
Let us now look at the type computed at $\?_\nat$. Because $\?_\nat$ is an
exceptional term, the fixpoint has to return unknown in the universe: 
\coqe{FVec A ?_nat ==* ?_Type}. This means that the mechanism for
casting a vector into a vector with the unknown index is directly
inherited from the generic mechanism for casting to the unknown type.
Therefore, we get for free the following computation rules, because
they involve embedding-projection pairs:
\begin{coq}
  Fnil A::FVec A ?_nat::FVec A 0 ==* Fnil A
\end{coq}
\begin{coq}
  Fnil A::FVec A ?_nat::FVec A 1 ==* raise
\end{coq}

Similarly, the eliminator \coqe{FVec_rect} can be defined by first
matching on the index, and then on the vector and satisfies the
computation rule of vectors when the index is non-exceptional. 
The only drawback of this encoding is that the behavior of the
eliminator is not satisfactory when the index is unknown.
Consider for instance the following term from \cref{ex:indices}, which unfortunately reduces to 
\coqe{?_nat}:
\begin{coq}
  Fhead ?_nat (Ffilter nat 4 even [ 0 ; 1 ; 2 ; 3 ]) ==* ?_nat
\end{coq}
\noindent
This behavior occurs because the eliminator starts by
matching on the index, which is unknown, and thus has to return the
unknown itself.

\subsection{Fording with decidable equalities}
\label{sec:decid-equal-more}

\paragraph{Constructors}

With the definition of the forded inductive type \coqe{vec_eqdec}, 
the \coqe{nil_eqdec} constructor can legitimately be used to inhabit
\coqe{vec_eqdec A ?_nat}, provided we have an inhabitant (possibly \?)
of \coqe{eq_nat 0 n}.

Note that we can provide the same vector interface as that of the indexed inductive type by
defining the following constructor wrappers, 
using the term \coqe{refl n} of reflexivity on \coqe{eq_nat}:
\begin{coq}
  Definition nil' (A:Type) : vec_eqdec A 0 := nil_eqdec A (refl 0). 
\end{coq}
\begin{coq}
  Definition cons' A a n (v: vec_eqdec A n) : vec_eqdec A (S n) := cons_eqdec A a n (refl n) v. 
\end{coq}
and define the corresponding eliminator \coqe{vec_rect'} accordingly.

\paragraph{Behavior.}  

The computational content of the eliminator on
\coqe{vec_eqdec A ?_nat} is more precise than with \coqe{FVec}: the
eliminator never matches on the proof of equality to produce a term,
but only to guarantee that a branch is not accessible. 
Concretely, this means that we observe the expected reduction:
\begin{coq}
  nil_eqdec A e::vec_eqdec A ?_nat::vec_eqdec A 0 ==* nil_eqdec A e
\end{coq}
Again, the fact that upcasting to \coqe{vec_eqdec A ?_nat} and then
downcasting back is the identity relies on the \CCIC mechanism on the unknown
for the universe, but this time only for the type representing the
decidable equality.
Likewise, the example of \coqe{filter} (\cref{ex:indices}) computes as expected:
\begin{coq}
  Dhead ?_nat (Dfilter nat 4 even [ 0 ; 1 ; 2 ; 3 ]) ==* 0
\end{coq}

On the other hand, an invalid assertion does not produce an error, but a term with an error in place of the equality proof:
\begin{coq}
  nil_eqdec A e::vec_eqdec A ?_nat::vec_eqdec A 1 ==* nil_eqdec A raise
\end{coq}
where \coqe{raise} is at type \coqe{eq_nat 1 0}.
Consequently, we have \coqe{Dhead ?_nat (Dfilter nat 2 even [ 1 ; 3 ]) ==* raise}, 
because the branch of \coqe{Dhead} that deals with the \coqe{nil} case matches 
on the (erroneous) equality proof.
Invalid assertions are therefore very lazily observed, if at all, which is not satisfactory.

Finally, there is a drawback of using decidable equalities, which only manifests when working with the original vector interface (\coqe{nil'}/\coqe{cons'}/\coqe{vec_rect'}). 
In that case, the eliminator does not enjoy the expected computational rule 
on the constructor \coqe{cons'}.
Because the eliminator is defined by induction on natural numbers, therefore it only reduces 
when the index is a concrete natural number, not a variable.

\subsection{Direct support for indexed inductive types: the case of vectors}
\label{sec:case-vectors}

\begin{figure}
  \small
\begin{mathpar}
  \inferrule{  }{\can{\GnilU{A}}}
  \and
  \inferrule{ }{\can{(\GconsU{A}{a}{n}{v})}}
  \\
    \redrule{
    \ascdom{(\Gcons{A}{a}{k}{v})}{\Gvect{A}{(S~n)}}{\Gvect{B}{\?_\nat}} 
  }
  {
    \GconsU{B}{(\ascdom{a}{A}{B})}{n}{(\ascdom{v}{\Gvect{A}{k}}{\Gvect{B}{n}})}
    \hfill
  }[V-cons-\?]\label{red:v-S?}

  \\
  \redrule{
    \ascdom{(\Gcons{A}{a}{k}{v})}{\Gvect{A}{(S~n)}}{\Gvect{B}{(S~m)}} 
  }
  {\Gcons{B}{(\ascdom{a}{A}{B})}{m}{(\ascdom{v}{\Gvect{A}{k}}{\Gvect{B}{m}})}
    \hfill}[V-cons]\label{red:v-S}
  \\
  \redrule{
    \ascdom{(\Gcons{A}{a}{k}{v})}{\Gvect{A}{(S~n)}}{\Gvect{B}{0}} 
  }
  {
    \err_{\Gvect{B}{0}} \myflushright
  }[V-cons-nil]\label{red:v-S0}
  \\

  \redrule{
    \GvectRec{P}{P_{nil}}{P_{cons}}{\GnilU{A}}
  }
  {\ascdom{(\GvectRec{P}{P_{nil}}{P_{cons}}{\Gnil{A}})}{P~0}{P~\?_\nat} \hfill
  }[V-rect-nilu]\label{red:v-rect-unk-0}
  \\
  \redrule{
    \GvectRec{P}{P_{nil}}{P_{cons}}{(\GconsU{A}{a}{n}{v})}
  }
  {\myflushright}[V-rect-consu]\label{red:v-rect-unk-S}
  \\
  \myflushright \ascdom{(\GvectRec{P}{P_{nil}}{P_{cons}}{(\Gcons{A}{a}{n}{v})})}{P~(S~n)}{P~\?_\nat}

\end{mathpar}
\caption{New canonical forms and reduction rules for vectors in \CCIC (excerpt).}
\label{fig:vectors-excerpt}
\end{figure}

Extending \GCIC/\CCIC with direct support for indexed inductive types can provide a fully satisfactory solution, in contrast to the two previously-exposed encodings that both have serious shortcomings. 
The idea is to reason about indices directly in the reduction of casts. 
Here, we expose this approach for the specific case of length-indexed vectors and leave a generalization to future work.
\cref{sec:case-vectors-appendix} describes the extension for vectors in full details; here, 
we only present selected rules (\cref{fig:vectors-excerpt}) and illustrate how reduction works.

\paragraph{Constructors}
We add two new canonical forms,
corresponding to the casts of \coqe{nil} and \coqe{cons} to
$\Gvect{A}{\?_{\nat}}$: namely, $\GnilU{A}$ and
$\GconsU{A}{a}{n}{v}$ (\cref{fig:vectors-excerpt}).
Note that we cannot simply declare casts such as 
$\ascdom{t}{\Gvect{A}{n}}{\Gvect{A}{\?_{\nat}}}$ to be canonical, 
because they involve non-linear occurrences of types (here, $A$).

\paragraph{Reduction rules}
We add reduction rules to conduct casts between vectors
in canonical forms. \cref{fig:vectors-excerpt} presents these rules
when the argument of the cast is a \coqe{cons}.
Rule~\nameref{red:v-S?} propagates the cast on the arguments,
but using the newly-introduced $\GconsUName$, effectively converting precise
information to less precise information.
Rule~\nameref{red:v-S} applies when both source and target indices are successors,
and propagates the cast of the arguments, just like the
standard rule for casting a constructor.
As expected, Rule~\nameref{red:v-S0} raises an error when the indices do not match.

For the eliminator, there are two new computation rules, one for each
 new constructor: \nameref{red:v-rect-unk-0} and
\nameref{red:v-rect-unk-S}. They both apply the eliminator to the 
underlying non-exceptional constructor, and then cast the result back
to $P~\?_\nat$.
Intuitively, these rules transfer the cast on vectors to a cast on
the returned type of the predicate.

\paragraph{Behavior}
Given these rules, we can actually realize the behavior described in
\cref{ex:indices}. For instance, we have both
\begin{coq}
  nil A::vec A ?_nat::vec A 0 ==* nil A
\end{coq}
\begin{coq}
  nil A::vec A ?_nat::vec A 1 ==* raise_A
\end{coq}
and coming back to \cref{ex:indices}, in all three \GCIC variants the term:

\begin{center}
\coqe{head ?_nat (filter nat 4 even [ 0 ; 1 ; 2 ; 3 ])}
\end{center}
\noindent typechecks and reduces to \coqe{0}. Additionally, as expected:

\begin{center}
\coqe{head ?_nat (filter nat 2 even [1 ; 3])}
\end{center}
\noindent typechecks and fails at runtime. And similarly for \cref{ex:specif}.

Note that to be able to define the action of casts on vectors, we have
crucially used the fact that it is possible to discriminate between
\coqe{0}, \coqe{S n} and $\?_\nat$ in the reduction rule.

\subsection{Summary}
\label{sec:giit-summary}
To summarize, the different approaches to define structures with intrinsic properties in \GCIC compare as follows:
\begin{itemize}
\item The type-level fixpoint coincides with the indexed inductive presentation 
on non-exceptional terms, but is extremely imprecise in presence of unknown indices.
\item The forded inductive is more accurate when dealing with unknown indices, but is arguably too permissive with invalid index assertions. 
\item The direct support of the indexed inductive type with additional constructors and reduction rules yields a satisfactory solution. We conjecture that this presentation can be generalized to support arbitrary indexed inductive types as long as they have 
concretely forceable indices; we leave such a general construction for future work. 
\end{itemize}
Recall that fording is only an option in \GCIC when the indices pertain to a type with decidable equality; properly handling general propositional equality in a gradual type theory 
is an open question (\cref{sec:limit-propeq}). 
The constraint of indices being concretely forceable (for type-level fixpoints, direct support)  
are intuitively understandable and expected: gradual typing requires synthesizing dynamic checks, therefore these checks need to be somehow computable.

\section{Limitations and Perspectives}
\label{sec:future-cic}

Up to now, we have left aside three important aspects of \CIC, namely,
impredicativity, $\eta$-equality and propositional equality.
This section explains the challenges induced by each feature, and possibly,
venues to explore.

\subsection{Impredicativity}

In this work, we do not deal with the impredicative sort
\texttt{Prop}, for multiple reasons.
The models used in \cref{sec:realizing-ccic} to justify termination and
graduality crucially rely on the predicativity of the universe hierarchy for the
inductive-recursive definition of codes to be well-founded.
Moreover, the results of \citet[Theorem 6.1]{Palmgren98} show that it is not
possible to endow an impredicative universe with an inductive-recursive
structure in a consistent and strongly-normalizing theory,
hinting that it may be difficult to devise an inductively-defined cast
function between types that belong to an impredicative universe.
Additionally, it seems difficult to avoid the
divergence of $\Omega$ with an impredicative sort, as no universe levels can
be used to prevent a self-application from
being well-typed.

\subsection{$\eta$-equality}

In most presentations of \CIC, and in particular its \Coq
implementation, conversion satisfies an additional rule, called
$\eta$-equality, which corresponds to an extensional property for
functions:
$$
\Gamma \vdash f \conv \l x : A . f~x \quad \mbox{when} \quad \Gamma \vdash f:\Pi x:A . B.
$$
The difficulty of integrating $\eta$-equality in the setting of \GCIC is that the conversion we consider in \CCIC is entirely induced
by a notion of reduction: two
terms are convertible exactly when they have a common reduct up to
$\alpha$-equivalence.
It is well-known that $\eta$-equality cannot be easily modeled using a
rewrite rule, as both $\eta$-expansion and $\eta$-reduction have significant
drawbacks~\cite{10.1145/1047659.1040312}, and so we would have to consider another
approach to the one we took if we were to integrate $\eta$-equality.
The most prominent alternative way is to define conversion
as an alternation of reduction steps (for instance using a weak-head
reduction strategy) not containing $\eta$ and comparison of terms up to congruence and $\eta$-equality.

This approach has been recently formalized by \citet{Abel:POPL2018} in a fully-typed setting. That is, types participate crucially
in the conversion relation: they are maintained during conversion, so that for instance comparison of terms at a $\Pi$-type systematically $\eta$-expands them
before recursively calling conversion at the domain types.
Defining a gradual variant of such a typed conversion might be quite interesting,
but would require a significant amount of work.

On the contrary, a precise, formalized, account is still missing for $\eta$-equality for an untyped conversion as used in practice in the \Coq proof assistant and in \GCIC.
The MetaCoq project, which aims at such a formalized account, leaves the treatment of $\eta$-equality to future work~\cite{SozeauBFTW20}.
While we envision no specific issues to the adaptation to this approach
to gradual typing once a clear and precise solution for
\CIC itself has been reached, solving the issue in a satisfactory way for
\CIC is obviously out of scope for this article.
Thus, while it should in principle be possible to add $\eta$-equality to \GCIC,
either via typed or untyped conversion,
we leave this for future work.

\subsection{Propositional equality}
\label{sec:limit-propeq}

In \CIC, propositional equality \coqe{eq A x y}, corresponds to the Martin-L{\"o}f
identity type~\cite{Martin-Lof-1973}, with a single constructor \coqe{refl} for reflexivity, and the
elimination principle known as \coqe{J}:
\begin{coq}
  Inductive eq (A : Type) (x : A) : A -> Type := refl : eq A x x.
\end{coq}
\begin{coq}
  J : forall (A : Type) (P : A -> Type) (x : A) (t : P x) (y : A) (e : eq A x y), P y
\end{coq}
together with the conversion rule:

\begin{center}
 \coqe{J A P x t x (refl A x) equiv t}
\end{center}

For the sake of exposing the problem, suppose that we can define this identity type in \GCIC, 
while still satisfying canonicity, conservativity with respect to \CIC and graduality.
This means that for an equality \coqe{t = u} involving closed terms
\coqe{t} and \coqe{u} of \CIC, there
should only be three possible canonical forms: \coqe{refl A t} whenever \coqe{t} and \coqe{u} are convertible terms (of type \coqe{A}), as well as \coqe{raise} and \coqe{?}.

Just under these assumptions, 
we can show that there exist two functions that are pointwise equal in \CIC, and hence equal by extensionality, but are no longer equivalent in \GCIC/\CCIC. 
Consider the two functions \coqe{id_nat} and \coqe{add0} below:
\begin{coq}
     id_nat := fun n : nat => n         add0 := fun n : nat => n + 0
\end{coq}
In \CIC, these functions are not convertible, but they are observationally equivalent.
However, they would not be observationally equivalent in \GCIC.
To see why, consider the following term:

\begin{coq}
  test := fun f => J (nat -> nat) (fun _ => bool) id_nat true f (refl id_nat::?_Type::id_nat = f)
\end{coq}
We have \coqe{test id_nat ==* true} because, by \pgrad,
\coqe{refl id_nat::?_Type::id_nat = id_nat ==* refl id_nat}.
However, because \coqe{add0} is {\em not} convertible to \coqe{id_nat},
\coqe{refl id_nat::id_nat = add0} cannot possibly reduce to \coqe{refl}, and
thus would need to reduce either to $\err$ or \?; and so does \coqe{test add0}.

This means that a model for such a gradual type theory would need to be intensional, conversely to the extensional models usually used to justify type theories. Studying such a model as well as exploring alternatives approaches to propositional equality in a gradual type theory are interesting venues for future work.

\section{Related Work}
\label{sec:related}

\mparagraph{Bidirectional typing and unification}
Our framework uses a bidirectional version of the type system of
\CIC.
Although this presentation is folklore among type theory
specialists~\cite{McBride2019}, the type system of \CIC is rarely
presented in this way on paper and has been studied in details only
recently~\cite{LennonBertrand2021}.
However, the bidirectional approach becomes necessary when dealing
with unification and elaboration of implicit arguments.
Bidirectional elaboration is a common feature of proof assistant
implementations, for instance \cite{Asperti2012}, as it clearly delineates what
information is available to the elaboration system in the different typing
modes. In a context with missing information due to implicit arguments, those
implementations face the undecidable higher order unification \cite{Dowek2001}.
In this error-less context, the solution must be a form of under-approximation,
using complex heuristics \cite{Ziliani2017}. Deciding consistency is very close
to unification, as observed by \citet{castagnaAl:popl2019}, but our notion of
consistency over-approximates unification, making sure that unifiable terms are
always consistent, relying on errors to catch invalid over-approximations at
runtime.

\mparagraph{Dependent types with effects}
As explained in this paper, introducing the unknown type of gradual
typing also require, in a dependently-typed setting, to introduce
unknown terms at any type. This means that a gradual dependent type
theory naturally endorses an effectful mechanism which is similar to
having exceptions.
This connects \GCIC to the literature on dependent types and effects.
Several programming languages mix dependent types with effectful computation, either giving up on metatheoretical properties, such as Dependent Haskell~\cite{eisenberg2016dependent}, or by restricting the dependent fragment to pure expressions~\cite{xiPfenning:pldi98,swamyAl:popl2016}.
In the context of dependent type theories,
\citet{PedrotT17,pedrotTabareau:esop2018} have leveraged the monadic approach to
type theory, at the price of a weaker form of dependent large
elimination for inductive types.
The only way to recover full elimination is to accept a weaker form of
logical consistency, as crystallized by the fire triangle between
observable effects, substitution and logical
consistency~\cite{pedrotTabareau:popl2020}.

\mparagraph{Ordered and directed type theories}
The monotone model of \CCIC interpret types as posets in order to
give meaning to the notion of precision.
Interpretations of dependent type theories in ordered structures goes back to
various works on domain theoretic and realizability interpretations of (partial)
Martin-Löf Type Theory~\cite{PalmgrenS90,Ehrhard88}.
More recently, \citet{LicataH11} and \citet{North19} extend type theory with directed
structures corresponding to a categorical interpretation of types, a higher
version of the monotone model we consider.

\mparagraph{Hybrid typing} \cite{ouAl:tcs2004} present a programming language with separate dependently- and simply-typed fragments, using arbitrary runtime checks at the boundary. \citet{knowlesFlanagan:toplas2010} support runtime checking of refinements. In a similar manner, \cite{tanterTabareau:dls2015} introduce casts for subset types with decidable properties in \Coq. They use an axiom to denote failure, which breaks weak canonicity. Dependent interoperability \cite{oseraAl:plpv2012,dagandAl:jfp2018} supports the combination of dependent and non-dependent typing through deep conversions.
All these approaches are more intended as programming languages than as type theories, and none support the notion of (im)precision that is at the heart of gradual typing.

\mparagraph{Dependent contracts}
\cite{greenbergAl:popl2010} relates hybrid typing to dependent contracts, which are dynamically-checked assertions that can relate the result of a function application to its argument~\cite{findlerFelleisen:icfp2002}. The semantics of dependent contracts are subtle because contracts include arbitrary code, and in particular one must be careful not to violate the precondition on the argument in the definition of the postcondition contract~\cite{bloomMcAllester:jfp2006}.  Also, blame assignment when the result and/or argument are themselves higher-order is subtle. Different variants of dependent contracts have been studied in the literature, which differ in terms of the violations they report and the way they assign blame~\cite{greenbergAl:popl2010,dimoulasAl:popl2011}.
An in-depth exploration of blame assignment for gradual dependent type theories such as \GCIC is an important perspective for future work.

\mparagraph{Gradual typing}
The blame calculus of \citet{wadlerFindler:esop2009} considers subset types on base types, where the refinement is an arbitrary term, as in hybrid type checking \cite{knowlesFlanagan:toplas2010}. It however lacks the dependent function types found in other works.
\citet{lehmannTanter:popl2017} exploit the Abstracting Gradual Typing (AGT) methodology~\cite{garciaAl:popl2016} to design a language with imprecise formulas and implication. They support dependent function types, but gradual refinements are only on base types refined with decidable logical predicates.
\citet{eremondiAl:icfp2019} also use AGT to develop approximate normalization and GDTL. While being a clear initial inspiration for this work, the technique of approximate normalization cannot yield a computationally-relevant gradual type theory (nor was its intent, as clearly stated by the authors). We hope that the results in our work can prove useful in the design and formalization of such gradual dependently-typed programming languages. \citet{eremondiAl:icfp2019} study the dynamic gradual guarantee, but not its reformulation as graduality~\cite{newAhmed:icfp2018}, which as we explain is strictly stronger in the full dependent setting. Finally, while AGT provided valuable intuitions for this work,
graduality as embedding-projection pairs was the key technical driver in the design of \CCIC.

\section{Conclusion}
\label{sec:conclusion}

We have unveiled a fundamental tension in the design of gradual dependent type
theories between conservativity with respect to a dependent type theory such as \CIC,
normalization, and graduality.
We explore several resolutions of this Fire Triangle of Graduality, yielding
three different gradual counterparts of \CIC, each compromising with one
edge of the Triangle.
We develop the metatheory of all three variants of \GCIC thanks to a common
formalization, parametrized by two knobs controlling universe constraints on
dependent product types in typing and reduction.
This work opens a number of perspectives for future work, in addition to addressing the limitations discussed in \cref{sec:future-cic}.
The delicate interplay between universe levels and computational
behavior of casts begs for a more flexible approach to the
normalizing \GCICT, for instance using gradual universes.
The approach based on multiple universe hierarchies to support logically
consistent reasoning about exceptional programs~\cite{pedrotAl:icfp2019} could
be adapted to our setting in order to provide a seamless integration
inside a single theory of gradual features together with
standard \CIC without compromising normalization.
This could also open the door to supporting consistent reasoning about gradual
programs in the context of \GCIC.
On the more practical side, there is still a lot of challenges ahead in order to implement a gradual incarnation of \GCIC in \Coq or \Agda, possibly parametrized in order to support the three variants presented in this work.

\bibliography{strings,pleiad,bib,common,refs}

\ifappendix
\clearpage
\appendix

\section{Index of notations}
\label{sec:notations}
\begin{center}
  \begin{tabular}{|c|c|c|p{0.35\textwidth}|}
    \hline
    Description & Symbol & Ref & Remark\\
    \hline
    \multicolumn{4}{|c|}{Section \labelcref{sec:bidirectional-cic}} \\
    \hline
    Universe & $[]_i$ & \cpageref{fig:syntax-cic} & At level $i$\\
    Inductive type & $I\ulev{i}(\orr{a})$ & \cpageref{fig:syntax-cic} & At level $i$ with parameters $\orr{a}$ \\
    Inductive constructor & $ c_k^I\ulev{i}(\orr{a},\orr{b})$ & \cpageref{fig:syntax-cic} & $k$-th constructor of $I$ at level $i$ with parameters $\orr{a}$ and arguments $\orr{b}$ \\
    Inductive destructor & $\match{I}{s}{z.P}{f.\orr{y}.\orr{b}}$ & \cpageref{fig:syntax-cic} & corresponds to fix + match in Coq \\
    Substitution & $t\subs{u}{x}$ & \cpageref{fig:syntax-cic} & extended to parallel substitution \\
    Types of parameters & $\pars(I,i)$ & \cpageref{params} & of inductive $I$ at level $i$ \\
    Types of arguments & $\args(I,i,c_k)$ & \cpageref{params} & of constructor $k$ of inductive $I$ at level $i$ \\
    Substitution in parameters & $\pars(I,i)\parsub{\orr{a}}$ & \cpageref{params} & \\
    Substitution in arguments & $\args(I,i,c_k)\parsub{\orr{a}, \orr{b}}$ & \cpageref{params} & \\
    Context checking & $\vdash \Gamma$ & \cref{fig:bidir} & \\
    Type inference & $\Gamma \vdash t \inferty T$ & \cref{fig:bidir} & \\
    Type checking & $\Gamma \vdash t \checkty T$ & \cref{fig:bidir} & \\
    Constrained inference & $\Gamma \vdash t \pcheckty{\bullet}$ & \cref{fig:bidir} &
    $\bullet$ is either $\Pi$, $I$ or $[]$ \\
    One-step reduction & $\redCCIC$ & \cref{fig:bidir} & \emph{full}, \ie with all congruences \\
    Reduction & $\rtred$ & \cref{fig:bidir} & reflexive, transitive closure of $\redCCIC$ \\
    Conversion & $\conv$ & \cref{fig:bidir} & \\
    \hline
    \multicolumn{4}{|c|}{Section \labelcref{sec:gcic}} \\
    \hline
    Unknown type & $\tcol{\?_{T}}$ & \cpageref{fig:syntax-castcic} & in \CCIC \\
    Error & $\tcol{\err_{T}}$ & \cpageref{fig:syntax-castcic} & \\
    Cast & $\cast{\tcol{T}}{\tcol{T'}}{\tcol{t}}$ & \cpageref{fig:syntax-castcic} & \\
    Level of product type & $\sortOfPi{i}{j}$ & \cref{fig:univ-param} & \\
    Level of product germ & $\castOfPi{i}$ & \cref{fig:univ-param} & \\
    Type heads & $\H$ & \cref{fig:head-germ} & \\
    Head of a type & $\hd(\tcol{T})$ & \cref{fig:head-germ} & \\
    Germ & $\stalkCIC{i}{h}$ & \cref{fig:head-germ} & Least precise type with head $h$ at level $i$ \\
    Parallel reduction & $\paraRed$ & \cref{lem:confluence} & \\
    Canonical term & $\can{\tcol{t}}$ & \cref{fig:CCIC-canonical} & inductive caracterization \\
    Neutral term & $\neu{\tcol{t}}$ & \cref{fig:CCIC-canonical} & inductive caracterization \\
    $\alpha$-consistency & $\tcol{t} \acons \tcol{t'}$ & \cref{fig:acons} & \\
    Consistent conversion & $\tcol{t} \cons \tcol{t'}$ & \cref{def:cons} & Also called \emph{consistency} \\
    Unknown type & $\scol{\?\ulev{i}}$ & \cpageref{gcic-syntax} & in \GCIC, at level $i$ \\
    Elaboration (inference) & $\inferelab{}{t}{}{\Gamma}{t'}{T}$ & \cref{fig:elaboration} & \\
    Elaboration (checking) & $\checkelab{}{t}{}{\Gamma}{t'}{T}$ & \cref{fig:elaboration} & \\
    Elaboration (constrained) & $\pcheckelab{\bullet}{}{t}{}{\Gamma}{t'}{T}$ & \cref{fig:elaboration} & \\
    Structural precision & $\tcol{\GG} \vdash \tcol{t} \capre \tcol{t'}$ & \cref{fig:apre-ccic} & extended to contexts pointwise \\
    Definitional precision & $\tcol{\GG} \vdash \tcol{t} \cdpre \tcol{t'}$ & \cref{fig:apre-ccic} & extended to contexts pointwise \\
    Typing in \CIC/\CCIC & $\cicty$ / $\caty$ & \cref{sec:gcic-theorems2} & to differentiate between systems \\
    Equiprecision & $\tcol{\GG} \vdash \tcol{t} \caequipre \tcol{t'}$ & \cref{def:equipre} & \\
    Erasure & $\eras(\tcol{t})$ & \cref{def:erasure} & \\
    Syntactic precision & $\scol{t} \apre \scol{t'}$ & \cref{fig:apre-gcic} & \\
    \hline
  
  \end{tabular}
\end{center}
\begin{center}
  \begin{tabular}{|c|c|c|p{0.35\textwidth}|}
    \hline
    Description & Symbol & Ref & Remark\\
    \hline
    \multicolumn{4}{|c|}{Section \labelcref{sec:realizing-ccic}} \\
    \hline
    \CIC + Induction-Recursion & \CICIR &  \cref{sec:bare-model} & Target for the discrete model \\
     Judgements for \CICIR & $\irty$ & & \\
    \CICIR + quotients & \CICIRQ & & Target for the monotone models \\
    Universe of codes &$\U$ & \cref{fig:discrete-univ} & \\
    Bipointed poset on inductive I &$\liftErr{I}$ &  \cref{sec:bare-model} & \\
    Top element in $\liftErr{I}$ & $\lifttop{\liftErr{I}}$ & \cref{sec:bare-model}& \\
    Bottom element in $\liftErr{I}$ & $\liftbot{\liftErr{I}}$& \cref{sec:bare-model}& \\
    Bipointed poset on 
    $\nat$ & $\liftNat{}$ & \cref{sec:bare-model}& \\
    Bipointed poset on 
    $\Sigma$ & $\liftSigma{}$ & \cref{sec:bare-model}& \\
    Code for nat &$\natU$ & \cref{sec:bare-model}& \\
    Code for dependent product& $\PiU$ & \cref{sec:bare-model}& \\
    Code for universes &$\uU_i$ & \cref{sec:bare-model}& \\
    Code for unknown types &$\unkU[i]$ & \cref{sec:bare-model}& \\
    Code for error type &$\errU$ & \cref{sec:bare-model}& \\
    Decoding function to types &$\El$ & \cref{fig:discrete-univ,fig:IRUniverse}& $\El : \U \to \poset$ \\
    Type heads & $\H_i$ & \cref{fig:head-germ} & \\
    Head of a type & $\hd(T)$ & \cref{fig:head-germ} & \\
    Germ as a code &$\stalkCode{i}{h}$ & \cref{sec:bare-model} & \\
    Germ & $\stalkCIC{i}{h}$ & \cref{fig:head-germ} & Least precise type with head ${h \in \H_i}$ at level $i$ \\
    Cast in discrete model&$\cas$ & \cref{fig:cast-implem-discrete}&\\
    Discrete translation of types &$\bareTy{\cdot}$ & \cref{fig:discrete-translation} & \\
    Discrete translation of terms & $\bareTm{\cdot}$& \cref{fig:discrete-translation} & \\
    Order on type $A$ & $\sqsubseteq^{A}$& \cref{sec:poset-model-dtt} &\\
    Type of posets & \poset  & \cref{sec:poset-model-dtt} & \\
    Monotone dependent product & $ \Pmon{}\,A\,B$  & \cref{sec:poset-model-dtt}& \\
    Ep-pairs & $A \distr{} B$ & \cref{def:ep-pair} & \\
    Upcast & $\upcast_d$ & \cref{def:ep-pair}& Embedding part of an \eppair{} $d$ \\
    Downcast & $\downcast_d$ & \cref{def:ep-pair}& Projection part of an \eppair{} $d$ \\
    Monotone unknown type & $\unkMon[i]$ & \cref{sec:realizing-unknown-type} & \\
    Quotiented pairs in $\unkMon[i]$ & $\unkInj{h}{x}$ & \cref{sec:realizing-unknown-type}& \\
    Top element in $\unkMon[i]$ & $\lifttop{\unkMon[i]}$ & \cref{sec:realizing-unknown-type}& \\
    Bottom element in $\unkMon[i]$ & $\liftbot{\unkMon[i]}$& \cref{sec:realizing-unknown-type}& \\
    Decoding function to \eppairs &$\ElRel$ & \cref{fig:IRUniverse}& $\ElRel(A{\sqsubseteq}B) : \El\,A \distr{} \El\,B$ \\
    Precision on terms& $\precision{A}{B}$& \cref{fig:IRUniverse} &  \\
    Monotone translation of types &$\monTy{\cdot}$ & \cref{fig:monotone-translation} & \\
    Monotone translation of terms & $\monTm{\cdot}$& \cref{fig:monotone-translation} & \\
    Propositional precision & $\tcol{\Gamma} \caty \tcol{t} \precision{\tcol{T}}{\tcol{U}} \tcol{u}$&\cref{def:propositional-precision} & \\
    $\omega$-continuous maps & $A \tocont B$ & \cref{sec:grad-non-term} & $A,B$ $\omega$-cpos \\
    $\omega$-continuous \eppair & $A \distrcont{} B$ & \cref{sec:grad-non-term}& \\
    \hline
  \end{tabular}
\end{center}
\newpage

\section{Complements on Elaboration and \CCIC}

This section gives an extended account of \cref{sec:gcic-to-ccic}. The structure is the same, and we refer to the main section when things are already spelled out there.

\subsection{\CCIC}

We state and prove a handful of standard, technical properties of \CCIC, that are useful in the next sections. They should not be very surprising, the main specific point here is their formulation in the bidirectional setting.

\begin{property}[Weakening]
	If $\tcol{\Gamma} \vdash \tcol{t} \inferty \tcol{T}$ then $\tcol{\Gamma, \Delta} \vdash \tcol{t} \inferty \tcol{T}$, and similarly for the other typing judgments.
\end{property}

\begin{proof}
	We show by (mutual) induction on the typing derivation the more general statement that if $\tcol{\Gamma, \Gamma'} \vdash \tcol{t} \inferty \tcol{T}$ then $\tcol{\Gamma, \Delta, \Gamma'} \vdash \tcol{t} \inferty \tcol{T}$. It is true for the base cases (including the variable), and we can check that all rules preserve it.
\end{proof}

\begin{property}[Substitution]
	If $\tcol{\Gamma, x : A, \Delta} \vdash \tcol{t} \inferty \tcol{T}$ and $\tcol{\Gamma} \vdash \tcol{u} \checkty \tcol{A}$ then $\tcol{\Gamma, \Delta\subs{u}{x}} \vdash \tcol{t\subs{u}{x}} \inferty \tcol{S}$ with $\tcol{S} \conv \tcol{T\subs{u}{x}}$.
\end{property}

\begin{proof}
	Again, the proof is by mutual induction on the derivation. In the checking judgment, we use the transitivity of conversion to conclude. In the constrained inference, we need injectivity of type constructors, which is a consequence of confluence.
\end{proof}

\begin{property}[Validity]
	\label{prop:validity}
	If $\tcol{\Gamma} \vdash \tcol{t} \inferty \tcol{T}$ and $\vdash \tcol{\Gamma}$, then $\tcol{\Gamma} \vdash \tcol{T} \pcheckty{[]} \tcol{[]_i}$ for some $i$.
\end{property}

\begin{proof}
	Once again, this is a routine induction on the inference derivation, using subject reduction to handle the reductions in the constrained inference rules, to ensure that the reduced type is still well-formed. The hypothesis of context well-formedness is needed for the base case of a variable, to get that the type obtained from the context is indeed well-typed.
\end{proof}

\subsection{Precision and Reduction}
\label{sec:precision-reduction}

\paragraph{Structural lemmas} Let us start our lemmas by counterparts to the weakening and substitution lemmas for precision.

\begin{lemma}[Weakening of precision]
	If $\tcol{\GG} \vdash \tcol{t} \capre \tcol{t'}$, then $\tcol{\GG, \DD} \vdash \tcol{t} \capre \tcol{t'}$ for any $\tcol{\DD}$.
\end{lemma}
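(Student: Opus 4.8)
The plan is to proceed by induction on the derivation of $\tcol{\GG} \vdash \tcol{t} \capre \tcol{t'}$, mutually with the analogous weakening statement for definitional precision $\tcol{\GG} \vdash \tcol{t} \cdpre \tcol{t'}$, since the two judgments are defined by mutual induction in \cref{fig:apre-ccic}. For the definitional precision judgment, weakening is immediate: the closure rules only add reduction steps on one side or the other, and reduction is obviously stable under context extension, so a derivation of $\tcol{\GG} \vdash \tcol{t} \cdpre \tcol{t'}$ weakens to $\tcol{\GG,\DD} \vdash \tcol{t} \cdpre \tcol{t'}$ by replaying the same steps on top of the weakened structural-precision derivation provided by the mutual induction hypothesis.

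For structural precision, I would go rule by rule. The diagonal rules (\nameref{infrule:capre-univ}, \nameref{infrule:capre-prod}, \nameref{infrule:capre-abs}, \nameref{infrule:capre-app}, \nameref{infrule:capre-var}, \nameref{infrule:capre-diag-cast}, \nameref{infrule:capre-ind}, \nameref{infrule:capre-cons}) are pure congruences: one simply applies the induction hypotheses to the premises, being slightly careful that a premise stated in an extended context $\tcol{\GG, x : A \mid A'}$ weakens to $\tcol{\GG, \DD, x : A \mid A'}$, which is handled as usual by proving the more general statement that allows the extension $\tcol{\DD}$ to be inserted anywhere in the middle of the context (i.e. $\tcol{\GG, \GG'} \vdash \tcol{t} \capre \tcol{t'}$ implies $\tcol{\GG, \DD, \GG'} \vdash \tcol{t} \capre \tcol{t'}$), exactly as in the proof of the weakening property for typing just above. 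The \nameref{infrule:capre-fix} rule is the only diagonal rule with typing premises, namely the constrained-inference judgments $\tcol{\fs{\GG}} \vdash \tcol{s} \pcheckty{I} \tcol{I(\orr{a})}$ and its primed counterpart; for these we invoke the already-established weakening property for \CCIC typing.

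The non-diagonal rules (\nameref{infrule:capre-castr}, \nameref{infrule:capre-castl}, \nameref{infrule:capre-unk}, \nameref{infrule:capre-unk-univ}, \nameref{infrule:capre-err}, \nameref{infrule:capre-err-lambda}) also carry typing or constrained-inference premises — things like $\tcol{\fs{\GG}} \vdash \tcol{t} \inferty \tcol{T}$, $\tcol{\GG} \vdash \tcol{T} \cdpre \tcol{A'}$, or $\tcol{\fs{\GG}} \vdash \tcol{A} \pcheckty{[]} \tcol{[]_i}$. Each such premise is weakened either by the weakening property for \CCIC typing (for the inference and constrained-inference judgments) or by the mutual induction hypothesis on $\cdpre$ (for the definitional-precision side conditions); the remaining precision premise is handled by the main induction hypothesis, and the conclusion follows by re-applying the same rule in the context $\tcol{\GG, \DD}$.

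I do not anticipate a genuine obstacle here: this is a routine structural induction, and the only mild subtlety is the bookkeeping of context extensions in the middle of a double-struck context (handled by generalizing the statement) together with the need to feed the typing premises through the previously-proven weakening lemma for \CCIC rather than through the induction hypothesis. The proof is essentially identical in shape to the weakening proof for the typing judgments stated just before, so I would keep the write-up correspondingly terse.
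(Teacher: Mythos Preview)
Your proposal is correct and follows the same approach as the paper, which simply states that the proof is by induction on the precision derivation, using weakening of \CCIC{} to handle the typing premises. Your write-up is a faithful and accurate expansion of that one-line sketch, including the mutual induction with $\cdpre$ and the generalization to mid-context insertion, both of which are the natural bookkeeping steps such a proof requires.
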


\begin{proof}
	This is by induction on the precision derivation, using weakening of \CCIC to handle the uses of typing.
\end{proof}

\begin{lemma}[Substitution and precision]
	If $\tcol{\GG, x : S \mid S', \DD} \vdash \tcol{t} \capre \tcol{t'}$, $\tcol{\GG} \vdash \tcol{u} \capre \tcol{u'}$, $\tcol{\fs{\GG}} \vdash \tcol{u} \checkty \tcol{S}$ and $\tcol{\sn{\GG}} \vdash \tcol{u'} \checkty \tcol{S'}$ then $\tcol{\GG, \DD\subs{u \mid u'}{x}} \vdash \tcol{t\subs{u}{x}} \capre \tcol{t'\subs{u'}{x}}$.
\end{lemma}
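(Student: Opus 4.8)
The plan is to proceed by induction on the derivation of $\tcol{\GG, x : S \mid S', \DD} \vdash \tcol{t} \capre \tcol{t'}$, mutually with the analogous statement for definitional precision $\cdpre$. This mirrors exactly the structure of the substitution lemma for \CCIC typing, with the added bookkeeping that we now carry two substituted terms $\tcol{u}$ and $\tcol{u'}$ rather than one. For the diagonal congruence rules (\nameref{infrule:capre-univ}, \nameref{infrule:capre-prod}, \nameref{infrule:capre-abs}, \nameref{infrule:capre-app}, \nameref{infrule:capre-ind}, \nameref{infrule:capre-cons}, \nameref{infrule:capre-diag-cast}, \nameref{infrule:capre-fix}) the argument is purely structural: apply the induction hypothesis to each premise, threading the substitution through context extensions $\tcol{\GG, y : A \mid A'}$ by treating them as part of $\tcol{\DD}$. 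The variable rule \nameref{infrule:capre-var} splits into two cases: if the variable is $x$, we must return $\tcol{u} \capre \tcol{u'}$, which is the hypothesis (after weakening to push past $\tcol{\DD\subs{u \mid u'}{x}}$); otherwise the variable is untouched and we reapply \nameref{infrule:capre-var}.

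The more delicate cases are the non-diagonal rules that carry typing premises: \nameref{infrule:capre-castr}, \nameref{infrule:capre-castl}, \nameref{infrule:capre-unk}, \nameref{infrule:capre-unk-univ}, \nameref{infrule:capre-err}, and \nameref{infrule:capre-err-lambda}. Here, besides applying the induction hypothesis to the precision premises, one must also substitute in the typing and reduction side-conditions. For this, I would invoke the substitution property of \CCIC typing (the \textbf{Substitution} property stated just before this lemma): from $\tcol{\fs{\GG}} \vdash \tcol{u} \checkty \tcol{S}$ and a typing premise like $\tcol{\fs{\GG}, x : S, \fs{\DD}} \vdash \tcol{t} \inferty \tcol{T}$ we obtain $\tcol{\fs{\GG}, \fs{\DD}\subs{u}{x}} \vdash \tcol{t\subs{u}{x}} \inferty \tcol{S_0}$ with $\tcol{S_0} \conv \tcol{T\subs{u}{x}}$. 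Since the precision rules only require the inferred type up to the definitional-precision premises (which are closed under conversion/reduction by construction), this convertibility is harmless. The definitional precision premises $\tcol{\GG} \vdash \tcol{T} \cdpre \tcol{A'}$ are handled by the mutual induction hypothesis on $\cdpre$, together with the fact that substitution commutes with reduction (if $\tcol{T} \redCCIC \tcol{s}$ then $\tcol{T\subs{u}{x}} \rtred \tcol{s\subs{u}{x}}$).

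For the $\cdpre$ half of the mutual statement, the three defining rules are: the inclusion from $\capre$, and the two closure-by-reduction rules. The inclusion case follows from the $\capre$ induction hypothesis directly. The reduction-closure cases follow because full reduction is stable under substitution on both sides: if $\tcol{t} \redCCIC \tcol{s}$ then $\tcol{t\subs{u}{x}} \redCCIC \tcol{s\subs{u}{x}}$ (a standard property of the reduction relation, including all the cast and exception-propagation rules of \cref{fig:CCIC-reduction}), so we can replay the step after substituting and appeal to the induction hypothesis for the remaining $\cdpre$ derivation. One subtlety: in the right-reduction closure rule the reducing term is $\tcol{t'}$, driven by reduction in the second component $\tcol{\sn{\GG}}$ with substituent $\tcol{u'}$, so one must be careful to use the matching hypothesis $\tcol{\sn{\GG}} \vdash \tcol{u'} \checkty \tcol{S'}$ there.

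\textbf{Main obstacle.} The principal difficulty is not any single case but the mutual/nested structure: structural precision is defined mutually with definitional precision, and definitional precision's side-conditions feed back into structural precision's typing premises. One must set up the induction on a measure that decreases across this mutual dependency (e.g. the size of the $\capre$/$\cdpre$ derivation, with the $\cdpre$-from-$\capre$ inclusion not increasing it), and check that the substitution lemma for \CCIC typing is being applied only to genuinely smaller data. A secondary but routine concern is that inserting a substitution can change which typing rule names the inferred type "on the nose" versus "up to conversion"; since the precision rules are phrased with $\inferty$ and $\cdpre$ (the latter already a reduction-closure), this is absorbed without needing to change the conclusion, but it must be stated carefully. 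I expect the whole proof to be mechanical once the induction measure is pinned down — which is exactly why only a sketch is warranted here.
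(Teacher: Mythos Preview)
Your proposal is correct and follows essentially the same approach as the paper: induction on the precision derivation (mutually with $\cdpre$), using weakening in the variable case and the \CCIC substitution property for the typing side-conditions. The paper's proof is just a one-sentence sketch of exactly this, so your more detailed case analysis is a faithful expansion of it.
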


\begin{proof}
	The substitution property follows from weakening, again by induction on the precision derivation. Weakening is used in the variable case where $x$ is replaced by $u$ and $u'$, and the substitution property of \CCIC appears to handle the uses of typing.
\end{proof}

\paragraph{Catch-up lemmas}
With these structural lemmas at hand, let us turn to the proofs of the catch-up lemmas.


\begin{proof}[Proof of \cref{lem:catchup-univ}]
	We want to prove the following: under the hypothesis that $\tcol{\fs{\GG}} \capre \tcol{\sn{\GG}}$, if $\tcol{\GG} \vdash \tcol{[]_i} \cdpre \tcol{T'}$ and $\tcol{\sn{\GG}} \vdash \tcol{T'} \pcheckty{[]} \tcol{[]_{j}}$, then either $\tcol{T'} \rtred \tcol{\?_{[]_{j}}}$ with $i + 1 \leq j$, or $\tcol{T'} \rtred \tcol{[]_i}$.

	The proof is by induction on the precision derivation, mutually with the same property where $\cdpre$ is replaced by $\capre$.

	Let us start with the proof for $\capre$. Using the precision derivation, we can decompose $\tcol{T'}$ into $\tcol{\cast{U_{n-1}}{S_n}{\dots \cast{U_1}{S_2}{T''}}}$, where the casts come from \nameref{infrule:capre-castr} rules, and $\tcol{T''}$ is either $\tcol{[]_i}$ (rule \nameref{infrule:capre-univ}) or $\tcol{?_{S}}$ for some $\tcol{S}$ (rule \nameref{infrule:capre-unk}), and we have $\tcol{\GG} \vdash \tcol{[]_{i+1}} \cdpre \tcol{S_k}$, $\tcol{\GG} \vdash \tcol{[]_{i+1}} \cdpre \tcol{T_k}$ and $\tcol{\GG} \vdash \tcol{[]_{i+1}} \cdpre \tcol{S}$.
	By induction hypothesis, all of $\tcol{S_k}$, $\tcol{T_k}$ and $\tcol{S}$ reduce either to $\tcol{[]_{i+1}}$ or some $\tcol{\?_{[]_l}}$ with $i + 1 \leq l$. Moreover, because $\tcol{T'}$ type-checks against $\tcol{[]_j}$, we must have $\tcol{S_n} \conv \tcol{[]_j}$. This implies that $\tcol{S_n}$ cannot reduce to $\tcol{?_{[]_{l}}}$ by confluence, and thus it must reduce to $\tcol{[]_{i+1}}$.

	Using that $i+1 \leq l$ and rules \nameref{redrule:down-unk}, \nameref{redrule:univ-univ} and \nameref{redrule:up-down} giving respectively
	\begin{align*}
	\tcol{\cast{?_{[]_l}}{X}{\?_{?_{[]_l}}}} &\redCCIC \tcol{\?_X} \\
	\tcol{\cast{[]_{i+1}}{[]_{i+1}}{t}} &\redCCIC \tcol{t} \\
	\tcol{\cast{?_{[]_{l}}}{X}{\cast{[]_{i+1}}{?_{[]_{l}}}{t}}} &\redCCIC \tcol{\cast{[]_{i+1}}{X}{t}}
	\end{align*}
	we can reduce away all casts. We thus get $\tcol{T'} \rtred \tcol{[]_i}$ or $\tcol{T'} \rtred \tcol{\?_{[]_{i+1}}}$, as expected.

	For $\cdpre$, if $\tcol{\GG} \vdash \tcol{[]_i} \cdpre \tcol{T'}$ then by decomposing the precision derivation there is an $\tcol{S'}$ such that $\tcol{T'} \rtred \tcol{S'}$, $\tcol{\GG} \vdash \tcol{[]_i} \capre \tcol{S'}$, and by subject reduction $\tcol{\fs{\GG}} \vdash \tcol{S'} \pcheckty{[]} \tcol{[]_j}$. By induction hypothesis, either $\tcol{S'} \rtred \tcol{[]_i}$ or $\tcol{S'} \rtred \tcol{?_{[]_{i+1}}}$, and composing both reductions we get the desired result.
\end{proof}


\begin{proof}[Proof of \cref{lem:catchup-type}]
	The proof of those catch-up lemmas is very similar to the previous one for structural precision, but this time without the need for induction---we use \cref{lem:catchup-univ} instead. We show the one for product types, the others are identical.

	First, let us show the property for $\capre$. Decompose $\tcol{T'}$ into $\tcol{\cast{U_{n-1}}{S_n}{\dots \cast{U_1}{S_2}{T''}}}$, where $\tcol{T''}$ is not a cast, but either some $\tcol{\?_{S}}$ or a product type structurally less precise than $\tcol{\P x : A.B}$. Now by \cref{lem:catchup-univ}, $\tcol{U_k}$, $\tcol{T_k}$ and possibly $\tcol{S}$ all reduce to $\tcol{[]}$ or $\tcol{\?_{[]}}$. Using the same reduction rules as before, all casts can be reduced away, leaving us with either $\tcol{\?_{[]}}$ or a product type structurally less precise than $\tcol{\P x : A.B}$, as stated.

\end{proof}

\begin{proof}[Proof of \cref{lem:catchup-lambda}]
	The proof still follows the same idea: decompose the less precise term as a series of casts, and show that all those casts can be reduced, using \cref{lem:catchup-type} for product types. However it is somewhat more complex, because the reduction of a cast between product types does a substitution, which we need to handle using the previous substitution lemma for precision.

	Let us now detail the reasoning. First, decompose $\tcol{s'}$ into $\tcol{\cast{U_{n-1}}{S_n}{\dots \cast{U_1}{S_2}{u'}}}$, where $\tcol{u'}$ is either $\tcol{\l x : A''. t''}$ or $\tcol{\?_{S}}$ for some $\tcol{S}$. All of the $\tcol{S_k}$, $\tcol{U_k}$ and possibly $\tcol{S}$ are definitionally less precise than $\tcol{\P x : A. B}$. By definition of $\cdpre$ they all reduce to a term structurally less precise than a reduct of $\tcol{\P x : A . B}$, which must be a product type, and thus by \cref{lem:catchup-type} they all reduce to either some $\tcol{?_{[]_{j}}}$ or some product type. Moreover, given the typing hypothesis and confluence $\tcol{S_n}$ can only be in the second case. By rule \nameref{redrule:down-unk}, we get
			\[\tcol{\cast{?_{[]}}{X}{?_{?_{[]}}} \redCCIC ?_{X}}\]
	so if $\tcol{S}$ is $\tcol{\?_{[]}}$ we can reduce the innermost casts until it is (knowing that we will encounter one because $\tcol{S_n}$ is a product type), then use rule \nameref{redrule:prod-unk} on $\tcol{u'}$ if it applies, so that without loss of generality we can suppose that $\tcol{u'}$ is an abstraction.

	Now we show that all casts reduce, and that this reduction preserves precision, starting with the innermost one. There are three possibilities for that innermost cast.

	If it is $\tcol{\cast{\stalkCIC{j}{\Pi}}{?_{[]_j}}{u'}}$, then by typing this cannot be the outermost cast, and thus rule \nameref{redrule:up-down} applies to get
		\[\tcol{ \cast{\?_{[]_j}}{X}{\cast{\stalkCIC{j}{\Pi}}{\?_{[]_j}}{u'}} \redCCIC \cast{\stalkCIC{j}{\Pi}}{X}{u'} }\]

	In the second case, the cast is some $\tcol{\cast{\P x : A_1. B_1}{\P x : A_2. B_2}{\l x : A''. t''}}$, and rule \nameref{redrule:prod-prod} applies to give
        \[\begin{array}{r}
                  \tcol{\cast{\P x : A_1. B_1}{\P x : A_2. B_2}{\l x :
                  A''. t''}} \redCCIC \hspace{12em}  \\ \tcol{\l x : A''
                  . \cast{B_1\subs{\cast{A_2}{A_1}{x}}{x}}{B_2}{t''\subs{\cast{A_2}{A''}{x}}{x}}} \end{array}\]
	Moreover, using the precision hypothesis of \nameref{infrule:capre-castr}, we know that $\tcol{\GG} \vdash \tcol{\P x : A . B} \cdpre \tcol{\P x : A_1 . B_2}$ and $\tcol{\GG} \vdash \tcol{\P x : A . B} \cdpre \tcol{\P x : A_2 . B_2}$. From the first one, using substitution and rule \nameref{infrule:capre-castr}, we get that $\tcol{\GG, x : A \mid A_2} \vdash \tcol{B} \cdpre \tcol{B_1\subs{\cast{A_2}{A_1}{x}}{x}}$.
  The second gives in particular that $\tcol{\GG} \vdash \tcol{A} \cdpre \tcol{A_2}$. Finally, inverting the proof of $\tcol{\GG} \vdash \tcol{\l x : A . t} \capre \tcol{\l x : A'' . t''}$ we also have $\tcol{\GG} \vdash \tcol{A} \capre \tcol{A''}$ and $\tcol{\GG, x : A \mid A''} \vdash \tcol{t} \capre \tcol{t''}$. From this, again by substitution, we can derive $\tcol{\GG, x : A \mid A''} \vdash \tcol{t} \capre \tcol{t''\subs{\cast{A_2}{A''}{x}}{x}}$. Combining all of those, we can construct a derivation of
	\[ \tcol{\GG} \vdash \tcol{\l x : A . t} \capre \tcol{\l x : A_2 . \cast{B_1\subs{\cast{A_2}{A_1}{x}}{x}}{B_2}{t'\subs{\cast{A_2}{A''}{x}}{x}}} \]
	by a use of \nameref{infrule:capre-abs} followed by one of \nameref{infrule:capre-castr}.

	The last case corresponds to $\tcol{\cast{\P x : A'' . B''}{?_{[]_j}}{u'}}$ when $\tcol{\P x : A'' . B''}$ is not $\tcol{\stalkCIC{j}{h}}$, in which case the reduction that applies is \nameref{redrule:prod-germ}, giving
	\[\tcol{\cast{\P x : A'' . B''}{?_{[]_j}}{u'}} \redCCIC
		\tcol{\cast{\?_{[]_{\castOfPi{j}}} \to \?_{[]_{\castOfPi{j}}}}{?_{[]_j}}{\cast{\P x : A'' . B''}{\?_{[]_{\castOfPi{j}}} \to \?_{[]_{\castOfPi{j}}}}{u'}}}\]
	For this reduct to be less precise that $\tcol{\l x : A. t}$, we need that all types involved in the casts are definitionally precise than $\tcol{\P  x : A . B}$, as we already have that $\tcol{\GG} \vdash \tcol{\l x : A . t } \capre \tcol{u'}$. For $\tcol{\?_{[]_j}}$ and $\tcol{\P x : A''. B''}$ it is direct, as they were obtained using \cref{lem:catchup-type} with a reduct of $\tcol{\P x : A . B}$.
	Thus only the germ remains, for which it suffices to show that both $\tcol{A}$ and $\tcol{B}$ are less precise than $\tcol{?_{[]_{\castOfPi{j}}}}$. Because $\tcol{\P x : A. B}$ is typable and less precise than $\tcol{\?_{[]_j}}$, we know that $\tcol{\fs{\GG}} \vdash \tcol{A} \pcheckty{[]} \tcol{[]_k}$ and $\tcol{\fs{\GG}, x : A} \vdash \tcol{B} \pcheckty{[]} \tcol{[]_l}$ with $\sortOfPi{k}{l} \leq j$, thus $k \leq \castOfPi{j}$ and $l \leq \castOfPi{j}$. Therefore $\tcol{\GG} \vdash \tcol{A} \capre \tcol{\?_{[]_{\castOfPi{j}}}}$ using rule \nameref{infrule:capre-unk-univ}, and similarly for $\tcol{B}$.

	Note that this last reduction is the point where the system under consideration plays a role: in \CCICT, the reasoning does not hold. However, when considering only terms without $\tcol{\?}$, this case never happens, and thus the rest of the proof still applies.

	Thus, all casts must reduce, and each of those reductions preserves precision, so we end up with a term $\tcol{\l x : A' . t'}$ such that $\tcol{\GG} \vdash \tcol{\l x : A . t} \capre \tcol{\l x : A' . t'}$, as expected.
\end{proof}

%

\begin{proof}[Proof of \cref{lem:catchup-cons}]
	We start by the proof of the second property. We have as hypothesis that $\tcol{\GG} \vdash \tcol{\?_{I(\orr{a})}} \capre \tcol{s'}$, $\tcol{\fs{\GG}} \vdash \tcol{\?_{I(\orr{a})}} \inferty \tcol{I(\orr{a})}$ and $\tcol{\sn{\GG}} \vdash \tcol{s'} \pcheckty{I} \tcol{I(\orr{a'})}$, and wish to prove that $\tcol{s'} \rtred \tcol{\?_{I(\orr{a'})}}$ with $\tcol{\GG} \vdash \tcol{I(\orr{a})} \capre \tcol{I(\orr{a'})}$.

	As previously, decompose $\tcol{s'}$ as $\tcol{\cast{U_{n-1}}{S_n}{\dots \cast{U_1}{S_2}{\?_{I(\orr{a''})}}}}$, where all $\tcol{U_k}$, $\tcol{S_k}$ and $\tcol{I(\orr{a''})}$ are definitionally less precise than $\tcol{I(\orr{a})}$, and thus reduce to either $\tcol{\?_{[]_l}}$ for some $l$, or $\tcol{I(\orr{c})}$ for some $\tcol{\orr{c}}$, and $\tcol{S_n}$ can only be the second by typing. Using the three rules \nameref{redrule:ind-unk}, \nameref{redrule:up-down} and \nameref{redrule:ind-germ}, we respectively get
  \begin{align*}
	\tcol{\cast{I(\orr{c})}{I(\orr{c'})}{\?_{I(\mathbf{c''})}}}
    &\redCCIC \tcol{\?_{I(\orr{c'})}} \\
	\tcol{ \cast{\?_{[]_j}}{X}{\cast{\stalkCIC{j}{I}}{\?_{[]_j}}{u'}}}
    &\redCCIC \tcol{\cast{\stalkCIC{j}{I}}{X}{u'}} \\
	\tcol{\cast{I(\orr{c})}{\?_{[]_j}}{u'}}
    & \redCCIC \tcol{\cast{\stalkCIC{j}{I}}{\?_{[]_j}}{\cast{I(\orr{c})}{\stalkCIC{j}{I}}{u'}}}
  \end{align*}
	we can reduce all casts: \nameref{redrule:up-down} (possibly using \cref{redrule:ind-germ} first) removes all casts through $\?_{[]}$; we can then use \nameref{redrule:ind-unk} to propagate $\?_{I(\orr{a''})}$ all the way through the casts, ending up with $\tcol{\?_{S_n}}$ which is the term we sought.

	For the first property, again decompose $\tcol{s'}$ as $\tcol{\cast{U_{n-1}}{S_n}{\dots \cast{U_1}{S_2}{u'}}}$ where $\tcol{u'}$ does not start with a cast. If $\tcol{u'}$ is some $\tcol{\?_{I(\orr{a''})}}$, we can re-use the proof above and are finished. Otherwise $\tcol{u'}$ must be of the form $\tcol{c(\orr{a''},\orr{b''})}$. Again we reduce the casts starting with the innermost, using rules \nameref{redrule:up-down} and \nameref{redrule:ind-germ} to remove the occurrences of $\tcol{\?_{[]}}$. The last case to handle is $\tcol{\cast{I(\orr{c})}{I(\orr{c'})}{c(\orr{a_3},\orr{b_3})}}$. Then rule \nameref{redrule:ind-ind} applies, and it preserves precision by repeated uses of the substitution property, and giving a term with $\tcol{c}$ as a head constructor. Thus, we get the desired term with $\tcol{c}$ as a head constructor and arguments less precise than $\tcol{\orr{a}}$ and $\tcol{\orr{b}}$, respectively.
\end{proof}

\paragraph{Simulation}


\begin{proof}[Proof of \cref{thm:simulation}]
	Both are shown by mutual induction on the precision derivation. We use a stronger induction principle that the one given by the induction rules. Indeed, we need extra induction hypothesis on the inferred type for a term. Proving this stronger principle is done by making the proof of \cref{prop:validity} slightly more general: instead of proving that an inferred type is always well-formed, we prove that any property consequence of typing is true of all inferred types. Let us now detail the most important cases of the inductive proof.

	\paragraph{Definitional precision.}
	We start with the easier second point. The proof is summarized by the following diagram:\\
		\[\begin{tikzcd}[scale=0.6]
		& \tcol{t} \arrow[dl] \arrow[d] & \cdpre &\tcol{t'} \arrow[d] \\
		\tcol{s} \arrow[dr, dashed] & \tcol{u} \arrow[d, dashed] & \capre & \tcol{u'} \arrow[d, dashed] \\
		& \tcol{v} & \capre & \tcol{v'}
		\end{tikzcd}\]
	 By definition of $\cdpre$, there exists $\tcol{u}$ and $\tcol{u'}$, reducts respectively of $\tcol{t}$ and $\tcol{t'}$, and such that $\tcol{\GG} \vdash \tcol{u} \capre \tcol{u'}$. By confluence, there exists some $\tcol{v}$ that is a reduct of both $\tcol{u}$ and $\tcol{s}$. By subject reduction, $\tcol{u}$ and $\tcol{u'}$ are both well-typed, and thus by induction hypothesis there exists 
   $\tcol{v'}$ such that $\tcol{u'} \rtred \tcol{v'}$ and $\tcol{\GG} \vdash \tcol{v} \capre \tcol{v'}$. But then $\tcol{v}$ is a reduct of $\tcol{s}$ and $\tcol{v'}$ is a reduct of $\tcol{t'}$, and so $\tcol{\GG} \vdash \tcol{s} \cdpre \tcol{t'}$.

	This implies in particular that if $\tcol{\GG} \vdash \tcol{t} \inferty \tcol{T}$, $\tcol{\GG} \vdash \tcol{T} \cdpre \tcol{T'}$, $\tcol{t} \rtred \tcol{s}$ and $\tcol{\fs{\GG}} \vdash \tcol{s} \inferty \tcol{S}$, then $\tcol{\GG} \vdash \tcol{S} \cdpre \tcol{T'}$. Indeed $\tcol{\fs{\GG}} \vdash \tcol{s} \checkty \tcol{T}$ by subject reduction, thus $\tcol{S}$ and $\tcol{T}$ are convertible, and have a common reduct $\tcol{U}$ by confluence.	The property just stated then gives $\tcol{\GG} \vdash \tcol{U} \cdpre \tcol{T'}$, hence $\tcol{\GG} \vdash \tcol{S} \cdpre \tcol{T'}$.

	\paragraph{Syntactic precision---Non-diagonal precision rules.}
	Let us now turn to $\capre$. It is enough to show that one step of reduction can be simulated, by induction on the path $\tcol{t} \rtred \tcol{s}$.

	First, we get rid of most cases where the last rule used for $\tcol{\GG} \vdash \tcol{t} \capre \tcol{t'}$ is not a diagonal rule.
  For \nameref{infrule:capre-unk} we must handle the side-condition involving the type of $\tcol{t}$. However, by the previous property, the inferred type of $\tcol{s}$ is also definitionally less precise than $\tcol{T'}$. Thus the reduction in $\tcol{t}$ can be simulated by zero reduction steps. The reasoning for rules \nameref{infrule:capre-err} and \nameref{infrule:capre-err-lambda} is similar. As for rule \nameref{infrule:capre-univ}, subject reduction is enough to get what we seek, without even resorting to the previous property. Rule \nameref{infrule:capre-castr} is treated in the same way as \nameref{infrule:capre-unk}, as the typing side-conditions are similar.
	Thus the only non-diagonal rule left for $\capre$ is \nameref{infrule:capre-castl}.

	\paragraph{Syntactic precision---Non-top-level reduction.}

	Next, we can get rid of reductions that do not happen at top level. Indeed, if the last rule used was \nameref{infrule:capre-castl}, and the reduction happens in one of the types of the cast, the same reasoning as for \nameref{infrule:capre-castr} applies. If it happens in the term, we can use the induction hypothesis on this term to conclude.
  Also, if the last rule used was a diagonal rule, then the reduction in $\tcol{t}$ can be simulated by a similar congruence rules in $\tcol{t'}$.

  So we are left with the simulation of a reduction that happens at the top-level in $\tcol{t}$, and where the last precision rule used is either \nameref{infrule:capre-castl} or a diagonal one, and this is the real core of the proof.

	\paragraph{Syntactic precision---non-diagonal cast.}
  Let us first turn to the case where the last precision rule is \nameref{infrule:capre-castl}, and that cast reduces. More precisely, $\tcol{t}$ is some $\tcol{\cast{S}{T}{u}}$, with $\tcol{\GG} \vdash \tcol{u} \capre \tcol{t'}$. There are four possibilities for the reduction.
  \begin{itemize}

  \item The cast fails. When it does, whatever the rule, it always reduces to $\tcol{\err_{T}}$. But then we know that $\tcol{\sn{\GG}} \vdash \tcol{t'} \inferty \tcol{T'}$ and $\tcol{\GG} \vdash \tcol{T} \cdpre \tcol{T'}$. Thus $\tcol{\GG} \vdash \tcol{\err_{T}} \capre \tcol{t'}$ by rule \nameref{infrule:capre-err}, and the reduction is simulated by zero reductions.

	\item The cast disappears (\nameref{redrule:univ-univ}) or expands into two casts without changing $\tcol{u}$ (\nameref{redrule:ind-germ}, \nameref{redrule:prod-germ}). In those cases the reduct of $\tcol{t}$ is still smaller than $\tcol{t'}$. In the case of cast expansion, we must use \nameref{infrule:capre-castl} twice, and thus prove that the type of $\tcol{t'}$ is less precise than the introduced germ. But by the \nameref{infrule:capre-castl} rule that was used to prove $\tcol{\GG} \vdash \tcol{t} \capre \tcol{t'}$, we know that $\tcol{t'}$ infers a type $\tcol{T'}$ which is definitionally less precise than some $\tcol{\?_{[]_i}}$. Thus, $\tcol{T'}$ reduces to some $\tcol{S'}$ such that $\tcol{\GG} \vdash \tcol{\?_{[]_i}} \capre \tcol{S'}$, and this implies that also $\tcol{\GG} \vdash \tcol{\stalkCIC{i}{h}} \capre \tcol{S'}$, \ie what we sought.

	\item Both $\tcol{T}$ and $\tcol{S}$ are either product types or inductive types, and $\tcol{u}$ starts with an abstraction or an inductive constructor. In that case, by \cref{lem:catchup-lambda,lem:catchup-cons}, $\tcol{t'}$ reduces to a term $\tcol{u'}$ with the same head constructor as $\tcol{u}$
  or some $\tcol{\?_{I(\orr{a})}}$. In the first case, by the substitution property of precision we have $\tcol{\GG} \vdash \tcol{s} \capre \tcol{u'}$. In the second, we can use \nameref{infrule:capre-unk} to conclude.

	\item The reduction rule is \nameref{redrule:up-down}, that is $\tcol{t}$ is $\tcol{\cast{?_{[]_i}}{T}{\cast{\stalkCIC{i}{h}}{\?_{[]_i}}{u}}}$ which reduces to $\tcol{\cast{\stalkCIC{i}{h}}{T}{u}}$.
  If rule \nameref{infrule:capre-castl} was used twice in a row then we directly have $\tcol{\GG} \vdash \tcol{u} \capre \tcol{t'}$ and so $\tcol{\GG} \vdash \tcol{\cast{\stalkCIC{i}{h}}{X}{u}} \capre \tcol{t'}$. Otherwise, rule \nameref{infrule:capre-diag-cast} was used, $\tcol{t'}$ is some $\tcol{\cast{S'}{T'}{u'}}$ and we have $\tcol{\GG} \vdash \tcol{u} \capre \tcol{u'}$ and $\tcol{\fs{\GG}} \vdash \tcol{\stalkCIC{i}{h}} \cdpre \tcol{S'}$.
  Moreover, \nameref{infrule:capre-castl} also gives $\tcol{\fs{\GG}} \vdash \tcol{X} \cdpre \tcol{B'}$, since $\tcol{\sn{\GG}} \vdash \tcol{\cast{A'}{B'}{u'}} \inferty \tcol{B'}$. Thus $\tcol{\GG} \vdash \tcol{\cast{\stalkCIC{i}{h}}{X}{u}} \capre \tcol{\cast{A'}{B'}{u'}}$ by a use of \nameref{infrule:capre-diag-cast}.
	
  \end{itemize}

	\paragraph{Syntactic precision---$\beta$ redex.}
	Next we consider the case where $\tcol{t}$ is a $\beta$ redex $\tcol{(\l x : A . t_1)~t_2}$. Because the last applied precision rule is diagonal, $\tcol{t'}$ must also decompose as $\tcol{t_1''~t_2'}$. If $\tcol{t_1}$ is some $\tcol{\err_{T}}$, then the reduct is $\tcol{\err_{T}}$ and must be still smaller that $\tcol{t'}$. Otherwise, \cref{lem:catchup-lambda} applies, thus $\tcol{t_1''}$ reduces to some $\tcol{\l x : A' . t_1'}$ that is syntactically less precise than $\tcol{\l x : A . t_1}$. Then the $\beta$ reduction of $t$ can be simulated with a $\beta$ reduction in $\tcol{t'}$, and using the substitution property we conclude that the redexes are still related by precision.

  \paragraph{Syntactic precision---$\iota$ redex.}
	If $\tcol{t}$ is a $\iota$ redex $\tcol{\match{c(\orr{a},\orr{b})}{I}{z.P}{\orr{f.\orr{y}.t}}}$, the reasoning is similar. Because the last precision rule is diagonal, $\tcol{t'}$ must also be a fixpoint. We thus can use \cref{lem:catchup-cons} to ensure that its scrutinee reduces either to $\tcol{c(\orr{a'},\orr{b'})}$ or $\tcol{\?_{I(\orr{a'})}}$. In the first case, a $\iota$ reduction of $\tcol{t'}$ and the substitution property is enough to conclude.
	In the second case, $\tcol{t'}$ reduces to a term $\tcol{s'} := \tcol{\?_{P'\subs{\?_{I(\orr{a'})}}{z}}}$, and we must show this term to be less precise than $\tcol{s}$, which is $\tcol{t_k\subs{\l x : I(\orr{a}) . \match{I}{x}{z.P}{\orr{f.\orr{y}.t}}}{z}\subs{\orr{b}}{\orr{y}}}$.
	Let $\tcol{S}$ be the type inferred for $\tcol{s}$, by rule \nameref{infrule:capre-unk}, it is enough to show $\tcol{\GG} \vdash \tcol{S} \cdpre \tcol{P'\subs{\?_{I(\orr{a'})}}{z}}$. By subject reduction, $\tcol{S}$ and $\tcol{P\subs{c_k(\orr{a},\orr{b})}{z}}$ (the type of $\tcol{t}$) are convertible, thus they have a common reduct $\tcol{U}$. Now we also have by substitution that $\tcol{\GG} \vdash \tcol{P\subs{c_k(\orr{a},\orr{b})}{z}} \capre \tcol{P'\subs{\?_{I(\orr{a'})}}{z}}$. Because $\tcol{P\subs{c_k(\orr{a},\orr{b})}{z}}$ is the inferred type for $\tcol{t}$, the induction hypothesis applies to it, and thus there is some $\tcol{U'}$ such that $\tcol{P'\subs{\?_{I(\orr{a'})}{z}}} \rtred \tcol{U'}$ and also $\tcol{\GG} \vdash \tcol{U} \capre \tcol{U'}$.

	\paragraph{Syntactic precision---$\err$ and $\?$ reductions.}
	For reductions \nameref{redrule:prod-err}, \ie when $\tcol{\err_{\P x : A.B}} \redCCIC \tcol{\l x : A . \err_B}$, we can replace the use of \nameref{infrule:capre-err} by a use of \nameref{infrule:capre-err-lambda}. For reduction \nameref{redrule:ind-err}, \ie when $\tcol{t}$ is $\tcol{\match{I}{\err_{I(\orr{a})}}{z.P}{\orr{f.y.t}}}$ we distinguish three cases depending on $\tcol{t'}$. If $\tcol{t'}$ is $\tcol{\?_{T'}}$ (the precision rule between $\tcol{t}$ and $\tcol{t'}$ was \nameref{infrule:capre-unk}) or $\tcol{\cast{S'}{T'}{t'}}$, then $\tcol{\GG} \vdash \tcol{P\subs{\err_{I(\orr{a})}}{z} \cdpre T'}$, and thus $\tcol{\GG} \vdash \tcol{\err_{P\subs{\err{_{I(\orr{a})}}}{z}}} \capre \tcol{t'}$ by using \nameref{infrule:capre-err}. Otherwise, the last rule was \nameref{infrule:capre-fix}, and again we can conclude using \nameref{infrule:capre-err} and the substitution property of $\capre$.

	Conversely, let us consider the reduction rules for $\tcol{\?}$. If $\tcol{t}$ is $\tcol{\?_{\P x : A.B}}$ and reduces to $\tcol{\l x : A. \?_{B}}$, then $\tcol{t'}$ must be $\tcol{\?_{T}}$, possibly surrounded by casts.
  If there are casts, they can all be reduced away until we are left with $\tcol{\?_{T'}}$ for some $\tcol{T'}$ such that $\tcol{\GG} \vdash \tcol{\P x : A. B} \cdpre \tcol{T}$. By \cref{lem:catchup-type}, $\tcol{T} \rtred \tcol{\?_{\?_{[]}}}$ or $\tcol{T} \rtred \tcol{T_{\P x : A'. B'}}$. In the first case, $\tcol{\?_{\?_{[]}}}$ is still less precise than $\tcol{\l x : A. B}$, and in the second case, $\tcol{\?_{\P x : A'. B'}}$ can reduce to $\tcol{\l x : A' . \?_{B'}}$, which is less precise than $\tcol{s'}$.
  If $\tcol{t}$ is $\tcol{\match{I}{\?_{I(\orr{a})}}{P}{\orr{b}}}$, reducing to $\tcol{\?_{P\subs{\?_{I(\orr(a))}}{z}}}$, we use the second part of \cref{lem:catchup-cons} to conclude that also $\tcol{t'}$ reduces to some $\tcol{\match{I}{\?_{I(\orr{a'})}}{P'}{\orr{b'}}}$ that is less precise than $\tcol{t}$. From this, $\tcol{t'} \redCCIC \tcol{\?_{P'\subs{\?_{I(\orr(a'))}}{z}}}$, which is less precise than $\tcol{s}$.

	\paragraph{Syntactic precision---diagonal cast reduction.}
  This only leaves us with the reduction of a cast when the precision rule is \nameref{infrule:capre-diag-cast}: we have some $\tcol{\cast{S}{T}{u}}$ and $\tcol{\cast{S'}{T'}{u'}}$ that are pointwise related by precision, such that $\tcol{\cast{S}{T}{t}} \rtred \tcol{s}$ by a head reduction, and we must show that $\tcol{\cast{S}{T}{u}}$ simulates that reduction.

  First, if the reduction for $\tcol{\cast{S}{T}{t}}$ is any reduction to an error, then the reduct is $\tcol{\err_{T}}$, and since $\tcol{\sn{\GG}} \vdash \tcol{\cast{S'}{T'}{u'}} \inferty \tcol{T'}$ and $\tcol{\GG} \vdash \tcol{T} \cdpre \tcol{T'}$ we can use rule \nameref{infrule:capre-err} to conclude.

	Next, consider \nameref{redrule:prod-prod}. We are in the situation where $\tcol{t}$ is $\tcol{\cast{\P x : A_1 . B_1}{\P x : A_2.B_2}{\l x : A. v}}$.
	If $\tcol{v}$ is $\tcol{\err_{B_1}}$ then the reduct is more precise than any term.
	Otherwise, by \cref{lem:catchup-type}, $\tcol{S'}$ reduces either to $\tcol{\?_{[]}}$ or to a product type. In the first case, $\tcol{u'}$ must reduce to $\tcol{\?_{\?_{[]}}}$ by \cref{lem:catchup-lambda}, since it is less precise than $\tcol{\l x : A.v}$ and by typing it cannot start with a $\tcol{\lambda}$. In that case, $\tcol{\cast{S'}{T'}{u'}} \redCCIC \tcol{\?_{T'}}$, and since $\tcol{\GG} \vdash \tcol{\P x : A_2. B_2} \cdpre \tcol{T'}$, we have that $\tcol{\GG} \vdash \tcol{s} \capre \tcol{\?_{T'}}$.
	Otherwise $\tcol{S'}$ reduces to some $\tcol{\P x : A_1'. B_1'}$. By \cref{lem:catchup-lambda}, $\tcol{t'}$ reduces either to some $\tcol{\?}$ or to an abstraction. In the first case, the previous reasoning still applies. Otherwise, $\tcol{t'}$ reduces to some $\tcol{\l x : A' . v'}$. Again, by \cref{lem:catchup-type}, $\tcol{T'}$ reduces either to a product type or to $\tcol{\?}$.
  In the first case $\tcol{t'}$ can simply do the same cast reduction as $\tcol{t}$, and the substitution property of precision enables us to conclude. Thus, the only case left is that where $\tcol{t'}$ is $\tcol{\cast{\P x : A_1'. B_1'}{\?_{[]_i}}{\l x : A' . v'}}$. If $\tcol{\P x : A_1' . B_1'}$ is $\tcol{\stalkCIC{i}{\P}}$, then all of $\tcol{A}$, $\tcol{A_1}$, $\tcol{A_2}$, $\tcol{B_1}$ and $\tcol{B_2}$ are more precise than $\tcol{\?_{[]_{\castOfPi{i}}}}$, and this is enough to conclude that $\tcol{s}$ is less precise than $\tcol{\cast{\?_{[]_i}}{\stalkCIC{i}{\P}}{\l x : ?_{[]_{\castOfPi{i}}}.u'}}$, using the substitution property of precision to relate $\tcol{u'}$ with the substituted $\tcol{u}$, and the \nameref{infrule:capre-abs}, \nameref{infrule:capre-castl} and \nameref{infrule:capre-castr} rules.
	The last case is when $\tcol{\P x : A_1' . B_1'}$ is not a germ. Then the reduction of $\tcol{t'}$ first does a cast expansion through $\tcol{\stalkCIC{i}{\Pi}}$, followed by a reduction of the cast between $\tcol{\P x : A_1' . B_1'}$ and $\tcol{\stalkCIC{i}{\Pi}}$. The reasoning of the two previous cases can be used again to conclude.
	The proof is similar for rule \nameref{redrule:ind-ind}.

	Next, let us consider \nameref{redrule:prod-germ}, that is when $\tcol{t}$ is $\tcol{\cast{\P x : A_1 . B_1}{\?_{[]_i}}{f}}$. We have that $\tcol{T' \redCCIC \?_{[]_j}}$ by \cref{lem:catchup-type} with $i \leq j$, and thus $\tcol{\GG} \vdash \tcol{\stalkCIC{i}{\Pi}} \cdpre \tcol{T'}$. Thus, using \nameref{infrule:capre-diag-cast} for the innermost cast in $\tcol{s}$, and \nameref{infrule:capre-castl} for the outermost one, we conclude $\tcol{\GG} \vdash \tcol{s} \capre \tcol{\cast{S'}{T'}{u'}}$. Again, the reasoning is similar for \nameref{redrule:ind-germ}.

	As for \nameref{redrule:univ-univ}, $\tcol{t}$ is $\tcol{\cast{[]_i}{[]_i}{A}}$, and we can replace rule \nameref{infrule:capre-diag-cast} by rule \nameref{infrule:capre-castr}. Indeed $\tcol{\fs{\GG}} \vdash \tcol{A} \checkty \tcol{[]_i}$ by typing, thus $\tcol{\fs{\GG}} \vdash \tcol{A} \inferty \tcol{T}$ for some $\tcol{T}$ such that $\tcol{T} \redCCIC \tcol{[]_i}$. Therefore, since $\tcol{\GG} \vdash \tcol{[]_i} \cdpre \tcol{T'}$, we have $\tcol{\GG} \vdash \tcol{T} \cdpre \tcol{T'}$ and similarly $\tcol{\GG} \vdash \tcol{T} \cdpre \tcol{S'}$. Thus, rule \nameref{infrule:capre-castr} gives $\tcol{\GG} \vdash \tcol{A} \capre \tcol{t'}$.

	The last case left is the one of \nameref{redrule:up-down}, where $\tcol{t}$ is $\tcol{\cast{\?_{[]_i}}{X}{\cast{\stalkCIC{i}{h}}{\?_{[]_i}}{v}}}$. We distinguish on the rule used to prove $\tcol{\GG} \vdash \tcol{\cast{\stalkCIC{i}{h}}{\?_{[]_i}}{v}} \capre \tcol{u'}$. If it is \nameref{infrule:capre-castl}, then we simply have $\tcol{\GG} \vdash \tcol{\cast{\stalkCIC{i}{h}}{X}{t}} \capre \tcol{\cast{S'}{T'}{u'}}$ using rule \nameref{infrule:capre-diag-cast}, as $\tcol{\GG} \vdash \tcol{\stalkCIC{i}{h}} \cdpre \tcol{S'}$ since $\tcol{\GG} \vdash \tcol{\?_{[]_i}} \cdpre \tcol{S'}$. Otherwise the rule is \nameref{infrule:capre-diag-cast}, $\tcol{t'}$ reduces to $\tcol{\cast{\?_{[]_j}}{T'}{\cast{U'}{\?_{[]_j}}{u'}}}$, using \cref{lem:catchup-type} to reduce types less precise than $\tcol{\?_{[]_i}}$ to some $\tcol{\?_{[]_j}}$ with $i \leq j$. We can use \nameref{infrule:capre-diag-cast} on the outermost cast, and \nameref{infrule:capre-castr} on the innermost to prove that this term is less precise than $\tcol{s}$, as $\tcol{\GG} \vdash \tcol{\stalkCIC{i}{h}} \cdpre \tcol{\?_{[]_j}}$ since $i \leq j$.
\end{proof}


\subsection{Properties of \GCIC}
\label{sec:properties-gcic}

Conservativity is an equivalence, so to prove it we break it down into two implications. We now state and prove those in an open context and for the three different judgments.

\begin{theorem}[\GCIC is weaker than \CIC---Open context]~
	Let $\scol{\tilde{t}}$ be a static term and $\tcol{\Gamma}$ an erasable context. Then
	\begin{itemize}
		\item if $\eras(\tcol{\Gamma}) \cicty \tcol{t} \inferty T$ then $\inferelab{}{\tilde{t}}{}{\Gamma}{t}{T'}$ for some erasable $\tcol{t}$ and $\tcol{T'}$ such that $\eras(\tcol{t}) = \scol{\tilde{t}}$ and $\eras(\tcol{T'}) = T$;
		\item if $\tcol{T'}$ is an erasable term of \CCIC, and $\eras(\tcol{\Gamma}) \cicty \scol{\tilde{t}} \checkty \eras(\tcol{T'})$ then $\checkelab{}{\tilde{t}}{}{\Gamma}{t}{T'}$ for some erasable $\tcol{t}$ such that $\eras(\tcol{t}) = \scol{\tilde{t}}$;
		\item if $\eras(\tcol{\Gamma}) \cicty \scol{\tilde{t}} \pcheckty{h} T$ then $\pcheckelab{h}{}{\tilde{t}}{}{\Gamma}{t}{T'}$ for some erasable $\tcol{t}$ and $\tcol{T'}$ such that $\eras(\tcol{t}) = \scol{\tilde{t}}$ and $\eras(\tcol{T'}) = T$.
	\end{itemize}
\end{theorem}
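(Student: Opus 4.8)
The plan is to proceed by mutual induction on the \CIC typing derivation of $\scol{\tilde{t}}$, simultaneously establishing the three statements (inference, checking, constrained inference). Since $\scol{\tilde{t}}$ is static, its derivation in \CIC uses only the rules of \cref{fig:bidir}, and each such rule has a direct counterpart among the elaboration rules of \cref{fig:elaboration} (the diagonal ones, namely \nameref{infrule:gcic-var} through \nameref{infrule:gcic-fix}, \nameref{infrule:gcic-check}, and \nameref{infrule:gcic-inf-univ}, \nameref{infrule:gcic-inf-prod}, \nameref{infrule:gcic-inf-ind}). So for each rule I would apply the induction hypotheses to the premises and then attempt to fire the analogous elaboration rule. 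The carried invariant is that every elaborated term and type is \emph{erasable} in the sense of \cref{def:erasure}, and moreover that on static inputs the erasure of the output equals the original source term: $\eras(\tcol{t}) = \scol{\tilde{t}}$ and $\eras(\tcol{T'}) = T$. Maintaining the latter is why I need to work with $\eras(\tcol{\Gamma}) \cicty \ldots$ rather than a raw \CIC context: the induction hypothesis about context extensions (\eg for \nameref{infrule:gcic-abs}, \nameref{infrule:gcic-prod}, \nameref{infrule:gcic-fix}) only makes sense once I know the elaborated context $\tcol{\Gamma}$ is erasable and erases back to the \CIC context.

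The key steps, in order: first, dispatch the genuinely structural rules (variable, universe, application, lambda, product, the two inductive rules, the match rule) where the elaboration rule has exactly the same shape as the \CIC rule and the argument is a routine recombination of the induction hypotheses, relying on correctness of elaboration (\cref{thm:correction}) to know that the inputs to successive premises are well-formed. The point requiring care is \nameref{infrule:gcic-check}: the \CIC checking rule uses conversion $T' \conv T$, whereas elaboration checks consistency $\tcol{T \cons S}$ \emph{and inserts a cast}. Here I would use \cref{prop:cons-static}.(1) to turn convertibility of (erasable, hence essentially static-behaving) types into consistent convertibility, fire \nameref{infrule:gcic-check}, and then argue that the inserted cast $\tcol{\cast{T}{S}{t}}$ is erasable: its erasure is $\eras(\tcol{t})$, and by rules \nameref{infrule:capre-castr}/\nameref{infrule:capre-castl} of \cref{fig:apre-ccic} (whose typing side-conditions are met because $\tcol{T}$ and $\tcol{S}$ are convertible, hence equiprecise) it is equiprecise to $\tcol{t}$. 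Second, handle the constrained-inference rules. The \CIC rules \nameref{infrule:cic-prod-inf}, \nameref{infrule:cic-ind-inf}, \nameref{infrule:cic-univ-inf} reduce the inferred type to a type constructor; by the induction hypothesis I get an inferred \CCIC type $\tcol{T}$ that is erasable and whose erasure reduces as the \CIC type does. Since $\tcol{T}$ is erasable, I need that its reduction behavior matches that of $\eras(\tcol{T})$ — this is exactly the simulation consequence of \cref{thm:simulation} used to prove \cref{thm:conservativity} — so $\tcol{T}$ reduces to a $\Pi$-type (\resp $I(\orr{a})$, \resp $[]_i$) whose erasure is the \CIC reduct. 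Thus the \emph{static} branch of the constrained elaboration (\nameref{infrule:gcic-inf-prod} etc.) applies, never the $\tcol{\?}$-branches, because an erasable type cannot reduce to $\tcol{\?_{[]_i}}$ (that would contradict equiprecision with a static type).

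The main obstacle I anticipate is the erasability bookkeeping for contexts and casts: I must show that whenever elaboration introduces a cast (only in \nameref{infrule:gcic-check}, and in the unused $\tcol{\?}$-branches of constrained inference, which do not fire here), the resulting term stays erasable, which in turn requires the two types of the cast to be convertible — and establishing \emph{that} requires knowing that the checked type and the inferred type coincide up to conversion in the static fragment, appealing to \cref{prop:cons-static}.(1) in the right direction. A subtle point is that erasability is defined relative to a double context $\tcol{\Gamma} \mid \eras(\tcol{\Gamma})$, so the induction hypothesis for a premise living in an extended context $\tcol{\Gamma}, x : A$ must be invoked with the erasable extension $\tcol{\Gamma}, x : A \mid \eras(\tcol{\Gamma}), x : \eras(\tcol{A})$, which is legitimate precisely because $\tcol{A}$ is erasable with $\eras(\tcol{A})$ the corresponding \CIC type. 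Once this invariant is threaded correctly through every rule — using weakening/substitution for precision from \cref{sec:precision-reduction} where contexts grow — the three implications fall out together. The converse implications (a static term that elaborates is typable in \CIC, with the expected erasures) are proved by the same mutual induction, this time on the elaboration derivation, again using that on static inputs only the diagonal and static-constrained rules can fire, together with \cref{prop:cons-static}.(1) to convert consistency back into \CIC convertibility.
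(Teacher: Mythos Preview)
Your outline is essentially the paper's proof: mutual induction on the \CIC derivation, firing the diagonal elaboration rules, and using erasability plus simulation to control the constrained-inference and checking cases. The constrained-inference case is handled as you say (via \cref{cor:red-types}, ruling out the $\tcol{\?_{[]}}$ branch because an erasable type contains no $\tcol{\?}$).

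There is, however, a genuine imprecision in your treatment of \nameref{infrule:gcic-check}. First, \cref{prop:cons-static}.(1) only gives you $S \cons \eras(\tcol{T'})$ for the \emph{static} types; to obtain $\tcol{S'} \cons \tcol{T'}$ for the elaborated \CCIC types (which carry casts and are not static) the paper invokes monotonicity of consistency (\cref{cor:mon-cons}), using that $\tcol{S'}$ and $\tcol{T'}$ are each less precise than their erasures. Your parenthetical ``erasable, hence essentially static-behaving'' gestures at this but does not name the needed lemma. Second, and more importantly, your justification for the \nameref{infrule:capre-castl}/\nameref{infrule:capre-castr} side-conditions is wrong as stated: the two \CCIC types $\tcol{S'}$ and $\tcol{T'}$ in the inserted cast are \emph{not} convertible in \CCIC in general---only their erasures are convertible in \CIC---so ``convertible, hence equiprecise'' does not apply. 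What is actually required is $\tcol{\GG} \vdash \tcol{T'} \cdpre S$ and $\tcol{\GG} \vdash S \cdpre \tcol{T'}$ (definitional precision against the \CIC inferred type $S$), and the paper obtains these by taking a common reduct $U$ of $S$ and $\eras(\tcol{T'})$, then applying \cref{thm:simulation} twice (once in each direction of the equiprecision $\tcol{T'} \caequipre \eras(\tcol{T'})$) to produce a reduct $\tcol{U'}$ of $\tcol{T'}$ that is equiprecise with $U$. This simulation step is the actual work in the checking case; once you plug it in, your argument goes through.
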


\begin{proof}
	Once again, the proof is by mutual induction, on the typing derivation of $\scol{\tilde{t}}$ in \CIC.

	All inference rules are direct: one needs to combine the induction hypothesis together, using the substitution property of precision and the fact that erasure commutes with substitution to handle the cases of substitution in the inferred types.

	Let us consider the case of \nameref{infrule:cic-prod-inf} next. We are given $\tcol{\Gamma}$ erasable, and suppose $\eras(\tcol{\Gamma}) \cicty \scol{\tilde{t}} \inferty T$ and $T \rtred \P x : A .B$. By induction hypothesis there exists $\tcol{t}$ and $\tcol{T'}$ erasable such that $\inferelab{}{\ccol{t}}{}{\Gamma}{\scol{\tilde{t}}}{T'}$ and $\eras(\tcol{t}) = \scol{\tilde{t}}$, $\eras(\tcol{T'}) = T$. Because $\tcol{T'}$ is erasable, it is less precise than $T$. By \cref{cor:red-types}, it must reduce to either $\tcol{\?_{[]}}$ or a product type. The first case is impossible because $\tcol{T'}$ does not contain any $\tcol{\?}$ as it is erasable. Thus there are some $\tcol{A'}$ and $\tcol{B'}$ such that $\tcol{T'} \rtred \tcol{\P x : A' . B'}$ and $\tcol{\Gamma} \vdash \P x : A.B \capre \tcol{\P x : A'. B'}$. Since also $\tcol{\Gamma} \vdash \tcol{T'} \capre T$, by the same reasoning there are also some $A''$ and $B''$ such that $T \rtred \P x : A''. B''$ and $\tcol{\Gamma} \vdash \tcol{\P x : A' . B'} \capre \P x : A'' . B''$. Now because $T$ is static, so are $\P x : A . B$ and $\P x : A''. B''$, and because of the comparisons with $\P x : A' . B'$ we must have $\eras(\Gamma) \vdash \P x : A . B \capre \P x : A'' . B''$. Since both are static, this means they must be $\alpha$-equal, since no non-diagonal rule can be used on static terms. Hence, $\P x : A . B = \P x : A'' . B'' = \eras(\tcol{\P x : A'. B '})$, implying that $\tcol{\P x : A' . B'}$ is erasable.
  Thus, $\pcheckelab{\Pi}{}{\tilde{t}}{}{\Gamma}{t}{\P x : A'.B'}$, both $\tcol{t}$ and $\tcol{\P x : A' . B'}$ are erasable, and moreover $\eras(\tcol{t}) = \scol{\tilde{t}}$ and $\eras(\tcol{\P x : A' . B'}) = \P x : A.B$, which is what had to be proven.

	The other constrained inference rules being very similar, let us turn to \nameref{infrule:cic-check}. We are given $\tcol{\Gamma}$ and $\tcol{T'}$ erasable, and suppose that $\eras(\tcol{\Gamma}) \cicty \scol{\tilde{t}} \inferty S$ such that $S \conv \eras(\tcol{T'})$.
  By induction hypothesis, $\inferelab{}{\tilde{t}}{}{\Gamma'}{t}{S'}$ with $\tcol{t}$ and $\tcol{S'}$ erasable, $\eras(\tcol{t}) = \scol{\tilde{t}}$ and $\eras(\tcol{S'}) = S$. But convertibility implies consistency, so $S \cons \eras(\tcol{T'})$.
  By monotonicity of consistency, this implies $\tcol{S'} \cons \tcol{T'}$. Thus $\checkelab{}{\tilde{t}}{}{\Gamma}{\cast{S'}{T'}{t}}{T'}$. We have $\eras(\tcol{\cast{S'}{T'}{t}}) = \eras(\tcol{t}) = \scol{\tilde{t}}$, so we are left with showing that $\tcol{\Gamma} \vdash \tcol{\cast{S'}{T'}{t}} \caequipre \scol{\tilde{t}}$. Using rules \nameref{infrule:capre-castl} and \nameref{infrule:capre-castr}, and knowing already that $\tcol{\Gamma} \vdash \tcol{S'} \caequipre S$, it remains to show that $\tcol{\Gamma} \vdash \tcol{T'} \cdpre S$ and $\Gamma \vdash S \cdpre \tcol{T'}$. As $S$ and $\eras(\tcol{T'})$ are convertible, let $U$ be a common reduct. Using \cref{thm:simulation}, $\tcol{T'} \rtred \tcol{U'}$ with $\tcol{\Gamma} \vdash U \capre \tcol{U'}$. Simulating that reduction again, we get $\eras(\tcol{T'}) \rtred U''$ with $\tcol{\Gamma} \vdash U'' \capre \tcol{U'}$. As before, this implies $U = U'' = \eras(\tcol{U'})$. Thus, using the reduct $\tcol{U'}$ of $\tcol{T'}$ that is equiprecise with $\tcol{U}$, we can conclude $\tcol{\Gamma} \vdash S \cdpre \tcol{T'}$ and $\tcol{\Gamma} \vdash \tcol{T'} \cdpre S$.
\end{proof}

\begin{theorem}[\CIC is weaker than \GCIC---Open context]
	Let $\scol{\tilde{t}}$ be a static term and $\tcol{\Gamma}$ an erasable context of \CCIC. Then
	\begin{itemize}
		\item if $\inferelab{}{\tilde{t}}{}{\Gamma}{t}{T}$, then $\tcol{t}$ and $\tcol{T}$ are erasable, $\eras(\tcol{t'}) = \scol{\tilde{t}}$ and $\eras(\tcol{\Gamma}) \vdash \scol{\tilde{t}} \inferty \eras(\tcol{T'})$;
		\item if $\tcol{T'}$ is an erasable term of \CCIC such that $\checkelab{}{\tilde{t}}{}{\Gamma'}{t'}{T'}$, then $\tcol{t'}$ is erasable, $\eras(t') = \scol{\tilde{t}}$ and $\eras(\tcol{\Gamma}) \vdash \scol{\tilde{t}} \checkty \eras(\tcol{T'})$;
		\item if $\pcheckelab{h}{}{\tilde{t}}{}{\Gamma'}{t'}{T'}$, then $\tcol{t'}$ and $\tcol{T'}$ are erasable, $\eras(\tcol{t'}) = \scol{\tilde{t}}$ and $\eras(\tcol{\Gamma}) \vdash \scol{\tilde{t}} \pcheckty{h} \eras(\tcol{T'})$.
	\end{itemize}
\end{theorem}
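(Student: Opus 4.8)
The plan is to prove the three statements simultaneously by mutual induction on the elaboration derivation of $\scol{\tilde{t}}$, maintaining throughout the invariant that every \CCIC term produced — both the elaborated subject and the inferred or constrained type — is erasable, together with the equalities $\eras(\tcol{t}) = \scol{\tilde{t}}$ and $\eras(\tcol{T}) = $ the \CIC type. The guiding observation is that a static source term can only be elaborated by rules having an exact \CIC counterpart, because the rules \nameref{infrule:gcic-unk}, \nameref{infrule:gcic-univ-unk}, \nameref{infrule:gcic-prod-unk} and \nameref{infrule:gcic-ind-unk} — the ones that introduce $\tcol{\?}$ — never fire. For \nameref{infrule:gcic-unk} this is immediate, since $\scol{\?\ulev{i}}$ is not a static subterm. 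For the $\tcol{\?}$-variants of constrained inference, the inductive hypothesis gives that the inferred type $\tcol{T}$ is erasable, hence $\tcol{\fs{\GG}} \vdash \tcol{T} \caequipre \eras(\tcol{T})$ with $\eras(\tcol{T})$ static; then, exactly as in the forward direction of conservativity, simulating a hypothetical reduction $\tcol{T} \rtred \tcol{\?_{[]_i}}$ onto $\eras(\tcol{T})$ via \cref{thm:simulation} and \cref{lem:catchup-type} would force the static term $\eras(\tcol{T})$ to reduce to some $\tcol{\?_{[]_j}}$, which is impossible. So only \nameref{infrule:gcic-inf-univ}, \nameref{infrule:gcic-inf-prod} and \nameref{infrule:gcic-inf-ind} remain available.

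The inference rules (Var, Univ, Prod, Abs, App, Ind, Cons, Fix) are then discharged by a direct combination of the inductive hypotheses, mirroring the corresponding \CIC rule. Two points need care: first, erasure commutes with substitution, so the substituted inferred types arising in \nameref{infrule:gcic-app} and \nameref{infrule:gcic-fix} erase correctly, and the substitution lemma for structural precision is used to transport equiprecision through these substitutions, preserving erasability; second, context extensions stay erasable because they are only performed after the premise that types the new declaration, following the same input/output discipline already exploited in \cref{thm:correction}.

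For the checking rule \nameref{infrule:gcic-check}, the elaborated term is a cast $\tcol{\cast{T}{S}{t}}$ with $\tcol{T} \cons \tcol{S}$, where $\tcol{S}$ is erasable by hypothesis and, by the inductive hypothesis, $\tcol{t}$ is erasable with $\eras(\tcol{t}) = \scol{\tilde{t}}$ and $\eras(\tcol{\GG}) \cicty \scol{\tilde{t}} \inferty \eras(\tcol{T})$. Applying monotonicity of consistency (\cref{cor:mon-cons}) to $\tcol{T} \caequipre \eras(\tcol{T})$ and $\tcol{S} \caequipre \eras(\tcol{S})$ yields $\eras(\tcol{T}) \cons \eras(\tcol{S})$, and since both sides are static, \cref{prop:cons-static} turns this into convertibility in \CIC, which is exactly what the \CIC checking rule demands. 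Erasability of the cast itself, i.e. $\tcol{\GG} \vdash \tcol{\cast{T}{S}{t}} \caequipre \scol{\tilde{t}}$, follows as in the forward direction: rules \nameref{infrule:capre-castl} and \nameref{infrule:capre-castr} reduce the goal to showing that $\tcol{S}$ and the type $\tcol{T}$ inferred for $\tcol{t}$ are equiprecise, which one obtains from convertibility of $\eras(\tcol{S})$ and $\eras(\tcol{T})$ by simulating their common reduction with \cref{thm:simulation} and using the rigidity of structural precision on static terms (no non-diagonal rule applies, so it collapses to $\alpha$-equality and the reducts are forced to coincide with the erasures of the \CCIC reducts).

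Finally, the non-$\tcol{\?}$ constrained-inference rules are handled exactly as \nameref{infrule:cic-prod-inf} in the forward proof: if $\tcol{T} \rtred \tcol{\P x : A.B}$, then using \cref{cor:red-types} and simulating the reduction back onto the static $\eras(\tcol{T})$, one gets $\eras(\tcol{T}) \rtred \P x : A''.B''$ with $\P x : A''.B''$ forced $\alpha$-equal to $\eras(\tcol{\P x : A.B})$, so $\tcol{\P x : A.B}$ is erasable and the \CIC constrained-inference judgment holds on $\eras(\tcol{T})$; the universe and inductive cases are identical. I expect the main obstacle to be precisely this interplay between erasability, reduction and the constrained-inference rules — being confident that the simulation machinery of \cref{sec:gcic-simulation} is strong enough both to rule out the $\tcol{\?}$-introducing rules on static input and to make the remaining rules line up exactly with their \CIC analogues, since essentially all of the technical weight of the argument sits there.
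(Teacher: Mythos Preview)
Your proposal is correct and follows the same approach as the paper, which dispatches this direction in a single sentence by saying it is similar to the forward implication, the tricky part again being the reduction steps handled via equiprecision and simulation. Your detailed unpacking—ruling out the $\?$-introducing constrained-inference rules by erasability, handling \nameref{infrule:gcic-check} via \cref{cor:mon-cons} and \cref{prop:cons-static}, and aligning the remaining constrained-inference rules with their \CIC counterparts through the same equiprecision-and-simulation argument—is exactly the content that one sentence gestures at.
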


\begin{proof}
	The proof is similar to the previous one. Again, the tricky part is to handle reduction steps, and we use equiprecision in the same way to conclude in those.
\end{proof}

As a direct corollary of those propositions in an empty context, we get conservativity \cref{thm:conservativity}.

\paragraph{Elaboration graduality}

Now for the elaboration graduality: again, we state it in an open context for all three typing judgments.

\begin{theorem}[Elaboration graduality---Open context]
	Let $\tcol{\GG}$ be a context such that $\tcol{\fs{\GG}} \capre \tcol{\sn{\GG}}$, and $\scol{\tilde{t}}$ and $\scol{\tilde{t}'}$ be two \GCIC terms such that $\scol{\tilde{t}} \apre \scol{\tilde{t}'}$. Then
	\begin{itemize}
		\item if $\inferelab{}{\tilde{t}}{}{\fs{\GG}}{t}{T}$ is universe adequate, then there exists $\tcol{t'}$ and $\tcol{T'}$ such that $\inferelab{}{\tilde{t}'}{}{\sn{\GG}}{t'}{T'}$, $\tcol{\GG} \vdash \tcol{t} \capre \tcol{t'}$ and $\tcol{\GG} \vdash \tcol{T} \capre \tcol{T'}$;
		\item If $\checkelab{}{t}{}{\fs{\tcol{\GG}}}{t'}{T}$ is universe adequate, then for all $\tcol{T'}$ such that $\tcol{\GG} \vdash \tcol{T} \capre \tcol{T'}$ there exists $\tcol{t'}$ such that $\checkelab{}{\tilde{t}'}{}{\sn{\GG}}{t'}{T'}$ and $\tcol{\GG} \vdash \tcol{t} \capre \tcol{t'}$;
		\item If $\pcheckelab{h}{}{t}{}{\fs{\tcol{\GG}}}{t'}{T}$ is universe adequate, then there exists $\tcol{t'}$ and $\tcol{T'}$ such that $\pcheckelab{h}{}{\tilde{t}'}{}{\sn{\GG}}{t'}{T'}$, $\tcol{\GG} \vdash \tcol{t} \capre \tcol{t'}$ and $\tcol{\GG} \vdash \tcol{T} \capre \tcol{T'}$.
	\end{itemize}
\end{theorem}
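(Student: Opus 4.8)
The plan is to proceed by mutual induction on the elaboration derivation of $\scol{\tilde t}$ for the three judgments at once. Since elaboration is entirely syntax-directed, this permits a clean case analysis on the last rule used, and the overall shape mirrors the closed-context \cref{thm:static-graduality}: the hypothesis $\scol{\tilde t} \apre \scol{\tilde t'}$ leaves exactly two possibilities at each node. Either $\scol{\tilde t'} = \scol{\?\ulev{i}}$, handled directly by rule \nameref{infrule:gcic-unk} on the right and by \nameref{infrule:capre-unk} (or \nameref{infrule:capre-unk-univ}) on structural precision — this is where universe adequacy is consumed, pinning $i$ to the level inferred for $\tcol{T}$ so that those rules actually apply — or $\scol{\tilde t'}$ has the same head constructor as $\scol{\tilde t}$, in which case one peels off the matching elaboration rule on the right and recombines the induction hypotheses. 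Throughout, I would keep $\tcol{\fs{\GG}} \capre \tcol{\sn{\GG}}$ as an invariant when entering binders, using the weakening and substitution lemmas for structural precision from \cref{sec:precision-reduction} both to extend contexts and to push the substitutions that appear in the inferred types of \nameref{infrule:gcic-app} and \nameref{infrule:gcic-fix}; correctness of elaboration (\cref{thm:correction}) supplies the typing premises needed to invoke those lemmas and to discharge the typing side conditions of \nameref{infrule:capre-castr}, \nameref{infrule:capre-castl} and \nameref{infrule:capre-unk}.

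For the checking judgment (\nameref{infrule:gcic-check}) the extra input $\tcol{T'}$ with $\tcol{\GG} \vdash \tcol{T} \capre \tcol{T'}$ is threaded into the inference induction hypothesis; from the less precise inferred type $\tcol{S'}$ one upgrades $\tcol{S} \cons \tcol{T}$ to $\tcol{S'} \cons \tcol{T'}$ by monotonicity of consistency (\cref{cor:mon-cons}), so \nameref{infrule:gcic-check} still fires, and the two inserted casts $\tcol{\cast{S}{T}{t}}$ and $\tcol{\cast{S'}{T'}{t'}}$ are related by the diagonal cast rule \nameref{infrule:capre-diag-cast}. For the constrained-inference judgments, the key tool is \cref{cor:red-types}: from $\tcol{\GG} \vdash \tcol{T} \capre \tcol{T'}$ and a reduction of $\tcol{T}$ to a universe, a $\Pi$-type, or an inductive, it yields that $\tcol{T'}$ reduces either to a less precise type constructor of the same shape — so the same rule (\nameref{infrule:gcic-inf-univ}, \nameref{infrule:gcic-inf-prod}, \nameref{infrule:gcic-inf-ind}) applies and precision of the outputs is immediate — or to some $\tcol{\?_{[]_j}}$ with $i \le j$, in which case one must switch to the corresponding $\?$-rule (\nameref{infrule:gcic-univ-unk}, \nameref{infrule:gcic-prod-unk}, \nameref{infrule:gcic-ind-unk}); the elaborated term then acquires an extra cast into the germ $\stalkCIC{j}{h}$, and one checks that the more precise type produced on the left is itself less precise than $\tcol{\?_{[]_j}}$, using that each component of $\stalkCIC{j}{h}$ is $\capre$ to $\tcol{\?}$ via \nameref{infrule:capre-unk-univ}, so the two outputs remain related by a combination of \nameref{infrule:capre-castr}, \nameref{infrule:capre-castl} and \nameref{infrule:capre-diag-cast}. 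All of this relies only on \cref{cor:red-types,cor:mon-cons}, hence the statement is restricted to \GCICP, \GCICs and to $\tcol{\?}$-free subterms of \GCICT, matching exactly the hypotheses of \cref{thm:simulation}; and since the reductions invoked by \cref{cor:red-types} are head reductions, the result also survives fixing a weak-head strategy in \cref{fig:elaboration}.

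The main obstacle is precisely this rule-switching bookkeeping in the constrained-inference cases together with the universe levels: when the less precise term's inferred type collapses to $\tcol{\?_{[]_j}}$ one must verify the side conditions of the $\?$-rule — in particular $\castOfPi{j} \ge 0$ for \nameref{infrule:gcic-prod-unk}, which follows from $i \le j$ and the condition holding on the left — and then exhibit a derivation of structural precision between an uncast type constructor on the left and a cast-into-germ term on the right, which requires carefully peeling \nameref{infrule:capre-castr} and relating each component of the germ to $\tcol{\?}$. Threading the universe-adequacy hypothesis down to every subterm that elaborates against $\scol{\?\ulev{i}}$, and checking it is preserved by the structural decomposition of $\apre$, is routine but must be spelled out in the mutual induction. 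Everything else is a mechanical recombination of induction hypotheses in the style already used for \cref{thm:correction} and \cref{thm:uniqueness}, and \cref{thm:static-graduality} follows by specializing to the empty context.
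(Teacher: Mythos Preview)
Your proposal is correct and follows essentially the same route as the paper's proof: mutual induction on the elaboration derivation, the same non-diagonal/diagonal split driven by $\apre$, universe adequacy plus \nameref{infrule:capre-unk} for the $\scol{\?\ulev{i}}$ case, \cref{cor:mon-cons} with \nameref{infrule:capre-diag-cast} for checking, and \cref{cor:red-types} with the rule switch to \nameref{infrule:gcic-prod-unk}/\nameref{infrule:gcic-ind-unk}/\nameref{infrule:gcic-univ-unk} for constrained inference. One small imprecision: when the left side used \nameref{infrule:gcic-inf-prod} and the right must switch to \nameref{infrule:gcic-prod-unk}, there is no ``$\castOfPi{i}\geq 0$ on the left'' to invoke; the side condition $\castOfPi{j}\geq 0$ instead follows from the variant restriction (trivial in \GCICP, and in \GCICs because the left-hand product forces $j \geq \sortOfPi{k}{l} \geq 1$).
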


\begin{proof}
	Once again, we use our favorite tool: induction on the typing derivation of $\scol{\tilde{t}}$.

	\paragraph{Inference---Non-diagonal precision.}
	For inference, we have to make a distinction on the rule used to prove $\scol{tilde{t}} \apre \scol{\tilde{t}'}$: we have to handle specifically the non-diagonal one, where $\scol{\tilde{t}'}$ is some $\scol{\?}$. We start with this, and treat the ones where the rule is diagonal (\ie when $\scol{\tilde{t}}$ and $\scol{\tilde{t}'}$ have the same head) next.

	We have $\inferelab{}{\tilde{t}}{}{\fs{\GG}}{t'}{T'}$ and $\inferelab{}{\?\ulev{i}}{}{\sn{\GG}}{\?_{\?_{[]_i}}}{\?_{[]_i}}$. Correctness of elaboration gives $\tcol{\fs{\GG}} \vdash \tcol{t'} \inferty \tcol{T'}$, and by validity $\tcol{\fs{\Gamma}} \vdash \tcol{T'} \inferty \tcol{[]_i}$, universe adequacy ensuring us that this $i$ is the same as the one in $\scol{\tilde{t}'}$. Thus we have $\tcol{\GG} \vdash \tcol{T'} \capre \tcol{\?_{[]_i}}$ by rule \nameref{infrule:capre-unk}, and in turn $\tcol{\GG} \vdash \tcol{t'} \capre \tcol{\?_{\?_{[]_i}}}$ by a second use of the same rule, giving us the required conclusions.

	\paragraph{Inference---Variable.}
	Rule \nameref{infrule:gcic-var} gives us $(x : \tcol{T}) \in \tcol{\fs{\GG}}$. Because $\vdash \tcol{\fs{\GG}} \capre \tcol{\sn{\GG}}$, there exists some $\tcol{T'}$ such that $(x : \tcol{T'}) \in \tcol{\sn{\GG}}$, and $\tcol{\GG} \vdash \tcol{T} \capre \tcol{T'}$ using weakening. Thus, $\inferelab{}{x}{}{\sn{\GG}}{x}{T'}$, and of course $\tcol{\GG} \vdash \tcol{x} \capre \tcol{x}$.


	\paragraph{Inference---Product.}
  Premises of rule \nameref{infrule:gcic-prod} give $\pcheckelab{[]}{}{\tilde{A}}{}{\fs{\GG}}{A}{[]_i}$ and $\pcheckelab{[]}{}{\tilde{B}}{}{\fs{\GG}, x : A}{B}{[]_j}$, and the diagonal precision one gives $\scol{\tilde{A}} \apre \scol{\tilde{A}'}$ and $\scol{\tilde{B}} \apre \scol{\tilde{B}'}$. Applying the induction hypothesis, we get some $\tcol{A'}$ such that $\pcheckelab{[]}{}{\tilde{A}'}{}{\sn{\GG}}{A'}{[]_i}$ and $\tcol{\GG} \vdash \tcol{A} \capre \tcol{A'}$. The inferred type for $\scol{\tilde{A}'}$ must be $\tcol{[]_i}$ as it is some $\tcol{[]_j}$ because of the constrained elaboration, and it is less precise than $\scol{[]_i}$ by the induction hypothesis.
  From this, we also deduce that $\tcol{\fs{\GG}, x : A} \capre \tcol{\sn{\GG}, x : A'}$. Hence the induction hypothesis can be applied to $\scol{\tilde{B}}$, giving $\pcheckelab{[]}{}{\tilde{B}'}{}{\sn{\GG}}{B'}{[]_j}$. Combining this with the elaboration for $\tcol{\tilde{A}'}$, we obtain $\inferelab{}{\P x : \tilde{A}'. \tilde{B}'}{}{\sn{\GG}}{\P x : A' . B'}{[]_{\sortOfPi{i}{j}}}$. Moreover, $\tcol{\GG} \vdash \tcol{\P x : A. B} \capre \tcol{\P x : A' . B'}$ by combining the precision hypothesis on $A$ and $B$, and also $\tcol{\GG} \vdash \tcol{[]_{\sortOfPi{i}{j}}} \capre \tcol{[]_{\sortOfPi{i}{j}}}$.

	\paragraph{Inference---Application.}
	From rule \nameref{infrule:gcic-app}, we have $\pcheckelab{\Pi}{}{\tilde{t}}{}{\fs{\GG}}{t}{\P x : A . B}$ and $\checkelab{}{\tilde{u}}{}{\fs{\GG}}{u}{A}$, and the diagonal precision gives $\scol{\tilde{t}} \apre \scol{\tilde{t}'}$ and $\scol{\tilde{u}} \apre \scol{\tilde{u}'}$. By induction, we have $\pcheckelab{\Pi}{}{\tilde{t}'}{}{\fs{\GG}}{t'}{\P x : A' . B'}$ for some $\tcol{t'}$, $\tcol{A'}$ and $\tcol{B'}$ such that $\tcol{\GG} \vdash \tcol{t} \capre \tcol{t'}$, $\tcol{\GG} \vdash \tcol{A} \capre \tcol{A'}$ and $\tcol{\GG, x : A \mid A'} \vdash \tcol{B} \capre \tcol{B'}$. Using the induction hypothesis again with that precision property on $\tcol{A}$ and $\tcol{A'}$ gives $\checkelab{}{\tilde{u}'}{}{\sn{\GG}}{u'}{A'}$ with $\tcol{\GG} \vdash \tcol{u} \capre \tcol{u'}$.
  Therefore combining those we get $\inferelab{}{\tilde{t}'~\tilde{u}'}{}{\sn{\GG}}{t'~u'}{B'\subs{u'}{x}}$, $\tcol{\GG} \vdash \tcol{t~u} \capre \tcol{t'~u'}$ and, by substitution property of precision, $\tcol{\GG} \vdash \tcol{B\subs{u}{x}} \capre \tcol{B'\subs{u'}{x}}$.

	\paragraph{Inference---Other diagonal cases.}
	All other cases are similar to those: combining the induction hypothesis directly leads to the desired result, handling the binders in a similar way to that of products when needed.

	\paragraph{Checking.}
	For \nameref{infrule:gcic-check}, we have that $\inferelab{}{\tilde{t}}{}{\fs{\GG}}{t}{S}$, with $\tcol{S} \cons \tcol{T}$. By induction hypothesis, $\inferelab{}{\tilde{t}'}{}{\sn{\GG}}{\tilde{t}'}{S'}$ with $\tcol{\GG} \vdash \tcol{t} \capre \tcol{t'}$ and $\tcol{\GG} \vdash \tcol{S} \capre \tcol{S'}$. But we also have as an hypothesis that $\tcol{\GG} \vdash \tcol{T} \capre \tcol{T'}$. By monotonicity of consistency, we conclude that $\tcol{S'} \cons \tcol{T'}$, and thus $\checkelab{}{\tilde{t}'}{}{\sn{\GG}}{\cast{S'}{T'}{t'}}{T'}$. A use of \nameref{infrule:capre-diag-cast} then ensures that $\tcol{\GG} \vdash \tcol{\cast{S}{T}{t}} \capre \tcol{\cast{S'}{T'}{t'}}$, as desired.

	\paragraph{Constrained inference---\nameref{infrule:gcic-inf-prod} rule.}
	We are in the situation where $\inferelab{}{\tilde{t}}{}{\fs{\GG}}{t}{S}$ and $\tcol{S} \rtred \tcol{\P x : A. B}$. By induction hypothesis, $\inferelab{}{\tilde{t}'}{}{\sn{\GG}}{t'}{S'}$ with $\tcol{\GG} \vdash \tcol{S} \capre \tcol{S'}$. Using \cref{cor:red-types}, we get that $\tcol{S'} \rtred \tcol{\P x : A' . B'}$ such that $\tcol{\GG} \vdash \tcol{\P x : A . B} \capre \tcol{\P x : A' . B'}$, or $\tcol{S'} \rtred \tcol{\?_{[]_i}}$. In the first case, by rule \nameref{infrule:gcic-inf-prod} we get $\pcheckelab{\Pi}{}{\tilde{t}'}{}{\sn{\GG}}{t'}{\P x : A' . B'}$ together with the precision inequalities for $\tcol{t'}$ and $\tcol{\P x : A'. B'}$. In the second case, we can use rule \nameref{infrule:gcic-prod-unk} instead, and get $\pcheckelab{\Pi}{}{\tilde{t}'}{}{\sn{\GG}}{\cast{S'}{\stalkCIC{i}{\Pi}}{t'}}{\stalkCIC{i}{\Pi}}$, and $\castOfPi{i}$ is larger than the universe levels of both $\tcol{A'}$ and $\tcol{B'}$. A use of \nameref{infrule:capre-castr}, together with the fact that $\tcol{\GG} \vdash \tcol{A} \capre \tcol{\?_{[]_{\castOfPi{i}}}}$ by \nameref{infrule:capre-unk-univ} and similarly for $\tcol{B}$, gives that $\tcol{\GG} \vdash \tcol{t'} \capre \tcol{\cast{S'}{\stalkCIC{i}{\Pi}}{t'}}$, and the precision between types has been established already.

	\paragraph{Constrained inference---\nameref{infrule:gcic-prod-unk}.}
	This time, $\inferelab{}{\tilde{t}}{}{\fs{\GG}}{t}{S}$, but $\tcol{S} \rtred \tcol{\?_{[]_i}}$. By induction hypothesis, $\inferelab{}{\tilde{t}'}{}{\sn{\GG}}{t'}{S'}$ with $\tcol{\GG} \vdash \tcol{S} \capre \tcol{S'}$. By \cref{cor:red-types}, we get that $\tcol{S'} \rtred \tcol{\?_{[]_i}}$. Thus $\pcheckelab{\Pi}{}{\tilde{t}'}{}{\sn{\GG}}{\cast{S'}{\stalkCIC{i}{\Pi}}{t'}}{\stalkCIC{i}{\Pi}}$. A use of \nameref{infrule:capre-diag-cast} is enough to conclude.

	\paragraph{Constrained inference---Other rules.}
	All other cases are similar to the previous ones, albeit with a simpler handling of universe levels (since $\castOfPiName$ does not appear).

\end{proof}

\section{Connecting the discrete and monotone models}
\label{sec:logical-relation}

Comparing the discrete and the monotone translations, we can see that they
coincide on ground types such as $\nat$.
On functions over ground types, for instance $\nat {\to} \nat$, the
monotone interpretation is more conservative:
any monotone function $f : \monTy{\nat{\to}\nat}$ induces a function
$\tilde{f} : \bareTy{\nat{\to}\nat}$ by forgetting the monotonicity, but not all
functions from $\bareTy{\nat{\to}\nat}$ are monotone\footnote{For instance the
  function swapping $\err_{ \liftNat{} }$ and $\?_{ \liftNat{} }$ is not monotone.}.
\begin{figure}[h]
  \centering
  \textbf{Translation of contexts}
  \begin{align*}
    \binlogrelTy{\cdot}\quad &:=\quad {\cdot}
    &\binlogrelTy{\Gamma, x : A} \quad &{:=} \quad\binlogrelTy{\Gamma}, x_{\disc} : \bareTy{A}, x_{\mon} :
    \monTy{A}, x_\binlrel : \binlogrelTy{A}~x_{\disc}~x_{\mon}
  \end{align*}
  \smallskip

  \textbf{Logical relation on terms and types}

  \[
    \begin{array}[t]{lcl}
      \binlogrelTy{A} & := & \El_\binlrel\,\binlogrelTm{A} \\[0.7em]
      {\binlogrelTm{ x }} & := & x_\binlrel\\
      {\binlogrelTm{\square{}_i}} & := & \uU_{\binlrel,i} \\
      {\binlogrelTm{t~u}} & := & \binlogrelTm{t}~\bareTm{u}~\monTm{u}~\binlogrelTm{u}\\
      {\binlogrelTm{\l x : A . t}} & := & \l (x_{\disc} : \bareTy{A}) (x_{\mon} : \monTy{A}) (x_\binlrel : \binlogrelTy{A}~x_{\disc}~x_{\mon}) . \binlogrelTm{t} \\
      {\binlogrelTm{\P x : A . B}} & := & \PiU_\binlrel~\binlogrelTm{A}~(\l (x_{\disc} : \bareTy{A}) (x_{\mon} : \monTy{A}) (x_\binlrel : \binlogrelTy{A}~x_{\disc}~x_{\mon}). \binlogrelTy{ B }) \\
      {\binlogrelTm{\nat}} & := & \natU_\binlrel \\
      {\binlogrelTm{\?_A}} & := & \?_{\binlogrelTy{A}} : \binlogrelTy{A}~\?_{\bareTy{A}}~\?_{\monTm{A}}\\
      {\binlogrelTm{\err_A}} & := & \err_{\binlogrelTy{A}} : \binlogrelTy{A}~\err_{\bareTy{A}}~\err_{\monTm{A}}\\
      {\binlogrelTm{ \cas }} &:= & \cas_{\binlrel}
    \end{array}
  \]
  \medskip

  \begin{center}
    \textbf{Inductive-recursive relational universe $\U_{\binlrel} :
      \U^\disc{\to}\U^\mon{\to}\square$ }
  \end{center}
  \smallskip

  \begin{mathpar}
    \inferrule{A_\binlrel \in \U_{\binlrel, i}~A~A'\\
      B \in \P(a : A)(a' : A'). \El_\binlrel\,A_\binlrel\,a\,a' \to \U_{\binlrel,j}\,(B\,a)~(B'\,a')}
    {\PiU_\binlrel~A_\binlrel~B_\binlrel \in \U_{\binlrel,\sortOfPi{i}{j}}~(\PiU\,A\,B)~(\PiU{}\,A'\,B')} \and
    \inferrule{j < i}{\uU_{\binlrel,j} \in \U_{\binlrel, i}~\uU_j~\uU_j} \and
    \natU_\binlrel \in \U_{\binlrel,i}~\natU~\natU \and
    \unkU[\binlrel] \in \U_{\binlrel,i}~\unkU~\unkU \and
    \maltese_\binlrel \in \U_{\binlrel,i}~\maltese~\maltese
  \end{mathpar}
  \medskip
  \begin{center}
    \textbf{Decoding function $\El_\binlrel : \U_\binlrel\,A\,A' {\to} \El\,A {\to} \El\,A' {\to} \square{}$}
  \end{center}
  \smallskip
  \begin{align*}
    \El_\binlrel\,\uU_{\binlrel,j}~A~A' &:= \U_{\binlrel,j}~A~A'\\
    \El_\binlrel\,\natU_\binlrel~n~m &:= n = m\\
    \El_\binlrel\,\maltese_\binlrel~\unitK~\unitK &:= \unit\\
    \El_\binlrel\,\unkU[\binlrel]~(c;x)~y &:= \El_\binlrel\,(\stalkCIC{\binlrel}{c})~x~(\texttt{downcast}_{\unkU, \stalkCIC{}{c}}\,y) \\ 
    \El_\binlrel\,(\PiU_\binlrel~A_\binlrel~B_\binlrel)~f~f' &:= \P (a : \El\,A) (a' : \El\,A') (a_\binlrel : \El_\binlrel\,A_\binlrel~a~a').\\
    &\qquad\quad\El_\binlrel\,(B_\binlrel\,a\,a'\,a_\binlrel)~(f\,a)~(f'\,a')\\
  \end{align*}
  \caption{Logical relation between the discrete and monotone models}
  \label{fig:logrel-discrete-monotone}
\end{figure}
Extending the sketched correspondence at higher types, we obtain a (binary)
logical relation $\binlogrelTm{-}$ between terms of the discrete and monotone
translations described in \cref{fig:logrel-discrete-monotone}, that forgets the
monotonicity information on ground types.
More precisely we define for each types $A$ in the source a relation
$\binlogrelTy{A} : \bareTy{A} \to \monTy{A} \to \square{}$ and for each term $t : A$ a
witness $\binlogrelTm{t} : \binlogrelTy{A}~\bareTm{t}~\monTm{t}$.
The logical relation employs a an inductively defined relation $\U_{\binlrel,i}$
between $\U_i^\disc := \bareTy{\square_i}$ and $\U_i^\mon := \monTy{\square_i}$
whose constructors are relational codes relating codes of discrete and monotone
types.
These relational codes are then decoded to relations between the corresponding
decoded types thanks to ${ \El{} }_{\binlrel}$.
The main difficult case in establishing the logical relation lie in relating
the $\cas$s, since that's the main point of divergence of the two models.

\begin{lemma}[Basis lemma]\text{}
  \begin{enumerate}
  \item There exists a term $\cas_\binlrel : \binlogrelTy{\P (A\, B : \U).A \to B}~\bareTm{\cas}~\monTm{\cas}$.
  \item More generally, if $\Gamma \caty t : A$ then $\binlogrelTy{\Gamma} \irty \binlogrelTm{t} : \binlogrelTy{A}~\bareTm{t}~\monTm{t}$.
  \end{enumerate}
\end{lemma}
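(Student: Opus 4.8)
The plan is to prove both points by the standard "fundamental lemma of a logical relation" technique, mirroring the structure of the discrete translation (\cref{thm:discrete-model}) and the monotone translation (\cref{thm:monotone-model}). Point~(2) is proved by induction on the typing derivation $\Gamma \caty t : A$, mutually with the corresponding statement for the well-formedness of contexts ($\binlogrelTy{\Gamma}$ is a well-formed context whenever $\Gamma$ is well-formed) and for the relational codes ($\binlogrelTm{A} : \U_{\binlrel,i}\,\bareTm{A}\,\monTm{A}$ whenever $\Gamma \caty A : \square_i$). Most cases are routine congruences: for each term former one applies the inductive hypotheses on subterms and feeds them to the relational code/decoding clauses of \cref{fig:logrel-discrete-monotone}. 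The conversion rule is handled by first establishing, as in assertion~(1) of \cref{thm:discrete-model}, that the relational interpretation is stable under conversion in \CCIC — i.e. if $t \equiv u$ then $\binlogrelTy{A}\,\bareTm{t}\,\monTm{t}$ and $\binlogrelTy{A}\,\bareTm{u}\,\monTm{u}$ are propositionally equal up to the already-proven stability of $\bareTm{-}$ and $\monTm{-}$ under conversion. The clauses for $\?_A$ and $\err_A$ amount to checking that the relational top/bottom elements $\?_{\binlogrelTy{A}}$ and $\err_{\binlogrelTy{A}}$ relate the discrete exceptions to the monotone ones, which follows by induction on the code $A$ exactly as exceptions were defined on each code former in \cref{fig:exceptions}.

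First I would set up the relational universe $\U_{\binlrel}$ and its decoding $\El_\binlrel$ by strong induction on the universe level, in lockstep with the inductive–recursive definitions of $\U^\disc$ (\cref{fig:discrete-univ}) and $\U^\mon$ (\cref{fig:IRUniverse}); the level gap $\castOfPi{\sortOfPi{i}{i}} < \sortOfPi{i}{i}$ (valid for \CCICs and \CCICT) guarantees well-foundedness, and the degenerate case for \CCICP is handled by turning off the termination check as is done elsewhere in the model construction. Then I would prove a handful of auxiliary monotonicity/coherence lemmas: that $\El_\binlrel\,\unkU[\binlrel]$ is well defined on the quotient underlying $\unkMon$ (using decidable equality of head constructors, exactly as in \cref{sec:realizing-unknown-type}), and that the relational germ $\stalkCIC{\binlrel}{h}$ relates $\stalkCIC{i}{h}$ on the discrete side to $\stalkCIC{i}{h}$ on the monotone side. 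With these in hand, point~(1) — the existence of $\cas_\binlrel$ — is proved by the same double induction on universe levels and case analysis on the two codes that defines $\cas$ in \cref{fig:cast-implem-discrete}: for every defining equation of the discrete $\cas$ one checks that the corresponding monotone operation (built out of $\upcast$/$\downcast$ through $\unkMon$) lands in the relation, using the adjunction laws of \cref{lem:adj} and \cref{lem:adj}'s retraction equality where a roundtrip collapses.

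The main obstacle will be the $\cas$ case — precisely because, as the text stresses, this is "the main point of divergence of the two models": the discrete $\cas$ reduces lazily and can stay stuck on a \nameref{redrule:down-err}-style redex, whereas the monotone $\cas$ computes its $\upcast$/$\downcast$ eagerly. Concretely, the two translations do \emph{not} produce definitionally equal casts, so the relation cannot be "both sides compute to the same normal form"; instead one must verify equationally, case by case on the codes, that the discrete result and the monotone result are related by $\El_\binlrel$ at the target code. The delicate sub-cases are casts into and out of $\unkU$: one has to match the discrete decomposition through iterated germs (rules \nameref{redrule:prod-germ}, \nameref{redrule:ind-germ}, \nameref{redrule:up-down}) against the monotone canonical decomposition $A \sqsubseteq \stalkCIC{}{\hd A} \sqsubseteq \dots \sqsubseteq \unkU$ from \cref{fig:IRUniverse}, and invoke the uniqueness of that decomposition (proof-irrelevance of $\sqsubseteq$, \cref{thm:universe-hierarchy}.(1)) to see the two agree. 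The function case of $\El_\binlrel\,(\PiU_\binlrel\dots)$ additionally requires that the relation is preserved under the pre/post-composition with $\upcast_{d_A}$/$\downcast_{d_A}$ used to define $\upcast_{d_\Pi}$, which again reduces to \cref{lem:adj} applied pointwise. Everything else — variables, application, $\lambda$, $\Pi$, inductive constructors and the $\nat$-eliminator — is a mechanical congruence once the $\cas$ and exception clauses are in place.
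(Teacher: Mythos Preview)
Your proposal is correct and follows essentially the same approach as the paper: case analysis on the relational codes $A_\binlrel, B_\binlrel$ for $\cas_\binlrel$, using the \eppair{} laws to align the monotone $\downcast\circ\upcast$ with the discrete case-split, and a routine induction on typing for point~(2). One small difference in execution: where you plan to invoke proof-irrelevance of $\sqsubseteq$ to match the discrete germ decomposition against the monotone one, the paper instead unfolds $\monTm{\cas}\,A'\,B'$ equationally in each case---e.g.\ in the $\PiU$-to-$\PiU$ case it factors $\downcast^{\unkU}_{B'}\circ\upcast^{\unkU}_{A'}$ through $\unkU\to\unkU$, cancels the inner roundtrip via the section--retraction identity, and obtains a term of exactly the same shape as the discrete $\cas$ so that the inductive hypothesis applies pointwise---so proof-irrelevance of $\sqsubseteq$ is never actually invoked.
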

In particular \CCIC terms of ground types behave similarly in both models.

\begin{proof}

Expanding the type of $\cas_\binlrel$, we need to provide a
term
\[c_\binlrel = \cas_\binlrel\,A\,A'\,A_\binlrel\,B\,B'\,B_\binlrel\,a\,a'\,a_\binlrel
  : \El_\binlrel\,B_\binlrel~(\bareTm{\cas}\,A\,B\,a)~(\monTm{\cas}\,A'\,B'\,a')\]
where
\begin{align*}
  A &: \bareTy{\square{}_i},& A' &: \monTy{\square{}_i},& A_\binlrel &: \U_\binlrel\,A\,A',\\
  B &: \bareTy{\square{}_i},& B' &: \monTy{\square{}_i},& B_\binlrel &: \U_\binlrel\,B\,B',\\
  a &: \El\,A,& a' &: \El\,A',& a_\binlrel &: \El_\binlrel\,A_\binlrel~a~a'
\end{align*}
We proceed by induction on $A_\binlrel, B_\binlrel$, following the
defining cases for $\bareTm{\cas}$ (see \cref{fig:cast-implem-discrete}).
\noindent\textbf{Case $A_\binlrel =
  \PiU_\binlrel\,A^\pidom_\binlrel\,A^\picod_\binlrel$ and
  $B_\binlrel = \PiU_\binlrel\,B^\pidom_\binlrel\,B^\picod_\binlrel$:}
we pose ${A' = \PiU\,A'^\pidom\,A'^\picod}$ and ${B' = \PiU\,B'^\pidom\,B'^\picod}$
\newcommand{\ucast}[2]{\upcast^{#2}_{#1}}
\newcommand{\dcast}[2]{\downcast^{#1}_{#2}}
\begin{align*}
  \monTm{\cas}~A'~B'~f' &= ~\dcast{\unkU}{B'}(\ucast{A'}{\unkU} f') \tag{by definition of $\monTm{\cas}$}\\
                       &= ~\dcast{\unkU \to \unkU}{B'}\circ \dcast{\unkU}{\unkU \to \unkU} \circ \ucast{\unkU \to \unkU}{\unkU} \circ \ucast{A'}{\unkU \to \unkU} (f)\tag{by decomposition of $\PiU \sqsubseteq \unkU$}\\
                       &= ~\dcast{\unkU \to \unkU}{B'}\circ \ucast{A'}{\unkU \to \unkU} (f) \tag{by section-retraction identity}\\
                       &= \l (b' : \El\,A'^\pidom).
                         ~\letin{a'}{~\dcast{\unkU}{B'^\pidom}\circ\ucast{A'^\pidom}{\unkU} (b)}{}\tag{by def. of \eppair{} on $\Pi$}\\
                       &\hspace{2.5cm}\dcast{\unkU}{B'^\picod\,b'}\circ\ucast{A'^\picod\,a'}{\unkU}(f\,a')\\
                       &= \l (b' : \El\,A'^\pidom).
                         ~\letin{a'}{\monTm{\cas}~B'^\pidom~A'^\pidom~b'}{} \tag{by definition of $\monTm{\cas}$}\\
                       &\hspace{2.5cm}\monTm{\cas}\,(A'^\picod\,a')~(B'^\picod\,b')~(f\,a')
\end{align*}

For any $b : \El\,B^\pidom$ and $b' : \El\,B'^\pidom$, $b_\binlrel :
\El_\binlrel\,B^\pidom_\binlrel\,b\,b'$, we have by inductive hypothesis
\[a_\binlrel := \binlogrelTm{\cas}\,B^\pidom_\binlrel\,A^\pidom_\binlrel\,b_\binlrel : \El_\binlrel~{A_\binlrel}~(\bareTm{\cas}\,B^\pidom\,A^\pidom\,b)~(\monTm{\cas}\,B'^\pidom\,A'^\pidom\,b')\]
so that, posing $a = \bareTm{\cas}\,B^\pidom\,A^\pidom\,b$ and $a' =
\monTm{\cas}\,B'^\pidom\,A'^\pidom\,b'$,
\[f_\binlrel\,a\,a'\,a_\binlrel : \El_\binlrel~(A^\picod_\binlrel\,a\,a'\,a_\binlrel)~(f\,a)~(f'\,a')\]
and by another application of the inductive hypothesis
\[\binlogrelTm{\cas}~(B^\picod_\binlrel\,b\,b'\,b_\binlrel)~(A^\picod_\binlrel\,a\,a'\,a_\binlrel)~(f_\binlrel\,a\,a'\,a_\binlrel)
  :
  \binlogrelTy{B^\picod_\binlrel\,b\,b'\,b_\binlrel}~(\bareTm{\cas}~A~B~f~a)~(\monTm{\cas}~A'~B'~f'~a')\]
Packing these together, we obtain a term
\[\binlogrelTm{\cas}~A_\binlrel~B_\binlrel~f_\binlrel
  :
  \El_\binlrel~(\PiU~B^\pidom_\binlrel~B^\picod_\binlrel)~(\bareTm{\cas}~A~B~f)~(\monTm{\cas}~A'~B'~f').\]

\noindent\textbf{Case $A_\binlrel =
  \PiU_\binlrel\,A^\pidom_\binlrel\,A^\picod_\binlrel$ and
  $B_\binlrel = \unkU[\binlrel]$:}
By definition of the logical relation at $\unkU[\binlrel]$, we need to build a
witness of type
\[\El_\binlrel~(\unkU^{\castOfPi{i}} \to \unkU^{\castOfPi{i}})~(\bareTm{\cas}~A~(\unkU \to\unkU)~f)~(\dcast{\unkU}{\unkU \to\unkU}(\monTm{\cas}~A'~\unkU~f'))\]
We compute that
\[\dcast{\unkU}{\unkU \to\unkU} (\monTm{\cas}\,A'\,\unkU\,f') =
  ~\dcast{\unkU}{\unkU \to\unkU} \circ \dcast{\unkU}{\unkU} \circ \ucast{A'}{\unkU}f' =
  ~\dcast{\unkU}{\unkU \to\unkU} \circ \ucast{A'}{\unkU}f' =
  \monTm{\cas}\,A'\,(\unkU \to \unkU)\,f'
\]
So the result holds by induction hypothesis.

\noindent\textbf{Other cases with
  $A_\binlrel = \PiU_\binlrel\,A^\pidom_\binlrel\,A^\picod_\binlrel$:}
It is enough to show that $\monTm{\cas}\,A'\,B'\,f' = \maltese_{B'}$ when $B' =
\maltese$ (trivial) or $\hd\,B' \neq \texttt{pi}$. The latter case holds because
$\dcast{\unkU}{\stalkCIC{}{c}} \ucast{\stalkCIC{}{c'}}{\unkU} x = \maltese_{\El_{\C}\,c}$ whenever $c
\neq c'$ and downcasts preserve $\maltese$.

\noindent\textbf{Case $A_\binlrel = \unkU[\binlrel]$,
  $B_\binlrel = \PiU_\binlrel\,B^\pidom_\binlrel\,B^\picod_\binlrel$ and
  $a = (\texttt{pi} ; f)$:}
By hypothesis, $a_\binlrel : \El_\binlrel~(\unkU \to
\unkU)~f~(\dcast{\unkU}{\unkU \to \unkU}a')$ and
$\monTm{\cas}~\unkU~B'~a' = \monTm{\cas}~(\unkU \to
\unkU)~B'~(\dcast{\unkU}{\unkU \to \unkU}a')$ so by induction hypothesis
\[\binlogrelTm{\cas}~(\unkU[\binlrel] \to_\binlrel
  \unkU[\binlrel])~B_\binlrel~f~(\dcast{\unkU}{\unkU \to
    \unkU}a')~a_\binlrel : \El_\binlrel\,B_\binlrel~(\bareTm{\cas}\,\unkU\,B\,(\texttt{pi};f))~(\monTm{\cas}\,\unkU\,B'\,a')\]
\noindent The others cases with $A_\binlrel = \unkU[\binlrel]$ proceed in a
similarly fashion.
All cases with $A_\binlrel = \maltese_\binlrel{}$ are immediate since
$\maltese^\disc$ and $\maltese^\mon$ are related at any related types.
Finally, the cases with $A_\binlrel{} = \natU_\binlrel$ follow the same pattern
as for $\PiU_\binlrel$.
\end{proof}

\section{Diverging terms denote as errors in $\omega$-cpos}
\label{sec:divergence-omega-cpo-model}

In this section we define a logical relation between \CCICP and \CICIRQ and
prove a fundamental lemma, obtaining~\cref{lem:diverge-err} as a corollary.
The logical relation is presented in
\cref{fig:logrel-ccicp-omega-cpo,fig:logrel-ccicp-omega-cpo-cont,fig:logrel-ccicp-omega-cpo-cont-cont}
and relates types $A$ in \CCICP with sub-$\omega$-cpos of $\unkMon$,
following the description of $\El$ in that model.
A type $A$ related to an $\omega$-cpo $A'$ by the logical relation, noted $A
\sim A'$, induces a relation between terms of type $A$ and elements of $A'$.
We use variables with $\varepsilon$ subscript to name proof witnesses of
relatedness between two objects, for instance $A_\varepsilon : A \sim A'$, and
bold variables such as $\GG, \DD$ for the corresponding double contexts
consisting of variable bindings $\divbind{a}{a'}{A_\varepsilon}$.
The projections $\tcol{\fs{\GG}}$ and $\sn{\GG}$ are then respectively contexts
in \CCICP and \CICIRQ.
The logical relation uses weak head reduction to characterize divergence.
We note $\tcol{t} \whred \tcol{u}$ when a \CCIC term $\tcol{t}$ reduces to a
weak head normal form, that is a term $\tcol{u}$ such that $\can\tcol{u}$ hold (see~\cref{fig:CCIC-canonical}), using
only weak head reduction steps.
We note $\tcol{t} \not\whred$ when weak head reduction paths from $\tcol{t}$ never reach a
weak head normal form, that is $\tcol{t}$ is unsolvable.

\begin{figure}[h]
  
  \flushleft{}
  \fbox{$\divrelCtx{\GG}$} logical relation between \CCICT contexts and
  $\omega$-cpos.

  \begin{mathpar}
    \inferrule{}{\divrelCtx{\cdot}}
    \and
    \inferrule
    { \divrelCtx{\GG}\\
      A_\varepsilon : \divrelTy{\GG}{A}{A'}}
    { \divrelCtx{\GG, \divbind{a}{a'}{A_\varepsilon}}}
    \and
    {
      \begin{array}{l@{~=~}l}
        \fs\cdot& \tcol\cdot \\
        \fs{(\GG , \divbind{a}{a'}{A_\varepsilon})} & \tcol{\fs\GG , a : A}
      \end{array}
    }
  \end{mathpar}
  \bigskip

  \flushleft{}
  \fbox{$\divrelTy{\GG}{A}{B}$} logical relation between \CCICP types $\tcol{A}$ and
  $\omega$-cpos $B \subset \unkMon$.
  \begin{mathpar}
    \inferrule{\divrelTm{\GG}{T}{U}{\univDivrel}}
    {\divrelTy{\GG}{T}{\El\,U}}
  \end{mathpar}
  \bigskip

  \fbox{$\divrelTm{\GG}{t}{u}{\boolDivrel}$} logical relation between \CCICP terms
  of type $\tcol{\bool}$ and elements of $\liftBool$.
  \begin{mathpar}
    \inferrule
    { \tcol{\fs{\GG{}}} \vdash \tcol{t} \inferty \tcol{\bool} \\
      \tcol{t} \whred \tcol{\btrue} \\
      \divrelCtx{\GG}
    }{ \divrelTm{\GG}{t}{\btrue}{\boolDivrel} }

    \and

    \inferrule
    { \tcol{\fs{\GG{}}} \vdash \tcol{t} \inferty \tcol{\bool} \\
      \tcol{t} \whred \tcol{\bfalse} \\
      \divrelCtx{\GG}
    }{ \divrelTm{\GG}{t}{\bfalse}{\boolDivrel} }

    \and

    \inferrule
    { \tcol{\fs{\GG{}}} \vdash \tcol{t} \inferty \tcol{\bool} \\
      \tcol{t} \whred \tcol{\?_\bool} \\
      \divrelCtx{\GG}
    }{ \divrelTm{\GG}{t}{\lifttop{\liftBool}}{\boolDivrel} }

    \and

    \inferrule
    { \tcol{\fs{\GG{}}} \vdash \tcol{t} \inferty \tcol{\bool} \\
      \tcol{t} \whred \tcol{\err_\bool} \vee \tcol{t} \not\whred \\
      \divrelCtx{\GG}
    }{ \divrelTm{\GG}{t}{\liftbot{\liftBool}}{\boolDivrel} }

    \and
    
    \inferrule
    { \tcol{\fs{\GG{}}} \vdash \tcol{t} \inferty \tcol{\bool} \\
      \tcol{t}\whred \tcol{t'}\\
      \divrelNe{\GG}{t'}{u}{\boolDivrel}
    }{ \divrelTm{\GG}{t}{u}{\boolDivrel} }
  \end{mathpar}
  \bigskip

  \fbox{$\divrelTm{\GG}{t}{u}{\natDivrel}$} logical relation between \CCICP terms
  of type $\tcol{\nat}$ and elements of $\liftNat$.
  \begin{mathpar}
    \inferrule
    { \tcol{\fs{\GG{}}} \vdash \tcol{t} \inferty \tcol{\nat} \\
      \tcol{t} \whred \tcol{0} \\
      \divrelCtx{\GG}
    }{ \divrelTm{\GG}{t}{0}{\natDivrel} }

    \and

    \inferrule
    { \tcol{\fs{\GG{}}} \vdash \tcol{t} \inferty \tcol{\nat} \\
      \tcol{t} \whred \tcol{\suc\,t'} \\
      \divrelTm{\GG}{t'}{u'}{\natDivrel}
    }{ \divrelTm{\GG}{t}{\suc\,u'}{\natDivrel} }

    \and

    \inferrule
    { \tcol{\fs{\GG{}}} \vdash \tcol{t} \inferty \tcol{\nat} \\
      \tcol{t} \whred \tcol{\?_\nat} \\
      \divrelCtx{\GG}
    }{ \divrelTm{\GG}{t}{\lifttop{\liftNat}}{\natDivrel} }

    \and

    \inferrule
    { \tcol{\fs{\GG{}}} \vdash \tcol{t} \inferty \tcol{\nat} \\
      \tcol{t} \whred \tcol{\err_\nat} \vee \tcol{t} \not\whred \\
      \divrelCtx{\GG}
    }{ \divrelTm{\GG}{t}{\liftbot{\liftNat}}{\natDivrel} }

    \and

    \inferrule
    { \tcol{\fs{\GG{}}} \vdash \tcol{t} \inferty \tcol{\nat} \\
      \tcol{t}\whred \tcol{t'}\\
      \divrelNe{\GG}{t'}{u}{\natDivrel}
    }{ \divrelTm{\GG}{t}{u}{\natDivrel} }
  \end{mathpar}
  \bigskip

  \fbox{$\divrelTm{\GG}{t}{u}{\unkDivrel\,i}$} logical relation between \CCICP terms
  of type $\tcol{\?_{\square_i}}$ and elements of $\unkMon[i]$.

  \begin{mathpar}
    \inferrule
    { \tcol{\fs{\GG}} \vdash \tcol{t} \inferty \tcol{\?_\square} \\
      \tcol{t} \whred \tcol{\?_{\?_{\square_i}}}\\
      \divrelCtx{\GG}}
    {\divrelTm{\GG}{t}{\lifttop{\unkMon[i]}}{\unkDivrel\,i}}
    \and
    \inferrule
    { \tcol{\fs{\GG{}}} \vdash \tcol{t} \inferty \tcol{\?_\square} \\
      \tcol{t} \whred \tcol{\err_{\?_{\square_i}}} \vee \tcol{t} \not\whred\\
      \divrelCtx{\GG}}
    {\divrelTm{\GG}{t}{\liftbot{\unkMon[i]}}{\unkDivrel\,i}}
    \and
    \inferrule
    { \tcol{\fs{\GG}} \vdash \tcol{t} \inferty \tcol{\?_\square} \quad
      \tcol{t} \whred \tcol{\ascdom{t'}{A}{\?}} \quad
      \divrelTm{\GG}{t'}{u'}{A_\varepsilon}}
    {\divrelTm{\GG}{t}{\upcast^{\unkU}_{A'}u'}{\unkDivrel\,i}}

    \inferrule
    { \tcol{\fs{\GG{}}} \vdash \tcol{t} \inferty \tcol{\?_\square} \quad
      \tcol{t}\whred \tcol{t'}\quad
      \divrelNe{\GG}{t'}{u}{\unkDivrel\,i}
    }{ \divrelTm{\GG}{t}{u}{\unkDivrel\,i} }
  \end{mathpar}
  \caption{Logical relation between \CCICP and $\omega$-cpos}
  \label{fig:logrel-ccicp-omega-cpo}
\end{figure}

\begin{figure}[h]
  \flushleft
  \fbox{$\divrelTm{\GG}{t}{u}{\errDivrel\,\tcol{A}}$} logical relation between \CCICP terms
  of type $\tcol{A}$ and elements of $\unit$.

  \begin{mathpar}
    \inferrule
    {\tcol{\fs{\GG{}}} \vdash \tcol{t} \inferty \tcol{A}}
    {\divrelTm{\GG}{t}{\unitK}{\errDivrel\,\tcol{A}}}
  \end{mathpar}
  \bigskip

  \fbox{$\divrelTm{\GG}{t}{u}{\univDivrel\,i}$} logical relation between \CCICP terms
  of type $\tcol{\square_i}$ and elements of $\U_i$.

  \begin{mathpar}
    \inferrule[$\natDivrel$]
    { \tcol{\fs{\GG}}\vdash \tcol{T} \inferty \tcol{\square_i} \\
      \tcol{T} \whred \tcol{\nat} \\
      \divrelCtx{\GG} }
    { \divrelTm{\GG}{T}{\natU}{\univDivrel\,i}}
    \and
    \inferrule[$\unkDivrel$]{
      \tcol{\fs{\GG}}\vdash \tcol{T} \inferty \tcol{\square_i} \\
      \tcol{T} \whred \tcol{\?_{\square_i}} \\
      \divrelCtx{\GG} }
    { \divrelTm{\GG}{T}{\unkU[i]}{\univDivrel\,i} }
    \and

    \inferrule[$\errDivrel$]{
      \tcol{\fs{\GG}}\vdash \tcol{T} \inferty \tcol{\square_i} \\
      \tcol{T} \whred \tcol{\err_{\square_i}} \vee \tcol{T} \not \whred \\
      \divrelCtx{\GG}}
    {\divrelTm{\GG}{T}{\errU}{\univDivrel\,i} }

    \and

    \inferrule[$\univDivrel$]{
      \tcol{\fs{\GG}}\vdash \tcol{T} \inferty \tcol{\square_i} \\
      \tcol{T} \whred \tcol{\square_j} \\
      \divrelCtx{\GG} }
    {\divrelTm{\GG}{T}{\uU_j}{\univDivrel\,i}}

    \and

    \inferrule[$\piDivrel$]
    {
      \tcol{\fs{\GG}}\vdash \tcol{T} \inferty \tcol{\square_i} \\
      \tcol{T} \whred \tcol{\P(a : A)\,B}\\
      A_\varepsilon : \forall \rho : \DD \subseteq \GG, \divrelTm{\DD}{A[\fs\rho]}{A'[\sn\rho]}{\univDivrel\,i} \\
      \forall \rho : \DD \subseteq \GG,
      \divrelTm{\DD}{a}{a'}{A_\varepsilon\,\rho} \implies \divrelTm{\GG}{B[a]}{B'[a']}{\univDivrel\,i}}
    {\divrelTm{\GG}{T}{\PiU\,A'\,(\lambda(a' : A').B')}{\univDivrel\,i}}
    
    \and

    \inferrule[$\neDivrel$]
    { \tcol{\fs{\GG{}}} \vdash \tcol{t} \inferty \tcol{\square_i} \\
      \tcol{t}\whred \tcol{t'}\\
      \divrelNe{\GG}{t'}{u}{\univDivrel\,i}
    }{ \divrelTm{\GG}{t}{u}{\univDivrel\,i} }

  \end{mathpar}
  \bigskip

  \fbox{$\divrelTm{\GG}{t}{u}{\piDivrel\,A_\varepsilon\,B_\varepsilon}$} logical relation between \CCICP terms
  of  $\Pi$ type.
  \begin{mathpar}
    \inferrule
    {\forall (\rho : \DD \subseteq \GG)~(a_\varepsilon : \divrelTm{\DD}{a}{a'}{A_\varepsilon\,\rho}),
     ~\divrelTm{\DD}{t[\fs{\rho}]\,a}{u[\sn{\rho}]\,a'}{B_\varepsilon\,\rho~a_\varepsilon}}
    {\divrelTm{\GG}{t}{u}{\piDivrel\,A_\varepsilon\,B_\varepsilon}} 
  \end{mathpar}
  
  where
  \begin{itemize}
  \item $\tcol{\fs{\GG}} \vdash \tcol{t} \inferty \tcol{\Pi\,(a : A)\,B}$,
  \item $\sn{\GG} \irty u : \Picont\,(a' : A')\,B'$,
  \item $A_\varepsilon : \forall \rho : \DD \subseteq \GG, \divrelTm{\DD}{A[\fs{\rho}]}{A'[\sn{\rho}]}{\univDivrel\,i}$
  \item $B_\varepsilon : \forall \rho : \DD \subseteq \GG, \divrelTm{\DD}{a}{a'}{A_\varepsilon\,\rho} \implies \divrelTm{\DD}{B[a]}{B'[a']}{\univDivrel\,i}$
  \end{itemize}
  \caption{Logical relation between \CCICP and $\omega$-cpos}
  \label{fig:logrel-ccicp-omega-cpo-cont}
\end{figure}

\begin{figure}[h]
  \flushleft

  \fbox{$\divrelTm{\GG}{t}{u}{\neDivrel\,A_\varepsilon}$} logical relation on \CCICP terms
  of a neutral type $\tcol{A}$ ($A_\varepsilon : \divrelTy{\GG}{A}{A'}$).

  \begin{mathpar}
    \inferrule
    {\divrelNe{\GG}{t}{t'}{\neDivrel\,A_\varepsilon}}
    {\divrelTm{\GG}{t}{\unitK}{\neDivrel\,A_\varepsilon}}
  \end{mathpar}
  \bigskip

  \fbox{$\divrelNe{\GG}{t}{u}{A_\varepsilon}$} logical relation on neutral
  \CCICP terms (excerpt).

  \begin{mathpar}
    \inferrule{
      \divrelCtx{\GG} \\
      \divbind{a}{a'}{A_\varepsilon} \in \GG }
    {\divrelNe{\GG}{a}{a'}{A_\varepsilon}}
    
    \and
    
    \inferrule{
      \divrelNe{\GG}{f}{f'}{\piDivrel\,A_\varepsilon\,B_\varepsilon}\\
      t_\varepsilon : \divrelTm{\GG}{t}{t'}{A_\varepsilon}}
    {\divrelNe{\GG}{f~t}{g~u}{B_\varepsilon~t_\varepsilon}}

    \and

    \inferrule{
      b_\varepsilon : \divrelNe{\GG}{b}{b'}{\boolDivrel}\\
      P_\varepsilon : \divrelTy{\GG,\divbind{z}{z'}{\boolDivrel}}{P}{P'} \\
      \divrelTm{\GG}{t_\btrue}{t'_\btrue}{P_\varepsilon[\tcol{\btrue} \sim \btrue]}\\
      \divrelTm{\GG}{t_\bfalse}{t'_\bfalse}{P_\varepsilon[\tcol{\bfalse} \sim \bfalse]}
    }{\divrelNe{\GG}{\match{\bool}{b}{z.P}{(t_{\btrue}, t_{\bfalse})}}{\match{\bool}{b'}{z'.P'}{(t'_{\btrue}, t'_{\bfalse})}}{P_\varepsilon[b_\varepsilon]}}

    \and
    
    \inferrule{
      A_\varepsilon : \divrelNe{\GG}{A}{A'}{\univDivrel}\\
    }{\divrelNe{\GG}{\?_A}{\?_{A'}}{A_\varepsilon}}

    \and

    \inferrule{
      A_\varepsilon : \divrelNe{\GG}{A}{A'}{\univDivrel}\\
    }{\divrelNe{\GG}{\err_A}{\err_{A'}}{A_\varepsilon}}

    \and

    \inferrule
    { 
      A_\varepsilon : \divrelNe{\GG}{A}{A'}{\univDivrel}\\
      B_\varepsilon : \divrelTm{\GG}{B}{B'}{\univDivrel} \\
      \divrelTm{\GG}{t}{t'}{A_\varepsilon}
    }
    {\divrelNe{\GG}{\ascdom{t}{A}{B}}{\downcast^{\unkU}_{B'}\upcast^{\unkU}_{A'}t'}{B_\varepsilon}}
  \end{mathpar}

  \caption{Logical relation between \CCICP and $\omega$-cpos}
  \label{fig:logrel-ccicp-omega-cpo-cont-cont}
\end{figure}

We first state a lemma making explicit how divergence is accounted for by the
logical relation.

\begin{lemma}[Diverging terms relate to errors]
  \label{lem:div-to-error}\text{}
  \begin{enumerate}
  \item If $\tcol{\fs\GG} \vdash \tcol{t} : \tcol{T}$, $T_\varepsilon :
    \divrelTy{\GG}{T}{T'}$ and $\tcol{t}
    \not\whred$ then $\divrelTm{\GG}{t}{\err_{T'}}{T_\varepsilon}$.
  \item Conversely, if $\tcol{\Gamma} \vdash \tcol{t} : \tcol{T}$, $T_\varepsilon :
    \divrelTy{\GG}{T}{T'}$, $\divrelTm{\GG}{t}{t'}{T_\varepsilon}$ and $\tcol{t}
    \not\whred$ then $t' = \err_{T'}$.
  \end{enumerate}
\end{lemma}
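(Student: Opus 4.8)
I would prove both halves of \cref{lem:div-to-error} by induction on the derivation of the type relation $T_\varepsilon : \divrelTy{\GG}{T}{T'}$, which by definition unfolds to $\divrelTm{\GG}{T}{U}{\univDivrel}$ with $T' = \El\,U$; equivalently, the induction is on the relational universe code $U$. The point is that the logical relation at each type is defined by clauses that inspect the \emph{weak‑head} behaviour of the \CCICP{} term, and exactly one clause in each ground case concerns a term that weak‑head diverges. So once we know which code $U$ we are dealing with, the hypothesis $\tcol{t}\not\whred$ pins down (for part~(1)) which clause to apply, and (for part~(2)) which clause could possibly have been used.

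\textbf{Part (1).} For the ground codes ($\natU$, $\boolU$, $\unkU[i]$, $\uU_j$) the defining rules of $\natDivrel$, $\boolDivrel$, $\unkDivrel\,i$, $\univDivrel\,i$ include precisely one clause whose left premise is \mbox{$\tcol{t}\whred\tcol{\err_{T}}\vee\tcol{t}\not\whred$}, concluding that $\tcol{t}$ relates to the bottom element $\liftbot{(-)}$; since $\liftbot{(-)}$ is by definition $\err_{T'}$ for each of these $\omega$-cpos, the divergence hypothesis lets us invoke that clause directly. For the error code ($T' = \unit$) and for neutral codes, the relational decoding of the code is the terminal $\omega$-cpo $\unit$, whose unique element $\unitK$ is also its bottom, so the statement holds trivially via the $\errDivrel$/$\neDivrel$ clause. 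The only substantive case is the $\Pi$-code $\PiU\,A'\,B'$: here $\err_{T'}$ is the \emph{pointwise} error function $\lambda a'.\,\err_{\El(B'\,a')}$ (the $\omega$-cpo analogue of \cref{fig:exceptions}), so unfolding $\piDivrel$ it suffices to show, for every weakening $\rho$ and related argument $a_\varepsilon$, that $\tcol{t}[\fs\rho]\,\tcol{a}$ relates to $\err$ at the codomain relation $B_\varepsilon\,\rho\,a_\varepsilon$. This follows from the induction hypothesis once we observe the auxiliary fact that $\tcol{t}\not\whred$ implies $\tcol{t}[\fs\rho]\,\tcol{a}\not\whred$: weak‑head reduction commutes with substitution, and a term with no weak‑head normal form can never have a variable (or a $\lambda$, constructor, cast, $\?$, $\err$) in head position reached from it, so neither substitution nor application can create a weak‑head normal form.

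\textbf{Part (2).} The converse goes by the same case split, now using inversion on the given derivation of $\divrelTm{\GG}{t}{t'}{T_\varepsilon}$. In every ground case, all clauses other than the one producing $\liftbot{(-)}$ carry a premise $\tcol{t}\whred\tcol{v}$ for some specific canonical form $\tcol v$ (and so does the neutral‑term clause, via $\tcol{t}\whred\tcol{t'}$ then $\divrelNe$), which contradicts $\tcol{t}\not\whred$; hence the $\liftbot$ clause was used and $t' = \liftbot{T'} = \err_{T'}$. The error/neutral‑code cases are again trivial ($t' = \unitK = \err_{T'}$). For the $\Pi$-code, unfolding $\piDivrel$ and applying the induction hypothesis (part~(2)) at the codomain shows $t'$ agrees with $\err_{T'}$ on every diagonally‑related argument; since the inhabitants of $\El(\PiU\,A'\,B')$ are \emph{monotone/continuous} functions and $\CICIRQ$ has function extensionality, one concludes $t' = \err_{T'}$. \textbf{The main obstacle} is exactly this last step: passing from ``equal to $\err$ on all related inputs'' to literal equality of two elements of the function $\omega$-cpo requires both funext and a small argument that every semantic argument one needs to test is in the image of the logical relation (e.g.\ it is reached from $\err_{A'}$/$\?_{A'}$ and injected values). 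For the only use of the lemma — \cref{lem:diverge-err}, at type $\tcol{\bool}$, feeding the graduality proof \cref{thm:GCICP-graduality} — this obstacle disappears, since only the $\boolDivrel$ base case is invoked and inversion there is immediate.
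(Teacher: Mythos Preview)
Your approach matches the paper's: both parts by induction on $T_\varepsilon$, with the ground cases immediate from the divergence clause (part~1) or by inversion (part~2), and the $\Pi$ case by pushing divergence through application and invoking the induction hypothesis.

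The one place where the paper is more concrete than your proposal is exactly the point you flag as the ``main obstacle'' in part~(2) for $\piDivrel$. You worry about whether every semantic argument is reached by the logical relation. The paper sidesteps this by a generic-element argument: take $\rho$ to be the weakening $\GG,\,\divbind{a}{a'}{A_\varepsilon}\subset\GG$, so the fresh variable pair $(a,a')$ is itself related by the variable clause of $\sim_{\mathrm{ne}}$. The induction hypothesis then gives $t'[\sn\rho]\,a' = \err_{B'[\sn\rho,a']}$ for the \emph{generic} variable $a'$, and function extensionality (available in \CICIRQ) concludes $t' = \lambda a'.\,\err_{B'} = \err_{\PiU\,A'\,B'}$. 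So you do not need to argue that ``enough'' concrete arguments are in the image; a single generic one suffices. Your remark that the obstacle vanishes at $\tcol{\bool}$ for the actual application is correct but unnecessary once you have this.
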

\begin{proof}
  In the two parts of the lemma, we proceed by induction on $T_\varepsilon$.
  For the first part, the cases $T_\varepsilon = \univDivrel, \natDivrel, \boolDivrel$ and
  $\unkDivrel$ are immediate because in each case a rule apply for diverging terms.
  If $T_\varepsilon = \errDivrel$, then
  $\divrelTm{\GG}{t}{\unitK}{T_\varepsilon}$ which is enough because $\unitK =
  \err_{\unit} = \err_{\El\,\errU} = \err_{\El\,T'}$.
  Finally, if $T_\varepsilon = \piDivrel\,A_\varepsilon\,B_\varepsilon$, then
  for any $\rho : \DD \subset \GG$ and $a_\varepsilon :
  \divrelTm{\DD}{a}{a'}{A_\varepsilon\,\rho}$
  we have that $\tcol{\fs\GG} \vdash \tcol{t[\fs{\rho}]~a} : \tcol{B[\fs{\rho},a]}$,
  $B_\varepsilon\,\rho\,a_\varepsilon :
  \divrelTy{\DD}{B[\fs{\rho},a]}{B'[\sn{\rho}, a']}$ and
  $\tcol{t~a} \not\whred$, so by induction hypothesis
  $\divrelTm{\DD}{t[\fs{\rho}]~a}{\err_{B'[\sn{\rho},a']}}{B_\varepsilon\,\rho\,a_\varepsilon}$,
  hence $\divrelTm{\GG}{t}{\err_{\PiU\,A'\,B'}}{T_\varepsilon}$.
  We now turn to the second part of the lemma.
  When $T_\varepsilon = \univDivrel, \natDivrel, \boolDivrel, \unkDivrel$ and
  $\errDivrel$, there is exactly one rule that apply to relate to a term without
  weak head normal form $\tcol{t}$ so that necessarily $t' = \err_{T'}$.
  When $T_\varepsilon = \piDivrel\,A_\varepsilon\,B_\varepsilon$,  any $\rho
  : \DD \subset \GG$ and $a_\varepsilon:
  \divrelTm{\DD}{a}{a'}{A_\varepsilon\,\rho}$ we have that 
  $B_\varepsilon\,\rho\,a_\varepsilon : \divrelTy{\DD}{B[\fs{\rho},a]}{B'[\sn{\rho}, a']}$,
  $\divrelTm{\DD}{t[\fs{\rho}]~a}{t'[\sn{\rho}]~a'}{B_\varepsilon\,\rho\,a_\varepsilon}$ and 
  $\tcol{t~a} \not\whred$, so by induction hypothesis $t'\,a' =
  \err_{B'[\sn{\rho}, a']}$.
  Taking $\rho$ to be the weakening $\GG, \divbind{a}{a'}{A_\varepsilon} \subset
  \GG$, we have by function extensionality that $t' = \lambda(a':A'). \err_{B'}
  = \err_{\PiU\,A'\,B'}$.
\end{proof}

\begin{lemma}[Fundamental lemma]\text{}
  \begin{itemize}
  \item If $\tcol{\Gamma} \vdash$ then there exists $\GG$ such that
    $\divrelCtx{\GG}$, $\tcol{\fs\GG} = \tcol{\Gamma}$ and $\sn{\GG} =
    \monTy{\tcol{\Gamma}}$ ;
  \item If $\tcol{\Gamma} \vdash \tcol{T} \inferty \tcol{\square_i}$ there
    exists a derivation $T_\varepsilon : \divrelTy{\GG}{T}{\monTy{\tcol{T}}}$ 
  \item  If $\tcol{\Gamma} \vdash \tcol{t} \inferty \tcol{T}$ then there exists
    a derivation $ t_\varepsilon : \divrelTm{\GG}{t}{\monTm{\tcol{t}}}{T_\varepsilon}$
  \item If $\tcol{\Gamma} \vdash \tcol{T} \inferty \tcol{\square_i}$,
    $\tcol{\Gamma} \vdash \tcol{T'} \inferty \tcol{\square_i}$ and  $\tcol{T} \conv \tcol{T'}$ then $\monTy{\tcol{T}} =
    \monTy{\tcol{T}'}$ and $T_\varepsilon = T'_\varepsilon : \divrelTy{\GG}{T}{\monTy{\tcol{T}}}$.
  \end{itemize}
\end{lemma}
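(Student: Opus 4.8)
The plan is to prove all four clauses simultaneously by mutual induction on the bidirectional typing derivations of \CCICP (inference, checking, and the three constrained‑inference judgments), exactly as one proves a fundamental lemma for any logical relation. The first clause (contexts) and the second clause (types) are essentially bookkeeping consequences of the third: a well‑formed context $\tcol{\Gamma}$ is processed entry by entry, yielding a $\GG$ with $\tcol{\fs\GG} = \tcol{\Gamma}$ and $\sn{\GG} = \monTy{\Gamma}$, and $\divrelTy{\GG}{T}{\monTy{T}}$ is by definition $\divrelTm{\GG}{T}{\monTm{T}}{\univDivrel}$, which clause~3 provides once one observes that $\tcol{T} \inferty \tcol{\square_i}$ forces the type‑relation witness into the universe case. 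The genuine content is in clauses~3 and~4, and they are intertwined with the others: the relation witness of a term mentions the witness of its type, and $\tcol{\cast{A}{B}{t}}$, $\tcol{\?_A}$, $\tcol{\err_A}$ all carry type subterms whose witnesses are needed, so the induction cannot be stratified.

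Before the main induction I would establish three standard auxiliary facts about the relation of \cref{fig:logrel-ccicp-omega-cpo,fig:logrel-ccicp-omega-cpo-cont,fig:logrel-ccicp-omega-cpo-cont-cont}. First, a head‑expansion (anti‑reduction) lemma: if $\tcol{t} \whred \tcol{t'}$ and $\tcol{t'}$ is related to $u$ at some $T_\varepsilon$, then so is $\tcol{t}$; this is immediate from the shape of the defining rules together with subject reduction (\cref{thm:ccic-psafe}), since each rule quantifies over the weak‑head reduct and weak‑head reduction is deterministic. Second, weakening and substitution for the relation: this is precisely why the $\Pi$‑, $\lambda$‑ and universe rules are phrased in Kripke style, quantifying over all extensions $\rho : \DD \subseteq \GG$; monotonicity in $\rho$ makes weakening free and lets one substitute a related binding $\divbind{a}{a'}{A_\varepsilon}$ into a body while preserving relatedness. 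Third, the compatibility lemmas for each term and type former ($\nat$, $\bool$, $\square$, $\Pi$, application, $\lambda$, $\cas$, $\tcol{\?}$, $\tcol{\err}$, the inductive eliminators), each stating that the constructor sends related inputs to related outputs; these use the semantic interpretations fixed by $\monTm{-}$ (\cref{fig:monotone-translation}) and the $\omega$‑cpo structure of the unknown type, and already absorb \cref{lem:div-to-error} for subterms without weak‑head normal form.

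With these in hand the main induction is one case per typing rule. Variables come from the context relation; $\square_i$, $\nat$, $\bool$ and their constructors are handled by the corresponding compatibility lemma; $\Pi$ and $\lambda$ use the Kripke substitution lemma; application uses the compatibility lemma for $\cdot\,\cdot$; checking and constrained inference reduce to inference plus head‑expansion, since in the model conversion is validated on weak‑head normal forms. The delicate cases are $\tcol{\cast{A}{B}{t}}$, $\tcol{\?_A}$ and $\tcol{\err_A}$: one must check that each cast reduction rule of \cref{fig:CCIC-reduction} is matched, up to head‑expansion, by the behaviour of $\downcast{} \circ \upcast{}$ through the $\omega$‑cpo unknown type — in effect re‑running, in the Scott model, the observation made for the discrete model that each defining equation of the semantic cast corresponds to a reduction rule. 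This is the main obstacle: it forces a rule‑by‑rule chase through the germ decomposition and the fixpoint equation $\unkMon[i+1] \cong \liftNat + (\unkMon[i+1]\tocont\unkMon[i+1]) + \U_0 + \ldots + \U_i$, with the side‑condition rules (\nameref{redrule:size-err}, \nameref{redrule:prod-germ}) and the interaction with weak‑head reduction on the type subterms being the fiddliest points.

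Finally, clause~4 (coherence under conversion) follows from three ingredients already available: confluence of \CCICP (\cref{lem:confluence}) gives $\tcol{T}$ and $\tcol{T'}$ a common reduct; the fact that $\monTm{-}$ extends to a model of \CCICP in \CICIRQ (\cref{thm:monotone-model}) gives $\monTy{T} = \monTy{T'}$; and proof‑irrelevance together with \UIP{} for the universe and its precision structure (\cref{thm:universe-hierarchy}) forces the two derivations $T_\varepsilon$ and $T'_\varepsilon$ to coincide. \cref{lem:diverge-err} is then the instance of clause~3 at $\tcol{T} = \tcol{\bool}$ combined with \cref{lem:div-to-error}.
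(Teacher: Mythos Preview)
Your proposal is correct and follows essentially the same approach as the paper: mutual induction on typing, Kripke quantification over context extensions to handle binders, closure under anti-reduction, and for \nameref{infrule:ccic-cast} a case analysis on the weak-head behaviour of $\tcol{\ascdom{t}{A}{B}}$ combined with \cref{lem:div-to-error}. The paper organises things slightly differently---it strengthens the induction hypothesis itself to quantify over arbitrary reducible substitutions rather than proving substitution as a separate lemma, handles the constructor cases inline rather than via pre-established compatibility lemmas, and switches to an undirected variant of the typing rules so that well-formedness premises appear as explicit subderivations---but the substance is the same.

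One small correction on clause~4: \cref{thm:universe-hierarchy} gives proof-irrelevance of the precision order $\sqsubseteq$ on universe codes, not of the logical-relation derivations $T_\varepsilon$, so it does not directly apply. The actual argument (which the paper also leaves terse) is that convertible types have weak-head normal forms with the same head constructor and convertible subcomponents, so the same rule of $\univDivrel$ fires and, inductively, the induced term relations coincide; the equality $\monTy{T} = \monTy{T'}$ follows, as you say, from $\monTm{-}$ validating conversion.
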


\begin{proof}
  Since the translation $\monTy{-}$ underly a model of \CCICT, it sends
  convertible types $\tcol{T}, \tcol{T'}$ in the source to provably equal types
  in the target  $\monTy{\tcol{T}} = \monTy{\tcol{T}'}$, proving the last claim.

  The three other claims are proved by mutual induction on the input derivation,
  assuming an undirected variant of the rules
  in~\cref{fig:ccic-typing,fig:bidir}, which is possible
  by~\cite{LennonBertrand2021}.
  Concretely, this modification means that we assume additional well-formedness
  premises in the derivations, \eg for contexts and types, and do not show that
  input well-formedness is preserved.
  Moreover the induction hypothesis needs to be strenghened to quantify over an
  arbitrary context $\DD$ with a substition $\sigma : \DD \to \GG$ whose
  components are related according to the logical relation.

  For contexts, if the derivation ends with a rule \nameref{infrule:cic-axiom}, it is enough take $\GG = \cdot$.
  If it ends with \nameref{infrule:cic-concat}, then by induction hypothesis
  there exists $\GG$ and $A_\varepsilon$ such that $\tcol{\fs{\GG}} =
  \tcol{\Gamma}$, $\sn{\GG} = \monTy{\tcol{\Gamma}}$, $\divrelCtx{\GG}$ and
  $A_\varepsilon : \divrelTy{\GG}{A}{\monTy{\tcol{A}}}$, so taking $\GG,
  \divbind{a}{a'}{A_\varepsilon}$ suffices.
  For \nameref{infrule:cic-univ} by induction hypothesis $\divrelCtx{\GG}$ with
  $\tcol{\fs{\GG}} = \tcol\Gamma$. Moreover, $\tcol\Gamma \vdash
  \tcol{\square_i} \inferty \tcol{\square_{i+1}}$ and $\monTm{\tcol{\square_i}}
  = \uU_i$ so
  $\divrelTm{\GG}{\square_i}{\monTm{\tcol{\square_i}}}{\univDivrel\,(i+1)}$ and
  $\divrelTy{\GG}{\square_i}{\monTm{\tcol{\square_i}}}$.
  The rules \nameref{infrule:cic-ind} (for $\nat, \bool$) and
  \nameref{infrule:cic-cons} (for $\natzero, \suc, \btrue, \bfalse$),
   introducing types and terms that are already
  in weak head normal form follow the same pattern as \nameref{infrule:cic-univ}.
  In the case of the rules \nameref{infrule:ccic-prod} and
  \nameref{infrule:cic-abs}, the context needs to be extended and we need to
  take advantage of the full induction hypothesis strenghened under arbitrary
  reducible substitutions.
  Dually, the rule \nameref{infrule:cic-app} is immediate by induction
  hypothesis and the definition the logical relation at function types.
  A bit more work is needed for the rule \nameref{infrule:cic-fix} for
  $\ind_\bool$ and $\ind_\nat$, doing a case analysis on the proof of
  relatedness of their main argument.
  If the main argument diverges, then the applied eliminator diverges too so it
  is related to $\err$ which is its translation because eliminators send errors
  at an inductive type to errors at the adequate type in $\omega$-cpos.
  Otherwise the main argument weak head reduces to a normal form and we can
  conclude by induction hypothesis and closure by anti-reduction.
  For the variable case, rule \nameref{infrule:cic-var}, we can show by
  induction on the proof of relatedness of its type that it is related to its
  $\eta$-expansion at $\Pi$ types an to itself at any other type using the rules
  for neutrals.
  We conclude by extensionality of the $\omega$-cpo model.
  Conversion rules \nameref{infrule:cic-check},
  \nameref{infrule:cic-prod-inf}, \nameref{infrule:cic-ind-inf} and
  \nameref{infrule:cic-univ-inf} satisfy the fundamental lemma because
  convertible types induce the same relation on their term.
  For \nameref{infrule:ccic-err}, we have by induction hypothesis that
  $T_\varepsilon : \divrelTm{\GG}{T}{\monTm{\tcol{T}}}{\univDivrel}$.
  By case analysis, $T_\varepsilon$ is necessarily one of $\errDivrel,\piDivrel, \unkDivrel, \boolDivrel,
  \natDivrel$ or $ \univDivrel$.
  If $T_\varepsilon = \errDivrel\,\tcol{T}$ then
  $\divrelTm{\GG}{\err_T}{\unitK}{\errDivrel\,\tcol{T}}$ since $\tcol{\fs\GG} \vdash
  \tcol{\err_T} \inferty{T}$, and we can conclude using extensionality of $\unit
  = \monTy{\tcol{T}}$, that is $\monTm{\tcol{\err_T}} = \unitK$.
  If $T_\varepsilon = \piDivrel\,A_\varepsilon\,B_\varepsilon$, then for any
  $\rho : \DD \subset \GG$ and $a_\varepsilon :
  \divrelTm{\DD}{a}{a'}{A_\varepsilon\,\rho}$ we have that
  $\tcol{T[\fs\rho]} \whred \tcol{\Pi(a:A)B}$ and $\tcol{\err_{T}~a} \whred
  \tcol{\err_{B[a]}}$, so we conclude this case by induction hypothesis
  $\divrelTm{\DD}{\err_{B[a]}}{\monTm{\tcol{\err_{B[a]}}}}{B_\varepsilon\,a_\varepsilon}$,
  closure by anti-reduction and the fact that $\monTm{\tcol{\err_{B[a]}}} =
  \err_{\monTm{B[a]}} = \err_{\monTm{B}[a']}$.
  
  In all the other cases $\tcol{T}$ weak head reduces to a type in weak head
  normal form $\tcol{\?_\square}, \tcol{\square_i}, \tcol{\nat}$ or
  $\tcol{\bool}$, and a corresponding rule is present in the logical relation to
  conclude directly.
  A similar proof apply for the rule \nameref{infrule:ccic-unk}.
  Finally, for the rule \nameref{infrule:ccic-cast} with conclusion
  $\tcol{\Gamma} \vdash \tcol{\ascdom{t}{A}{B}} \inferty \tcol{B}$, we have by
  induction hypothesis we have that $\divrelCtx{\GG}, {A_\varepsilon :
    \divrelTy{\GG}{A}{\monTy{\tcol{A}}}}, B_\varepsilon :
  \divrelTy{\GG}{B}{\monTy{\tcol{B}}}$ and
  $\divrelTm{\GG}{t}{\monTm{\tcol{t}}}{A_\varepsilon}$.
  By analysing all possible weak head reduction paths from
  $\tcol{\ascdom{t}{A}{B}}$, either:
  \begin{description}
  \item[(a)] $\tcol{\ascdom{t}{A}{B}} \whred \tcol{u}$
    such that $\divrelTm{\GG}{u}{\monTm{\tcol{u}}}{B_\varepsilon}$ using
    inversions on $A_\varepsilon, B_\varepsilon$ and $t_\varepsilon$, or
  \item[(b)] one of
    $\tcol{A}$, $\tcol{B}$ or $\tcol{t}$ never reduces to a weak head normal form.
  \end{description}
  In case \textbf{(a)}, we conclude that
  $\divrelTm{\GG}{\ascdom{t}{A}{B}}{\monTm{\tcol{\ascdom{t}{A}{B}}}}{B_\varepsilon}$
  by closure under anti-reduction and using the fact that
  $\monTm{\tcol{\ascdom{t}{A}{B}}} = \monTm{\tcol{u}}$ (because $\monTm{-}$ maps
  convertible terms to equal terms in the model).
  In case \textbf{(b)}, we have that
  $\divrelTm{\GG}{\ascdom{t}{A}{B}}{\err_{\monTm{\tcol{B}}}}{B_\varepsilon}$ by
  the first part of~\cref{lem:div-to-error} and the second part of that lemma
  ensures that one of $\monTm{\tcol{A}}$, $\monTm{\tcol{B}}$ or
  $\monTm{\tcol{t}}$ is an error at the adequate type so that
  $\monTm{\tcol{\ascdom{t}{A}{B}}} =
  \downcast^{\unkU}_{\monTm{\tcol{B}}}\upcast^{\unkU}_{\monTm{\tcol{A}}}\monTm{\tcol{t}}
  = \err_{\monTm{\tcol{B}}}$.
\end{proof}

\begin{corollary}
 If $\tcol{\Gamma} \vdash \tcol{t} \inferty \tcol{T}$ and $\tcol{t} \not\whred$
 then $\monTm{\tcol{t}} = \err_{\monTm{\tcol{T}}}$.
\end{corollary}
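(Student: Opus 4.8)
The plan is to read this off directly from the fundamental lemma together with \cref{lem:div-to-error}. First I would apply the fundamental lemma to the derivation $\tcol{\Gamma} \vdash \tcol{t} \inferty \tcol{T}$. This yields a reducibility context $\GG$ with $\tcol{\fs\GG} = \tcol{\Gamma}$, $\sn{\GG} = \monTy{\tcol{\Gamma}}$ and $\divrelCtx{\GG}$, a derivation $T_\varepsilon : \divrelTy{\GG}{T}{\monTy{\tcol{T}}}$ witnessing that the type $\tcol{T}$ is related to its monotone translation, and a derivation $t_\varepsilon : \divrelTm{\GG}{t}{\monTm{\tcol{t}}}{T_\varepsilon}$ witnessing that $\tcol{t}$ is related to its translation $\monTm{\tcol{t}}$ at that type.

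Next I would invoke the second part of \cref{lem:div-to-error}, instantiated with $T' := \monTy{\tcol{T}}$ and $t' := \monTm{\tcol{t}}$. Its hypotheses are met: $\tcol{\fs\GG} \vdash \tcol{t} : \tcol{T}$ holds since inference entails (undirected) typing, $T_\varepsilon$ provides the required type-relatedness, $t_\varepsilon$ the term-relatedness, and $\tcol{t} \not\whred$ is exactly the assumption of the corollary. The lemma then concludes $\monTm{\tcol{t}} = \err_{\monTy{\tcol{T}}}$. Since $\monTy{\tcol{T}} = \El\,\monTm{\tcol{T}}$ and the indexed error term on a code $A$ is defined so that $\err_A : \El\,A$, this is precisely $\monTm{\tcol{t}} = \err_{\monTm{\tcol{T}}}$, which is the claim.

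There is essentially no obstacle at this stage: all the real work has already been front-loaded into the fundamental lemma — which threads the logical relation through every typing rule, in particular treating \nameref{infrule:ccic-cast} by the case split on whether one of $\tcol{A}$, $\tcol{B}$, $\tcol{t}$ fails to reach a weak head normal form — and into \cref{lem:div-to-error}, whose second part shows, by induction on the type-relatedness derivation $T_\varepsilon$, that the only value an unsolvable term can be related to is the bottom element $\err$ at the corresponding $\omega$-cpo. The sole point requiring care is the purely notational bookkeeping between $\monTy{-}$, $\monTm{-}$ and the corresponding convention for the indexed error, as noted above.
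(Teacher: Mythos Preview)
Your proposal is correct and follows exactly the same approach as the paper: apply the fundamental lemma to obtain $T_\varepsilon$ and $t_\varepsilon$, then invoke the second part of \cref{lem:div-to-error} to conclude. Your additional remark on the notational bookkeeping between $\monTy{\tcol{T}}$ and $\monTm{\tcol{T}}$ in the index of $\err$ is accurate and slightly more explicit than the paper's two-line proof.
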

\begin{proof}
  By the fundamental lemma, $\divrelTm{\GG}{t}{\monTm{\tcol{t}}}{T_\varepsilon}$
  with $T_\varepsilon : \divrelTm{\GG}{T}{\monTm{\tcol{T}}}{\univDivrel}$
  and by the second part of \cref{lem:div-to-error}, $\monTm{\tcol{t}}
  = \err_{\monTm{\tcol{T}}}$.
\end{proof}

\section{A Direct Presentation of Vectors}
\label{sec:case-vectors-appendix}

\begin{figure}
  \small
\begin{mathpar}
  \inferrule{ }{\can{\Gnil}{A}} \and
  \inferrule{ }{\can{(\Gcons{A}{a}{n}{v})}} \and
  \inferrule{  }{\can{\GnilU{A}}}
  \and
  \inferrule{ }{\can{(\GconsU{A}{a}{n}{v})}}
  \\
  \inferrule{\neu{v} \quad \vee \quad \neu{n} \quad \vee \quad \neu{m} }{\neu{(\ascdom{v}{\Gvect{A}{n}}{\Gvect{B}{m}}})}
  \and
  \inferrule{\neu{v}}{\neu{(\GvectRec}{P}{P_{nil}}{P_{cons}}{v})}
  \\
  \inferrule{\Gamma \vdash A \checkty \square{}_{i}}{\Gamma \vdash \GnilULevel{A}{i} \inferty
    \Gvect{A}{\?_\nat}}[nilu]
  \and
  \inferrule{\Gamma \vdash A \checkty \square{}_{i} \\ \Gamma \vdash
      a \checkty A \\ \Gamma \vdash
      n \checkty \nat \\ \Gamma \vdash v \checkty \Gvect{A}{n} 
    }{\Gamma \vdash \GconsULevel{A}{a}{n}{v}{i} \inferty
    \Gvect{A}{\?_\nat}}[consu]
  \\
  \\
  \redrule{
    \GvectRec{P}{P_{nil}}{P_{cons}}{\Gnil{A}}
  }
  {
    P_{nil}\myflushright
  }[v-rect-nil]\label{red:v-rect-nil}
  \and
  \redrule{
    \GvectRec{P}{P_{nil}}{P_{cons}}{(\Gcons{A}{a}{n}{v})}
  }
  {
    P_{cons}~a~n~(\GvectRec{P}{P_{nil}}{P_{cons}}{v}) \hfill
  }[v-rect-cons]\label{red:v-rect-cons}
  \and
  \redrule{
    \GvectRec{P}{P_{nil}}{P_{cons}}{\err_{\Gvect{A}{n}}}
  }
  {
    \err_{P~{\err_{\Gvect{A}{n}}}}\hfill
  }[v-rect-err]\label{red:v-rect-err}
  \and
  \redrule{
    \GvectRec{P}{P_{nil}}{P_{cons}}{\?_{\Gvect{A}{n}}}
  }
  {
    \?_{P~{\?_{\Gvect{A}{n}}}}\hfill
  }[v-rect-unk]\label{red:v-rect-unk}
  \\
  \redrule{
    \GvectRec{P}{P_{nil}}{P_{cons}}{\GnilU{A}}
  }
  {\ascdom{(\GvectRec{P}{P_{nil}}{P_{cons}}{\Gnil{A}})}{P~0}{P~\?_\nat} \hfill
  }[v-rect-nilu]\label{red:v-rect-unk-0}
  \\
  \redrule{
    \GvectRec{P}{P_{nil}}{P_{cons}}{(\GconsU{A}{a}{n}{v})}
  }
  {\myflushright}[v-rect-consu]\label{red:v-rect-unk-S}
  \\
  \myflushright \ascdom{(\GvectRec{P}{P_{nil}}{P_{cons}}{(\Gcons{A}{a}{n}{v})})}{P~(S~n)}{P~\?_\nat}
  \\
  \redrule{
    \ascdom{\Gnil{A}}{\Gvect{A}{0}}{\Gvect{B}{0}} 
  }
  {
    \Gnil{B} \hfill
  }[V-nil]\label{red:v-0}
  \and
  \redrule{
    \ascdom{\Gnil{A}}{\Gvect{A}{0}}{\Gvect{B}{(S~n)}} 
  }
  {
    \err_{\Gvect{B}{(S~n)}}\hfill
  }[V-nil-cons]\label{red:v-S0}
  \and
  \redrule{
    \ascdom{\Gnil{A}}{\Gvect{A}{0}}{\Gvect{B}{\?_\nat}} 
  }
  {
    \GnilU{B}
    \hfill
  }[V-nil-\?]\label{red:v-U0}
  \\
  \redrule{
    \ascdom{(\Gcons{A}{a}{k}{v})}{\Gvect{A}{(S~n)}}{\Gvect{B}{(S~m)}} 
  }
  {\Gcons{B}{(\ascdom{a}{A}{B})}{m}{(\ascdom{v}{\Gvect{A}{k}}{\Gvect{B}{m}})}
    \hfill}[V-cons]\label{red:v-S}
  \\
  \redrule{
    \ascdom{(\Gcons{A}{a}{k}{v})}{\Gvect{A}{(S~n)}}{\Gvect{B}{0}} 
  }
  {
    \err_{\Gvect{B}{0}} \myflushright
  }[V-cons-nil]\label{red:v-S0}
  \\
  \redrule{
    \ascdom{(\Gcons{A}{a}{k}{v})}{\Gvect{A}{(S~n)}}{\Gvect{B}{\?_\nat}} 
  }
  {
    \GconsU{B}{(\ascdom{a}{A}{B})}{n}{(\ascdom{v}{\Gvect{A}{k}}{\Gvect{B}{n}})}
    \hfill
  }[V-cons-\?]\label{red:v-S?}
  \\
  \redrule{
    \ascdom{\GnilU{A}}{\Gvect{A}{\?_{\nat}}}{\Gvect{B}{\?_\nat}}
  }
  { \GnilU{B} \hfill
  }[V-nilu]\label{red:v-unk-u}
 \\
  \redrule{
    \ascdom{\GnilU{A}}{\Gvect{A}{\?_{\nat}}}{\Gvect{B}{0}}
  }
  { \Gnil{B} \hfill
  }[V-nilu-nil]\label{red:v-unk-u}
  \\
  \redrule{
    \ascdom{\GnilU{A}}{\Gvect{A}{\?_{\nat}}}{\Gvect{B}{(S~n)}}
  }
  { \err_{\Gvect{B}{(S~n)}} \hfill
  }[V-nilu-cons]\label{red:v-unk-u}
  \\
  \redrule{
    \ascdom{(\GconsU{A}{a}{k}{v})}{\Gvect{A}{\?_{\nat}}}{\Gvect{B}{\?_{\nat}}}
  }
  {
    \GconsU{B}{(\ascdom{a}{A}{B})}{k}{(\ascdom{v}{\Gvect{A}{k}}{\Gvect{B}{k}})} \hfill
  }[V-consu]\label{red:v-unk-S}
   \\
  \redrule{
    \ascdom{(\GconsU{A}{a}{k}{v})}{\Gvect{A}{\?_{\nat}}}{\Gvect{B}{0}}
  }
  {
    \err_{\Gvect{B}{0}}  \hfill
  }[V-consu-nil]\label{red:v-unk-S-nil}
  \\
  \redrule{
    \ascdom{(\GconsU{A}{a}{k}{v})}{\Gvect{A}{\?_{\nat}}}{\Gvect{B}{(S~n)}}
  }
  { \myflushright
  }[V-consu-cons]\label{red:v-unk-S-cons}
  \\
  \myflushright
  \Gcons{B}{(\ascdom{a}{A}{B})}{n}{(\ascdom{v}{\Gvect{A}{k}}{\Gvect{B}{n}})} 
  \\
  \redrule{
    \ascdom{\?_{\Gvect{A}{n}}}{\Gvect{A}{n}}{\Gvect{B}{m}} 
  }
  {
    \?_{\Gvect{B}{m}}\myflushright m,n \in \{ 0, S~m,
      \?_\nat , \err_\nat \}
  }[V-unk]\label{red:v-unk}
  \and
  \redrule{
    \ascdom{\err_{\Gvect{A}{n}}}{\Gvect{A}{n}}{\Gvect{B}{m}} 
  }
  {
    \err_{\Gvect{B}{m}}\myflushright m,n \in \{ 0, S~m,
      \?_\nat , \err_\nat \}
  }[V-err]\label{red:v-err}
    \and
  \redrule{
    \ascdom{v}{\Gvect{A}{n}}{\Gvect{B}{\err_\nat}} 
  }
  {
    \err_{\Gvect{B}{\err_\nat}}\myflushright n \in \{ 0, S~m,
      \?_\nat , \err_\nat \} \mbox{ and } v \neq \?_{\Gvect{A}{n}}
  }[V-to-err]\label{red:v-to-err}

\end{mathpar}
\label{fig:vectors}
\caption{Canonical forms and reduction rule for vectors.}
\end{figure}

Vectors have two new normal forms, corresponding to cast of
\coqe{nil} and \coqe{cons} to $\Gvect{A}{\?_{\nat}}$. The difference
with the treatment of the universe is that the corresponding term,
for instance $\ascdom{t}{\Gvect{A}{n}}{\Gvect{A}{\?_{\nat}}}$ for the
case of \coqe{nil}, can not be considered as canonical form because
they involve a non-linear occurrence of $A$. 
To remedy to this issue, we add two new canonical forms ($\GnilU{A}$ and
$\GconsU{A}{a}{n}{v}$) to vectors
with introduction typing rules defined in \cref{fig:vectors}.

Regarding cast on vectors, it does not only compute in the argument of
the cast as it is the case for inductive types without indices, but it
also computes on the indices.
That is, a cast on vectors is neutral when either one of the indices
is neutral or the argument is neutral (see~\cref{fig:vectors}).
Other kind of neutral can be derived from the one of inductive types
without indices and are omitted here. 

Similarly, we do not detail the other typing rules for vectors as they
are similar to the one for inductive types wihtout indices, and focus
on explaining the new reduction rules, presented also in
\cref{fig:vectors}.

The two first reduction rules \nameref{red:v-rect-nil} and
\nameref{red:v-rect-cons} are standard reduction rules in \CIC for the
recursor \texttt{vect\_rect} on vectors.
The rules \nameref{red:v-rect-err}  and \nameref{red:v-rect-unk} are
the standard rules dealing with exceptions. 
Additionally, there are two computation rules for the eliminator on
the two new constructors \nameref{red:v-rect-unk-0} and
\nameref{red:v-rect-unk-S} which basically consist in the underlying
non-exceptional constructor to the eliminator and cast the result back
to $P~\?_\nat$.
This rule somehow transfers the cast on vectors to a cast on
the returned type of the predicate. 

Finally, there are rules to conduct casts between vectors
in canonical forms.
The last three rules (\nameref{red:v-unk}, \nameref{red:v-err} and
\nameref{red:v-to-err}) are simply propogation of errors. 
Then, there remains 12 rules, 3 by constructors of vectors. We just
explain the one on \coqe{cons}.
Rule~\nameref{red:v-S} applies when both indices of the form $S$
of somthing and progates the cast of the arguments, as does the
standard rule for casting a constructor.
Rule~\nameref{red:v-S0} detects that the indices do not match and
raise an error.
Finally, Rule~\nameref{red:v-S?} propagates the cast on the arguments,
but this time applied to $\GconsUName$, thus converting precise
information to a less precise information. 

\ifleftovers

\section{Leftovers}

\subsection{Optimized Cast}

The last definition corresponds to a usual optimization of elaboration to a cast
calculus: why insert a run-time check when you know at typing time already that
the check will always succeed? Thus, a $\cas$ operator should only be inserted
by elaboration when the types are not the same (\ie convertible). Although with
closed types this is merely for performance purposes, it becomes crucial for us,
as types can contain variables that block reduction of the $\cas$ operator:
$\cast{A}{A}{t}$ does not reduce, because $\cas$ waits for the head constructor
of $A$ to appear. Even though one could prove that $\cast{A}{A}{t} = t$ in the
models we consider, the fact that this equality is not definitional is enough to
threaten theoretical properties, such as conservativity, because it blocks reduction.

\begin{definition}[Optimized cast]
	Define the optimized cast $\ocast{A}{B}{t}$ to be $t$ if $A \conv B$, and $\cast{A}{B}{t}$ otherwise.
\end{definition}

Note that this cast can only be defined in a setting where conversion is decidable. If both $A$ and $B$ are well-typed, this depends on the normalization properties of the system: if the system normalizes, conversion is decidable (using normal forms), but if it does not then it is not. Thus, when considering the non-normalizing variants of \CCIC, we restrict this optimized cast to the \CIC fragment, that still normalizes. In other words, a cast is always inserted between gradual types, even if it is not needed.

It is really an optimization, in the following sense.
\begin{proposition}[Correction of the optimized cast]
	If $\Gamma \vdash A : \square{}$ and $\Gamma \vdash B : \square{}$ one has $\Gamma \vdash \ocast{A}{B}{t} : B$ if and only if $\Gamma \vdash \cast{A}{B}{t} : B$.
\end{proposition}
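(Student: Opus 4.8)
The plan is to case-split on the decidable test $A \conv B$ that appears in the definition of $\ocast{-}{-}{-}$. If $A \not\conv B$, then $\ocast{A}{B}{t}$ is by definition \emph{literally} the term $\cast{A}{B}{t}$, so the biconditional is a tautology. Hence all the content lies in the case $A \conv B$, where $\ocast{A}{B}{t}$ is $t$, and it remains to show that $\Gamma \vdash t : B$ if and only if $\Gamma \vdash \cast{A}{B}{t} : B$.

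For the left-to-right implication, I would assume $\Gamma \vdash t : B$, unfold it to the bidirectional reading $\Gamma \vdash t \inferty T$ for some $T$ with $T \conv B$, and use $A \conv B$ together with transitivity of conversion --- which holds since $\conv$ is defined through the confluent reduction of \cref{fig:CCIC-reduction} (\cref{lem:confluence}) --- to get $T \conv A$, hence $\Gamma \vdash t \checkty A$. Then the two hypotheses $\Gamma \vdash A : \square$ and $\Gamma \vdash B : \square$ supply exactly the constrained-inference premises $\Gamma \vdash A \pcheckty{\square} \square_i$ and $\Gamma \vdash B \pcheckty{\square} \square_j$ of the \nameref{infrule:ccic-cast} rule, which therefore applies and yields $\Gamma \vdash \cast{A}{B}{t} \inferty B$. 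For the converse, assuming $\Gamma \vdash \cast{A}{B}{t} : B$, inversion of the (unique) cast typing rule gives $\Gamma \vdash t \checkty A$, i.e. $\Gamma \vdash t \inferty T$ with $T \conv A$; transitivity with $A \conv B$ gives $T \conv B$, so $\Gamma \vdash t \checkty B$, which is what was wanted.

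I do not anticipate a real obstacle: the argument uses only inversion of a single typing rule, transitivity of conversion, and the match between the two well-formedness hypotheses and the premises of the cast rule. The two points deserving care are bookkeeping ones --- keeping the bidirectional reading straight, so that ``$\Gamma \vdash t : B$'' is understood as the checking judgment $\Gamma \vdash t \checkty B$ (equivalently, $t$ infers a type convertible to $B$) rather than literal inference of $B$; and observing that the optimization is sound precisely because $\conv$, unlike consistency $\cons$, is transitive, so collapsing $\cast{A}{B}{t}$ to $t$ when $A \conv B$ can never lose typing information.
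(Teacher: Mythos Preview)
Your proof is correct and is exactly the natural argument. The paper states this proposition without proof (it appears in a section of cut material and is treated as essentially immediate from the definition), so there is no alternative approach to compare against; your case split on $A \conv B$, use of transitivity of conversion via confluence, and inversion of the \nameref{infrule:ccic-cast} rule are precisely what is needed.
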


\begin{lemma}[Syntactic precision respects optimized cast]
	If $\Gamma.1 \vdash t : A$, $\Gamma.1 \vdash B : \square$, $\Gamma.2 \vdash t' : A'$, $\Gamma.2 \vdash B' : \square$ and we have the precision relations $\Gamma \vdash A \capre A'$, $\Gamma \vdash B \capre B'$ and $\Gamma \vdash t \capre t'$, then $\ocast{A'}{B'}{t'} \rtred s'$ for some $s'$ such that $\Gamma \vdash s \capre s'$.
\end{lemma}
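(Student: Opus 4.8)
The plan is to factor the statement through a static observation --- that the two optimized casts are related by structural precision --- and then to recover the reduction part from the simulation theorem. Reading the claim as the analogue of \cref{thm:simulation} for optimized casts (so that $s$ comes with a reduction $\ocast{A}{B}{t} \rtred s$), the first step is to establish $\Gamma \vdash \ocast{A}{B}{t} \capre \ocast{A'}{B'}{t'}$; then, after noting that both optimized casts are well typed --- using \cref{prop:validity} to see that $A$ and $A'$ are universes, the typing rule for casts when a cast is retained, and plain typing of $t$ (resp.\ $t'$) when it is erased --- an application of \cref{thm:simulation} to $\ocast{A}{B}{t} \rtred s$ yields $s'$ with $\ocast{A'}{B'}{t'} \rtred s'$ and $\Gamma \vdash s \capre s'$.

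The static observation I would prove by case analysis on whether $A \conv B$ and whether $A' \conv B'$, following the definition of the optimized cast; let $T$ (resp.\ $T'$) denote a type inferred by $t$ (resp.\ $t'$), so $T \conv A$ and $T' \conv A'$. When both casts are erased, the goal is exactly the hypothesis $\Gamma \vdash t \capre t'$. When both are retained, the goal $\Gamma \vdash \cast{A}{B}{t} \capre \cast{A'}{B'}{t'}$ is a direct instance of the diagonal rule \nameref{infrule:capre-diag-cast} from the three precision hypotheses. When the left cast is erased but the right one retained, $\Gamma \vdash t \capre \cast{A'}{B'}{t'}$ follows from rule \nameref{infrule:capre-castr}, whose side conditions $\Gamma \vdash T \cdpre A'$ and $\Gamma \vdash T \cdpre B'$ hold because $T \conv A \capre A'$ and $T \conv A \conv B \capre B'$. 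Dually, when the left cast is retained but the right erased, $\Gamma \vdash \cast{A}{B}{t} \capre t'$ follows from rule \nameref{infrule:capre-castl} using $A \capre A' \conv T'$ and $B \capre B' \conv A' \conv T'$.

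The content is concentrated in a small compatibility lemma used to discharge the side conditions of the mixed cases: for well-typed types $X,Y,Z$, from $X \conv Y$ and $Y \capre Z$, or from $X \capre Y$ and $Y \conv Z$, one gets $X \cdpre Z$. The first direction is immediate --- $X$ and $Y$ share a reduct and \cref{thm:simulation} matches that reduct on the $Z$ side. The second direction, where a \emph{less} precise type is reduced, is the main obstacle: it is not an instance of \cref{thm:simulation}, and I would prove it by induction on the reduction path, using the catch-up lemmas (\cref{lem:catchup-type}, \cref{lem:catchup-lambda}, \cref{lem:catchup-cons}) and the substitution lemma for structural precision of \cref{sec:precision-reduction} to push the less precise reductions through the (possibly cast-laden) more precise type. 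The remaining point of care --- routine once that lemma is in hand --- is that in the mixed cases the more precise optimized cast legitimately retains a possibly-failing cast that the less precise one drops, which is exactly the asymmetry that \nameref{infrule:capre-castl} and \nameref{infrule:capre-castr} are designed to permit.
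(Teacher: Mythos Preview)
Your plan is sound and largely matches the paper, with two differences worth noting. First, the paper reads the free $s$ in the statement as $\ocast{A}{B}{t}$ itself --- no reduction on the left is assumed --- so your final appeal to \cref{thm:simulation} is an unnecessary (though harmless) generalisation. Second, for your case~3 (left erased, right retained; the paper's ``hard case''), the paper explicitly reduces the right-hand side: it takes the common normal form $T$ of $A$ and $B$, uses forward simulation on $A \rtred T$ and $B \rtred T$ to obtain $A' \rtred A''$ and $B' \rtred B''$, and only then applies \nameref{infrule:capre-castr} with the reduced $s' = \cast{A''}{B''}{t'}$. Your route --- applying \nameref{infrule:capre-castr} directly and absorbing those reductions into the $\cdpre$ side-conditions --- is cleaner and equally correct.

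The substantive point is your case~4. The paper asserts flatly that ``no reduction is needed'' and that \nameref{infrule:capre-castl} applies; you are right that this hides exactly your second-direction compatibility (from $X \capre Y$ and $Y \conv Z$ conclude $X \cdpre Z$), since the rule's side-conditions demand $A \cdpre T'$ and $B \cdpre T'$ for the \emph{inferred} type $T'$ of $t'$, which is only known up to conversion with $A'$ and $B'$. So you have put your finger on a step the paper does not justify. However, your proposed proof of that lemma via the catch-up lemmas does not go through as written: those lemmas reduce the \emph{less} precise side to expose a head already present on the more precise side --- the engine of \emph{forward} simulation. What is needed here is closure of $\capre$ (up to $\cdpre$) under reduction of the less precise side, a reverse-simulation statement that does not follow from the catch-up lemmas as stated. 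A genuine proof would require its own induction over the precision derivation together with a case analysis on the reduction step taken on the right, of scope comparable to \cref{thm:simulation}; in particular, when the more precise side is governed by \nameref{infrule:capre-err} or \nameref{infrule:capre-castl}, re-establishing the type side-conditions after the right-hand term reduces recursively invokes the very lemma you are trying to prove.
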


\begin{proof}
	If both casts optimize or do not optimize, no reduction is needed and the pointwise precision suffice. If the cast on the left is not optimized, and the one on the right is, we use the corresponding non-diagonal rule for the cast, and also no reduction is needed. The hard case is when the cast on the left is optimized, but not the one on the right. Then we can take $T$ to be the normal form of both $A$ and $B$, and we have $\Gamma \vdash t : T$. We can simulate this reduction, and thus get $A' \rtred A''$ and $B' \rtred B''$ for some $A''$ and $B''$ such that $\Gamma \vdash T \capre A''$ and $\Gamma \vdash T \capre B''$. Thus, $\cast{A'}{B'}{t'} \rtred \cast{A''}{B''}{t'}$, and now we can conclude with the non-diagonal rule for the cast.
\end{proof}

\fi

\fi

\end{document}